\numberwithin{equation}{section}
\newtheorem{Theorem}{Theorem}[section]
\newtheorem*{Theorem*}{Theorem}
\newtheorem{Corollary}[Theorem]{Corollary}
\newtheorem{Lemma}[Theorem]{Lemma}
\newtheorem{Proposition}[Theorem]{Proposition}
\newtheorem{con}[Theorem]{Conjecture}
 { \theoremstyle{definition}
\newtheorem{Definition}[Theorem]{Definition}

\newtheorem{Remark}[Theorem]{Remark} }
\newcommand{\bbC}{\mathbb{C}}
\newcommand{\calD}{\mathcal{D}}
\newcommand{\sfa}{\mathsf{a}}
\newcommand{\sfb}{\mathsf{b}}
\newcommand{\seteq}{\mathbin{:=}}
\newcommand{\Nk}{{\mathsf N}}
\newcommand{\ignore}[1]{}
\def\Has{{H_{\sfa\sfb}}}
\def\SS{{\mathcal{H}_{\mathrm S}}}
\def\Bor{\mathcal{B}}
\def\dfrac#1#2{{\displaystyle\frac{#1}{#2}}}
\begin{document}
\allowdisplaybreaks

\newcommand{\arXivNumber}{2211.16772}

\renewcommand{\PaperNumber}{089}

\FirstPageHeading

\ShortArticleName{Non-Stationary Difference Equation and Affine Laumon Space}

\ArticleName{Non-Stationary Difference Equation\\ and Affine Laumon Space:\\
Quantization of Discrete Painlev\'e Equation}

\Author{Hidetoshi~AWATA~$^{\rm a}$, Koji~HASEGAWA~$^{\rm b}$, Hiroaki~KANNO~$^{\rm ac}$, Ryo~OHKAWA~$^{\rm de}$, \newline Shamil~SHAKIROV~$^{\rm fg}$, Jun'ichi~SHIRAISHI~$^{\rm h}$ and Yasuhiko~YAMADA~$^{\rm i}$}

\AuthorNameForHeading{H.~Awata et~al.}

\Address{$^{\rm a)}$~Graduate School of Mathematics, Nagoya University, Nagoya 464-8602, Japan}
\EmailD{\href{mailto:awata@math.nagoya-u.ac.jp}{awata@math.nagoya-u.ac.jp}, \href{mailto:kanno@math.nagoya-u.ac.jp}{kanno@math.nagoya-u.ac.jp}}

\Address{$^{\rm b)}$~Mathematical Institute, Tohoku University, Sendai 980-8578, Japan}
\EmailD{\href{mailto:kojihas2@gmail.com}{kojihas2@gmail.com}}

\Address{$^{\rm c)}$~Kobayashi-Maskawa Institute, Nagoya University, Nagoya 464-8602, Japan}

\Address{$^{\rm d)}$~Osaka Central Advanced Mathematical Institute, Osaka Metropolitan University,\\
\hphantom{$^{\rm d)}$}~Osaka 558-8585, Japan}
\EmailD{\href{mailto:ohkawa@kurims.kyoto-u.ac.jp}{ohkawa.ryo@omu.ac.jp}}

\Address{$^{\rm e)}$~Research Institute for Mathematical Sciences, Kyoto University, Kyoto 606-8502, Japan}
\EmailD{\href{mailto:ohkawa@kurims.kyoto-u.ac.jp}{ohkawa@kurims.kyoto-u.ac.jp}}

\Address{$^{\rm f)}$~University of Geneva, Switzerland}
\EmailD{\href{mailto:shamil.shakirov@unige.ch}{shamil.shakirov@unige.ch}}

\Address{$^{\rm g)}$~Institute for Information Transmission Problems, Moscow, Russia}

\Address{$^{\rm h)}$~Graduate School of Mathematical Sciences, University of Tokyo,\\
\hphantom{$^{\rm h)}$}~Komaba, Tokyo 153-8914, Japan}
\EmailD{\href{mailto:shiraish@ms.u-tokyo.ac.jp}{shiraish@ms.u-tokyo.ac.jp}}

\Address{$^{\rm i)}$~Department of Mathematics, Kobe University, Rokko, Kobe 657-8501, Japan}
\EmailD{\href{mailto:yamaday@math.kobe-u.ac.jp}{yamaday@math.kobe-u.ac.jp}}

\ArticleDates{Received December 06, 2022, in final form October 22, 2023; Published online November 09, 2023}

\Abstract{We show the relation of the non-stationary difference equation proposed by one of the authors and the quantized discrete Painlev\'e VI equation. The five-dimensional Seiberg--Witten curve associated with the difference equation has a consistent four-dimensional limit. We also show that the original equation can be factorized as a coupled system for a pair of functions $\bigl(\mathcal{F}^{(1)}, \mathcal{F}^{(2)}\bigr)$, which is a consequence of the identification of the Hamiltonian as a translation element in the extended affine Weyl group. We conjecture that the instanton partition function coming from the affine Laumon space provides a solution to the coupled system.}

\Keywords{affine Laumon space; affine Weyl group; deformed Virasoro algebra; non-sta\-tion\-ary difference equation; quantum Painlev\'e equation}

\Classification{14H70; 81R12; 81T40; 81T60}

\begin{flushright}
\begin{minipage}{70mm}
\it Dedicated to the memories of our friends,\\ Omar Foda and Yaroslav Pugai
\end{minipage}
\end{flushright}

\section{Introduction}

The conformal blocks, more precisely, the matrix elements or the traces
of the intertwiners among the Verma modules of the Virasoro algebra,
or the chiral algebra in general define special functions which are
ubiquitous in mathematics and physics. As special functions originated from
the representation theory of the symmetry, the hypergeometric series and
Nekrasov function \cite{N} to mention a few,
the conformal blocks should satisfy sufficiently simple but significant
equations.
The Belavin--Polyakov--Zamolodchikov (BPZ) equation for the Virasoro conformal block
with degenerate field insertion is a celebrated example \cite{Belavin:1984vu}.

For the deformed Virasoro algebra \cite{Shiraishi:1995rp} such an equation is expected
to be some (non-stationary) difference equation.
Though the conformal blocks allow integral representations of Dotsenko--Fateev type
or the deformed matrix model type,
the desired equations were not known for a long time,
more than a quarter century after the discovery of the algebra.
While most attempts to work out explicit form of the expected difference equation were not successful,
the non-stationary Ruijsenaars function has been proposed \cite{Shi} (see also \cite{Langmann:2020utd}).
In an appropriate limit, we can see that it reduces to the non-stationary affine Toda equation
which is a difference equation that involves the $q$-exponent of the Laplace operator
(the $q$-Borel transformation $\Bor$, see Definition \ref{qBorel}).
Recently a remarkable progress has been made by one of the authors \cite{Shakirov:2021krl} based on AGT correspondence \cite{Alday:2009aq}.
Namely a non-stationary difference equation was discovered
for the Nekrasov partition function of the five-dimensional gauge theory with a surface defect.
The AGT correspondence tells that if the theory has four matter hypermultiplets
in the fundamental representation, the partition function agrees with the genus zero
five point function with one degenerate field coming from the defect \cite{Alday:2009fs,Marshakov:2010fx,Taki:2010bj}.
In \cite{Shakirov:2021krl} the five-dimensional lift of AGT correspondence \cite{Awata:2009ur}
was applied, where the surface defect is realized by imposing the Higgsing condition on
the ${\rm SU}(2) \times {\rm SU}(2)$ quiver gauge theory. Hence, the non-stationary difference equation
proposed in \cite{Shakirov:2021krl} is regarded as a $q$-deformed version of BPZ equation.
In the decoupling limit of the matter multiplets it reproduces the non-stationary affine Toda equation
for the non-stationary Ruijsenaars function.
A distinguished feature of these equations is the appearance of the $q$-Borel transformation $\Bor$.

The quantization of the Painlev\'e equations (or isomonodromic deformations more generally) has been studied for many years.
One of the motivation of such studies was its relation to the conformal field theories.
This relation plays a key role in the recent studies of AGT correspondence (see \cite{Nekrasov:2021tik} and references therein).
Also, there has been a large progress in the study of discrete (or difference) analog of Painlev\'e equations in the last decades.
The discrete Painlev\'e equations are classified into additive, multiplicative (i.e.,~$q$-), and elliptic-difference equations~\cite{Sakai:2001},
and each class corresponds to gauge theories in four, five and six dimensions;
\def\rr{\rightarrow}
\def\se{\searrow}
\def\ne{\nearrow}
\arraycolsep=1pt
\[{
\begin{array}{@{}lcccccccccccccccccc}
{\rm elliptic{\colon}}\quad &E_8^{(1)},\\
&&&&&&&&&&&&&&&&A_1^{(1)}\\[-3mm]
&&&&&&&&&&&&&&&\ne\\
{\rm multiplicative{\colon}}\quad
&E_8^{(1)}&\rr&E_7^{(1)}&\rr&E_6^{(1)}&\rr&{\color{red}D_5^{(1)}}&\rr&A_4^{(1)}
&&\rr&A_{2+1}^{(1)}&\rr&A_{1+1}^{(1)}&\rr&A_1^{(1)}&\rr&A_0^{(1)},\\[5mm]
{\rm additive{\colon}}\quad &E_8^{(1)}&\rr&E_7^{(1)}&\rr&E_6^{(1)}&&\rr&&D_4^{(1)}
&&\rr&A_3^{(1)}&\rr&A_{1+1}^{(1)}&\rr&A_1^{(1)}&\rr&A_0^{(1)}\\
&&&&&&&&&&&&&\se&&\se\\
&&&&&&&&&&&&&&A_2^{(1)}&\rr&A_1^{(1)}&\rr&A_0^{(1)}.\\
\end{array}
}\]
Such a correspondence can be easily seen at classical level, however, the understanding at quantum level is incomplete so far.
Fortunately, for the $q$-difference Painlev\'e VI equation \cite{Jimbo-Sakai}
relevant for this paper, a natural quantization
was worked out in \cite{Hasegawa} (see also \cite{HasegawaLax,Kuroki})
based on the extended affine Weyl group symmetry of type \smash{$D_5^{(1)}$}.
Recall that the B\"acklund transformations for the discrete Painlev\'e equations are
generated by the affine Weyl group and the automorphisms of the Dynkin diagram,
which act on the dynamical variables as birational transformations.
In this paper, the discrete Painlev\'e VI equation always (except for Appendix~\ref{App.A}) means the {\it quantized} one
in the sense of \cite{Hasegawa}, where the dynamical variables $(F,G)$ are non-commutative
and the time evolution is defined by the adjoint action of the Hamiltonian.
Since the prefix~$q$- is already used for {\it classical} $q$-difference analogue of Painlev\'e VI equation,\footnote{In this paper, $q$-Painlev\'e VI equation, in contrast to $qq$-Painlev\'e~VI, means the classical difference equation obtained by Jimbo--Sakai \cite{Jimbo-Sakai}.}
we will call the {\it quantization} of the equation $qq$-Painlev\'e VI for short,
namely we use the double ``$q$'' standing for the $q$-difference and the quantization.
We warn the reader that the use of $qq$- does not mean any direct connection to the $qq$-character
introduced in~\cite{Nekrasov:2015wsu}. But there is a similarity in the sense that the full $\Omega$-background
parameters $(q,t)$ \cite{N} are turned on in both cases. The fact that one of the $\Omega$-background parameters,
say $q$, plays the role of the quantization parameter of integrable systems is the same as
the Nekrasov--Shatashvili limit~\cite{Nekrasov:2009rc}, which corresponds to the autonomous limit of the Painlev\'e equations.
The recent paper \cite{Shakirov:2021krl} gives us valuable lessons on the problem of the quantization of the discrete Painlev\'e
equation. The problem is also discussed from the viewpoint of the cluster integrable system \cite{Bershtein:2017swf}.
The relevant cluster algebras are associated with the BPS quiver of five-dimensional superconformal
field theories \cite{Bonelli:2020dcp}.

As we have mentioned above, in \cite{Hasegawa} by constructing a representation of the extended Weyl group \smash{$\widetilde{W}\bigl(D_5^{(1)}\bigr)$}
on the non-commutative variables $(F,G)$, explicit forms of the $qq$-Painlev\'e equation were
derived both in the Heisenberg and the Schr\"odinger forms.

\begin{Definition}[Heisenberg form of $qq$-Painlev\'e VI]
\begin{align*}
& \overline{F} F = q b_7 b_8
\frac{G+ b_5}{G+ b_7}
\frac{G+ b_6}{G+ b_8},\qquad
G \underline{G}=q b_3 b_4
\frac{F+b_1}{F+ b_3}
\frac{F+ b_2}{F+ b_4},
\end{align*}
where $\overline{F} =T\cdot F$, $\underline{G}=T^{-1}\cdot G$ and $T$ is a translation element in $\widetilde{W}\bigl(D_5^{(1)}\bigr)$.
$\mathbf{b}=(b_1, \dots, b_8)$ are the standard parameters for the $q$-Painlev\'e VI equation (see Appendix \ref{App.A}).
\end{Definition}

\begin{Definition}[Schr\"odinger form of $qq$-Painlev\'e VI]
\begin{align}\label{Sch-form}
H_{\mathrm{VI}}\cdot \mathsf{u}\bigl(b,G,Q| q,t^{-1}\bigr)=\mathsf{u}\bigl(b,G,Q|q,t^{-1}\bigr),
\end{align}
with the Hamiltonian given by
\begin{align*}
H_{\mathrm{VI}}:={}&
\frac{1}{\varphi\bigl(-qb_5 G^{-1}\bigr)\varphi\bigl(-q b_6G^{-1}\bigr)
\varphi\bigl(-b_7^{-1}G\bigr) \varphi\bigl(-b_8^{-1} G\bigr) } \theta\bigl(F^{-1}G;q\bigr)^{-1} \\
&\times
\frac{1}
{\varphi\bigl(q t^{-1/2} b_1G^{-1}\bigr)\varphi\bigl(q t^{-1/2} b_2 G^{-1}\bigr)
\varphi\bigl(t^{-1/2} b_3^{-1} G\bigr) \varphi\bigl(t^{-1/2} b_4^{-1} G\bigr) }\\
&\times
\theta\bigl(F^{-1}G;q\bigr)^{-1}T_{t^{1/4},\mathbf{b}},
\end{align*}
where $\varphi(x) := (x;q)_\infty$ and $\theta(X;q)=(X;q)_\infty (q/X;q)_\infty$.
$T_{t^{1/4},\mathbf{b}}$ is a shift operator of the parameters $\mathbf{b}$ and
$Q$ appears as a parameter of wave functions.
\end{Definition}

In this paper, we first show that the non-stationary equation proposed in \cite{Shakirov:2021krl}
is successfully identified with the $qq$-Painlev\'e VI equation. Namely, we prove
\begin{Proposition}\label{Thm1}
By an appropriate gauge transformation from $\mathsf{U}(\Lambda,x)$ to $\mathsf{u}\bigl(b,G,Q| q,t^{-1}\bigr)$,
the non-stationary difference equation in {\rm \cite{Shakirov:2021krl}}
\begin{align}\label{Shamil-Eq}
\mathsf{U} (t \Lambda,x)=\mathcal{A}_1(\Lambda,x) \cdot \Bor \cdot
\mathcal{A}_2(\Lambda,x) \cdot \Bor \cdot \mathcal{A}_3 (\Lambda,x) \mathsf{U}\left(\Lambda,\frac{x}{t q Q}\right)
\end{align}
is transformed to the $qq$-Painlev\'e VI equation \eqref{Sch-form}, where $\Bor$ is the $q$-Borel transformation
and~$\mathcal{A}_i(\Lambda,x)$ are multiplications of $\varphi(x)$ and $\Phi(x):=(x;q,t)_\infty$ {\rm (}see Section {\rm \ref{sec:conjecture})}.
\end{Proposition}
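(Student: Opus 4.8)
The plan is to produce the gauge transformation explicitly and then reduce \eqref{Shamil-Eq} to \eqref{Sch-form} by a direct operator computation, the engine of which is the behaviour of the $q$-Borel transformation $\Bor$ under conjugation by multiplication operators. First I would fix the dictionary between the two sets of variables: the parameters $\mathbf{b}=(b_1,\dots,b_8)$ are to be written in terms of $\Lambda$ together with the Coulomb and mass moduli, the Heisenberg variables $(F,G)$ are realized as explicit $q$-difference operators in $x$ (with, say, $G$ a multiplication operator and $F$ a $q$-shift, so that $F^{-1}G$ becomes the natural coordinate appearing in $\theta(F^{-1}G;q)$), and the spectral parameter $Q$ is matched to the argument shift $x\mapsto x/(tqQ)$ on the right-hand side of \eqref{Shamil-Eq}. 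The gauge factor $g(\Lambda,x)$ relating $\mathsf{U}(\Lambda,x)=g(\Lambda,x)\,\mathsf{u}(b,G,Q|q,t^{-1})$ is then chosen to absorb the scalar prefactors generated in the course of the reduction.

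The technical heart is the conjugation identity for $\Bor$. Since $\Bor$ is the $q$-exponent of the Laplace operator (Definition \ref{qBorel}), it acts on monomials as a Gaussian weight, and therefore intertwines multiplication by $x$ with a $t$-shift: schematically $\Bor\,M_x = c\,M_x\,p\,\Bor$, where $M_x$ is multiplication by $x$, $p$ is the $t$-shift $f(x)\mapsto f(tx)$, and the precise base of $p$ and the scalar $c$ are read off from Definition \ref{qBorel}. Applying this to the infinite products gives the operator identities that compute $\Bor\,M_{\varphi(x)}\,\Bor^{-1}$ and $\Bor\,M_{\Phi(x)}\,\Bor^{-1}$; the point is that each passage of $\Bor$ through a multiplication operator generates a $t$-shift of its argument.

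I would then commute the two copies of $\Bor$ in $\mathcal{A}_1\,\Bor\,\mathcal{A}_2\,\Bor\,\mathcal{A}_3$ to the right using this identity, collecting the generated $t$-shifts. Two structural facts do the work here. First, the telescoping relation $\Phi(x)/\Phi(tx)=\varphi(x)$ for $\Phi(x)=(x;q,t)_\infty$ means that whenever a $t$-shift is dragged across a $\Phi$-factor it collapses it into a single $\varphi$-factor, so the double products in the $\mathcal{A}_i$ become exactly the single products $\varphi(\cdots)$ of $H_{\mathrm{VI}}$. Second, the factorization $\theta(X;q)=\varphi(X)\varphi(q/X)$ lets the two $\theta(F^{-1}G;q)^{-1}$ factors of $H_{\mathrm{VI}}$ emerge by pairing an explicit $\varphi$-multiplication with a $\varphi$ produced by the shift. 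After this reorganization the residual $\Bor^2$ together with the explicit realizations of $F$ and $G$ and the outer shift $\mathsf{U}(t\Lambda,x)$ assemble into the single parameter-and-coordinate shift operator $T_{t^{1/4},\mathbf{b}}$.

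The last step is bookkeeping: I would verify that the eight $\varphi$-arguments produced by the reduction coincide with $-qb_5G^{-1},-qb_6G^{-1},-b_7^{-1}G,-b_8^{-1}G$ and $qt^{-1/2}b_1G^{-1},qt^{-1/2}b_2G^{-1},t^{-1/2}b_3^{-1}G,t^{-1/2}b_4^{-1}G$ under the dictionary of the first step, and that the scalars $c$ and the gauge factor cancel, leaving precisely $H_{\mathrm{VI}}\cdot\mathsf{u}=\mathsf{u}$. I expect the main obstacle to lie in this tracking of the half-integer powers of $t$: the distinction between the $t^{-1/2}$-shifted arguments and the unshifted ones, and the appearance of the quarter-shift $t^{1/4}$ in $T_{t^{1/4},\mathbf{b}}$ as opposed to the full $t$-shift on the left of \eqref{Shamil-Eq}, must be reproduced exactly by the $t$-shifts generated by the two Borel transforms, and getting these powers to balance is the delicate part of the calculation.
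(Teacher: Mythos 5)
Your overall plan (gauge transform, then match term by term against $H_{\mathrm{VI}}$) points in the right direction, but the engine you propose for the reduction does not work. First, the commutation relation for the $q$-Borel transformation is $\Bor\, x = p\, x\, \Bor$ with $p=q^{\vartheta_x}=T_{q,x}$ a \emph{$q$-shift}, not a $t$-shift; so dragging $\Bor$ across a $\Phi$-factor never triggers the telescoping identity $\Phi(x)/\Phi(tx)=\varphi(x)$, which requires a shift by $t$. Second, conjugating a multiplication operator by $\Bor$ gives $\Bor\, f(x)\, \Bor^{-1}=f(px)$, a genuine pseudo-difference operator rather than another multiplication operator, so "commuting the two copies of $\Bor$ to the right" destroys the multiplicative structure instead of simplifying it. The paper never moves the $\Bor$'s at all: the gauge factor is chosen to be (essentially) $\Phi(qtT_2T_3\Lambda)\Phi\bigl(t^2T_1T_4\Lambda\bigr)\,\mathcal{A}_3(t\Lambda,tqQx)$, a pure multiplication operator, and the $\Phi$-cancellation happens entirely inside the product $\mathcal{A}_3(t\Lambda,tqQx)\,\mathcal{A}_1(\Lambda,x)$, whose $\Phi$-arguments differ by explicit powers of $t$ coming from the substitution $\Lambda\to t\Lambda$, $x\to tqQ\,x$ and from $v=q^{1/2}t^{-1/2}$; the leftover $\Lambda$-only ratio of $\Phi$'s commutes with everything and cancels the prefactor $\varphi(qT_2T_3\Lambda)\varphi(tT_1T_4\Lambda)$ of $\mathcal{A}_2$. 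This yields the Hamiltonian \eqref{SH} with the two $\Bor$'s exactly where they started.

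Your final assembly step is also off target. The shift $T_{t^{1/4},\mathbf{b}}$ does not come from $\Bor^2$: it is the translation $T_{qtQ,x}^{-1}T_{t,\Lambda}^{-1}$ already present in \eqref{Shamil-Eq}, rewritten through the dictionary \eqref{dictionary} as $T_{\mathsf{p},a_2}T_{\mathsf{p},a_3}^{-1}$ with $\mathsf{p}=t^{1/4}$. Each $\Bor$ (combined with part of that shift into $\widetilde{\Bor}$) corresponds to one factor $\theta\bigl(F^{-1}G;q\bigr)^{-1}$ of $H_{\mathrm{VI}}$, and this correspondence holds only at the level of adjoint actions on $(F,G)$ (Proposition \ref{comparison}); the paper explicitly remarks that $\theta\bigl(F^{-1}G;q\bigr)^{-1}$ and $\widetilde{\Bor}$ need not coincide as operators on formal series, so your plan to manufacture $\theta\bigl(F^{-1}G;q\bigr)^{-1}$ literally by pairing $\varphi$-factors via $\theta(X;q)=\varphi(X)\varphi(q/X)$ cannot close the argument. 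To repair the proof you should (i) keep both $\Bor$'s fixed and conjugate only by the multiplication operator $\Phi(qtT_2T_3\Lambda)\Phi\bigl(t^2T_1T_4\Lambda\bigr)\mathcal{A}_3(t\Lambda,tqQx)$, using \eqref{t-difference} on the explicit arguments, and (ii) perform the identification with \eqref{Sch-form} via the change of variables to $(F,G)$ and $b_i$, understanding the equality with the $\theta$-form in the adjoint-action sense made precise in Section \ref{sec4}.
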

In contrast to the original form of the $qq$-Painlev\'e equation, the double infinite product ${\Phi(x):=(x;q,t)_\infty}$
arises as a consequence of the gauge transformation.
In \eqref{Shamil-Eq}, the parameter $x$ is related to the dynamical variable $G$ and $\Lambda$ plays the role of the time variable.
(See~Section~\ref{sec:dictinary} for the dictionary of variables between Painlev\'e side and the gauge theory side.)
$\mathsf{U} (\Lambda,x)$ is regarded as a formal power series in $x$ and $\Lambda/x$,
which is motivated by the following conjecture that the Nekrasov partition function solves the above equation.
Hence, the virtue of the gauge transformation in Proposition \ref{Thm1} is due to the conjecture
that the Nekrasov partition functions, which allow a combinatorial description, provide solutions to \eqref{Shamil-Eq}.
\begin{con}[\cite{Shakirov:2021krl},  Conjecture \ref{conjecture}]\label{Shamil-conjecture}
Let $Z(\Lambda,x)$ be the Nekrasov partition function of five-dimensional ${\rm SU}(2) \times {\rm SU}(2)$ gauge
theory $($see Definition {\rm \ref{A_2quiverZ})}. If we define a function $\Psi(\Lambda,x)$ by imposing the Higgsing condition on $Z(\Lambda,x)$,
then it gives a solution $\mathsf{U}(\Lambda,x)=\Psi(\Lambda,x)$ to the equation \eqref{Shamil-Eq}.
\end{con}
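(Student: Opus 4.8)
The plan is to verify that the Higgsed partition function $\Psi(\Lambda,x)$ satisfies \eqref{Shamil-Eq} by exploiting the equivalence established in Proposition~\ref{Thm1}. Since the gauge transformation there turns \eqref{Shamil-Eq} into the Schr\"odinger form \eqref{Sch-form}, it suffices to show that the gauge transform $\mathsf{u}\bigl(b,G,Q|q,t^{-1}\bigr)$ of $\Psi$ is a fixed point of the Hamiltonian, i.e.\ $H_{\mathrm{VI}}\cdot\mathsf{u}=\mathsf{u}$. First I would use the dictionary of Section~\ref{sec:dictinary} to rewrite the Higgsed Nekrasov function in the Painlev\'e variables $(G,Q,\mathbf{b})$, so that the operator $H_{\mathrm{VI}}$ — assembled from the factors $\varphi(x)=(x;q)_\infty$, the theta function $\theta(F^{-1}G;q)$, and the parameter shift $T_{t^{1/4},\mathbf{b}}$ — and the candidate solution are expressed in the same variables.

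The core of the argument is then combinatorial. Starting from the explicit sum of $Z(\Lambda,x)$ over pairs of Young diagrams (Definition~\ref{A_2quiverZ}), I would impose the Higgsing specialization of the Coulomb and mass parameters and track how the instanton sum truncates to produce $\Psi$. Writing $\Psi=\sum_{k\ge 0}\Lambda^{k}\,\psi_k(x)$, the shift $\Lambda\mapsto t\Lambda$ on the left-hand side of \eqref{Shamil-Eq} is matched on the right by the composite operator $\mathcal{A}_1\,\Bor\,\mathcal{A}_2\,\Bor\,\mathcal{A}_3$ acting in $x$ on $\Psi$ evaluated at $(\Lambda,\,x/(tqQ))$, which relates the $\psi_k$ at successive orders. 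I would first match coefficients order by order in $\Lambda$, and then try to promote the finite-order check to all orders either by an inductive recursion built from the one-instanton kernel, or by identifying $\Psi$ with a degenerate five-point block of the deformed Virasoro algebra through the five-dimensional AGT correspondence \cite{Awata:2009ur}; in the latter picture \eqref{Shamil-Eq} should be the null-vector decoupling ($q$-analogue of the BPZ) equation for the degenerate field created by the surface defect.

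The main obstacle will be the two nested $q$-Borel transformations $\Bor$, equivalently the double infinite product $\Phi(x)=(x;q,t)_\infty$ generated by the gauge transformation. Because $\Bor$ rescales the coefficient of $x^{n}$ by a Gaussian factor in $n$, it mixes all orders in $x$, so term-by-term comparison in $x$ does not close, and the factorized matter operators $\mathcal{A}_i$ further entangle the $x$- and $\Lambda$-expansions. The decisive step is therefore to prove the exact combinatorial identity expressing the time-shifted Higgsed sum as the image of the untranslated one under $\mathcal{A}_1\,\Bor\,\mathcal{A}_2\,\Bor\,\mathcal{A}_3$. I expect this to require a bilinear, Hirota-type relation among Nekrasov factors rather than a direct resummation, and the genuine difficulty lies in controlling the convergence and reorganization of the doubly infinite products under the composite operator.
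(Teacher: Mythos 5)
The statement you are addressing is Conjecture \ref{Shamil-conjecture} (equivalently Conjecture \ref{conjecture}); the paper does not prove it, and neither does your proposal. What the paper offers in its place is: (i) existence and uniqueness of the formal power-series solution of \eqref{Shamil-Eq}, so that it would suffice to show $\Psi$ satisfies the equation; (ii) verification in degenerate limits --- the zero-instanton (Heine) sector and the Macdonald specializations \eqref{Mac(1)}, \eqref{Mac(2)} of the mass parameters worked out in Appendix~\ref{App.B}; (iii) the reformulation of the second-order equation as the coupled first-order system \eqref{firsteq}--\eqref{secondeq} (Proposition~\ref{Thm2}), together with Conjecture~\ref{secondconjecture} identifying the solutions with the affine Laumon partition function \eqref{Nf4}; the actual proof is deferred to the follow-up paper cited in the ``Note added'', where the coupled system is derived as a quantum Knizhnik--Zamolodchikov equation. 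Your plan reproduces (i) implicitly and then stops exactly at the step everyone agrees is the hard one: you announce that ``the exact combinatorial identity expressing the time-shifted Higgsed sum as the image of the untranslated one'' and a ``bilinear, Hirota-type relation among Nekrasov factors'' are required, but you do not state or prove either.

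Two further concrete problems with the route you sketch. First, your fallback of deriving \eqref{Shamil-Eq} from ``the null-vector decoupling ($q$-analogue of the BPZ) equation'' is circular: no independent $q$-BPZ equation for the deformed Virasoro algebra is available --- equation \eqref{Shamil-Eq} \emph{is} the conjectural $q$-BPZ equation, which is the entire content of the conjecture. Second, the order-by-order check in $\Lambda$ cannot be closed by the inductive recursion you describe, because (as you yourself note) $\Bor$ reweights the $x$-degree by a Gaussian factor while each coefficient $\psi_k(x)$ is itself an infinite series in both $x$ and $\Lambda/x$; without splitting the two $\Bor$'s apart --- which is precisely what the factorization into the coupled system of Proposition~\ref{prop:coupled} achieves, matching each half against a separate Laumon-type sum --- there is no finite recursion to induct on. A viable route is the paper's own: prove Conjecture~\ref{secondconjecture} for the coupled system and then invoke the gauge equivalences of Propositions~\ref{Thm1} and~\ref{Thm2}.
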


The discrete time evolution of the $q$-Painlev\'e VI equation is given by a translation element
in~\smash{$D_5^{(1)}$} root lattice, which is orthogonal to the symmetry \smash{$D_4^{(1)}$} of discrete $P_{\rm VI}$.
If we write the translation element in terms of the generators of the extended affine Weyl groups,
it is a~product of two factors which are exchanged by the automorphism $\tau$ of \smash{$D_5^{(1)}$} Dynkin graph.
The factorization of the original difference equation as the coupled system reflects this fact.
The decomposition of the discrete time evolution (or the Hamiltonian)
by the B\"acklund transformations implies that the original equation,
which is of the second order in $\Bor$ (i.e., in the $q$-exponent of the Laplace operator),
can be rewritten as a coupled system of the first order difference equations in $\Bor$,
as was already suggested in the original paper \cite{Shakirov:2021krl}.

\begin{Proposition}\label{Thm2}
The following coupled system is gauge equivalent to the non-stationary difference equation of {\rm \cite{Shakirov:2021krl}}
and hence, to the $qq$-Painlev\'e VI equation
\begin{gather*}
\mathsf{V}^{(1)}=
\frac{\Phi\bigl(q t^{-1} b_2/b_4\bigr)\Phi(b_1/b_3)}{\Phi( t b_6/b_8)\Phi(q b_5/b_7) }
\frac{1}{\varphi(- q b_6/G)\varphi(- G/b_8)}
 \nonumber \\
\phantom{\mathsf{V}^{(1)}=}{} \times \bigl(\widetilde{\Bor} \widetilde{T}_{\mathsf{p},b} \bigr)
\frac{1}{\varphi\bigl(-\mathsf{p}^{-1} q b_2/G\bigr)\varphi\bigl(-\mathsf{p}^{-1} G/b_4\bigr)}
\widetilde{T}_{\mathsf{p},b} \mathsf{V}^{(2)},\\
\widetilde{T}_{\mathsf{p},b} \mathsf{V}^{(2)} =
\frac{
\Phi\bigl(\mathsf{p}^{-2} t b_6/b_8\bigr)\Phi\bigl(\mathsf{p}^{-2}q b_5/b_7\bigr)}{\Phi\bigl(\mathsf{p}^{-2} q b_2/b_4\bigr)\Phi\bigl( \mathsf{p}^{-2} t b_1/b_3\bigr)
}
\frac{1}{\varphi\bigl(-\mathsf{p}^{-1}q b_1/G\bigr)\varphi\bigl(-\mathsf{p}^{-1}G/b_3\bigr)} \nonumber \\
\phantom{\widetilde{T}_{\mathsf{p},b} \mathsf{V}^{(2)} =}{}\times \bigl(\widetilde{\Bor} \widetilde{T}_{\mathsf{p},b} \bigr)
\frac{1}{\varphi(-q b_5/G)\varphi(-G/b_7)}\mathsf{V}^{(1)}.
\end{gather*}
\end{Proposition}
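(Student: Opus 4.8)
The plan is to obtain the coupled system by \emph{factoring} the second-order difference operator into two first-order pieces and inserting an intermediate wave function. Since Proposition~\ref{Thm1} already identifies the Shakirov equation \eqref{Shamil-Eq} with the Schr\"odinger form \eqref{Sch-form}, it suffices to prove that the displayed pair for $\bigl(\mathsf{V}^{(1)},\mathsf{V}^{(2)}\bigr)$ is equivalent to a single one of them; the ``hence'' in the statement is then immediate from Proposition~\ref{Thm1}. The guiding principle is the remark from the introduction that the translation element $T\in\widetilde{W}\bigl(D_5^{(1)}\bigr)$ governing the time evolution factors, in terms of the affine Weyl generators, as a product $T=T_1T_2$ of two pieces exchanged by the Dynkin-diagram automorphism $\tau$. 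I would make this decomposition explicit and transport it through the gauge transformation to the level of difference operators.

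Concretely, the Shakirov operator $\mathcal{A}_1\cdot\Bor\cdot\mathcal{A}_2\cdot\Bor\cdot\mathcal{A}_3$ of \eqref{Shamil-Eq}---equivalently the Hamiltonian $H_{\mathrm{VI}}$, which carries the two factors $\theta\bigl(F^{-1}G;q\bigr)^{-1}$---is manifestly of second order in $\Bor$. I would split the middle multiplication operator into two commuting pieces, $\mathcal{A}_2=\mathcal{A}_2'\mathcal{A}_2''$, assigning one to each $\Bor$, so that
\[
\mathcal{A}_1\cdot\Bor\cdot\mathcal{A}_2\cdot\Bor\cdot\mathcal{A}_3=\bigl(\mathcal{A}_1\cdot\Bor\cdot\mathcal{A}_2'\bigr)\bigl(\mathcal{A}_2''\cdot\Bor\cdot\mathcal{A}_3\bigr).
\]
The two factors on the right are the operator realizations of $T_1$ and $T_2=\tau(T_1)$: each carries a single Borel transformation (appearing as $\widetilde{\Bor}$ in the gauge of the coupled system), together with one block of $\varphi$-denominators and a half-step shift $\widetilde{T}_{\mathsf{p},b}$ whose square realizes the full parameter shift $T_{t^{1/4},\mathbf{b}}$ of $H_{\mathrm{VI}}$. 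Setting $\mathsf{V}^{(1)}=\mathsf{u}$ (up to the gauge factor of Proposition~\ref{Thm1}) and introducing the intermediate function $\mathsf{V}^{(2)}$ as the result of applying the second factor to $\mathsf{V}^{(1)}$, the single eigenvalue equation splits into exactly the displayed pair: one line expresses $\mathsf{V}^{(1)}$ through $\mathsf{V}^{(2)}$, and the other expresses the shifted $\widetilde{T}_{\mathsf{p},b}\mathsf{V}^{(2)}$ through $\mathsf{V}^{(1)}$.

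To close the argument I would eliminate $\mathsf{V}^{(2)}$: substituting the second line into the first recomposes the two Borel factors and the two half-shifts ($\widetilde{T}_{\mathsf{p},b}^2=T_{t^{1/4},\mathbf{b}}$) into the original second-order operator, whence Proposition~\ref{Thm1} yields the equivalence to $qq$-Painlev\'e VI. The decisive computation---and the main obstacle---is to fix the splitting so that the explicit coefficients come out right. I would track how $\widetilde{T}_{\mathsf{p},b}$ conjugates the $\varphi$-denominators of one line into those of the other: the arguments $\mathsf{p}^{-1}qb_2/G$, $\mathsf{p}^{-1}G/b_4$ and their analogues are exactly the half-shifted images of the clean blocks $\varphi\bigl(-qb_5G^{-1}\bigr)\varphi\bigl(-qb_6G^{-1}\bigr)\varphi\bigl(-b_7^{-1}G\bigr)\varphi\bigl(-b_8^{-1}G\bigr)$ of $H_{\mathrm{VI}}$, so that commuting the shifts back through restores those blocks, while the distribution of the $b_i$ over the two lines is the one dictated by the action of $\tau$ on the root parameters. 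Confirming this pairing is the combinatorial heart of the proof.

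Finally, the double infinite products $\Phi(x)=(x;q,t)_\infty$ in the prefactors are the pieces of the $\mathcal{A}_i$ produced by the gauge transformation of Proposition~\ref{Thm1}; I would verify that their ratios telescope, under $\widetilde{T}_{\mathsf{p},b}^2=T_{t^{1/4},\mathbf{b}}$, back to the single prefactor implicit in $H_{\mathrm{VI}}$ by means of the defining relation $\Phi(x)/\Phi(tx)=\varphi(x)$. This identity explains why the double product $\Phi$---and not merely $\varphi$---is forced once the half-step $\mathsf{p}$, a square root of the full shift, is used, and the $\mathsf{p}^{-2}$ arguments in the second line are the signature of this two-step structure that checks that the composition closes with no residual factor.
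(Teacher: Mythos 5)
Your proposal matches the paper's proof in essence: the paper also factorizes the second-order (in $\Bor$) operator into two first-order pieces by splitting the middle block of $\varphi$-denominators, inserts $\mathsf{V}^{(2)}$ as the intermediate function, uses $\widetilde{T}_{\mathsf{p},b}^2=T$ together with the square-root decomposition $T=(r_2r_1r_0r_2\sigma_{01}\tau)^2$ to organize the two lines as $\tau$-images of each other, and verifies closure by the telescoping identity $\Phi(x)/\Phi(tx)=\varphi(x)$, which turns the product of the two $\Phi$-ratios across $T_{t,\Lambda}^{-1}$ back into the $\varphi(tT_2T_3\Lambda)\varphi(qT_1T_4\Lambda)$ numerator of the single Hamiltonian. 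The only detail you leave implicit is that the convenient starting gauge is not exactly that of Proposition~\ref{Thm1} but its $r_1$-reflected version (with $T_1$ and $T_2$ exchanged in the gauge factor), which is what makes the split of the middle block come out symmetrically; your elimination check would surface this adjustment.
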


To motivate an analogous conjecture to Conjecture \ref{Shamil-conjecture},
let us recall that the instanton counting with a surface defect allows another
description in terms of the affine Laumon spaces
\cite{Alday:2010vg,AFKMY,FFNR}.
In this method the partition functions are identified with
the conformal blocks of the affine Kac--Moody algebra (the current algebra)
{\it without degenerate fields} \cite{Kozcaz:2010yp}. For example, in the present case
the parameter $x$, which originally comes from the insertion point of the degenerate field,
is replaced by the ${\rm SU}(2)$ spin variable of $\widehat{\mathfrak{sl}}_2$.
The existence of the surface defect is taken into account by introducing the
orbifold action \cite{FR,Kanno:2011fw}.
The relation between the two methods for incorporating a surface defect is discussed
in \cite{Frenkel:2015rda} from the viewpoint of integrable systems.
In fact, the role of the affine Kac--Moody algebra was already revealed in~\cite{Braverman:2004vv,Braverman:2004cr},
where a pertinent theorem was proved to demonstrate
that the prepotential of the Seiberg--Witten theory is obtained from the leading term
of the Nekrasov partition function.
We conjecture that the solutions to the coupled system are provided by the $K$-theoretic
instanton partition function derived from the equivariant character of the affine Laumon space~\cite{FFNR}.\looseness=1

\begin{con}[Conjecture \ref{secondconjecture}]
The partition function \eqref{Nf4} gives a solution to the coupled system in Proposition {\rm \ref{Thm2}}
by the following specialization of parameters:
\begin{align*}
&\mathcal{F}^{(1)}=
f\left(  \begin{matrix}
q^{1/2}b_4/b_8,&q^{1/2} b_6/b_2\vspace{1mm}\\
\bigl(\mathsf{p}^2Q\bigr)^{-1/2},&\bigl(\mathsf{p}^2Q\bigr)^{1/2}\vspace{1mm}\\
q^{-1/2} b_2/b_5,&q^{-1/2}b_7/b_4\end{matrix}\, \Bigg|\,
q^{1/2} \mathsf{p}^{-1}\mathsf{t} G^{-1}, q^{-1/2} \mathsf{p}^{-1}\mathsf{t} G\,\Bigg|\,q,t^{-1/2}\right), \vspace{1mm}\\
&\mathcal{F}^{(2)}=
f\left(  \begin{matrix}
q^{1/2}b_4/b_8,&q^{1/2} b_6/b_2\vspace{1mm}\\
\bigl(\mathsf{p}^2Q\bigr)^{-1/2},&\bigl(\mathsf{p}^2Q\bigr)^{1/2}\vspace{1mm}\\
q^{-1/2} b_1/b_6,&q^{-1/2}b_8/b_3\end{matrix} \,\Bigg|\,
- q^{1/2} \mathsf{t} G^{-1},- q^{-1/2} \mathsf{t} G\,\Bigg|\,q,t^{-1/2}\right).
\end{align*}
\end{con}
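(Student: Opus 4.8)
The plan is to deduce the solution property from Conjecture~\ref{Shamil-conjecture} together with the gauge equivalences already in hand, rather than to verify the coupled system by brute-force substitution. By Proposition~\ref{Thm2} the coupled system for $\bigl(\mathcal{F}^{(1)},\mathcal{F}^{(2)}\bigr)$ is gauge equivalent to the non-stationary equation~\eqref{Shamil-Eq}, which by Conjecture~\ref{Shamil-conjecture} is solved by the Higgsed Nekrasov partition function $\Psi(\Lambda,x)$ of the orbifold surface-defect theory. Hence it suffices to show that the affine Laumon partition function~\eqref{Nf4}, under the stated specialization, reproduces $\Psi$ up to the relevant gauge factor. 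The conceptual input is the expected equivalence of the two realizations of the surface defect---the orbifold realization of~\cite{Shakirov:2021krl} and the affine Laumon realization of~\cite{FFNR,AFKMY}---whose integrable-systems incarnation is discussed in~\cite{Frenkel:2015rda}.

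First I would write out~\eqref{Nf4} explicitly via $K$-theoretic equivariant localization on the affine Laumon space, obtaining a sum over the torus-fixed points, i.e., over tuples of partitions carrying the $\widehat{\mathfrak{sl}}_2$ orbifold grading. Second, I would match this series with the defining expansion of $f$, using the parameter dictionary of Section~\ref{sec:dictinary}: the four ratios in the top and bottom rows of $f$ should reproduce the Nekrasov factors of the four fundamental hypermultiplets (the $N_f=4$ content underlying~\eqref{Nf4}), while the entries $\bigl(\mathsf{p}^2Q\bigr)^{\pm 1/2}$ encode the defect insertion and the pair of $G$-arguments plays the role of the spin variable. The two components $\mathcal{F}^{(1)},\mathcal{F}^{(2)}$ are then the contributions of the two fixed-point sectors singled out by the $\mathbb{Z}_2$ orbifold action; note that $\mathcal{F}^{(1)}$ and $\mathcal{F}^{(2)}$ share their top and middle rows but differ in the bottom row and by a shift of the $G$-arguments by $\mathsf{p}$, exactly the data carried by $\widetilde{T}_{\mathsf{p},b}$.

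Third, I would transfer the solution property sector by sector. The factorization of the translation $T$ into two factors exchanged by the diagram automorphism $\tau$ of $D_5^{(1)}$---which is the very origin of the coupled system---should match the interchange of $\mathcal{F}^{(1)}$ and $\mathcal{F}^{(2)}$: one B\"acklund factor, realized analytically by a single $\widetilde{\Bor}$ composed with $\widetilde{T}_{\mathsf{p},b}$, intertwines the two sectors. Each first-order relation of Proposition~\ref{Thm2} should then follow from a first-order intertwining (vertex-operator) identity for the affine Laumon series, and composing the two relations should reproduce the second-order equation~\eqref{Shamil-Eq}, consistently with Conjecture~\ref{Shamil-conjecture}.

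The hard part will be the partition-function identity of the second step: proving that the affine Laumon series equals the Higgsed orbifold series is a genuinely nontrivial statement, since the two are localization sums over different fixed-point loci, so I do not expect a term-by-term match. The natural tools are instead the blowup/vertex-operator relations for the deformed Virasoro and $\widehat{\mathfrak{sl}}_2$ conformal blocks, or $qq$-character recursions in the sense of~\cite{Nekrasov:2015wsu}. A second delicate point is bookkeeping the gauge prefactor: the double infinite products $\Phi(x)=(x;q,t)_\infty$ in Proposition~\ref{Thm2} arise from the gauge transformation of Proposition~\ref{Thm1}, and one must check that precisely these factors are generated when passing from the orbifold normalization of $\Psi$ to the affine Laumon normalization of $f$. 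Securing this prefactor match, together with the compatibility of the two natural expansions (as a formal power series in $\Lambda$ on the Laumon side and in $x,\Lambda/x$ on the defect side), is where the real technical weight of the proof lies.
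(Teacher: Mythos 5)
This statement is a conjecture, and the paper does not prove it: within the paper the only support offered is the special-case verification in Appendix~\ref{App.B} (the Heine limit $\Lambda\to 0$, where \eqref{Nf4} collapses to a ${}_2\phi_1$ series and the equation is checked against the Heine $q$-difference operator, and the Macdonald limit \eqref{Mac(1)}, where both $\mathcal{F}^{(1)}$ and $\mathcal{F}^{(2)}$ truncate and the two first-order relations are verified directly by $q$-binomial and $q$-Chu--Vandermonde summations). The actual proof is deferred to the sequel \cite{Awata:2023nuc}, where it proceeds via a quantum Knizhnik--Zamolodchikov equation, not by the route you sketch. So your proposal should be judged as a strategy, and as a strategy it has two genuine gaps. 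First, it is conditional on Conjecture~\ref{conjecture}, which is itself open; trading one conjecture for another plus a ``hard partition-function identity'' does not produce a proof of either. Second, and more seriously, the step ``transfer the solution property sector by sector'' does not follow. Proposition~\ref{prop:coupled} only asserts that the \emph{composition} of \eqref{firsteq} and \eqref{secondeq} is gauge equivalent to the second-order equation \eqref{Shamil-Eq}. Even granting that the gauge-transformed $\Psi$ solves the second-order equation, this tells you nothing about whether the \emph{specific pair} $\bigl(\mathcal{F}^{(1)},\mathcal{F}^{(2)}\bigr)$ satisfies each first-order relation separately: one could define an auxiliary $\mathsf{V}^{(2)}$ by one of the relations and recover the second-order equation, but the content of Conjecture~\ref{secondconjecture} is precisely that this auxiliary function coincides with the stated second specialization of $f$. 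That identification must be established independently (e.g., by proving one of the first-order intertwining identities directly), and your plan offers no mechanism for it.

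A smaller but substantive misreading: $\mathcal{F}^{(1)}$ and $\mathcal{F}^{(2)}$ are \emph{not} ``the contributions of the two fixed-point sectors singled out by the $\mathbb{Z}_2$ orbifold action.'' They are one and the same full sum \eqref{Nf4} evaluated at two different specializations of the external parameters, related by the operator $X$ (equivalently by the Weyl reflections $r_0$, $r_1$, $r_4$, $r_5$ exchanging $b_1\leftrightarrow b_2$, etc.) together with the half-period shift $\mathsf{t}\mapsto -\mathsf{p}^{-1}\mathsf{t}$, as recorded in \eqref{F1F2}. The $\mathbb{Z}_2$ grading by $|\lambda|_o$ and $|\lambda|_e$ is internal to each of $\mathcal{F}^{(1)}$ and $\mathcal{F}^{(2)}$ and governs the two expansion variables $x_1$, $x_2$, not the splitting into the two components of the coupled system. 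If you want to make progress along the lines the paper itself suggests, the productive direction is to verify the first-order relations directly in tractable limits, as Appendix~\ref{App.B} does, or to derive them as qKZ-type equations for $U_q\bigl(\widehat{\mathfrak{sl}}_2\bigr)$ acting on the affine Laumon space, as in \cite{Awata:2023nuc}.
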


The point here is that due to the symmetry of the translation element $T$ which defines the discrete time evolution of
the $qq$-Painlev\'e VI equation, the pair of solutions $\bigl(\mathcal{F}^{(1)}, \mathcal{F}^{(2)}\bigr)$ comes from
the common instanton partition function of the affine Laumon space with different specialization of parameters.
An action of the quantum toroidal algebra of $A_r$ type
on the equivariant cohomology group and the equivariant $K$ group
of the affine Laumon space can be defined geometrically \cite{Negut:2011aa}.
In four-dimensional case (cohomological version) it has been shown that the instanton partition function
satisfies the Knizhnik--Zamolodchikov (KZ) equation for the affine Kac--Moody algebra
\cite{Nekrasov:2017gzb,Nekrasov:2021tik}.
Hence, the non-stationary difference equation
should be derived as a KZ type equation for the quantum affine algebra $U_{q}\bigl(\widehat{\mathfrak{sl}}_2\bigr)$ or
more likely $U_{q}\bigl(\widehat{\mathfrak{gl}}_2\bigr)$.
Moreover, since the affine Laumon space has elliptic cohomology,
it seems an interesting and challenging problem
to generalize our non-stationary difference equation to the elliptic case,
which might guide us for ascending some more Sakai's geometric classification scheme of the discrete Painlev\'e equations
in the quantized setting.

\looseness=1 The paper is organized as follows.
In the next section, we first summarize the non-stationary difference equation proposed in \cite{Shakirov:2021krl}.
We make a gauge transformation to rewrite it in a form which is natural from the viewpoint of the $qq$-Painlev\'e VI
equation. We also propose a dictionary between the variables on the gauge theory side and those on the Painlev\'e side.
In Section \ref{sec3}, we show that the adjoint action of the Hamiltonian involving the $q$-Borel transformation correctly
reproduces the Heisenberg form of the $qq$-Painlev\'e VI equation.
In Section \ref{sec4}, following \cite{Hasegawa}, we first recapitulate the quantization of the $q$-Painlev\'e VI equation
focusing on the representation of the extended affine Weyl group \smash{$\widetilde{W}\bigl(D_5^{(1)}\bigr)$} on the space of $q$-commutative dynamical variables~$(F,G)$.
We then make a comparison of the Hamiltonian in \cite{Hasegawa} constructed from the representation of \smash{$\widetilde{W}\bigl(D_5^{(1)}\bigr)$} and
that of the non-stationary difference equation of \cite{Shakirov:2021krl} which involves the $q$-Borel transformation.
In Section \ref{4dlimit}, we introduce the five-dimensional quantum Seiberg--Witten curve.
We show the quantum Seiberg--Witten curve allows a four-dimensional limit and it is consistent with our previous result \cite{AFKMY}.
This is regarded as a good support for the conjecture in \cite{Shakirov:2021krl}.
In Section \ref{Laumon}, we propose a coupled system which is gauge equivalent to the $qq$-Painlev\'e VI equation.
Finally, we conjecture the instanton partition functions of the affine Laumon space provide a solution
$\bigl(\mathcal{F}^{(1)}, \mathcal{F}^{(2)}\bigr)$ to the coupled system.
As a consequence of the fact that the translation element $T$ is given as the square of a certain element in
the extended Weyl group (see \eqref{Eq.(6.8)}),
$\mathcal{F}^{(1)}$ and $\mathcal{F}^{(2)}$ are obtained from a common instanton partition function with two kinds of
specialization of parameters, which are related by the automorphism~$\tau$.
A~summary of the discrete Painlev\'e VI equation is provided in Appendix~\ref{App.A}.
Some of notations and conventions for the discrete Painlev\'e VI equation are fixed there.
A few examples for supporting our conjecture in Section \ref{Laumon} are presented in Appendix~\ref{App.B}.
The four-dimensional limit for a factorized form of the Hamiltonian is discussed in Appendix~\ref{sec:App.C}.

{\it Note added}:
A proof of Conjecture \ref{secondconjecture} is provided in \cite{Awata:2023nuc}.

\section{Non-stationary difference equation}
\label{sec:conjecture}

\begin{Definition}
Let $T_{a,\Lambda}$ and $T_{b,x}$ be the shift operators acting on
the variables $\Lambda$ and $x$ by
$T_{a,\Lambda} f(\Lambda,x) \seteq f(a\Lambda,x)$,
$T_{b,x} f(\Lambda,x) \seteq f(\Lambda,bx)$.
Let $\vartheta_x\seteq x \partial_x$ be the Euler operator in $x$.
We have $\vartheta_x x=x(\vartheta_x+1)$, indicating that
$p \seteq q^{\vartheta_x}$ acts as the $q$-shift operator $q^{\vartheta_x}=T_{q,x}$.
\end{Definition}

\begin{Definition}\label{qBorel}
Set $\Bor \seteq q^{\vartheta_x(\vartheta_x+1)/2}$.
We define the action of $\Bor$ on a formal Laurent series in~$x$
as the $q$-Borel transformation:
\[
\Bor\bigg(\sum_{n} c_n x^n\bigg)=\sum_n q^{n(n+1)/2} c_n x^n.
\]
\end{Definition}

The fundamental relations among $x$, $p = q^{\vartheta_x}$ and $\Bor = q^{\vartheta_x(\vartheta_x+1)/2}$ are
\begin{equation}\label{adGamma}
p x=q xp, \qquad \Bor p=p\Bor, \qquad \Bor x= p x \Bor.
\end{equation}
One can see the last relation by looking at the action on $x^n$. In fact,
both sides give the same result; $q^{\frac{1}{2}(n+1)(n+2)} x^{n+1}$.
The $q$-Borel transformation $\Bor$ (see \cite[Section~2]{Garoufalidis:2022wij} and references therein)
plays a significant role in the non-stationary difference equation in \cite{Shakirov:2021krl}.

It is convenient to introduce the notations
$\varphi(x)\in\mathbb{Q}(q)[[x]]$ and $\Phi(x)\in\mathbb{Q}(q,t)[[x]]$
for the infinite products:
\begin{align*}
\varphi(x):={}& (x;q)_\infty = \prod_{n=0}^\infty \bigl(1-q^n x\bigr)
=\exp\left(-\sum_{n=1}^\infty\frac{1}{n} \frac{1}{1-q^n} x^n \right), \\
\Phi(x) :={}& (x;q,t)_\infty = \prod_{n,m=0}^\infty \bigl(1-q^nt^m x\bigr)
=\exp\left(-\sum_{n=1}^\infty\frac{1}{n} \frac{1}{\bigl(1-q^n\bigr)\bigl(1-t^n\bigr)} x^n \right) .
\end{align*}
They satisfy
\begin{equation}\label{t-difference}
\frac{\Phi(x)}{\Phi(tx)} = \varphi(x), \qquad
\Phi(ta\Lambda)^{-1} T_{t,\Lambda}^{-1} \Phi(ta\Lambda) = \varphi(a\Lambda)T_{t,\Lambda}^{-1}.
\end{equation}
We also use the standard notation for the $q$-shifted factorial
\begin{align*}
(u;q)_n=\prod_{i=0}^{n-1}\bigl(1-u q^i\bigr),\qquad n\in \mathbb{Z}_{\geq 0}.
\end{align*}
See \cite{hypergeometric} for useful formulas for $(u;q)_n$.


\subsection{Five point function with a degenerate field}

The correlation functions of the chiral primary fields $\Phi_{\Delta}(z)$ are the most fundamental objects
in two-dimensional conformal field theory with the energy-momentum tensor $T(z)$ (the generating currents of the Virasoro algebra).
The BPZ equation describes the response of the correlation functions
under the insertion of the descendant fields created by the action of the Virasoro algebra.
The BPZ equation for the five point function on $\mathbb{P}^1$:
\[
\Psi_{\mathrm{CFT}}(\Lambda, x) := \langle \Phi_{\Delta_4}(\infty) \Phi_{\Delta_3}(1)
\Phi_{\Delta_{(2,1)}}(x) \Phi_{\Delta_2}(\Lambda) \Phi_{\Delta_1}(0) \rangle_{\mathbb{P}^1}
\]
with a level two degenerate field, say $\Phi_{\Delta_{(2,1)}(x)}$, at $x$ is the linear differential equation of the form
\begin{equation}\label{Heun}
\bigl(\partial_x^2 + a(\Lambda, x) \partial_x + b(\Lambda, x) + c(\Lambda, x) \partial_{\Lambda}\bigr)
\Psi_{\mathrm{CFT}}(\Lambda, x) =0,
\end{equation}
which has regular singularities at $\{ 0, \Lambda, x, 1 \}$ and hence,
is identified with the non-stationary Heun equation.\footnote{The instanton partition function we are going to discuss
is expanded in $x$ and $\Lambda/x$. Hence, here we assume the radial ordering with $|x|<1$ and $|\Lambda/x| <1$.}
We will reserve $t$ for one of the equivariant parameters
($\Omega$ background) of the torus action on $\mathbb{C}^2$ and $\Lambda$ plays the role of ``time'' variable in \eqref{Heun}.
In the non-stationary case, the constant part of the Heun operator involves the time derivative $\partial_\Lambda$.
Up to the gauge transformation with the factor
\[
x^{\alpha} (x-1)^{\beta} (x- \Lambda)^\gamma,
\]
the equation \eqref{Heun} agrees with the quantization of the Painlev\'e VI equation.
The BPZ equation is also obtained from the deformed Seiberg--Witten curve of four-dimensional supersymmetric gauge theory
in the Nekrasov--Shatashvili limit as the non-stationary Schr\"odinger equation \mbox{\cite{Bullimore:2014awa,Poghossian:2016rzb}}.
This is also regarded as the quantization of (continuous, additive) Painlev\'e VI equation~\cite{AFKMY}.
What we are going to discuss in this paper is an uplift of these stories to the triality of
the deformed Virasoro algebra, discrete Painlev\'e equation and five-dimensional supersymmetric gauge theories.

Recall that in the AGT correspondence $(r+3)$ point conformal blocks on the genus zero curve
are identified with the instanton partition functions of the linear quiver gauge theory of type $A_{r}$.
Let us consider the five-dimensional uplift of the AGT correspondence.
The instanton partition function is expressed in terms of the $K$-theoretic Nekrasov factor
$\Nk_{\lambda,\mu}(u)=\Nk_{\lambda,\mu}(u|q,\kappa)$ defined~by
\begin{align*}
\Nk_{\lambda,\mu}(u|q,\kappa)
&=
\prod_{(i,j)\in \lambda}\bigl(1-u q^{\lambda_i-j}\kappa^{-\mu'_j+i-1}\bigr) \cdot
\prod_{(k,l)\in \mu}\bigl(1-u q^{-\mu_k+l-1}\kappa^{\lambda'_l-k}\bigr),
\end{align*}
or equivalently
\begin{align*}
 \Nk_{\lambda,\mu}(u|q,\kappa)=
 \prod_{j\geq i\geq 1}
\bigl(u q^{-\mu_i+\lambda_{j+1}} \kappa^{-i+j};q\bigr)_{\lambda_j-\lambda_{j+1}}\cdot
\prod_{\beta\geq \alpha \geq 1}
\bigl(u q^{\lambda_{\alpha}-\mu_\beta} \kappa^{\alpha-\beta-1};q\bigr)_{\mu_{\beta}-\mu_{\beta+1}}.
\end{align*}
Here $q$ and $\kappa$ are regarded independent indeterminates.
The Nekrasov factor $\Nk_{\lambda,\mu}(u)$ depends on a pair of partitions $(\lambda,\mu)$, namely
$\lambda=(\lambda_1,\lambda_2,\ldots)$ is non-increasing non-negative integers with finitely many positive parts.
$\lambda'$ denotes the conjugate of $\lambda$.
In \cite{Shakirov:2021krl}, the instanton partition function of five-dimensional ${\rm SU}(2) \times {\rm SU}(2)$ theory
with four fundamental matter multiplets and one bi-fundamental matter multiplet is considered.
On the deformed conformal block side this corresponds to the five point function on $\mathbb{P}^1$.
\begin{Definition}\label{A_2quiverZ}
\begin{align}
\mathcal{Z}( \Lambda,x)
:={}&\sum_{\nu_1,\nu_2,\mu_1,\mu_2\in \mathsf{P}}
\mathfrak{p}_1^{|\nu_1|+|\nu_2|}
\mathfrak{p}_2^{|\mu_1|+|\mu_2|}\nonumber\\
&\times
\prod_{1\leq a,b\leq 2}
\frac{\Nk_{\varnothing,\nu_b}\!\! \bigl(v \mathfrak{f}_a^+/\mathfrak{n}_b|q,t^{-1}\bigr)
\Nk_{\nu_a,\mu_b} \bigl(w \mathfrak{n}_a/\mathfrak{m}_b|q,t^{-1}\bigr)
\Nk_{\mu_a,\varnothing} \bigl(v \mathfrak{m}_a/\mathfrak{f}^-_b|q,t^{-1}\bigr)}
{\Nk_{\nu_a,\nu_b} \bigl( \mathfrak{n}_a/\mathfrak{n}_b|q,t^{-1}\bigr)
\Nk_{\mu_a,\mu_b} \bigl( \mathfrak{m}_a/\mathfrak{m}_b|q,t^{-1}\bigr) },\!\!\label{Z}
\end{align}
where ${\mathsf P}$ denotes the set of all partitions and $\varnothing$ is the empty partition.
\end{Definition}
{\samepage The following parametrization was used in \cite{Shakirov:2021krl}:
\begin{align}
&v=q^{1/2}t^{-1/2},\qquad w=v\phi_1, \qquad
\mathfrak{p}_1=v^{-2}T_2\phi_2 x,\qquad
\mathfrak{p}_2=v^{-2}\frac{T_4 \Lambda}{\phi_1 x}, \nonumber \\
& \mathfrak{n}_1=1,\qquad \mathfrak{n}_2=Q,\qquad
 \mathfrak{m}_1=1,\qquad \mathfrak{m}_2=\phi_1\phi_2 Q, \nonumber\\
 &\mathfrak{f}_1^+=T_1Q,\qquad
 \mathfrak{f}_2^+=T_2^{-1},\qquad
 \mathfrak{f}_1^-=T_3^{-1},\qquad
 \mathfrak{f}_2^-=T_4 \phi_1\phi_2 Q. \label{Sh-parameters}
\end{align}}
The coefficients of the expansion depend on parameters $(Q, \phi_1, \phi_2, T_1, \dots, T_4)$ and
the equivariant parameters $(q,t)$ of the torus action on $\mathbb{C}^2$.
The parameters $(Q, \phi_1, \phi_2)$ correspond to the equivariant parameters of
the Cartan subalgebra ${\rm U}(1) \times {\rm U}(1) \subset {\rm SU}(2) \times {\rm SU}(2)$ of the gauge group
and the mass of the bi-fundamental matter.\footnote{If we extend the gauge group to ${\rm U}(2) \times {\rm U}(2)$,
the (exponentiated) mass parameter of the bi-fundamental matter may be identified with the equivariant parameter
of the relative ${\rm U}(1)$ factor of the gauge group ${\rm U}(2) \times {\rm U}(2)$. Note that the diagonal ${\rm U}(1)$ factor decouples.}
On the other hand the mass parameters~$M_i$ of the fundamental hypermultiplets are related to
$\log T_i$ up to the appropriate shifts of $\log q = \epsilon_1$, $\log t = - \epsilon_2$, $\log Q = -2a$.
The instanton partition function $\mathcal{Z}( \Lambda,x)$ is a formal power series in $(x, \Lambda/x)$, where
they are related to the insertion points of the intertwiners up the ${\rm SL}(2,\mathbb{R})$ transformation.

Let us consider the degenerate conformal block with the insertion of a level two degenerate field.
Then one of the external Liouville momentum has a special value and the degenerate fusion rule
tells that there are two allowed values for the intermediate momentum.
According to the AGT correspondence this imposes {\it two} conditions on parameters $(Q, \phi_1, \phi_2)$
on the quiver gauge theory side, which is often referred to the Higgsing condition.
In the present case the conditions are explicitly $\phi_1= q^{-1/2} t^{3/2}$, $\phi_2 = q^{1/2} t^{-1/2}$
\big(or $\phi_1= q^{-1/2} t^{1/2}$, $\phi_2 = q^{3/2} t^{-1/2}$\big). Recall that there are two possibilities for the
intermediate momentum. As a consequence, one of the equivariant parameters, say $t$, is transmuted to
the Higgsing mass parameter. Later we will see $t^{1/4} = \mathsf{p}$ becomes a basic shift parameter, or
the non-autonomous parameter ($t \to 1$ is the autonomous limit) on the
Painlev\'e side. In the five brane web realization of the surface defect, the parameter $t$ is identified with the
volume of $S^3$ connecting $\mathrm{NS}5$ brane and $\mathrm{D}5$ brane which are non-intersecting. Note that
$\mathrm{D}3$ brane can wrap $S^3$ in type IIB string theory.

After imposing the Higgsing condition
the instanton partition function
\begin{equation}
\Psi(\Lambda, x) = \mathcal{Z}( \Lambda,x) \vert_{\phi_1= q^{-1/2} t^{3/2}, \phi_2 = q^{1/2} t^{-1/2}}
\end{equation}
has parameters $(Q; T_1, \dots, T_4; \Lambda, x ; q,t)$.
Physically this is the instanton partition function of the ${\rm SU}(2)$ gauge theory with a surface
defect.\footnote{In general the gauge theory allows several types of the surface defect according to
the breaking patters of the total gauge group ${\rm SU}(N)$ which are labelled by partitions of $N$.
But for ${\rm SU}(2)$ the breaking pattern is
unique ${\rm SU}(2) \to {\rm U}(1)$. Mathematically the breaking pattern defines a parabolic structure
along a defect (or a divisor).} The parameter $x$ is the insertion point of the degenerate field.
Then, the conjecture in \cite{Shakirov:2021krl} says
\begin{con}[\cite{Shakirov:2021krl}]\label{conjecture}
\begin{align}\label{non-stationary}
\Psi(t \Lambda,x)=\mathcal{A}_1(\Lambda,x) \cdot \Bor \cdot
\mathcal{A}_2(\Lambda,x) \cdot \Bor \cdot \mathcal{A}_3 (\Lambda,x) \Psi\left(\Lambda,{x\over t q Q}\right),
\end{align}
where
\begin{align*}
&\mathcal{A}_1(\Lambda,x) =
{1\over \varphi(T_1 t v x)}
{\Phi\bigl(T_3 t^2 v \Lambda x^{-1}\bigr) \over \Phi\bigl(T_3 q v \Lambda x^{-1}\bigr) }
{\Phi\bigl(T_4 t^2 v \Lambda x^{-1}\bigr) \over \Phi\bigl(T_4 t^2 v^{-1} \Lambda x^{-1}\bigr) },\\
&\mathcal{A}_2(\Lambda,x) =
{\varphi(q T_2T_3 \Lambda)\varphi(t T_1 T_4 \Lambda) \over
\varphi(-T_1T_2 x)\varphi\bigl(-Q^{-1} x\bigr)
\varphi\bigl(-T_3T_4 Q q t \Lambda x^{-1}\bigr)
\varphi\bigl(-q \Lambda x^{-1}\bigr)},\\
&\mathcal{A}_3(\Lambda,x) =
{1\over \varphi\bigl(T_2Q^{-1}q^{-1} v x\bigr)}
{\Phi\bigl(T_3 Q q^2 v \Lambda x^{-1}\bigr) \over \Phi\bigl(T_3 Q q^2 v^{-1} \Lambda x^{-1}\bigr) }
{\Phi\bigl(T_4 Q t^3 v \Lambda x^{-1}\bigr) \over \Phi\bigl(T_4 Q q^2 v^{-1} \Lambda x^{-1}\bigr) },
\end{align*}
with $v := q^{1/2} t^{-1/2}$ and $\Bor$ is the $q$-Borel transformation.
\end{con}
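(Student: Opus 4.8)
The plan is to prove the conjecture by establishing \eqref{non-stationary} as an identity of formal power series. Writing the two shifts as $\Psi(t\Lambda,x)=T_{t,\Lambda}\Psi$ and $\Psi\bigl(\Lambda,x/(tqQ)\bigr)=T_{(tqQ)^{-1},x}\Psi$, the assertion is that $\Psi$ is annihilated by the linear operator $T_{t,\Lambda}-\mathcal{A}_1\Bor\mathcal{A}_2\Bor\mathcal{A}_3\,T_{(tqQ)^{-1},x}$. Every ingredient — multiplication by the $\varphi$- and $\Phi$-products in the $\mathcal{A}_i$, the two shift operators, and the Gaussian $\Bor$ (diagonal on monomials by Definition \ref{qBorel}) — preserves the space of formal power series in $x$ and $\Lambda/x$ in which the Higgsing $\Psi$ of $\mathcal{Z}$ from Definition \ref{A_2quiverZ} lives. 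First I would fix the gradings once and for all: $\mathcal{Z}$ is a double series in $\mathfrak{p}_1=v^{-2}T_2\phi_2 x$ and $\mathfrak{p}_2=v^{-2}T_4\Lambda/(\phi_1 x)$, so the $x$- and $\Lambda$-degrees of the $(\nu_1,\nu_2,\mu_1,\mu_2)$-term are $|\nu_1|+|\nu_2|-|\mu_1|-|\mu_2|$ and $|\mu_1|+|\mu_2|$. On each graded piece $\Bor$ is then completely explicit, and the equation becomes a sequence of identities among products of Nekrasov factors $\Nk_{\lambda,\mu}$. By Proposition \ref{Thm1} the claim is moreover equivalent to the gauge-transformed $\Psi$ being a fixed point of the $qq$-Painlev\'e Hamiltonian, so one could equally establish the Schr\"odinger-form eigenvalue equation \eqref{Sch-form}.

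To organize these identities I would use the free-field realization underlying the five-dimensional AGT correspondence. Under it $\Psi$ is the deformed-Virasoro five-point block carrying one degenerate field $\Phi_{(2,1)}(x)$, the Higgsing $\phi_1=q^{-1/2}t^{3/2}$, $\phi_2=q^{1/2}t^{-1/2}$ being exactly the resonance that implements the level-two degenerate fusion; the difference equation is then the $q$-deformation of the BPZ null-vector equation \eqref{Heun}. Concretely I would represent $\Psi$ as a Dotsenko--Fateev screening-charge matrix element and derive the relation by commuting the screening current attached to the degenerate field past the remaining vertex operators: the $q$-deformed operator-product contractions in the refined $(q,t)$ setting should produce precisely the single and double infinite products $\varphi$ and $\Phi$, while the second relation in \eqref{t-difference}, which conjugates $\Phi$ into $\varphi$ under the $t$-shift of $\Lambda$, explains why shifting the time variable converts the $\Phi$-factors in the $\mathcal{A}_i$ into $\varphi$-factors. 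Matching the zero-mode shifts against the parametrization \eqref{Sh-parameters} should reproduce the $x$-shift $x\to x/(tqQ)$ and the explicit coefficients of $\mathcal{A}_1,\mathcal{A}_2,\mathcal{A}_3$.

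The hard part, and the genuinely novel point, is the appearance of the Gaussian operator $\Bor=q^{\vartheta_x(\vartheta_x+1)/2}$: ordinary AGT equations are pure $q$-difference (shift) equations, whereas here the operator is quadratic in the Euler operator $\vartheta_x$. I expect $\Bor$ to originate from the framing/normalization factor of the degenerate intertwiner — equivalently from the term quadratic in the momentum, i.e.\ the conformal weight $\Delta_{(2,1)}$, in the deformed null-vector condition — which under the $x$-grading is exactly $q^{\vartheta_x(\vartheta_x+1)/2}$. Controlling this factor is where the main work lies: one must move the two copies of $\Bor$ through the multiplication operators $\mathcal{A}_i$ using the relations \eqref{adGamma}, namely $\Bor\,x=px\,\Bor$ and $\Bor\,p=p\,\Bor$, and verify that the resulting Gaussian-twisted products reproduce the Nekrasov coefficients at each bidegree. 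Since two factors of $\Bor$ occur, the equation is genuinely of second order in the Gaussian, mirroring the second-order BPZ equation, and establishing the exact cancellation order by order is the crux.

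Finally I would cross-check, and possibly obtain a cleaner proof, through the affine Laumon space. As noted in the Introduction, the same partition function admits a geometric description on the equivariant $K$-theory of the affine Laumon space carrying an action of the quantum toroidal algebra of type $A_1$ \cite{Negut:2011aa,FFNR}, under which it should satisfy a quantum-Knizhnik--Zamolodchikov-type connection for $U_q\bigl(\widehat{\mathfrak{gl}}_2\bigr)$. Identifying the dynamical (Gaussian) part of that connection with $\Bor$ would recast \eqref{non-stationary} as a qKZ equation and give a representation-theoretic proof; this is the route suggested by the related Conjecture \ref{secondconjecture}, whose affine-Laumon analog is proved in \cite{Awata:2023nuc}.
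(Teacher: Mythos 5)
There is a genuine gap, and it begins with the status of the statement itself: \eqref{non-stationary} is Conjecture \ref{conjecture}, which this paper deliberately does \emph{not} prove. The paper only records evidence for it --- uniqueness of the formal series solution, gauge equivalence to the $qq$-Painlev\'e VI equation, consistency of the four-dimensional limit with the quantum Seiberg--Witten curve, the reformulation as the coupled system of Proposition \ref{prop:coupled}, and an actual verification only in the Macdonald limit $(T_1,T_2,T_3,T_4)=\bigl(vt^{-1},v^{-1},v^{-1},vt^{-1}\bigr)$, where the sum over quadruples of partitions collapses and the solution reduces to an asymptotically free Macdonald function. Your text is a research plan rather than a proof: every step that carries the mathematical weight is phrased conditionally (``should produce precisely the single and double infinite products,'' ``I expect $\Bor$ to originate from the framing factor,'' ``verify that the resulting Gaussian-twisted products reproduce the Nekrasov coefficients at each bidegree''). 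The last of these \emph{is} the conjecture; restating it as the crux does not establish it. Likewise, invoking Proposition \ref{Thm1} only reformulates the equation as the eigenvalue problem \eqref{Sch-form} --- it says nothing about whether the Higgsed partition function $\Psi$ actually satisfies it.

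The specific step that would fail, or at least is nowhere near justified, is the claimed free-field origin of $\Bor$. For the undeformed Virasoro algebra the BPZ equation \eqref{Heun} does follow from the null-vector decoupling, but the paper's introduction stresses that for the deformed Virasoro algebra the analogous equation was unknown for over twenty-five years precisely because the naive $q$-deformation of the null-vector argument does not produce a closed difference equation; the appearance of the Gaussian $q^{\vartheta_x(\vartheta_x+1)/2}$, quadratic in $\vartheta_x$, is exactly what has no known derivation from screening-charge manipulations. Asserting that the conformal weight $\Delta_{(2,1)}$ ``is exactly'' $\Bor$ under the $x$-grading is not a computation. If you want a route that has actually been carried to completion, it is the one the paper points to in the Note added: prove the affine Laumon version (Conjecture \ref{secondconjecture}, i.e., that $\bigl(\mathcal{F}^{(1)},\mathcal{F}^{(2)}\bigr)$ solves the coupled system \eqref{firsteq}--\eqref{secondeq}), which is done in \cite{Awata:2023nuc} via a quantum Knizhnik--Zamolodchikov equation, and then transport it back through the gauge transformations of Sections \ref{4dlimit} and \ref{Laumon}. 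As written, your proposal identifies plausible strategies but executes none of them, so it does not constitute a proof of the conjecture.
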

In \cite{Shakirov:2021krl}, several evidences for the conjecture have been provided. For example it was proved
for a special choice of mass parameters (external Liouville momenta)\footnote{Recall that the Higgsing condition is $(\phi_1, \phi_2)=\bigl(tv^{-1}, v\bigr)$.}
{\samepage
\[(T_1, T_2, T_3, T_4) = \bigl(vt^{-1}, v^{-1}, v^{-1}, vt^{-1}\bigr),\] where the solution is expressed in terms of the Macdonald polynomials.}

In this paper, we will investigate the structure of the difference equation \eqref{non-stationary},
which is independent of the validity of Conjecture \ref{conjecture}.
Hence, let us replace $\Psi(\Lambda,x)$, which was defined by the Nekrasov partition function,
with a generic function $\mathsf{U}(\Lambda,x)$ and write the equation as follows:
\begin{align}\label{qq-PVI}
\mathsf{U}(t \Lambda,x)=\mathcal{A}_1(\Lambda,x) \cdot \Bor \cdot
\mathcal{A}_2(\Lambda,x) \cdot \Bor \cdot \mathcal{A}_3 (\Lambda,x) \mathsf{U}\left(\Lambda,{x\over t q Q}\right).
\end{align}
The difference equation \eqref{qq-PVI} is called non-stationary, since there appears the shift $\Lambda \to t\Lambda$ on the left hand side.
It was pointed out \cite{Shakirov:2021krl} that in the mass decoupling limit $T_1, T_2, T_3, T_4 \to 0$,
the difference equation \eqref{qq-PVI} reduces to the non-stationary relativistic affine Toda equation \cite{Shi}.
\begin{Remark}
In the mass decoupling limit the difference equation \eqref{qq-PVI} degenerates to
\[
\mathsf{U}_{\mathrm{Toda}}(t\Lambda, x) = \widehat{H} \mathsf{U}_{\mathrm{Toda}}(\Lambda, x), \qquad
\widehat{H} = \Bor \cdot \frac{1}{\varphi\bigl(- Q^{-1}x\bigr)\varphi\bigl(- q\Lambda x^{-1}\bigr)} \cdot \Bor \cdot T_{qtQ,x}^{-1}.
\]
\end{Remark}
The name ``Toda'' comes from the following fact.
\begin{Proposition}[\cite{Shakirov:2021krl,Shi}]
The operator $\widehat{H}$ commutes with the quantum Hamiltonian of the two-particle relativistic affine Toda system:
\[
\hat{H}_{\mathrm{Toda}}= T_{q,x} + tQ T_{q,x}^{-1} + tx + \Lambda x^{-1}.
\]
\end{Proposition}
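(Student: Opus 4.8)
The plan is to prove the equivalent statement that $\widehat{H}$ fixes $\hat{H}_{\mathrm{Toda}}$ under conjugation, $\widehat{H}\,\hat{H}_{\mathrm{Toda}}\,\widehat{H}^{-1}=\hat{H}_{\mathrm{Toda}}$; since $\Bor$, the multiplication operator $M:=\bigl(\varphi\bigl(-Q^{-1}x\bigr)\varphi\bigl(-q\Lambda x^{-1}\bigr)\bigr)^{-1}$ and the shift $T_{qtQ,x}$ are all invertible, this is equivalent to $\bigl[\widehat{H},\hat{H}_{\mathrm{Toda}}\bigr]=0$. I would work entirely inside the algebra generated by $x^{\pm1}$ and $p=q^{\vartheta_x}=T_{q,x}$ with the relations \eqref{adGamma}, writing $\hat{H}_{\mathrm{Toda}}=p+tQp^{-1}+tx+\Lambda x^{-1}$ and $\widehat{H}=\Bor\,M\,\Bor\,T_{qtQ,x}^{-1}$, and peel the four factors of $\widehat{H}$ from the inside out, conjugating $\hat{H}_{\mathrm{Toda}}$ successively by $T_{qtQ,x}^{-1}$, then $\Bor$, then $M$, then $\Bor$ (each meaning $g\,(\cdot)\,g^{-1}$).

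The first two conjugations are mechanical. Conjugation by $T_{qtQ,x}^{-1}$ only rescales the multiplication part, $x\mapsto x/(qtQ)$, and commutes with $p$, producing $p+tQp^{-1}+\tfrac{1}{qQ}x+qtQ\Lambda x^{-1}$. Conjugation by $\Bor$ uses the three identities $\Bor x\Bor^{-1}=px$, $\Bor x^{-1}\Bor^{-1}=x^{-1}p^{-1}$ and $\Bor p^{\pm1}\Bor^{-1}=p^{\pm1}$, all immediate from \eqref{adGamma}, turning the monomials into $px$ and $x^{-1}p^{-1}$. The only substantial step is the conjugation by $M$: since $M$ commutes with $x^{\pm1}$ but not with $p^{\pm1}$, it produces ratios $m(x)/m(q^{\pm1}x)$ of its symbol $m$ in front of the $p^{\pm1}$-terms, and these I would evaluate from the first-order functional equation $\varphi(y)=(1-y)\varphi(qy)$, obtaining
\[
\frac{m(x)}{m(qx)}=\frac{1+\Lambda x^{-1}}{1+Q^{-1}x},\qquad
\frac{m(x)}{m(q^{-1}x)}=\frac{1+q^{-1}Q^{-1}x}{1+q\Lambda x^{-1}}.
\]

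The crux is the cancellation that these ratios trigger. After conjugating by $M$, the two terms carrying $p$ combine into $A(x)\,p\,B(x)$ with $A=(1+\Lambda x^{-1})/(1+Q^{-1}x)$ and $B=1+\tfrac{1}{qQ}x$; pushing $p$ to the right via $p\,B(x)=B(qx)\,p$ replaces $B$ by $B(qx)=1+Q^{-1}x$, which exactly cancels the denominator of $A$ and collapses the pair to $(1+\Lambda x^{-1})p$. The two $p^{-1}$-terms collapse analogously to $tQ\bigl(1+q^{-1}Q^{-1}x\bigr)p^{-1}=tQp^{-1}+tq^{-1}xp^{-1}$. I expect this matching --- the single pole and single zero of $M$ under a $q$-shift lining up precisely with the coefficients $Q^{-1}$, $q\Lambda$ of $M$ and the shift parameter $qtQ$ of $T_{qtQ,x}$ --- to be the delicate, essentially ``miraculous'' part of the argument, and the point at which the specific form of the operators is used. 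A final conjugation by $\Bor$ then sends $\Lambda x^{-1}p\mapsto\Lambda x^{-1}$ and $tq^{-1}xp^{-1}\mapsto tq^{-1}\,pxp^{-1}=tx$ (using $pxp^{-1}=qx$), while fixing $p$ and $tQp^{-1}$, so the four surviving terms reassemble as $p+\Lambda x^{-1}+tQp^{-1}+tx=\hat{H}_{\mathrm{Toda}}$, which is the claim. Everything apart from the cancellation is bookkeeping in \eqref{adGamma}; the only care required is in tracking the order of $p$ and $x$.
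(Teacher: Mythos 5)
Your computation is correct: conjugating $\hat{H}_{\mathrm{Toda}}=p+tQp^{-1}+tx+\Lambda x^{-1}$ successively by $T_{qtQ,x}^{-1}$, $\Bor$, $M$ and $\Bor$ does return $\hat{H}_{\mathrm{Toda}}$, the two $q$-shifted ratios of $M$ are right, and the cancellations $(1+Q^{-1}x)$ and $(1+q\Lambda x^{-1})$ against the shifted coefficients work exactly as you describe. The paper itself states this Proposition without proof, deferring to the cited references, so there is no in-paper argument to compare against; your factor-by-factor conjugation is the natural one and is the same technique the paper deploys elsewhere (e.g., in Proposition \ref{prp:5dSW} for the quantum Seiberg--Witten curve), so nothing is missing.
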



One can confirm the existence and uniqueness of the solution to
the non-stationary difference equation \eqref{qq-PVI}:
\begin{Proposition}
The equation \eqref{qq-PVI} has a formal series solution of the form
\[
\mathsf{U}(\Lambda,x)=\sum_{i,j=0}^{\infty} c_{i,j}x^i (\Lambda/x)^j,
\]
and it is unique up to a normalization.
\end{Proposition}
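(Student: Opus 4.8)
The plan is to expand the equation \eqref{qq-PVI} in the monomial basis $m_{i,j} := x^i(\Lambda/x)^j = \Lambda^j x^{i-j}$ with $i,j \ge 0$, and to read it as a triangular recursion with respect to the total degree $d(m_{i,j}) := i+j$. Writing the equation as $T_{t,\Lambda}\,\mathsf{U} = R\,\mathsf{U}$ with $R := \mathcal{A}_1\,\Bor\,\mathcal{A}_2\,\Bor\,\mathcal{A}_3\,T_{(tqQ)^{-1},x}$, I would first record how the building blocks act on the basis. The operators $T_{t,\Lambda}$, the shift $T_{(tqQ)^{-1},x}$ and $\Bor = q^{\vartheta_x(\vartheta_x+1)/2}$ all act diagonally: $T_{t,\Lambda}m_{i,j} = t^j m_{i,j}$, $T_{(tqQ)^{-1},x}m_{i,j} = (tqQ)^{-(i-j)}m_{i,j}$, and $\Bor\, m_{i,j} = q^{(i-j)(i-j+1)/2}m_{i,j}$, since $\vartheta_x$ has eigenvalue $i-j$ on $m_{i,j}$. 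In particular all three preserve $d$.

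Next I would observe that each $\mathcal{A}_r$ lies in $\bbQ(q,t,Q,T_1,\dots,T_4)[[x,\Lambda/x]]$ and has constant term $1$: every $\varphi$- and $\Phi$-factor appearing in $\mathcal{A}_r$ has argument proportional to a monomial of positive total degree (a power of $x$, of $\Lambda/x$, or of $\Lambda = x\cdot(\Lambda/x)$), and $\varphi(0)=\Phi(0)=1$. Hence multiplication by $\mathcal{A}_r$ is filtration-preserving and equals the identity on the associated graded, i.e.\ $\mathcal{A}_r = 1 + (\text{terms of degree}\ge 1)$. Consequently $R$ raises $d$ by a non-negative amount, its degree-preserving part being exactly the diagonal operator obtained by replacing each $\mathcal{A}_r$ by $1$; in particular there is no mixing of distinct monomials of equal degree.

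Then I would extract the recursion by comparing the coefficient of $m_{I,J}$ on both sides of $T_{t,\Lambda}\,\mathsf{U}=R\,\mathsf{U}$. The degree-preserving parts yield the ``indicial'' term, and all other contributions come from strictly lower degree:
\[
\Bigl(t^{J} - (tqQ)^{-(I-J)}q^{(I-J)(I-J+1)}\Bigr)c_{I,J}
= P_{I,J}\bigl(\{c_{i,j}\colon i+j < I+J\}\bigr),
\]
where $P_{I,J}$ is a finite polynomial in coefficients of strictly smaller total degree (finite because each graded piece of $[[x,\Lambda/x]]$ is finite dimensional). Setting $n := I-J$, the indicial factor factors as $t^{-n}\bigl(t^{I} - q^{n^{2}}Q^{-n}\bigr)$, which for generic $q,t,Q$ vanishes if and only if $I=0$ and $n=0$, that is $(I,J)=(0,0)$.

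Finally I would run the induction on $D := I+J$. At $D=0$ the indicial factor vanishes and the equation reads $0=0$, so $c_{0,0}$ is a free normalization constant; for every $(I,J)\neq(0,0)$ the indicial factor is nonzero, and the displayed relation determines $c_{I,J}$ uniquely from the already-fixed coefficients of degree $<D$. This proves existence and uniqueness up to the single scalar $c_{0,0}$. The one point requiring genuine care is the filtration claim for the $\mathcal{A}_r$ together with the nonvanishing of the indicial factor: the former is a direct inspection of the arguments of the $\varphi,\Phi$ factors, while the latter rests on $(q,t,Q)$ being generic (multiplicatively independent), which is the standing assumption on the parameters.
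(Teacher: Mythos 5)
Your proposal is correct and follows essentially the same route as the paper: both arguments rest on the triangularity of $T_{t,\Lambda}-\mathcal{A}_1\Bor\mathcal{A}_2\Bor\mathcal{A}_3 T_{tqQ,x}^{-1}$ with respect to the total degree $i+j$ and on the nonvanishing, for generic $(q,t,Q)$ and $(i,j)\neq(0,0)$, of the diagonal coefficient $t^{j}-q^{(i-j)(i-j+1)}(tqQ)^{j-i}$. Your write-up merely makes explicit two points the paper leaves to the reader, namely that each $\mathcal{A}_r$ is $1$ plus terms of strictly positive total degree and that the indicial factor vanishes only at $(0,0)$.
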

\begin{proof}
It is easy to see that the operator $T_{t, \Lambda}-{\mathcal A}_1 \cdot \Bor \cdot {\mathcal A}_2 \cdot \Bor \cdot
{\mathcal A}_3 T_{tqQ, x}^{-1}$ is ``triangular" in the sense that it sends the monomials $x^i (\Lambda/x)^j$ to linear combinations of
$x^{i+m} (\Lambda/x)^{j+n}$ $(m,n \geq 0)$. Moreover the leading coefficient
with $(m,n)=(0,0)$ is $t^j-q^{(i-j)(i-j+1)} (t q Q)^{j-i}$, which is nonvanishing for generic parameters $t$, $q$ and $Q$.
Hence, the coefficients $c_{i,j}$ are uniquely solved order by order with respect to $i+j$, once the initial value $c_{0,0}$ is fixed.
\end{proof}

We can eliminate $\Phi$-factors (the double infinite products) completely from the non-stationary difference equation \eqref{qq-PVI},
by the following gauge transformation:
\begin{equation}\label{2Painleve}
\mathsf{u}(\Lambda,x)= \Phi(qt T_2T_3\Lambda)\Phi\bigl(t^2 T_1T_4\Lambda\bigr)
\mathcal{A}_3(t\Lambda,t q Q x)\mathsf{U}(t\Lambda,x).
\end{equation}
Using the relations
\begin{align*}
&\mathcal{A}_3 (t\Lambda,t q Q x)\mathcal{A}_1(\Lambda,x)
= {1\over
\varphi(T_1v t x) \varphi(T_2v t x)
\varphi\bigl(T_3 vt \Lambda x^{-1}\bigr) \varphi\bigl(T_4 vt \Lambda x^{-1}\bigr) },\\
&{\Phi(q T_2T_3 \Lambda)\Phi(t T_1T_4 \Lambda) \over
\Phi(qt T_2T_3 \Lambda)\Phi\bigl(t^2 T_1T_4 \Lambda\bigr) }=\varphi(qT_2T_3\Lambda)\varphi(tT_1T_4\Lambda),
\end{align*}
which follow from \eqref{t-difference}, we obtain
\begin{Proposition}
The difference equation \eqref{qq-PVI} is gauge equivalent to
\begin{align}\label{qq-PVI-2}
H\mathsf{u}(\Lambda,x)=\mathsf{u}(\Lambda,x),
\end{align}
with the Hamiltonian
\begin{align}\label{SH}
H={}&
{1\over
\varphi(T_1v t x) \varphi(T_2v t x)
\varphi\bigl(T_3v t \Lambda x^{-1}\bigr) \varphi\bigl(T_4 v t\Lambda x^{-1} \bigr) }
 \Bor \nonumber \\
&\times
{1 \over
\varphi(-T_1T_2 x)\varphi\bigl(-Q^{-1} x\bigr)
\varphi\bigl(-T_3T_4 Q q t \Lambda x^{-1}\bigr)
\varphi\bigl(-q \Lambda x^{-1}\bigr)}
 \Bor  T_{q t Q,x}^{-1}T_{t,\Lambda}^{-1}.
\end{align}
\end{Proposition}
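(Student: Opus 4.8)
The plan is to verify the assertion by direct substitution: one inserts the gauge transformation \eqref{2Painleve} into the Hamiltonian $H$ of \eqref{SH} and checks that $H\mathsf{u}=\mathsf{u}$ becomes an identity once $\mathsf{U}$ is assumed to solve \eqref{qq-PVI}. The observation that organizes the whole calculation is that the two $\varphi$-products appearing in $H$ are almost the factors $\mathcal{A}_i$ themselves. Writing $B_1$ for the first $\varphi$-product in \eqref{SH} and $B_2$ for the second, the first relation stated just before the Proposition gives $B_1=\mathcal{A}_3(t\Lambda,tqQx)\mathcal{A}_1(\Lambda,x)$, while $B_2$ is exactly $\mathcal{A}_2$ stripped of its numerator, i.e.\ $\mathcal{A}_2=\varphi(qT_2T_3\Lambda)\varphi(tT_1T_4\Lambda)\,B_2$. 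Thus $H=B_1\,\Bor\,B_2\,\Bor\,T_{tqQ,x}^{-1}T_{t,\Lambda}^{-1}$ is built from the same data as the right-hand side of \eqref{qq-PVI}, and the task reduces to matching them up.

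Concretely, I would proceed from the inside out. Abbreviating the gauge factor as $g(\Lambda,x)=\Phi(qtT_2T_3\Lambda)\Phi(t^2T_1T_4\Lambda)\mathcal{A}_3(t\Lambda,tqQx)$, so that $\mathsf{u}=g\,\mathsf{U}(t\Lambda,x)$, one first applies $T_{tqQ,x}^{-1}T_{t,\Lambda}^{-1}$: the $\Lambda$-shift lowers the arguments of the two $\Phi$'s and the first slot of $\mathcal{A}_3$, while the $x$-shift normalizes the second slot of $\mathcal{A}_3$ back to $x$ and turns $\mathsf{U}(t\Lambda,x)$ into $\mathsf{U}(\Lambda,x/(tqQ))$. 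One then applies the right $\Bor$; here the crucial point is that $\Bor=q^{\vartheta_x(\vartheta_x+1)/2}$ acts only in $x$, so it commutes with the $\Lambda$-only $\Phi$-factors, which simply pass through. Multiplying by $B_2$ and invoking the second stated relation to rewrite $\Phi(qT_2T_3\Lambda)\Phi(tT_1T_4\Lambda)/[\varphi(qT_2T_3\Lambda)\varphi(tT_1T_4\Lambda)]$ as $\Phi(qtT_2T_3\Lambda)\Phi(t^2T_1T_4\Lambda)$ reconstitutes $\mathcal{A}_2$ from $B_2$ and leaves precisely the $\Phi$-factors of $g$ outside. After the left $\Bor$ (through which the remaining $\Lambda$-factors again pass) one multiplies by $B_1=\mathcal{A}_3(t\Lambda,tqQx)\mathcal{A}_1$; the factor $\mathcal{A}_1$ combines with the interior $\Bor\,\mathcal{A}_2\,\Bor\,\mathcal{A}_3\,\mathsf{U}(\Lambda,x/(tqQ))$ to form the right-hand side of \eqref{qq-PVI}, which equals $\mathsf{U}(t\Lambda,x)$. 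The surviving prefactor $\Phi(qtT_2T_3\Lambda)\Phi(t^2T_1T_4\Lambda)\mathcal{A}_3(t\Lambda,tqQx)$ is exactly $g$, so $H\mathsf{u}=g\,\mathsf{U}(t\Lambda,x)=\mathsf{u}$.

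I expect the substance of the argument to be bookkeeping rather than any deep idea: the only genuinely structural inputs are the commutation of $\Bor$ with multiplication by functions of $\Lambda$ alone, and the two $\varphi$/$\Phi$ identities that follow from \eqref{t-difference}. The main thing to get right is the order in which the shift operators and the two copies of $\Bor$ are moved past the $\Phi$- and $\varphi$-factors, keeping careful track of which factors depend on $x$ (and hence interact with $\Bor$ and the $x$-shift) and which depend only on $\Lambda$ (and pass through freely). Finally, since the gauge transformation \eqref{2Painleve} is a composition of a $\Lambda$-shift and multiplication by an invertible series, it is bijective on the relevant space of formal solutions; running the computation backwards shows conversely that any solution of \eqref{qq-PVI-2} yields one of \eqref{qq-PVI}, which establishes the claimed gauge equivalence in both directions.
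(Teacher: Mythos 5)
Your proof is correct and follows exactly the route the paper intends: the paper simply records the two identities $\mathcal{A}_3(t\Lambda,tqQx)\mathcal{A}_1(\Lambda,x)=B_1$ and $\Phi(qT_2T_3\Lambda)\Phi(tT_1T_4\Lambda)/\bigl(\Phi(qtT_2T_3\Lambda)\Phi\bigl(t^2T_1T_4\Lambda\bigr)\bigr)=\varphi(qT_2T_3\Lambda)\varphi(tT_1T_4\Lambda)$ and leaves the substitution implicit, whereas you carry out the bookkeeping explicitly, including the correct observation that only the $\Lambda$-dependent $\Phi$-factors commute through $\Bor$. Your closing remark on invertibility of the gauge factor, establishing equivalence in both directions, is a welcome addition not spelled out in the paper.
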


\begin{Remark}\label{rmk:elimination}
By a further gauge transformation,
the parameter $Q$ can be completely removed from the Hamiltonian \eqref{SH}.
\end{Remark}

To see this, first we note that
\[
x^{-c} \Bor x^c=x^{-1} (px)^c \Bor=q^{\frac{c(c+1)}{2}}p^c \Bor,
\]
for any complex parameter $c$, at least formally. Then we have
\begin{align}
x^{-c}H x^c&=q^{c(c+1)} a(x) p^c \cdot\Bor\cdot b(x) p^c \cdot\Bor\cdot T_{qtQ,x}^{-1}T_{t,\Lambda}^{-1}\nonumber \\
&=q^{c(c+1)} a(x) \cdot\Bor\cdot b(q^c x) \cdot\Bor\cdot T_{q^{2c} (q t Q)^{-1},x} T_{t,\Lambda}^{-1},
\label{eq:cHc}
\end{align}
where for simplicity we have defined
\begin{gather}
a(x)^{-1} :=\varphi(T_1 v t x)\varphi(T_2 v t x)\varphi\left(T_3 v t \frac{\Lambda}{x}\right)\varphi\left(T_4 v t \frac{\Lambda}{x}\right),
\nonumber \\
b(x)^{-1} :=\varphi(-T_1T_2 x)\varphi\bigl(-Q^{-1}x\bigr)\varphi\left(-T_3T_4 Q q t \frac{\Lambda}{x}\right)\varphi\left(-q \frac{\Lambda}{x}\right).
\nonumber
\end{gather}
Taking the exponent $c$ as $q^c=(q t Q)^{\frac{1}{2}}$, we have
\[
x^{-c}H x^c=q^{c(c+1)} a(x) \cdot \Bor \cdot b\bigl((q t Q)^{\frac{1}{2}} x\bigr) \cdot \Bor \cdot T_{t,\Lambda}^{-1}.
\]
By a rescaling
\begin{equation}\label{Q-elimination}
x \to \left(\frac{Q}{t q}\right)^{\frac{1}{2}} x, \qquad
\Lambda \to \frac{Q}{q t} \Lambda, \qquad
T_i \to Q^{-\frac{1}{2}}T_i,
\end{equation}
the functions in \eqref{eq:cHc} become
\begin{align*}
&a(x)^{-1} \to \varphi(T_1 x)\varphi(T_2 x)\varphi\left(T_3 \frac{\Lambda}{x}\right)\varphi\left(T_4\frac{\Lambda}{x}\right),
\nonumber \\
&b\bigl((q t Q)^{\frac{1}{2}} x\bigr)^{-1}\to \varphi(-T_1T_2 x)\varphi(-x)\varphi\left(-T_3T_4 \frac{\Lambda}{x}\right)\varphi\left(-t^{-1} \frac{\Lambda}{x}\right),
\end{align*}
hence, $Q$ is eliminated from these factors. The remaining overall factor $q^{c(c+1)}$ still depends on~$Q$, but it
can be removed by a further gauge transformation $H \to \Lambda^{-k} H \Lambda^k$ with $t^k=q^{c(c+1)}$.


\subsection{Dictionary of five-dimensional gauge/Painlev\'e correspondence}
\label{sec:dictinary}

In this paper, we will employ the following relations of the parameters in \cite{Shakirov:2021krl} and
the root variables for the $q$-Painlev\'e VI equation (see Appendix~\ref{App.A}):
\begin{alignat}{4}
&a_0=(T_4/T_3)^{1/4},\qquad&&
a_1=(T_1/T_2)^{1/4},\qquad&&
a_2=(t \Lambda T_1T_4)^{-1/4},&\nonumber \\
&a_3=(t^2 \Lambda T_3T_4)^{1/4},\qquad&&
a_4=(Q T_1T_2)^{1/4},\qquad&&
a_5=(t QT_3T_4)^{-1/4}.&\label{dictionary}
\end{alignat}
Namely,
\begin{equation}\label{mass-invert}
\sqrt{Q}T_1={a_1^2 a_4^2},\qquad
\sqrt{Q}T_2={ a_4^2\over a_1^2 },\qquad
\sqrt{tQ}T_3={1\over a_0^2a_5^2},\qquad
\sqrt{tQ}T_4={a_0^2\over a_5^2},
\end{equation}
and
\begin{align}
& \mathsf{p}:= {\rm e}^{\delta} = a_0a_1a_2^2a_3^2a_4a_5 = t^{1/4},\qquad
\mathsf{t} := a_3^2a_4a_5 = t^{1/2} \Lambda (T_1T_2T_3T_4)^{1/4}, \nonumber \\
\label{Lambda-invert}
& {t\Lambda\over Q}=a_3^4 a_5^4 .
\end{align}
In \cite{Hasegawa}, $\mathsf{t}$ is identified with the ``time'' variable.
The parameter $\mathsf{p}$ defines the shift parameter of the discrete time evolution.
On the gauge theory side $\Lambda$ plays the role of the corresponding time variable.
We have to accept that \eqref{dictionary} is not invertible. In fact, \eqref{mass-invert} and the last equation of \eqref{Lambda-invert}
involve the parameter $Q$. But there is no way to fix $Q$ from \eqref{dictionary}.
This is due to the fact that there is a scaling symmetry for the parameters $(T_i,\Lambda,Q)$ on the gauge theory side,\footnote{In Remark \ref{rmk:elimination}, we use the scaling transformation \eqref{Q-elimination}. We can see the same transformation
eliminates $Q$ in~\eqref{mass-invert} and the last equation of \eqref{Lambda-invert}.}
while the root variables $a_i$ are regarded as ``inhomogeneous'' coordinates
of the ambient ten-dimensional Picard lattice of
$\mathbb{P}^1 \times \mathbb{P}^1$ with eight points blow-ups (or $\mathbb{P}^2$ with nine points blow-ups).

A related issue is the fact that the parameter $Q$ is not invariant under the extended affine Weyl group of $D_5^{(1)}$.
This property was first pointed out in \cite{Bershtein:2016aef} in the case of the discrete Painlev\'e~III$_3$ equation,
which corresponds to the pure ${\rm SU}(2)$ Yang--Mills theory, by examining the tau-function.
The transformation law of $Q$ should be fixed by the invariance of the Hamiltonian under the action of the extended affine Weyl group.
For example, the Weyl reflections~$r_4$ and~$r_5$ roughly exchange $Q$ with $\Lambda^{\pm 1}$.
On the other hand, $Q$ is invariant under the Weyl reflections~$r_0$ and $r_1$. Hence,
the dictionary \eqref{dictionary} tells that the exchange of $T_1$ and~$T_2$ corresponds to
the Weyl reflection $r_1\colon(a_1, a_2) \to \bigl(a_1^{-1}, a_1a_2\bigr)$. Similarly the exchange of
$T_3$ and~$T_4$ gives the action of $r_0\colon(a_0, a_2) \to (a_0^{-1}, a_0a_2)$ on the Painlev\'e side.

The parameter $x$ is related to a pair of $q$-commuting dynamical variables $(F,G)$ as
\begin{align*}
&G=-\xi,\qquad \mbox{\rm i.e., as multiplication by $-\xi$},\\
&F=\bigl(q t^{1/2} Q\bigr)^{-1/2} \xi q^{-\vartheta_\xi},
\end{align*}
where
\[
\xi=\left({q^2 \Lambda^2 T_3T_4\over T_1T_2}\right)^{1/4} x^{-1}
\]
and $\vartheta_\xi=\xi \partial/\partial \xi$.
Note that the dynamical variables $(F,G)$ satisfy
\[
FG=q^{-1}GF.
\]

\begin{table}[t]\centering\renewcommand{\arraystretch}{1.1}
\begin{tabular}{|c|c|c|}
\hline
Parameters & Higgsed quiver theory \cite{Shakirov:2021krl} & $qq$-Painlev\'e VI equation \\ \hline
$T_i$ & (dressed) mass parameters & root variables of the outer nodes \\ \hline
$Q$ & ${\rm SU}(2)$ Coulomb modulus & parameter of solutions \\ \hline
$\Lambda$ & instanton expansion parameter & root variable of the inner node \\ \hline
$x$ & position of degenerate field insertion & dynamical variable (coordinate) \\ \hline
$q$ & $\Omega$ background along the defect & quantization parameter \\ \hline
$t$ & $\Omega$ background orthogonal to the defect & non-autonomous parameter \\ \hline
\end{tabular}
\caption{Dictionary between the quiver gauge theory and Painlev\'e VI equation.}
\end{table}

By the above dictionary the Hamiltonian \eqref{SH} in terms of the variables of Painlev\'e VI equation takes
the following formula (for the definition of the variables $b_i$ see Definition \ref{b-variables} in Appendix~\ref{App.A}):
\begin{align}
H_{\mathrm{VI}}={}&
{1\over
\varphi\bigl(-q b_5 G^{-1}\bigr) \varphi\bigl(-q b_6 G^{-1}\bigr)
\varphi\bigl(- b_7^{-1} G \bigr) \varphi\bigl(- b_8^{-1} G \bigr) }
 \widetilde{\Bor} \nonumber \\
&\times
{1 \over
\varphi\bigl(q \mathsf{p}^{-2} b_1 G^{-1}\bigr)\varphi\bigl(q \mathsf{p}^{-2} b_2 G^{-1}\bigr)
\varphi\bigl(\mathsf{p}^{-2} b_3^{-1} G\bigr)
\varphi\bigl(\mathsf{p}^{-2} b_4^{-1} G\bigr)}  \widetilde{\Bor}
 T_{t^{1/2},x}^{-1}T_{t,\Lambda}^{-1}.\label{OHG}
\end{align}
For later convenience, by writing \smash{$T_{qtQ,x}^{-1} = \bigl(T_{qt^{1/2}Q,x}^{-1/2}\bigr)^2 T_{t^{1/2},x}^{-1}$},
we have distributed \smash{$T_{qt^{1/2}Q}^{-1/2}$} and combined it with $\Bor$ to define
$\widetilde{\Bor} = \Bor \cdot T_{(qt^{1/2}Q)^{-1/2},x}= T_{(qt^{1/2}Q)^{-1/2},x} \cdot \Bor$.
Since $G \sim -x^{-1}$, we have the \smash{$\bigl(qt^{1/2}Q\bigr)^{\pm 1/2}$}-shift of the arguments of $\varphi$ involving $b_1, \dots, b_4$
after commuting them with \smash{$T_{qt^{1/2}Q,x}^{-1/2}$} in \eqref{OHG}, compared with \eqref{SH}.

In terms of the variables $b_i$, the symmetry of the Hamiltonian \eqref{OHG},
exchanging $(b_1, \dots, b_4)$ and $(b_5, \dots, b_8)$,
becomes more manifest.\footnote{The additional factor $\mathsf{p}^{-2}= t^{-1/2}$ for $b_1, \dots, b_4$ should be related to
the rescaling of the $qq$-Painlev\'e equation to be discussed in the next section.}
The symmetry is related to the diagram automorphism $\tau$ of $D_5^{(1)}$, see Figure~\ref{D5} in Appendix~\ref{App.A}.
Geometrically this is the exchange of $\mathbb{P}^1$ in the product $\mathbb{P}^1 \times \mathbb{P}^1$
that appears in the description of the space of initial conditions, see Figure~\ref{bvar} in Appendix~\ref{App.A}.
We will get back to this point in Section \ref{Laumon}.

\section[Heisenberg form of the qq-Painlev\'e VI equation]{Heisenberg form of the $\boldsymbol{qq}$-Painlev\'e VI equation}\label{sec3}

We show the operator \eqref{SH} gives the $qq$-Painlev\'e VI equation
as the Heisenberg equation.
First, we consider a general linear operator $\Has $ of the form
\begin{gather}\label{Hasegawa}
\Has =A(x) \cdot \Bor \cdot B(x) \cdot \Bor.
\end{gather}
The operator $H$ in (\ref{SH}) is recovered from $\Has$ as $H=\Has T_{qtQ,x}^{-1}$,
where
\begin{align}
&A(x)=\frac{1}{\varphi(\sfa_1 x)\varphi(\sfa_2 x)\varphi\bigl(\frac{q}{\sfa_3x}\bigr)\varphi\bigl(\frac{q}{\sfa_4x}\bigr)},
\nonumber \\
&B(x)=\frac{1}{\varphi(-\sfb_1 x)\varphi(-\sfb_2 x)\varphi\bigl(-\frac{q}{\sfb_3x}\bigr)\varphi\bigl(-\frac{q}{\sfb_4x}\bigr)},\label{eq:AB-function}
\\
&\mathsf{a}_1 = q^{1/2} t^{1/2} T_1, \qquad \mathsf{a}_2 = q^{1/2} t^{1/2} T_2, \qquad
 \mathsf{a}_3 = q^{1/2} t^{-1/2} T_3^{-1} \Lambda^{-1},\nonumber\\
 & \mathsf{a}_4 = q^{1/2} t^{-1/2} T_4^{-1} \Lambda^{-1}, \nonumber \\
&\mathsf{b}_1 = T_1T_2, \qquad \mathsf{b}_2 = Q^{-1}, \qquad
\label{SYdictionary2}
\mathsf{b}_3 = t^{-1} Q^{-1} T_3^{-1}T_4^{-1} \Lambda^{-1}, \qquad \mathsf{b}_4 = \Lambda^{-1}.
\end{align}
Since, we will consider only the adjoint actions of $\Has$ on $x$, $p$ in this subsection, we have dropped
the factors independent of $x$.

\begin{Proposition}
Putting $a(x)\seteq A(qx)A(x)^{-1}$, $b(x)\seteq B(qx)B(x)^{-1}$, we have
\begin{equation}\label{eq:Hab}
\Has^{-1} x^{-1} \Has =b\bigl(p^{-1}x\bigr) x^{-1}p^2, \qquad \Has x^{-1}p \Has ^{-1}=x^{-1}p^{-1}a(x).
\end{equation}
\end{Proposition}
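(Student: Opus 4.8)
The plan is to derive both identities in \eqref{eq:Hab} purely by conjugation, using only the fundamental relations \eqref{adGamma} together with the fact that $A(x)$, $B(x)$ and their inverses are (formal) functions of $x$, so they commute with one another and with $x^{\pm1}$. First I would record the auxiliary conjugation rules that do all the work. From $\Bor x=px\Bor$ one gets $\Bor x\Bor^{-1}=px$ and, inverting, $\Bor^{-1}x\Bor=p^{-1}x$ and $\Bor^{-1}x^{-1}\Bor=x^{-1}p$; from $\Bor p=p\Bor$ one has $\Bor^{\pm1}p\Bor^{\mp1}=p$; and from $px=qxp$ one has $p\,g(x)\,p^{-1}=g(qx)$ for any Laurent series $g$. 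Since $\Bor^{-1}(\cdot)\Bor$ is an algebra homomorphism, it acts term by term on the series $B(x)$, giving $\Bor^{-1}B(x)\Bor=B(p^{-1}x)$, where $p^{-1}x$ is read as the single block that is the $\Bor^{-1}$-conjugate of $x$; the same remark defines $b(p^{-1}x)$.

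For the first identity I would expand $\Has^{-1}x^{-1}\Has=\Bor^{-1}B(x)^{-1}\Bor^{-1}A(x)^{-1}x^{-1}A(x)\Bor B(x)\Bor$ and immediately cancel the adjacent $A$-factors, since $A(x)^{-1}x^{-1}A(x)=x^{-1}$. This leaves $\Bor^{-1}B(x)^{-1}\bigl(\Bor^{-1}x^{-1}\Bor\bigr)B(x)\Bor=\Bor^{-1}B(x)^{-1}x^{-1}p\,B(x)\Bor$. Pushing $p$ rightward through $B(x)$ via $p\,B(x)=B(qx)\,p$ collapses the middle into $B(x)^{-1}B(qx)x^{-1}p=b(x)x^{-1}p$. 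Conjugating this by $\Bor^{-1}(\cdot)\Bor$ and using $\Bor^{-1}b(x)\Bor=b(p^{-1}x)$, $\Bor^{-1}x^{-1}\Bor=x^{-1}p$ and $\Bor^{-1}p\Bor=p$ yields $b(p^{-1}x)\,x^{-1}p\cdot p=b(p^{-1}x)x^{-1}p^2$, as claimed.

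The second identity is the mirror-image computation, now conjugating $x^{-1}p$ outward through $\Has$ starting from the innermost $\Bor$. Because $\Bor x^{-1}\Bor^{-1}=x^{-1}p^{-1}$ and $\Bor p\Bor^{-1}=p$, the innermost conjugation gives $\Bor\,(x^{-1}p)\,\Bor^{-1}=x^{-1}$; the surrounding $B(x)^{\pm1}$ then cancel against this $x$-function, and the next $\Bor$ contributes $\Bor x^{-1}\Bor^{-1}=x^{-1}p^{-1}$. What remains is $A(x)\,x^{-1}p^{-1}\,A(x)^{-1}=x^{-1}p^{-1}\bigl(p\,A(x)\,p^{-1}\bigr)A(x)^{-1}=x^{-1}p^{-1}A(qx)A(x)^{-1}=x^{-1}p^{-1}a(x)$.

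The argument is entirely mechanical, so there is no real obstacle beyond careful bookkeeping of the noncommuting factors. The one point that deserves a sentence of justification is the substitution convention: $b(p^{-1}x)$ must be read as the image of $b(x)$ under the homomorphism $\Bor^{-1}(\cdot)\Bor$, that is, with $p^{-1}x$ inserted as an indivisible block rather than as a product of two commuting symbols. Since $A$ and $B$ are the explicit infinite products in \eqref{eq:AB-function}, regarded as formal power series, this term-by-term action is well defined and the manipulations above are legitimate.
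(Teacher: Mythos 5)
Your proposal is correct and follows essentially the same route as the paper: cancel the $A$-factors (resp.\ $B$-factors) against the commuting $x^{\pm1}$, use $\Bor^{\pm1}x^{\mp1}\Bor^{\mp1}$ and $p\,g(x)=g(qx)p$ to collect the ratio $b(x)$ or $a(x)$, and finish with one more $\Bor$-conjugation. The extra sentence making precise that $b\bigl(p^{-1}x\bigr)$ means the image of $b(x)$ under the homomorphism $\Bor^{-1}(\cdot)\Bor$ is a worthwhile clarification of a convention the paper leaves implicit.
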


\begin{proof}
Since $\Bor^{-1} x^{-1} \Bor=x^{-1}p$, $\Bor^{-1} x \Bor=p^{-1}x$, we have
\begin{align*}
\Has^{-1} x^{-1} \Has&= \Bor^{-1}B(x)^{-1}\Bor^{-1}A(x)^{-1}x^{-1}A(x)\Bor B(x)\Bor\\
&= \Bor^{-1}B(x)^{-1}\Bor^{-1}x^{-1}\Bor B(x)\Bor\\
&= \Bor^{-1}B(x)^{-1}x^{-1}p B(x)\Bor=\Bor^{-1}B(x)^{-1}B(qx)x^{-1} \Bor p \\
&=  \Bor^{-1}b(x)x^{-1} \Bor p=b\bigl(p^{-1}x\bigr)x^{-1}p^2.
\end{align*}
Similarly, using $\Bor x^{-1} p\Bor^{-1}=x^{-1}$. $\Bor x^{-1} \Bor^{-1} =x^{-1}p^{-1}$, we have
\begin{align*}
\Has x^{-1}p \Has^{-1}&=  A(x)\Bor B(x) \Bor x^{-1}p \Bor^{-1}B(x)^{-1}\Bor^{-1} A(x)^{-1}\\
&= A(x)\Bor B(x) x^{-1} B(x)^{-1}\Bor^{-1} A(x)^{-1}\\
&= A(x)\Bor x^{-1} \Bor^{-1} A(x)^{-1}
=A(x)x^{-1}p^{-1} A(x)^{-1}\\
&= x^{-1}p^{-1}A(qx)A(x)^{-1}=x^{-1}p^{-1}a(x).
\end{align*}
Hence, the desired equations are proved.
\end{proof}

\begin{Corollary} We put
\[
f \seteq x^{-1} p, \qquad g \seteq x^{-1}, \qquad {\rm i.e.,} \qquad x= g^{-1},
\qquad p=g^{-1} f, \qquad gf =q fg,
\]
then we have
\begin{gather}\label{eq:qP6AB}
\Has f \Has^{-1}=\frac{\sfa_3\sfa_4}{qf} \dfrac{(g-\sfa_1)(g-\sfa_2)}{(g-\sfa_3)(g-\sfa_4)}, \qquad
\Has^{-1} g \Has=\dfrac{(f+\sfb_1)(f+\sfb_2)}{(f+\sfb_3)(f+\sfb_4)}\frac{\sfb_3\sfb_4}{q g}.
\end{gather}
\end{Corollary}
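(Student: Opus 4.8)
The plan is to recognise both formulas as direct transcriptions of the two conjugation identities \eqref{eq:Hab} once everything is rewritten in terms of $f=x^{-1}p$ and $g=x^{-1}$. The decisive remark is that $f^{-1}=p^{-1}x$, so the argument $p^{-1}x$ occurring in $\Has^{-1}x^{-1}\Has=b\bigl(p^{-1}x\bigr)x^{-1}p^2$ is exactly $f^{-1}$, while $x^{-1}=g$ and $x=g^{-1}$. Hence the proof splits into two pieces: computing the shift ratios $a(x)=A(qx)A(x)^{-1}$ and $b(x)=B(qx)B(x)^{-1}$ in closed form, and re-expressing the monomial operators $x^{-1}p^2$, $x^{-1}p^{-1}$ through $f$ and $g$.

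First I would use the functional equation $\varphi(qx)=\varphi(x)/(1-x)$, immediate from $\varphi(x)=\prod_{n\ge0}\bigl(1-q^nx\bigr)$, on each of the four factors of $A$ and $B$ in \eqref{eq:AB-function}. This gives
\begin{gather*}
a(x)=\frac{(1-\sfa_1 x)(1-\sfa_2 x)\,\sfa_3\sfa_4 x^2}{(1-\sfa_3 x)(1-\sfa_4 x)},\qquad
b(x)=\frac{(1+\sfb_1 x)(1+\sfb_2 x)\,\sfb_3\sfb_4 x^2}{(1+\sfb_3 x)(1+\sfb_4 x)}.
\end{gather*}
Evaluating $a$ at $x=g^{-1}$ and $b$ at $x=f^{-1}$ is unambiguous, since each result is a rational function of a single operator and so its factors commute among themselves; after cancelling the common powers of $g^{-1}$ (resp.\ $f^{-1}$) one finds
\begin{gather*}
a\bigl(g^{-1}\bigr)=\frac{(g-\sfa_1)(g-\sfa_2)}{(g-\sfa_3)(g-\sfa_4)}\,\sfa_3\sfa_4\, g^{-2},\qquad
b\bigl(f^{-1}\bigr)=\frac{(f+\sfb_1)(f+\sfb_2)}{(f+\sfb_3)(f+\sfb_4)}\,\sfb_3\sfb_4\, f^{-2}.
\end{gather*}

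Next, from $gf=qfg$ one obtains $x^{-1}p^2=q\,g^{-1}f^2$ and $x^{-1}p^{-1}=q^{-1}f^{-1}g^2$. Substituting into \eqref{eq:Hab}, the second identity reads $\Has^{-1}g\Has=b\bigl(f^{-1}\bigr)\,q\,g^{-1}f^2$; moving $g^{-1}$ to the right past the two factors of $f$ by means of $f^{-2}g^{-1}f^2=q^{-2}g^{-1}$ collapses the expression to $\frac{(f+\sfb_1)(f+\sfb_2)}{(f+\sfb_3)(f+\sfb_4)}\frac{\sfb_3\sfb_4}{qg}$. Similarly the first identity becomes $\Has f\Has^{-1}=q^{-1}f^{-1}g^2\,a\bigl(g^{-1}\bigr)$, and because $a\bigl(g^{-1}\bigr)$ is a function of $g$ alone the outer $g^2$ cancels the internal $g^{-2}$, leaving $\frac{\sfa_3\sfa_4}{qf}\frac{(g-\sfa_1)(g-\sfa_2)}{(g-\sfa_3)(g-\sfa_4)}$.

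Everything is mechanical once these ingredients are in place; the only point requiring care is the noncommutative bookkeeping in the last step, namely tracking the powers of $q$ generated when commuting $f$ past $g$ and checking that they combine with the explicit prefactors $q^{\pm1}$ to leave precisely one factor $1/q$ in each formula. I expect this to be the main, if entirely routine, obstacle.
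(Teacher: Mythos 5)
Your proposal is correct and follows essentially the same route as the paper's proof: compute $a(x)=A(qx)A(x)^{-1}$ and $b(x)=B(qx)B(x)^{-1}$ as explicit rational functions via $\varphi(x)/\varphi(qx)=1-x$, then substitute $x=g^{-1}$, $p^{-1}x=f^{-1}$ into \eqref{eq:Hab} and clean up. The only difference is that you spell out the $q$-commutation bookkeeping ($x^{-1}p^{-1}=q^{-1}f^{-1}g^2$, $x^{-1}p^2=qg^{-1}f^2$, $f^{-2}g^{-1}f^2=q^{-2}g^{-1}$) that the paper leaves implicit, and those identities all check out.
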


\begin{proof}
For the function $A(x)$, $B(x)$ in \eqref{eq:AB-function}, we see
\[
a(x)=\dfrac{(1-\sfa_1 x)(1-\sfa_2 x)}{(1-\sfa_3x)(1-\sfa_4x)}\sfa_3\sfa_4x^2, \qquad
b(x)=\dfrac{(1+\sfb_1 x)(1+\sfb_2 x)}{(1+\sfb_3x)(1+\sfb_4x)}\sfb_3\sfb_4x^2.
\]
Then from \eqref{eq:Hab}, we have
\begin{gather*}
\Has f \Has^{-1}=x^{-1}p^{-1}a(x)=\frac{\sfa_3\sfa_4}{q f}\dfrac{(1-\sfa_1/g)(1-\sfa_2/g)}{(1-\sfa_3/g)(1-\sfa_4/g)},  \\
\Has^{-1} g \Has=b\bigl(p^{-1} x\bigr)x^{-1}p^2=\dfrac{(1+\sfb_1/f)(1+\sfb_2/f)}{(1+\sfb_3/f)(1+\sfb_4/f)}\frac{\sfb_3\sfb_4}{q g},
\end{gather*}
as desired.
\end{proof}

The equation \eqref{eq:qP6AB} can be viewed as the Heisenberg form of the $qq$-Painlev\'e VI equation.
We will show the equation coincides with that obtained in \cite{Hasegawa}.
To do this, we regard the $qq$-Painlev\'e VI equation as a certain algebra automorphism
on a skew field $K$\footnote{This skew field is essentially the same as $\mathbb{F}$ to be introduced in Section \ref{sec41}.}
generated by the $q$-commuting variables $x$, $p$ and equivariant parameters $T_1,\ldots, T_4,Q,q,t$.
For any $X \in K$ we define its discrete time evolution as{\samepage
\begin{alignat*}{3}
&X \mapsto \overline{X}:=U X U^{-1}, \qquad&& (\text{up $=$ forward evolution}) &\\
&X \mapsto \underline{X}:=U^{-1} X U, \qquad &&(\text{down $=$ backward evolution}) &\\
&U=\Has T_{qtQ,x}^{-1} T_{t,\Lambda}^{-1}.&&&
\end{alignat*}}%
Note that $\overline{\Lambda}=t^{-1} \Lambda$, hence
\begin{equation*}
\overline{(\sfa_1,\sfa_2,\sfa_3,\sfa_4,\sfb_1,\sfb_2,\sfb_3,\sfb_4)}
=(\sfa_1,\sfa_2,t \sfa_3,t \sfa_4,\sfb_1,\sfb_2,t \sfb_3, t \sfb_4).
\end{equation*}
Using $T_{q t Q,x}f=(qtQ)^{-1}f T_{q t Q,x}$ and
$T_{q t Q,x}g=(qtQ)^{-1}gT_{q t Q,x}$, we can rewrite the equations~\eqref{eq:qP6AB} as
\begin{align}
\overline{f}&=\Has T_{t,\Lambda}^{-1}T_{qtQ,x}^{-1} f T_{t,\Lambda}T_{qtQ,x}\Has^{-1}
=q t Q \Has f \Has^{-1}=t Q a_3 a_4 f^{-1} \dfrac{(g-a_1)(g-a_2)}{(g-a_3)(g-a_4)},\nonumber \\
\underline{g}
&=T_{t,\Lambda}T_{qtQ,x} \Has^{-1} g \Has T_{t,\Lambda}^{-1}T_{qtQ,x}^{-1}=T_{t,\Lambda}T_{qtQ,x} \dfrac{(f+\sfb_1)(f+\sfb_2)}{(f+\sfb_3)(f+\sfb_4)}\frac{\sfb_3\sfb_4}{q} g^{-1}
T_{t,\Lambda}^{-1}T_{qtQ,x}^{-1} \nonumber \\
&=t^{-1}Q \sfb_3\sfb_4 \dfrac{(f+qtQ \sfb_1)(f+qtQ\sfb_2)}{(f+qQ\sfb_3)(f+qQ\sfb_4)}g^{-1}.\label{eq:qP6ABfg}
\end{align}

\begin{Theorem}
By a suitable rescaling $F=\alpha f$, $G=- \beta g$,
the equation \eqref{eq:qP6ABfg} can be written in the form of {\rm\cite{Hasegawa}}\footnote{See Proposition \ref{Hasegawa-form} for a derivation of \eqref{eq:qqP6-Hasegawa}.}
\begin{align}
& \overline{F}=
q {\mathsf{p}}^2{\mathsf{t}}^{-2}
{\bigl(G+{\mathsf{t}}{\mathsf{p}}^{-1}a_1^2 \bigr)
\bigl(G+{\mathsf{t}}{\mathsf{p}}^{-1}a_1^{-2}\bigr)
\over \bigl(G+{\mathsf{t}}^{-1}{\mathsf{p}}a_0^{2}\bigr)
 \bigl(G+{\mathsf{t}}^{-1}{\mathsf{p}}a_0^{-2}\bigr)} F^{-1},\nonumber \\
& \underline{G} =
 q {\mathsf{t}}^{-2}G^{-1}
{ \bigl(F+{\mathsf{t}}a_4^2\bigr) \bigl(F+{\mathsf{t}}a_4^{-2}\bigr)
\over \bigl(F+{\mathsf{t}}^{-1}a_5^2\bigr)\bigl(F+{\mathsf{t}}^{-1}a_5^{-2}\bigr)} ,\label{eq:qqP6-Hasegawa}
\end{align}
and $\overline{\mathsf{t}}=\mathsf{p}^{-2}\mathsf{t}$.
\end{Theorem}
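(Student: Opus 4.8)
The plan is to recognize the asserted rescaling as the dictionary of Section~\ref{sec:dictinary}, and then to match \eqref{eq:qP6ABfg} with \eqref{eq:qqP6-Hasegawa} root-by-root and prefactor-by-prefactor. Explicitly I would take $F=\alpha f$, $G=-\beta g$ with $\beta=\bigl(q^2\Lambda^2 T_3T_4/(T_1T_2)\bigr)^{1/4}$ and $\alpha=\bigl(qt^{1/2}Q\bigr)^{-1/2}\beta$; these are exactly the scalars relating $(f,g)=\bigl(x^{-1}p,x^{-1}\bigr)$ to $(F,G)$ in the dictionary, and both are homogeneous of degree $1/2$ in the time $\Lambda$. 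A preliminary reordering is needed because in \eqref{eq:qP6ABfg} the rational factor in $g$ sits to the right of $f^{-1}$ (and the factor in $f$ to the left of $g^{-1}$), opposite to \eqref{eq:qqP6-Hasegawa}. Using $gf^{-1}=q^{-1}f^{-1}g$ and $g^{-1}f=q^{-1}fg^{-1}$ I would commute $f^{-1}$ to the far right and $g^{-1}$ to the far left; since each rational factor is a ratio of two linear terms over two, the powers of $q$ so produced cancel between numerator and denominator, and the only net effect is the uniform shift $\sfa_i\mapsto\sfa_i/q$ of the roots (and correspondingly $qtQ\,\sfb_i\mapsto tQ\,\sfb_i$, $qQ\,\sfb_i\mapsto Q\,\sfb_i$).

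I would then substitute $f=F/\alpha$, $g=-G/\beta$. In each quadratic ratio the factors $-1/\beta$ (resp.\ $1/\alpha$) cancel pairwise, sending $(g-\sfa_i)$ to a constant multiple of $\bigl(G+\beta\sfa_i/q\bigr)$, and I would check with \eqref{mass-invert} and \eqref{Lambda-invert} that the roots hit their targets: $\beta\sfa_1/q=\mathsf{t}\mathsf{p}^{-1}a_1^2$ and $\beta\sfa_3/q=\mathsf{t}^{-1}\mathsf{p}a_0^2$ for the $G$-equation, and $\alpha tQ\,\sfb_1=\mathsf{t}a_4^2$, $\alpha Q\,\sfb_3=\mathsf{t}^{-1}a_5^2$ for the $F$-equation (the remaining roots follow from the $a_1\mapsto a_1^{-1}$, $a_0\mapsto a_0^{-1}$, $a_4\mapsto a_4^{-1}$, $a_5\mapsto a_5^{-1}$ symmetry of the ratios). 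To fix the overall constants I would use that $\overline{\Lambda}=t^{-1}\Lambda$ and the $\Lambda^{1/2}$-homogeneity of $\alpha,\beta,\mathsf{t}$, which give at once $\overline{\alpha}=\mathsf{p}^{-2}\alpha$ and the non-autonomous shift $\overline{\mathsf{t}}=\mathsf{p}^{-2}\mathsf{t}$ claimed in the statement. The scalar prefactors $\overline{\alpha}\alpha\,tQ\,\sfa_3\sfa_4$ and $\underline{\beta}\beta\,t^{-1}Q\,\sfb_3\sfb_4$ should then be checked to collapse to $q\mathsf{p}^2\mathsf{t}^{-2}$ and $q\mathsf{t}^{-2}$, respectively, by direct substitution of \eqref{dictionary}.

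The real content — and the step I expect to be the main obstacle — is that the single pair $(\alpha,\beta)$ must match all four roots of each equation simultaneously. For the $G$-equation this demands both $\beta^2\sfa_1\sfa_2=q^2\mathsf{t}^2\mathsf{p}^{-2}$ and $\beta^2\sfa_3\sfa_4=q^2\mathsf{t}^{-2}\mathsf{p}^2$, so that their ratio imposes the monomial identity $\sfa_1\sfa_2/(\sfa_3\sfa_4)=\mathsf{t}^4\mathsf{p}^{-4}$, i.e.\ $\mathsf{t}^4=t^3\Lambda^2 T_1T_2T_3T_4$; the $F$-equation imposes the identical relation. This is exactly the identity furnished by $\mathsf{t}=a_3^2a_4a_5$ together with \eqref{dictionary}, and it is what makes the homogeneity degrees in $\Lambda$ line up on the two sides. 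Everything else is mechanical $q$- and $t$-power bookkeeping once this identity is in hand; in particular $\overline{\mathsf{t}}=\mathsf{p}^{-2}\mathsf{t}$ is forced by the $\Lambda^{1/2}$-homogeneity of $\mathsf{t}$ rather than being an independent computation.
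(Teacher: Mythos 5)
Your proposal is correct and follows essentially the same route as the paper: substitute the dictionary \eqref{mass-invert}--\eqref{Lambda-invert} into \eqref{eq:qP6ABfg} and fix $\alpha,\beta$ so that all roots and prefactors match, and your $\beta=\bigl(q^2\Lambda^2T_3T_4/(T_1T_2)\bigr)^{1/4}$, $\alpha=\bigl(qt^{1/2}Q\bigr)^{-1/2}\beta$ coincide with the paper's choice \eqref{time-depnedent-scaling}. The only differences are presentational: you read $\alpha,\beta$ off the $(F,G)$-dictionary of Section~\ref{sec:dictinary} in advance rather than solving for them at the end, and you make explicit both the $q$-commutation that moves $f^{-1}$ (resp.\ $g^{-1}$) past the rational factor --- producing the shift $\sfa_i\mapsto \sfa_i/q$ that the paper performs silently --- and the consistency identity $\sfa_1\sfa_2/(\sfa_3\sfa_4)=\mathsf{t}^4\mathsf{p}^{-4}$ guaranteeing that one scalar can match all four roots.
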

\begin{proof}
From \eqref{mass-invert} and \eqref{SYdictionary2} the parameters are related as
\begin{align*}
&\sfa_1=a_1^2 a_4^2 \sqrt{\frac{q t}{Q}},\qquad
\sfa_2=\frac{a_4^2}{a_1^2}\sqrt{\frac{q t}{Q}},\qquad
\sfa_3=a_0^2 a_5^2 \frac{\sqrt{qQ}}{\Lambda},\qquad
\sfa_4=\frac{a_5^2}{a_0^2} \frac{\sqrt{qQ}}{\Lambda}, \\
&\sfb_1=\frac{a_4^4}{Q},\qquad
\sfb_2=\frac{1}{Q},\qquad
\sfb_3=\frac{a_5^4}{\Lambda },\qquad
\sfb_4=\frac{1}{\Lambda }, \qquad
\mathsf{t}= \frac{a_4}{a_5}\sqrt{ \frac{t \Lambda}{Q}}, \qquad
\mathsf{p} =t^{\frac{1}{4}}.
\end{align*}
Hence, from \eqref{eq:qP6ABfg}, we have
\begin{align*}
&\overline{F}=
\frac{\overline{\alpha}\alpha a_5^4 q Q^2 t}{\Lambda ^2}
\frac{\left(G+a_1^2\frac{a_4^2 \beta \sqrt{t}}{\sqrt{qQ}}\right)
\left(G+a_1^{-2}\frac{a_4^2 \beta \sqrt{t}}{\sqrt{qQ}}\right)}
{\left(G+a_0^2\frac{a_5^2 \beta \sqrt{Q}}{\Lambda \sqrt{q}}\right)
\left(G+a_0^{-2}\frac{a_5^2 \beta \sqrt{Q}}{\Lambda \sqrt{q}}\right)}F^{-1}, \\
&\underline{G}=
\frac{a_5^4 \beta \underline{\beta} Q}{\Lambda ^2 t}G^{-1}
\frac{\bigl(F+a_4^4\alpha t\bigr)(F+\alpha t)}
{\left(F+a_5^4\frac{\alpha Q}{\Lambda}\right)\left(F+\frac{\alpha Q}{\Lambda}\right)}.
\end{align*}
By choosing rescaling parameters $\alpha$, $\beta$ as
\begin{align}\label{time-depnedent-scaling}
\alpha = \frac{1}{a_4 a_5}\sqrt{\frac{\Lambda}{tQ}}, \qquad
\beta = \frac{1}{a_4 a_5} \frac{\sqrt{q\Lambda}}{t^{\frac{1}{4}}},
\end{align}
we obtain the desired results \eqref{eq:qqP6-Hasegawa}.
\end{proof}

We should emphasize that the scaling parameters \eqref{time-depnedent-scaling} are
time dependent, since they involve the parameter $\Lambda$, which is regarded as
the time coordinate in the non-stationary difference equation \eqref{qq-PVI}.


\section{Extended affine Weyl group action on quantum variables}\label{sec4}

In the last section we have shown that after the appropriate gauge transformation,
the adjoint action of the Hamiltonian \eqref{SH} of the non-stationary difference equation \eqref{qq-PVI}
correctly reproduces the discrete time evolution of the $q$-Painlev\'e VI equation.
On the other hand, in \cite{Hasegawa} a~quantization of the $q$-Painlev\'e VI equation was performed
by constructing a representation of the extended affine Weyl group \smash{$\widetilde{W}\bigl(D_5^{(1)}\bigr)$} on the space of
$q$-commuting dynamical variables.
Thus, we have two Hamiltonians \eqref{SH} and $\mathcal{H}_{\mathrm{Has}}$ (see \eqref{Hasegawa-operator})
obtained in \cite{Hasegawa}, which give the same adjoint action on $q$-commuting quantum variables.
Each Hamiltonian has its own advantage. The advantage of \eqref{SH} is that it is expressed in terms of the $q$-Borel transformation $\Bor$
with the explicit formula \eqref{qBorel}, which allows us to work out the wave function to
the Schr\"odinger form $qq$-Painlev\'e VI equation \eqref{qq-PVI} in the form of a formal series.
The advantage of $\mathcal{H}_{\mathrm{Has}}$ is that it is expressed as
a composition of generators of \smash{$\widetilde{W}\bigl(D_5^{(1)}\bigr)$} as shown in \eqref{Hasegawa-operator}.
It is an interesting problem how the symmetry \smash{$\widetilde{W}\bigl(D_5^{(1)}\bigr)$} of the $qq$-Painlev\'e VI acts
on the wave function which is given by the Nekrasov partition function.
We expect that the comparison of two Hamiltonians will give us a clue to solve this problem.
In this section we first recapitulate the quantization of the $q$-Painlev\'e VI equation in \cite{Hasegawa}.
Then by comparing two Hamiltonians we try to see the corresponding element to $\Bor$
in the representation of \smash{$\widetilde{W}\bigl(D_5^{(1)}\bigr)$} constructed in \cite{Hasegawa}.
Unfortunately, at the moment, we are not completely successful in this task.

\subsection{Coxeter relations}\label{sec41}

To formulate the quantization of the discrete $P_{\rm VI}$ equation reviewed in Appendix~\ref{App.A},
we first quantize the commutative canonical pair of variables $f$, $g$.
Let $F$ and $G$ be non commutative variables satisfying the $q$-commutation relation;
\begin{align}\label{FG-qcom}
FG=q^{-1}GF.
\end{align}
Recall the notation $\mathbb{ K}=\mathbb{C}(a)=\mathbb{C}(a_0,\ldots,a_5)$
for the rational function field in the root variables~$a_i$.
Let $\mathbb{ K}\langle F,G\rangle$ be the $\mathbb{ K}$-algebra generated by $F,G$
with the relation \eqref{FG-qcom}. It is known that~$\mathbb{ K}\langle F,G\rangle$ is
an Ore domain (see \cite[Section 2]{Kuroki} and references therein).
Denote by~$\mathbb{F}=\mathbb{ K}( F,G)$ the quotient skew field of $\mathbb{ K}\langle F,G\rangle$.
Note that $\mathbb{F}$ is generated by $a_0,\ldots,a_5,F$ and~$G$.
For any formal power series $h(z)$ in $z$, we use the formula
\begin{equation}\label{AdFG}
\operatorname{Ad}(F) \cdot h(G) = h\bigl(q^{-1}G\bigr), \qquad \operatorname{Ad}(G) \cdot h(F) = h(qF),
\end{equation}
in our computations.

As is summarized in Appendix~\ref{App.A}, the time evolution of the $q$-Painlev\'e VI equation is
derived from the translation element $T$ in the extended affine Weyl group $\widetilde{W}$. Hence,
we need an action of $\widetilde{W}$ on the quantum pair of dynamical variables $(F,G)$.
\begin{Definition}
Define the actions of $r_0,\ldots,r_5,\sigma_{01},\sigma_{45},\tau\in \widetilde{W}$ on
the generators of $\mathbb{F}$ by the rules:
\begin{align*}
&r_0\colon\ (a_0,a_1,a_2,a_3,a_4,a_5,F,G)
\mapsto
\bigl(a_0^{-1},a_1,a_0a_2,a_3,a_4,a_5,F,G\bigr),\\
&r_1\colon\ (a_0,a_1,a_2,a_3,a_4,a_5,F,G)
\mapsto
\bigl(a_0,a_1^{-1},a_1a_2,a_3,a_4,a_5,F,G\bigr),\\
&r_2\colon\ (a_0,a_1,a_2,a_3,a_4,a_5,F,G)
\mapsto
\left(a_0a_2,a_1a_2,a_2^{-1},a_2a_3,a_4,a_5,F{a_0a_1^{-1}G+a_2^2\over a_0a_1^{-1}a_2^2G+1},G\right),\\
&r_3\colon\ (a_0,a_1,a_2,a_3,a_4,a_5,F,G)
\mapsto
\left(a_0,a_1,a_2a_3,a_3^{-1},a_3a_4,a_3a_5,F,{a_3^2 a_4a_5^{-1} F+1\over a_4a_5^{-1}F+a_3^2}G\right),\\
&r_4\colon\ (a_0,a_1,a_2,a_3,a_4,a_5,F,G)
\mapsto
\bigl(a_0,a_1,a_2,a_3a_4,a_4^{-1},a_5,F,G\bigr),\\
&r_5\colon\ (a_0,a_1,a_2,a_3,a_4,a_5,F,G)
\mapsto
\bigl(a_0,a_1,a_2,a_3a_5,a_4,a_5^{-1},F,G\bigr),\\
&\sigma_{01}\colon\ (a_0,a_1,a_2,a_3,a_4,a_5,F,G)
\mapsto
\bigl(a_1^{-1},a_0^{-1},a_2^{-1},a_3^{-1},a_4^{-1},a_5^{-1},qF^{-1},G\bigr),\\
&\sigma_{45}\colon\ (a_0,a_1,a_2,a_3,a_4,a_5,F,G)
\mapsto
\bigl(a_0^{-1},a_1^{-1},a_2^{-1},a_3^{-1},a_5^{-1},a_4^{-1},F,qG^{-1}\bigr),\\
&\tau\colon\ (a_0,a_1,a_2,a_3,a_4,a_5,F,G)
\mapsto
\bigl(a_5^{-1},a_4^{-1},a_3^{-1},a_2^{-1},a_1^{-1},a_0^{-1},G,F\bigr).
\end{align*}
Then extend them to the actions on $\mathbb{F}$ by
the requirements: (1) $r_0,\ldots , r_5$ are the ring homomorphisms,
(2) $\sigma_{01}$, $\sigma_{45}$, $\tau$ are the ring anti-homomorphisms.
\end{Definition}
Note that we have defined $\sigma_{01}$, $\sigma_{45}$, $\tau$ as ring anti-homomorphisms.
Compare it with Definition~\ref{birational-cl} in Appendix~\ref{App.A}, where they are ring homomorphisms.

\begin{Proposition}\label{Cox-q}
These actions on $\mathbb{F}$ are compatible with the group structure of the
extended affine Weyl group $\widetilde{W}$. Namely they satisfy
the Coxeter relations given in Proposition {\rm \ref{Cox}}.
\end{Proposition}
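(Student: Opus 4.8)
The plan is to verify directly that the explicit assignments given above respect every defining relation of $\widetilde{W}\bigl(D_5^{(1)}\bigr)$ collected in Proposition \ref{Cox}. Before touching the relations themselves I would first confirm that each generator is well defined on the skew field $\mathbb{F}$: since $\mathbb{F}$ is the Ore quotient field of the domain $\mathbb{K}\langle F,G\rangle$, it is enough that each ring (anti-)homomorphism sends the defining relation \eqref{FG-qcom} to a valid identity. For the homomorphisms $r_0,\dots,r_5$ this means checking $r_i(F)r_i(G)=q^{-1}r_i(G)r_i(F)$, while for $\sigma_{01},\sigma_{45},\tau$ one must track the reversal of products. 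Indeed the choice to make these three anti-homomorphisms is forced by noncommutativity: $\tau(FG)=\tau(G)\tau(F)=FG$ matches $\tau\bigl(q^{-1}GF\bigr)=q^{-1}GF$ precisely on account of \eqref{FG-qcom}, whereas a homomorphism would require the contradictory relation $FG=qGF$. All of these are short computations using \eqref{AdFG}.

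The second step is to split every relation into its action on the parameters $a_0,\dots,a_5$ and its action on the dynamical pair $F,G$. The image of each $a_i$ under any word is a Laurent monomial in the $a_j$ independent of $F,G$, and on this part the formulas reproduce the standard (extended) affine Weyl action on the root variables, so the parameter component of every relation holds by the classical theory and no noncommutativity enters. The genuine work therefore lies in matching the images of $F$ and $G$. Here the decisive simplification is that $r_0,r_1,r_4,r_5$ fix both $F$ and $G$, so any relation among these reduces entirely to the parameter part, and any braid or commutation relation pairing one of them with $r_2$ or $r_3$ only requires following how the parameters inside the Möbius factors are permuted. The involutions $r_2^2=\mathrm{id}$ and $r_3^2=\mathrm{id}$ are the first truly new checks: writing $r_2(F)=F\,\dfrac{u+a_2^2}{a_2^2u+1}$ with $u=a_0a_1^{-1}G$ and using $r_2\bigl(a_0a_1^{-1}\bigr)=a_0a_1^{-1}$, $r_2\bigl(a_2^2\bigr)=a_2^{-2}$, the two Möbius factors are both functions of $G$ alone, hence commute, and telescope to $1$; the same mechanism disposes of $r_3^2$.

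The hard part will be the fully noncommutative relations, above all the braid relation $r_2r_3r_2=r_3r_2r_3$ between the two central nodes. There $r_2$ multiplies $F$ by a rational function of $G$ while $r_3$ multiplies $G$ by a rational function of $F$, so composing them produces expressions in which $F$ and $G$ are genuinely entangled and must be reordered repeatedly by means of \eqref{FG-qcom} (equivalently \eqref{AdFG}); establishing the identity is a rational-function computation carried out inside the Ore field rather than a commutative one. A second delicate family is the set of relations asserting that $\sigma_{01},\sigma_{45},\tau$ realize the Dynkin-diagram symmetries, for instance $\tau r_2=r_3\tau$ together with the conjugations of $r_2,r_3$ by $\sigma_{01}$ and $\sigma_{45}$: because these maps are anti-homomorphisms that move $F,G$ (with $\tau$ exchanging them and $\sigma_{01}$ sending $F\mapsto qF^{-1}$), every composition reverses the order of products, so the combined bookkeeping of this reversal and the $q$-powers generated by \eqref{AdFG} is where errors are most likely and where the verification is least mechanical. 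Once these noncommutative identities are confirmed, the remaining relations of Proposition \ref{Cox} follow by the parameter/dynamical reduction described above.
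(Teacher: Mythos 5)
Your approach is the same as the paper's: a direct case-by-case verification in which the parameter part reduces to the classical Weyl action, the generators $r_0,r_1,r_4,r_5$ act trivially on $(F,G)$, and the only genuinely noncommutative checks are the braid relation $r_2r_3r_2=r_3r_2r_3$ and the compatibility of $r_2,r_3$ with the anti-homomorphisms $\sigma_{01},\sigma_{45},\tau$ --- which are precisely the two cases the paper singles out in the lemmas following the proposition. Your preliminary observation that the anti-homomorphism convention is forced by \eqref{FG-qcom} is correct and is a worthwhile point the paper leaves implicit.

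The one place where you stop short of the paper is the braid relation itself: you correctly flag it as the hard computation but only describe it as ``a rational-function computation in the Ore field'' without executing it, and this is the only step of the whole proposition where real work happens. The paper's verification rests on a simplification you should make explicit: since $r_3\cdot F=F$, the identity $r_2r_3r_2(F)=r_3r_2r_3(F)$ reduces to showing that $r_3r_2(F)$ is a fixed point of $r_2$, i.e., $r_2\bigl(r_3r_2(F)\bigr)=r_3r_2(F)$ (and dually for $G$ using $r_2\cdot G=G$). The computation is then carried out by clearing the nested M\"obius factors into a common left-fraction normal form, e.g.,
\[
r_3r_2(F)=\sqrt{\tfrac{b_3b_5}{b_1b_7}}\,F\,\bigl(F\bigl(G+b_1^{-1}b_3b_5\bigr)+b_3(G+b_5)\bigr)^{-1}\bigl(F(G+b_7)+b_3\bigl(G+b_1b_3^{-1}b_7\bigr)\bigr),
\]
and verifying that applying $r_2$ once more reproduces the same expression. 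Without exhibiting this (or an equivalent) normal-form computation, the key relation is asserted rather than proved, so you should supply it to complete the argument.
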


One can check the Coxeter relations by straightforward calculations.
We only present two nontrivial cases.
\begin{Lemma}
We have
\begin{align*}
&r_3(F)=F,\qquad r_2r_3r_2 (F)=r_3r_2 (F),\qquad r_2(G)=G,\qquad r_3r_2r_3 (G)=r_2r_3 (G).
\end{align*}
Hence, we have the Coxeter relations
\begin{align*}
&r_2r_3r_2 (F)=r_3r_2r_3 (F),\qquad r_2r_3r_2 (G)=r_3r_2r_3 (G).
\end{align*}
\end{Lemma}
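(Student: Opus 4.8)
The plan is to treat the four auxiliary identities as the real content of the lemma and then obtain the two braid (Coxeter) relations as a purely formal consequence. First, the identities $r_3(F)=F$ and $r_2(G)=G$ are read off immediately from the Definition: in the rule for $r_3$ the $F$-slot is left unchanged, and in the rule for $r_2$ the $G$-slot is left unchanged. These simply record that $r_3$ acts non-trivially only on $G$ (together with $a_3,a_4,a_5$), while $r_2$ acts non-trivially only on $F$ (together with $a_0,a_1,a_2,a_3$).

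The substantial step is verifying $r_2r_3r_2(F)=r_3r_2(F)$ and, symmetrically, $r_3r_2r_3(G)=r_2r_3(G)$ by direct computation in the skew field $\mathbb{F}$. For the first, I would begin from $r_2(F)=F\cdot\frac{a_0a_1^{-1}G+a_2^2}{a_0a_1^{-1}a_2^2G+1}$ and apply $r_3$: since $r_3$ is a ring homomorphism fixing $F$, this amounts to substituting $a_2\mapsto a_2a_3$ and $G\mapsto r_3(G)=\frac{a_3^2a_4a_5^{-1}F+1}{a_4a_5^{-1}F+a_3^2}G$ into that factor, giving $r_3r_2(F)=F\cdot\frac{a_0a_1^{-1}r_3(G)+a_2^2a_3^2}{a_0a_1^{-1}a_2^2a_3^2\,r_3(G)+1}$. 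I would then apply $r_2$ once more and check that the resulting $q$-rational expression returns to the same form. The computation for $G$ is entirely parallel, with the roles of $F$ and $G$, and of $r_2$ and $r_3$, exchanged. Throughout I would use the $q$-commutation relation $FG=q^{-1}GF$ together with the adjoint formulas $\operatorname{Ad}(F)\cdot h(G)=h(q^{-1}G)$ and $\operatorname{Ad}(G)\cdot h(F)=h(qF)$ to bring each side to a normal form (functions of $G$ collected on one side, functions of $F$ on the other) before comparing.

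Finally I would deduce the two braid relations from the auxiliary identities. For $F$: using $r_3(F)=F$ one has $r_3r_2r_3(F)=r_3r_2(F)$, and the auxiliary identity gives $r_2r_3r_2(F)=r_3r_2(F)$, so the two agree. For $G$: using $r_2(G)=G$ one has $r_2r_3r_2(G)=r_2r_3(G)$, and the auxiliary identity gives $r_3r_2r_3(G)=r_2r_3(G)$, so again the two agree. Since the lemma only concerns the action on the generators $F$ and $G$ (the action on the root variables $a_i$ reduces to the classical braid relation for the adjacent nodes $2,3$ of $D_5^{(1)}$ and is routine), this establishes $r_2r_3r_2=r_3r_2r_3$ as required.

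The main obstacle will be the non-commutative bookkeeping in the two explicit verifications. Because $r_2$ sends $F$ to $F$ times a function of $G$, while $r_3$ sends $G$ to a function of $F$ times $G$, the composition forces one to substitute a non-commutative product such as $F\,M(G)$ into a rational function of a single variable coming from the other reflection; the $q$-shifts produced when moving functions of $G$ past $F$ (and conversely) must be tracked precisely, since it is exactly these shifts that make the claimed cancellations occur and the expression stabilize under the extra application of $r_2$ (respectively $r_3$).
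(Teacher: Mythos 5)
Your proposal follows essentially the same route as the paper: read off $r_3(F)=F$ and $r_2(G)=G$ from the definitions, establish the key identities $r_2r_3r_2(F)=r_3r_2(F)$ and $r_3r_2r_3(G)=r_2r_3(G)$ by direct non-commutative computation, and then deduce the braid relations formally by inserting the trivial actions at the innermost position. The only cosmetic difference is that the paper carries out the explicit verification in the $b$-variables of Definition~\ref{b-variables} (which makes the cancellation $A/B$ more transparent) rather than the $a$-variables, and your account leaves that central computation as an outline rather than executing it; the plan, including the need to track the $q$-shifts from $FG=q^{-1}GF$, is nonetheless the correct one.
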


\begin{proof}
We have\footnote{The variables $b_i$ are defined in Appendix~\ref{App.A} (see Definition \ref{b-variables}).}
\begin{align*}
&r_2(F)=\sqrt{b_5\over b_7}F{G+b_7\over G+b_5},\qquad r_3r_2(F)=\sqrt{b_3b_5\over b_1b_7}F
{{F+b_3\over F+b_1}G+b_7 \over{F+b_3\over F+b_1}G+b_1^{-1}b_3 b_5 },\\
&r_2r_3r_2(F)=\sqrt{b_3b_5\over b_1b_7}F
{G+b_7\over G+b_5}{A\over B},
\end{align*}
where
\begin{align*}
&A={F {G+b_7\over G+b_5}+b_3 \over
F {G+b_7\over G+b_5}+b_1b_5^{-1} b_7 }G+b_5,\qquad
B={F {G+b_7\over G+b_5}+b_3 \over
F {G+b_7\over G+b_5}+b_1b_5^{-1} b_7 }G+b_1^{-1}b_3 b_5.
\end{align*}
Note that we have $AB=BA$.

Using
\begin{align*}
&{F+b_3\over F+b_1}G+b_7 ={1\over F+b_1}\bigl(F(G+b_7)+b_3\bigl(G+b_1b_3^{-1}b_7\bigr)\bigr),\\
&
{F+b_3\over F+b_1}G+b_1^{-1}b_3 b_5=
{1\over F+b_1}\bigl( F\bigl(G+b_1^{-1}b_3b_5\bigr)+b_3(G+b_5)\bigr),
\end{align*}
we obtain
\begin{align*}
r_3r_2(F)=\sqrt{b_3b_5\over b_1b_7}F
{1\over F\bigl(G+b_1^{-1}b_3b_5\bigr)+b_3(G+b_5)} \bigl(F(G+b_7)+ b_3\bigl(G+b_1b_3^{-1}b_7\bigr)\bigr).
\end{align*}

On the other hand, from
\begin{align*}
&A=C^{-1} \bigl(F(G+b_7)+ b_3\bigl(G+b_1b_3^{-1}b_7\bigr)\bigr),\\
&B=C^{-1} \bigl(F\bigl(G+b_1^{-1}b_3b_5\bigr)+b_3(G+b_5)\bigr){G+b_7\over G+b_5},\\
&C= F {G+b_7\over G+b_5}+b_1b_5^{-1} b_7,
\end{align*}
we have
\begin{align*}
r_2r_3r_2(F)=\sqrt{b_3b_5\over b_1b_7}F
{1\over F\bigl(G+b_1^{-1}b_3b_5\bigr)+b_3(G+b_5)} \bigl(F(G+b_7)+ b_3\bigl(G+b_1b_3^{-1}b_7\bigr)\bigr),
\end{align*}
indicating that we have $ r_2r_3r_2 (F)=r_3r_2 (F)$.
\end{proof}

\begin{Lemma}
We have
\begin{align*}
&\sigma_{01}r_2(F,G)=\sigma_{01} \left(F {a_0a_1^{-1}G+a_2^2\over a_0a_1^{-1}a_2^2G+1},G\right)
= \left( {a_0a_1^{-1}G+a_2^{-2}\over a_0a_1^{-1}a_2^{-2}G+1}qF^{-1},G\right),\\
&r_2\sigma_{01}(F,G)=r_2 \bigl(qF^{-1},G\bigr)
= \left(q {a_0a_1^{-1}a_2^2G+1\over a_0a_1^{-1}G+a_2^2}F^{-1},G\right),
\end{align*}
and
\begin{align*}
&\sigma_{01}r_3(F,G)=\sigma_{01} \left(F ,{a_3^2a_4a_5^{-1}F+1\over a_4a_5^{-1}F+a_3^2}G\right)
= \left( qF^{-1},G{a_3^{-2}a_4^{-1}a_5 qF^{-1}+1\over a_4^{-1}a_5qF^{-1}+a_3^{-2}} \right),\\
&r_3\sigma_{01}(F,G)=r_3\bigl(qF^{-1},G\bigr)
= \left(qF^{-1},{a_3^2a_4a_5^{-1}F+1\over a_4a_5^{-1}F+a_3^2}G\right),
\end{align*}
indicating that we have $\sigma_{01}r_2(F,G)=r_2\sigma_{01}(F,G)$,
$\sigma_{01}r_3(F,G)=r_3\sigma_{01}(F,G)$.
\end{Lemma}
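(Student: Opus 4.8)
The plan is to verify both relations directly from the defining rules, the only real subtlety being that $\sigma_{01}$ is a ring \emph{anti}-homomorphism while $r_2,r_3$ are ring homomorphisms, so $\sigma_{01}$ reverses the order of every product to which it is applied. Throughout I use that the root variables $a_0,\dots,a_5$ and the scalar $q$ are central, so the only non-commutativity to track is that of $F$ and $G$ via \eqref{FG-qcom}. I check the $(F,G)$-components displayed in the statement; the accompanying relabellings of the $a_i$ coincide by the same computation on the root variables, which is a routine commutative check.

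First I treat $\sigma_{01}r_2=r_2\sigma_{01}$. Both maps fix $G$, so the $G$-component matches trivially and only the $F$-components need comparison. Writing $r_2(F)=F\,(a_0a_1^{-1}G+a_2^2)(a_0a_1^{-1}a_2^2G+1)^{-1}$ and applying $\sigma_{01}$ with the product reversed gives the first displayed formula, $\sigma_{01}r_2(F)=(a_0a_1^{-1}a_2^{-2}G+1)^{-1}(a_0a_1^{-1}G+a_2^{-2})\,qF^{-1}$; the two $G$-dependent factors commute, so they combine into a single fraction. On the other side $r_2\sigma_{01}(F)=q\,r_2(F)^{-1}$ yields the second displayed formula. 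To match them I multiply numerator and denominator of $\frac{a_0a_1^{-1}G+a_2^{-2}}{a_0a_1^{-1}a_2^{-2}G+1}$ by $a_2^2$, obtaining $\frac{a_0a_1^{-1}a_2^2G+1}{a_0a_1^{-1}G+a_2^2}$, which is precisely the fraction on the $r_2\sigma_{01}$ side; being a rational function of $G$ alone it commutes with $q$, completing this case.

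The relation $\sigma_{01}r_3=r_3\sigma_{01}$ is the one that genuinely uses the $q$-commutation. Since $r_3$ fixes $F$, the $F$-component is $qF^{-1}$ on both sides, and only the $G$-components must be reconciled: $\sigma_{01}r_3(G)=G\,\frac{a_3^{-2}a_4^{-1}a_5qF^{-1}+1}{a_4^{-1}a_5qF^{-1}+a_3^{-2}}$ versus $r_3\sigma_{01}(G)=\frac{a_3^2a_4a_5^{-1}F+1}{a_4a_5^{-1}F+a_3^2}\,G$. To compare I push the leading $G$ in the first expression to the right using $\operatorname{Ad}(G)\cdot h(F)=h(qF)$ from \eqref{AdFG}; this sends $F^{-1}$ to $(qF)^{-1}$ inside the fraction, so each $qF^{-1}$ collapses to $F^{-1}$ and the loose factor $q$ is absorbed, leaving $\frac{a_3^{-2}a_4^{-1}a_5F^{-1}+1}{a_4^{-1}a_5F^{-1}+a_3^{-2}}\,G$. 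Clearing the factors $a_3^{-2}$ and then multiplying numerator and denominator by $a_4a_5^{-1}F$, using $F^{-1}F=1$ and the centrality of $a_4a_5^{-1}$, converts this fraction into $\frac{a_3^2a_4a_5^{-1}F+1}{a_4a_5^{-1}F+a_3^2}$, which matches $r_3\sigma_{01}(G)$.

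The one step that deserves flagging as the crux is this last commutation: a purely commutative manipulation would overlook the shift $F\mapsto qF$ from \eqref{AdFG}, the stray factor $q$ would then fail to cancel, and the two sides would be off by a power of $q$. Once the order-reversal of $\sigma_{01}$ and this single $\operatorname{Ad}(G)$-shift are handled correctly, everything else reduces to elementary simplification of rational functions in $G$ (respectively $F$) with central coefficients, and the two Coxeter relations $\sigma_{01}r_2=r_2\sigma_{01}$ and $\sigma_{01}r_3=r_3\sigma_{01}$ follow.
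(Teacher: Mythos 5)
Your proposal is correct and follows the same route as the paper, which simply displays the two computations for each pair and asserts their agreement; you additionally make explicit the one step the paper leaves implicit, namely commuting $G$ past the rational function of $F$ via $\operatorname{Ad}(G)\cdot h(F)=h(qF)$ so that the stray factor of $q$ cancels. All the individual manipulations (order reversal under the anti-homomorphism $\sigma_{01}$, the rescaling of numerator and denominator by central elements and by $F$) check out.
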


\begin{Proposition}\label{Hasegawa-form}
Let $T=r_2r_1r_0r_2 \sigma_{01} r_3r_4r_5r_3 \sigma_{45}$
be the translation element in $\widetilde{W}$.
Writing $\overline{F} =T\cdot F$, $\underline{G}=T^{-1}\cdot G$ for short,
we have the $qq$-Painlev\'e VI equation
\begin{align}
\label{qP6F}
\overline{F}F&=q\mathsf{p}^2\mathsf{t}^{-2}
{G+ \mathsf{t}\mathsf{p}^{-1}a_1^2 \over
G+ \mathsf{t}^{-1}\mathsf{p}a_0^2}
{G+ \mathsf{t}\mathsf{p}^{-1}a_1^{-2} \over
G+ \mathsf{t}^{-1}\mathsf{p}a_0^{-2}},\\
\label{qP6G}
G\underline{G}&=q\mathsf{t}^{-2}
{F+ \mathsf{t}a_4^2 \over
F+ \mathsf{t}^{-1}a_5^2}
{F+ \mathsf{t}a_4^{-2} \over
F+ \mathsf{t}^{-1}a_5^{-2}},
\end{align}
where the discrete shift of the time variable is $T(\mathsf{t}) = \mathsf{p}^{-2} \mathsf{t}$.
\end{Proposition}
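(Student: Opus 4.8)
The plan is to evaluate the Weyl-group action of $T$ on $F$ and of $T^{-1}$ on $G$ directly, using the birational rules of the Definition together with the factorization $T=T_1T_2$, where $T_1=r_2r_1r_0r_2\sigma_{01}$ and $T_2=r_3r_4r_5r_3\sigma_{45}$. The two halves decouple on the two coordinates: each of $\sigma_{45},r_3,r_4,r_5$ fixes $F$, so $T_2(F)=F$ and hence $\overline{F}=T\cdot F=T_1(F)$; dually each of $\sigma_{01},r_2,r_0,r_1$ fixes $G$, so $T_1^{-1}(G)=G$ and hence $\underline{G}=T^{-1}\cdot G=T_2^{-1}(G)=\sigma_{45}r_3r_5r_4r_3(G)$. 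Thus \eqref{qP6F} and \eqref{qP6G} reduce to two short, mirror-image computations.

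For \eqref{qP6F} I would apply the maps of $T_1$ from the inside out. The innermost map is $\sigma_{01}$, so $\sigma_{01}(F)=qF^{-1}$ with no anti-homomorphism bookkeeping, and the remaining $r_2,r_0,r_1,r_2$ are ring homomorphisms. Using $r_2(F)=\sqrt{b_5/b_7}\,F\,\frac{G+b_7}{G+b_5}$ and $r_2(G)=G$, the inner $r_2$ turns $qF^{-1}$ into a Möbius factor in $G$ times $F^{-1}$; the reflections $r_0,r_1$ fix $F$ and $G$ and only shift parameters; the outer $r_2$ contributes a second Möbius factor in $G$. Since every factor produced is a rational function of $G$ standing to the left of $F^{-1}$, the result is forced into the shape
\[
\overline{F}=H(G)\,F^{-1},\qquad H(G)=c\,\frac{(G+u_1)(G+u_2)}{(G+w_1)(G+w_2)},
\]
so that $\overline{F}\,F=H(G)$ is automatically independent of $F$. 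It then remains to assemble the accumulated parameter shifts and rewrite the constant $c$ and the four roots in the variables $a_0,a_1,\mathsf{p},\mathsf{t}$ through the dictionary, reproducing the right-hand side of \eqref{qP6F}.

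The $G$-equation is the $\tau$-mirror of this. The diagram automorphism obeys $\tau r_i\tau=r_{5-i}$, $\tau\sigma_{01}\tau=\sigma_{45}$, $\tau(F)=G$, $\tau(G)=F$, and on scalars $\tau(\mathsf{p})=\mathsf{p}^{-1}$, $\tau(\mathsf{t})=\mathsf{p}^{-1}\mathsf{t}$, $\tau(a_1)=a_4^{-1}$, $\tau(a_0)=a_5^{-1}$; a one-line check shows these send the prefactor $q\mathsf{p}^2\mathsf{t}^{-2}$ of \eqref{qP6F} to $q\mathsf{t}^{-2}$ and the roots $\mathsf{t}\mathsf{p}^{-1}a_1^{\pm2}$, $\mathsf{t}^{-1}\mathsf{p}a_0^{\pm2}$ to $\mathsf{t}a_4^{\mp2}$, $\mathsf{t}^{-1}a_5^{\mp2}$, i.e.\ to the root data of \eqref{qP6G}. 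Computing $\underline{G}=\sigma_{45}r_3r_5r_4r_3(G)$ then parallels the $F$-calculation, with one structural difference that is precisely what makes the cancellation work: here $\sigma_{45}$ acts last, on a product $K(F)\,G$, and being an anti-homomorphism it produces $\sigma_{45}(G)\,\sigma_{45}(K(F))=qG^{-1}K'(F)$, placing $G^{-1}$ on the left so that $G\,\underline{G}=qK'(F)$ is a function of $F$ alone. Finally the time shift $T(\mathsf{t})=\mathsf{p}^{-2}\mathsf{t}$ follows by tracking only the parameter action of $T$ on the monomial $\mathsf{t}=a_3^2a_4a_5$, consistently with $\overline{\Lambda}=t^{-1}\Lambda$ and $\mathsf{p}=t^{1/4}$.

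I expect the main obstacle to be the noncommutative simplification inside the two reflection steps: combining the inner and outer Möbius-in-$G$ factors coming from the two $r_2$'s (respectively the $r_3$'s), keeping $F^{-1}$ (respectively $G$) on a single side throughout by means of $FG=q^{-1}GF$, and—on the $G$-side—correctly reversing the factor order as $\sigma_{45}$ is pushed through. The parallel bookkeeping of the parameter shifts of the $a_i$ under $r_0,r_1,r_2$ and under $\tau$, together with the final repackaging of the four roots of $H$ into the compact variables $\mathsf{p},\mathsf{t},a_0,a_1$, is routine but error-prone; carrying out the reflection steps in the $b$-variables, where each $r_2$ and $r_3$ is a single normalized Möbius transformation, is the cleanest safeguard.
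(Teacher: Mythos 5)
Your proposal is correct and follows essentially the same route as the paper: a direct chain computation of $T\cdot F$ and $T^{-1}\cdot G$ through the generators, with the first half of each word acting trivially, $\sigma_{01}$ producing $qF^{-1}$, the two $r_2$'s (resp.\ $r_3$'s) contributing M\"obius factors, and the anti-homomorphism $\sigma_{45}$ reversing the product so that $G^{-1}$ lands on the left. Your additional observation that the $G$-equation is the $\tau$-image of the $F$-equation (consistent with $\tau T=T^{-1}\tau$) is a valid shortcut the paper does not exploit, but since you still plan the direct computation the argument is in substance identical to the paper's proof.
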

Note that compared with the classical version, there appears the factor $q$ on the right hand side.

\begin{proof}
We can compute the action of the reflections and the diagram automorphisms as follows:
\begin{align*}
F&\mathop{\longmapsto}^{\sigma_{45}}F
\mathop{\longmapsto}^{r_3}F
\mathop{\longmapsto}^{r_5}F
\mathop{\longmapsto}^{r_4}F
\mathop{\longmapsto}^{r_3}F\mathop{\longmapsto}^{\sigma_{01}}q F^{-1}\mathop{\longmapsto}^{r_2}q {a_0a_1^{-1}a_2^2 G+1\over a_0a_1^{-1}G+a_2^2}F^{-1}\\
&\mathop{\longmapsto}^{r_0}q {a_0a_1^{-1}a_2^2 G+1\over a_0^{-1}a_1^{-1}G+a_0^2a_2^2}F^{-1}\mathop{\longmapsto}^{r_1}q {a_0a_1^{3}a_2^2 G+1\over a_0^{-1}a_1G+a_0^2a_1^2a_2^2}F^{-1}\\
&\mathop{\longmapsto}^{r_2}q
 {a_0a_1^{3}a_2^2 G+1\over a_0^{-1}a_1G+a_0^2a_1^2a_2^2}
 {a_0a_1^{-1}a_2^2 G+1\over a_0a_1^{-1}G+a_2^2}F^{-1}= q \mathsf{p}^2\mathsf{t}^{-2}
 {G+\mathsf{t}\mathsf{p}^{-1}a_1^2 \over G+\mathsf{t}^{-1}\mathsf{p}a_0^2}
 {G+\mathsf{t}\mathsf{p}^{-1}a_1^{-2} \over G+\mathsf{t}^{-1}\mathsf{p}a_0^{-2}}F^{-1}.
\end{align*}
Similarly for $G$,
\begin{align*}
G& \mathop{\longmapsto}^{r_2}G
\mathop{\longmapsto}^{r_1}G
\mathop{\longmapsto}^{r_0}G
\mathop{\longmapsto}^{r_2}G
\mathop{\longmapsto}^{\sigma_{01}}G\mathop{\longmapsto}^{r_3}
{a_3^2a_4 a_5^{-1}F+1 \over a_4a_5^{-1}F+a_3^2}
G \mathop{\longmapsto}^{r_4}
{a_3^2a_4 a_5^{-1}F+1 \over a_4^{-1}a_5^{-1}F+a_3^2a_4^2}
G \\
&\mathop{\longmapsto}^{r_5}
{a_3^2a_4 a_5^{3}F+1 \over a_4^{-1}a_5F+a_3^2a_4^2a_5^2}
G \mathop{\longmapsto}^{r_3}
{a_3^2a_4 a_5^{3}F+1 \over a_4^{-1}a_5F+a_3^2a_4^2a_5^2}
{a_3^2a_4 a_5^{-1}F+1 \over a_4a_5^{-1}F+a_3^2}
G \\
&\mathop{\longmapsto}^{\sigma_{45}}
qG^{-1}
{a_3^{-2}a_4 a_5^{-1}F+1 \over a_4a_5^{-1}F+a_3^{-2}}
{a_3^{-2}a_4^{-3} a_5^{-1}F+1 \over a_4^{-1}a_5F+a_3^{-2}a_4^{-2}a_5^{-2}}
=
qG^{-1} \mathsf{t}^{-2}
{F+\mathsf{t} a_4^2 \over F+\mathsf{t}^{-1} a_5^2}
{F+\mathsf{t} a_4^{-2} \over F+\mathsf{t}^{-1} a_5^{-2}}.\tag*{\qed}
\end{align*}\renewcommand{\qed}{}
\end{proof}

In terms of the variables $b_i$ defined by Definition \ref{b-variables},
we can write the equations \eqref{qP6F} and~\eqref{qP6G}
as follows:
\begin{align*}
& \overline{F} F = q b_7 b_8
{G+ b_5 \over
G+ b_7}
{G+ b_6 \over
G+ b_8},\qquad
G \underline{G}=q b_3 b_4
{F+b_1 \over
F+ b_3}
{F+ b_2\over
F+ b_4}.
\end{align*}
By taking the conjugation (the adjoint action) by $F$ or $G^{-1}$ (see \eqref{AdFG}), we can also
write the equations in the following manner:
\begin{align*}
& F \overline{F}= q^{-1} \tilde{b}_7 \tilde{b}_8
{G+ \tilde{b}_5 \over
G+ \tilde{b}_7}
{G+ \tilde{b}_6 \over
G+ \tilde{b}_8},\qquad
\underline{G} G =q^{-1} \tilde{b}_3 \tilde{b}_4
{F+ \tilde{b}_1 \over
F+ \tilde{b}_3}
{F+ \tilde{b}_2\over
F+ \tilde{b}_4},
\end{align*}
where $\tilde{b}_i= q b_i$.


\subsection{Adjoint action and Yang--Baxter relation}

In the Heisenberg form of the $qq$-Painlev\'e VI equation the Hamiltonian acts
on the dynamical variables $(F,G)$ by the adjoint action.
Hence, we have to work out the adjoint action of the affine Weyl group generators including the diagram automorphism
$\sigma = \sigma_{01} \sigma_{45}$ in the translation (see Appendix~\ref{App.A}).
The fundamental part is to realize the birational transformation of the non-commutative variables $(F,G)$
by the Weyl reflections $r_2$ and $r_3$ as the adjoint action.
We can achieve it by using the following function \cite{Hasegawa}:
\begin{Definition}
For $X \in\mathbb{F} $, and $z\in\mathbb{K}$,
i.e., when we have $zX=Xz$, set
\begin{align*}
&\theta(X;q)=(X;q)_\infty (q/X;q)_\infty,\\
&R(z,X)={(-X;q)_\infty \bigl(-q X^{-1};q\bigr)_\infty \over \bigl(-z^{-1} X;q\bigr)_\infty \bigl(-z^{-1}q X^{-1};q\bigr)_\infty }
={\theta(-X;q) \over \bigl(-z^{-1} X;q\bigr)_\infty \bigl(-z^{-1}q X^{-1};q\bigr)_\infty }.
\end{align*}
\end{Definition}
Recall that $\mathbb{F}=\mathbb{ K}( F,G)$ is the quotient skew field of $\mathbb{ K}\langle F,G\rangle$.

\begin{Lemma}The function $R(z,X)$ satisfies the following formulas:
\begin{align*}
&R(z,X)=R\bigl(z,qX^{-1}\bigr),\\
&R(z,X)R\bigl(z^{-1},X\bigr)={\theta(-X;q)\theta(-X;q) \over \theta(-z X;q)\theta\bigl(-z^{-1}X;q\bigr)}=
R(z,qX)R\bigl(z^{-1},qX\bigr).
\end{align*}
\end{Lemma}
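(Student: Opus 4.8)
The plan is to verify the three identities by direct manipulation of $q$-shifted factorials. I first note that since $z\in\mathbb{K}$ is central ($zX=Xz$), every expression appearing here lies in the commutative subalgebra generated by $X$, so no ordering subtleties arise and the computations are purely those of ordinary scalar $q$-series.

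For the first identity I would substitute $X\mapsto qX^{-1}$ directly into the definition of $R(z,X)$. In the numerator the factor $(-X;q)_\infty$ becomes $(-qX^{-1};q)_\infty$, while $(-qX^{-1};q)_\infty$ becomes $\bigl(-q(qX^{-1})^{-1};q\bigr)_\infty=(-X;q)_\infty$, so the two numerator factors merely interchange; the identical swap happens in the denominator between $(-z^{-1}X;q)_\infty$ and $(-z^{-1}qX^{-1};q)_\infty$. Hence $R(z,qX^{-1})=R(z,X)$.

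For the second identity I would use the second expression for $R$ together with the elementary bundling relations
\begin{align*}
\theta(-zX;q)&=(-zX;q)_\infty\bigl(-z^{-1}qX^{-1};q\bigr)_\infty,\\
\theta\bigl(-z^{-1}X;q\bigr)&=\bigl(-z^{-1}X;q\bigr)_\infty\bigl(-zqX^{-1};q\bigr)_\infty,
\end{align*}
which follow at once from $\theta(Y;q)=(Y;q)_\infty(q/Y;q)_\infty$. Multiplying $R(z,X)$ by $R(z^{-1},X)$ produces numerator $\theta(-X;q)^2$ and denominator $(-z^{-1}X;q)_\infty(-z^{-1}qX^{-1};q)_\infty(-zX;q)_\infty(-zqX^{-1};q)_\infty$; the two bundling relations show this denominator equals $\theta(-zX;q)\theta(-z^{-1}X;q)$, giving the claimed closed form.

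For the third identity I would invoke the quasi-periodicity $\theta(qY;q)=-Y^{-1}\theta(Y;q)$, which I would derive from $(qY;q)_\infty=(Y;q)_\infty/(1-Y)$ and $(1/Y;q)_\infty=(1-1/Y)(q/Y;q)_\infty$. Applying it with $Y=-X,\,-zX,\,-z^{-1}X$ yields $\theta(-qX;q)^2=X^{-2}\theta(-X;q)^2$ and $\theta(-zqX;q)\theta(-z^{-1}qX;q)=X^{-2}\theta(-zX;q)\theta(-z^{-1}X;q)$, so the common factor $X^{-2}$ cancels in the ratio obtained from the second identity, establishing invariance under $X\mapsto qX$. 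None of the three steps is genuinely difficult; the only point demanding care is getting the theta bundling and the sign in the quasi-periodicity relation exactly right, after which all three statements are immediate.
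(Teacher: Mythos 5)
Your proof is correct: the swap of numerator and denominator factors under $X\mapsto qX^{-1}$, the regrouping of the four infinite products into $\theta(-zX;q)\theta\bigl(-z^{-1}X;q\bigr)$, and the cancellation of the $X^{-2}$ factors coming from the quasi-periodicity $\theta(qY;q)=-Y^{-1}\theta(Y;q)$ all check out, and your remark that everything lives in the commutative subalgebra generated by $X$ and $\mathbb{K}$ disposes of any ordering issues. The paper states this lemma without proof, and your direct verification is exactly the routine computation the authors evidently had in mind.
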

We will need the following Yang--Baxter relation to check the Coxeter relations among the adjoint actions of $R_i$
to be defined shortly (see Proposition \ref{AdR}) \cite{Hasegawa}.
\begin{Proposition}[\cite{Fateev}]
We have the Yang--Baxter equation
\begin{align}\label{FZ}
R(x,F)R(xy,G)R(y,F)=R(y,G)R(xy,F)R(x,G).
\end{align}
\end{Proposition}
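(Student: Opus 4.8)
The plan is to recognize $R(z,X)$ as a ratio of quantum-dilogarithm factors and to deduce the Yang--Baxter relation \eqref{FZ} from the pentagon identity for the quantum dilogarithm. Writing $\varphi$ for the $q$-product as in the text, we have
\[
R(z,X)=\frac{\varphi(-X)\,\varphi(-qX^{-1})}{\varphi(-z^{-1}X)\,\varphi(-z^{-1}qX^{-1})},
\]
so each factor $R$ is a product of two ``numerator'' dilogarithms and two ``denominator'' dilogarithms whose arguments are linear in $X^{\pm 1}$. Since $FG=q^{-1}GF$, the pairs $(-F,-G)$ and $(-qF^{-1},-qG^{-1})$ are $q$-commuting, which is exactly the algebraic input the pentagon identity requires.

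First I would record the functional equations that govern $\varphi$ on such generators: the difference equation $\varphi(X)=(1-X)\varphi(qX)$, which yields the rational shift $R(qz,X)/R(z,X)=\{(1+q^{-1}z^{-1}X)(1+z^{-1}X^{-1})\}^{-1}$, and the pentagon identity $\varphi(U)\varphi(V)=\varphi(V)\varphi(VU)\varphi(U)$, valid for any $q$-commuting pair $VU=qUV$ (up to the standard normalization of the middle argument). The inversion symmetry $R(z,X)=R(z,qX^{-1})$ proved in the preceding Lemma will be used to trade an argument $X$ for $qX^{-1}$ and thereby align the four factors of neighbouring $R$'s into the shape on which the pentagon acts.

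The core of the argument is then to transport the factored left-hand side $R(x,F)R(xy,G)R(y,F)$ into the right-hand side $R(y,G)R(xy,F)R(x,G)$ by repeated application of the pentagon relation together with the elementary commutation rule $h(F)\,G=G\,h(q^{-1}F)$ coming from \eqref{AdFG}. Concretely, one pushes the $F$-factors of the outer $R$'s through the central $G$-factor, at each step converting a product $\varphi(\alpha F)\varphi(\beta G)$ of $q$-commuting arguments into $\varphi(\beta G)\varphi(\gamma FG)\varphi(\alpha F)$; the emergent mixed factors $\varphi(\gamma FG)$ reassemble, after use of the inversion symmetry and the spectral-parameter bookkeeping relating $x$, $xy$ and $y$, into precisely the factors on the right. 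As an independent check, both sides satisfy the same first-order $q$-difference equation in each spectral parameter: using the rational shift above together with $h(F)\,G=G\,h(q^{-1}F)$, one finds that $x\mapsto qx$ multiplies each side on the left by the same rational operator, and the auxiliary relation $R(z,X)R(z^{-1},X)=R(z,qX)R(z^{-1},qX)$ of the preceding Lemma serves as a consistency constraint; equality then follows once a base point is matched, as in the uniqueness argument for the formal-series solution given earlier.

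I expect the main obstacle to be organizational rather than conceptual: because each $R$ is a product of \emph{four} dilogarithm factors carrying both $X$ and $qX^{-1}$ arguments, no single pentagon move suffices, and one must interleave pentagon applications with the inversion symmetry and the $q$-commutation reorderings in the correct sequence, carefully matching the three spectral parameters $x$, $xy$, $y$ across the two sides. Keeping the noncommutative orderings and the domains of formal convergence consistent through this chain of moves is where essentially all of the work lies; once the regrouping is fixed, each individual step reduces to the pentagon and inversion relations recorded above.
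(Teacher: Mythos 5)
The paper does not prove this Proposition itself: it cites Fateev--Zamolodchikov and notes that the proof in Faddeev--Volkov rests on Ramanujan's ${}_1\psi_1$ summation, which converts $R(z,X)$ into an explicit bilateral series $\sum_{n\in\bbZ}\frac{(z;q)_n}{(q/z;q)_n}(-z^{-1}X)^n$ (up to a scalar) and reduces \eqref{FZ} to a $q$-series identity; an elementary proof via Heine's formula is attributed to Moriyama--Yamada. Your route through the pentagon identity is therefore genuinely different from the one the paper relies on, and the paper's own remark that \eqref{FZ} is ``closely related'' to the five-term identity shows the idea is not unreasonable. The problem is that, as written, your argument never actually happens. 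The entire content of \eqref{FZ} is the claim that the mixed factors $\varphi(\gamma FG)$ produced by each pentagon move cancel or recombine into $R(y,G)R(xy,F)R(x,G)$ with exactly the spectral parameters $x$, $xy$, $y$; you assert this reassembly but do not exhibit it, and you yourself concede that ``essentially all of the work lies'' there. Note also that the pentagon identity carries no spectral parameter at all, so it cannot by itself generate the $x$--$xy$--$y$ structure; some input equivalent to the summation formula (i.e., the explicit $z$-dependence of $R$) must enter, and your sketch does not identify where. Since each $R$ is a ratio of four dilogarithms in $X$ and $qX^{-1}$, with the denominator factors obeying the \emph{reversed} pentagon relation, it is not even clear that a consistent sequence of moves exists; this is a missing idea, not a missing routine verification.

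Your fallback strategy --- show both sides satisfy the same first-order $q$-difference equation in $x$ (and in $y$) and match them at a base point --- is closer to a completable proof, and the base point $x=1$ does work since $R(1,X)=1$ makes both sides equal to $R(y,G)R(y,F)$. But the key step there is again left unverified: after using $R(qz,X)=\{(1+q^{-1}z^{-1}X)(1+z^{-1}X^{-1})\}^{-1}R(z,X)$ and conjugating the rational prefactors through the remaining $R$'s via Lemma \ref{L314}, one must check a nontrivial identity between noncommutative rational functions of $F$ and $G$, which you do not do. You would also need to justify the uniqueness step: the natural formal-series topology here is in $x^{-1}$ (since $R(z,X)$ is $\theta(-X;q)$ times a power series in $z^{-1}$), for which the relevant base point is $x^{-1}=0$, not $x=1$, and matching there requires a separate computation with the $\theta$ prefactors. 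Until either the pentagon reassembly or the difference-equation identity is written out, the proposal is a plan rather than a proof; the most economical fix is to follow the cited route and expand $R(z,X)$ by Ramanujan's summation, after which \eqref{FZ} becomes a coefficient identity for bilateral $q$-series.
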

The proof given in \cite{Faddeev:1993pe} was based on the Ramanujan's summation formula,
which implies the expansion formula of $R(z,X)$.
\begin{Remark}
Ramanujan's summation formula for the bilateral basic hypergeometric series
\begin{align*}
&{}_1\psi_1(a;b;q;z)=\sum_{n=-\infty}^\infty {(a;q)_n\over (b;q)_n}z^n ={(q;q)_\infty (b/a;q)_\infty (az;q)_\infty (q/az;q)_\infty
\over (b;q)_\infty (q/a;q)_\infty (z;q)_\infty (b/az;q)_\infty },
\qquad |b/a|<z<1,
\end{align*}
gives
\begin{align*}
{(q;q)_\infty \bigl(q/z^2;q\bigr)_\infty \over(q/z;q)_\infty (q/z;q)_\infty
} R(z,X)=
\sum_{n=-\infty}^\infty {(z;q)_n\over (q/z;q)_n}\bigl(-z^{-1}X\bigr)^n.
\end{align*}
\end{Remark}
Essentially the same relation as \eqref{FZ} is proved in \cite{Moriyama:2021mux},
where an elementary proof by the Heine's formula is provided.
It is also worth mentioning that the Yang--Baxter relation \eqref{FZ} is
closely related to the quantum dilogarithmic identities.
\begin{Proposition}[\cite{Kirillov}]
We have the five term identity
\begin{align*}
(-F;q)_\infty (-G;q)_\infty =(-G;q)_\infty (-FG;q)_\infty (-F;q)_\infty .
\end{align*}
\end{Proposition}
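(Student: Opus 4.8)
The plan is to recast the five-term relation as a statement about conjugation and then reduce it to the noncommutative $q$-binomial theorem. Write $e(x):=(-x;q)_\infty$, so the claim reads $e(F)e(G)=e(G)e(FG)e(F)$. Multiplying on the left by $e(G)^{-1}$, this is equivalent to the conjugation identity
\begin{align*}
e(G)^{-1}\,e(F)\,e(G)=e(FG)\,e(F).
\end{align*}
Since conjugation by $e(G)$ is an algebra automorphism of the (completed) skew field $\mathbb{F}$ and $e$ is a formal power series, the left-hand side equals $e\bigl(e(G)^{-1}Fe(G)\bigr)$, so it suffices to identify $e(G)^{-1}Fe(G)$ explicitly and then evaluate $e$ on it.

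First I would compute $\operatorname{Ad}\bigl(e(G)^{-1}\bigr)(F)$. Using $FG=q^{-1}GF$ from \eqref{FG-qcom} one gets $FG^{k}=q^{-k}G^{k}F$ by induction, hence $F\,e(G)=e(q^{-1}G)\,F$. Combining this with the functional equation $e(q^{-1}x)=(1+q^{-1}x)e(x)$ (immediate from $(-x;q)_\infty=(1+x)(-qx;q)_\infty$) gives
\begin{align*}
e(G)^{-1}Fe(G)=\bigl(1+q^{-1}G\bigr)F=F+q^{-1}GF=F+FG,
\end{align*}
where the last step uses $q^{-1}GF=FG$. Consequently $e(G)^{-1}e(F)e(G)=e(F+FG)$, and the proposition is reduced to the single identity $e(F+FG)=e(FG)e(F)$.

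Now set $U:=FG$ and $V:=F$. A one-line check with \eqref{FG-qcom} shows $UV=qVU$. The remaining identity $e(U+V)=e(U)e(V)$ is precisely the noncommutative $q$-exponential (Gauss $q$-binomial) theorem: whenever $BA=qAB$ one has $(-(A+B);q)_\infty=(-B;q)_\infty(-A;q)_\infty$, applied here with $B=U=FG$ and $A=V=F$. Conjugating back by $e(G)$ then yields $e(F)e(G)=e(G)e(FG)e(F)$, as claimed.

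The genuinely creative step is the conjugation computation, but it is short; the only substantive input is the $q$-binomial theorem, which is classical. Thus the main (and essentially only) obstacle is whether one is content to cite that theorem. If a self-contained proof is wanted, one must expand $e(U+V)=\sum_{n\ge0}\frac{q^{\binom n2}}{(q;q)_n}(U+V)^n$ via the Gaussian $q$-binomial theorem $(V+U)^n=\sum_{k}\binom nk_q V^kU^{n-k}$ and compare with $e(U)e(V)$, matching the coefficient of each ordered monomial $U^{n-k}V^k$; the factor $q^{-k(n-k)}$ produced by reordering $V^kU^{n-k}$ precisely accounts for the identity $\binom n2-k(n-k)=\binom k2+\binom{n-k}2$, and the two sides agree. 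Alternatively, one could invoke the same uniqueness principle used earlier in the paper, expanding both sides in normal-ordered form $\sum_{m,n}c_{mn}G^nF^m$ and verifying that they satisfy the same recursion with the same constant term $1$.
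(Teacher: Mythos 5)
Your proof is correct. Note that the paper itself gives no proof of this Proposition: it is stated as a classical result and attributed to \cite{Kirillov} (the pentagon identity for the quantum dilogarithm), so there is no in-paper argument to compare against. Your derivation is the standard one: the conjugation step is sound, since $Fe(G)=e\bigl(q^{-1}G\bigr)F$ together with $e\bigl(q^{-1}x\bigr)=\bigl(1+q^{-1}x\bigr)e(x)$ gives $e(G)^{-1}Fe(G)=F+q^{-1}GF=F+FG$ under the paper's convention $FG=q^{-1}GF$ from \eqref{FG-qcom}, and the reduction to $e(F+FG)=e(FG)e(F)$ is then exactly the $E_q$-addition formula $(-(A+B);q)_\infty=(-B;q)_\infty(-A;q)_\infty$ for $BA=qAB$, which you correctly verify holds for $A=F$, $B=FG$ (indeed $(FG)F=qF(FG)$). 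The exponent bookkeeping $\binom n2-k(n-k)=\binom k2+\binom{n-k}2$ is right. The only point worth a sentence in a written-up version is that the identity $e(G)^{-1}e(F)e(G)=e\bigl(e(G)^{-1}Fe(G)\bigr)$ is applied to an infinite series, so one should work in the graded completion of $\mathbb{K}\langle F,G\rangle$ where each homogeneous component of both sides is a finite sum; with that understood, the argument is complete and self-contained modulo the $q$-binomial theorem, which is classical.
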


Let us begin with the action of the affine Weyl group. The action on the root variables is easily
obtained by introducing the dual letters.
\begin{Definition}\label{dualletter}
Let $\partial_0,\ldots,\partial_5$ be the dual letters associated with the simple roots satisfying~$[\partial_j,\alpha_k]=a_{jk}$.
Set
\begin{align*}
\rho_i= {\rm e}^{{\pi \over 2}\sqrt{-1} \alpha_i \partial_i}.
\end{align*}
Later we also use the dual letters associated with the fundamental weights satisfying
${[\partial_j^\prime,\alpha_k]=\!\delta_{jk}}$
for a realization of the adjoint action of the diagram automorphism.
\end{Definition}

\begin{Lemma}
The action of the affine Weyl group on $\mathbb{K}=\mathbb{C}(a)$ is realized by the adjoint action
\begin{align*}
r_i \cdot a_j= \rho_i a_j \rho_i^{-1}.
\end{align*}
\end{Lemma}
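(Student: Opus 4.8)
The plan is to reduce the whole statement to the additive root lattice. Reading off the explicit reflection rules (for instance $r_0\colon a_0\mapsto a_0^{-1}$, $a_2\mapsto a_0a_2$, or $r_2\colon a_2\mapsto a_2^{-1}$, $a_0\mapsto a_0a_2$), one sees that $r_i$ acts by $r_i\cdot a_j=a_j\,a_i^{-a_{ij}}$, i.e.\ by the multiplicative image of the Weyl reflection $\alpha_j\mapsto\alpha_j-a_{ij}\alpha_i$ under the correspondence $a_j=e^{\alpha_j}$. Since $\operatorname{Ad}(\rho_i)$ is an algebra automorphism, it suffices to verify the infinitesimal statement $\rho_i\,\alpha_j\,\rho_i^{-1}=\alpha_j-a_{ij}\alpha_i$ and then exponentiate through $a_j=e^{\alpha_j}$.

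First I would compute the infinitesimal adjoint action of the exponent. Using the defining relation $[\partial_i,\alpha_j]=a_{ij}$ together with $[\alpha_i,\alpha_j]=0$, the Leibniz rule gives $[\alpha_i\partial_i,\alpha_j]=\alpha_i[\partial_i,\alpha_j]=a_{ij}\alpha_i$. The crucial input is the diagonal value $a_{ii}=2$ of the (simply-laced $D_5^{(1)}$) Cartan matrix: it forces $\ad(\alpha_i\partial_i)(\alpha_i)=2\alpha_i$, so that iterating yields $\ad(\alpha_i\partial_i)^n(\alpha_j)=2^{n-1}a_{ij}\alpha_i$ for all $n\ge 1$.

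Next I would resum the exponential. With $c=\tfrac{\pi}{2}\sqrt{-1}$ one gets $\rho_i\,\alpha_j\,\rho_i^{-1}=e^{c\,\ad(\alpha_i\partial_i)}(\alpha_j)=\alpha_j+\tfrac{1}{2}a_{ij}\alpha_i\bigl(e^{2c}-1\bigr)$. The precise choice of prefactor $\tfrac{\pi}{2}\sqrt{-1}$ is exactly what makes $e^{2c}=e^{\pi\sqrt{-1}}=-1$, hence $e^{2c}-1=-2$ and $\rho_i\,\alpha_j\,\rho_i^{-1}=\alpha_j-a_{ij}\alpha_i$. Exponentiating, $\rho_i\,a_j\,\rho_i^{-1}=e^{\alpha_j-a_{ij}\alpha_i}=a_j\,a_i^{-a_{ij}}$, which reproduces $r_i\cdot a_j$ (the values $a_{ij}\in\{2,0,-1\}$ giving $a_i\mapsto a_i^{-1}$, fixed generators, and $a_j\mapsto a_ia_j$ on neighbours, respectively).

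The series manipulation is routine; the delicate point is the algebraic framing. I would need to make the ambient algebra precise---the $\alpha_k$ and $\partial_k$ generating a Heisenberg-type algebra with the single scalar relation $[\partial_j,\alpha_k]=a_{jk}$, and $\mathbb{K}=\mathbb{C}(a)$ sitting inside it via $a_j=e^{\alpha_j}$---so that $\operatorname{Ad}(\rho_i)$ is a genuinely well-defined automorphism commuting with the map $\alpha_j\mapsto a_j$. The main obstacle I anticipate is checking that $\operatorname{Ad}(\rho_i)$ restricted to $\mathbb{K}$ coincides with the ring homomorphism $r_i$ on all of $\mathbb{K}$, not merely on the generators; this however follows immediately by multiplicativity once the identity is established on each of $a_0,\dots,a_5$ as above.
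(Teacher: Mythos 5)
Your argument is correct and is exactly the computation the paper leaves implicit: the Lemma is stated without proof, and the intended justification is precisely the resummation $e^{c\,\operatorname{ad}(\alpha_i\partial_i)}(\alpha_j)=\alpha_j+\tfrac12 a_{ij}\alpha_i\bigl(e^{2c}-1\bigr)$ with $c=\tfrac{\pi}{2}\sqrt{-1}$, which turns $a_{ii}=2$ into the reflection $\alpha_j\mapsto\alpha_j-a_{ij}\alpha_i$ and exponentiates to $r_i\cdot a_j=a_ja_i^{-a_{ij}}$ as in Definition~\ref{birational-cl}. Your closing remarks on the Heisenberg-type framing and on extending from generators to all of $\mathbb{K}$ by multiplicativity are the right (and only) points needing care, and they are handled correctly.
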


Among the Weyl reflections $r_i$, only $r_2$ and $r_3$ act on $(F,G)$ non-trivially.
We can show they are realized by the adjoint action of $R(z,X)$.
\begin{Lemma}\label{L314}
\begin{align*}
&R\left(a_2^2, {a_0\over a_1}G \right) F R\left(a_2^2, {a_0\over a_1}G \right)^{-1}=
F{{a_0\over a_1}G+a_2^2 \over {a_0\over a_1}a_2^2G+1}=r_2\cdot F,\\
&
R\left(a_3^2, {a_5\over a_4}F \right) G R\left(a_3^2, {a_5\over a_4}F \right)^{-1}=
{{a_5\over a_4}a_3^2F+1 \over {a_5\over a_4}F+a_3^2} G=r_3\cdot G.
\end{align*}
\end{Lemma}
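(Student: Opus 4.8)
The plan is to reduce both identities to a single mechanical computation with the explicit infinite-product form of $R(z,X)$, exploiting that in each case the second argument depends on only \emph{one} of the two $q$-commuting generators. I will carry out the first identity in detail; the second is then obtained by the symmetric argument with the roles of $F$ and $G$ interchanged. Write $z=a_2^2\in\mathbb{K}$ and $X=(a_0/a_1)G$, so that $z$ is central while, from \eqref{FG-qcom}, one has $FX=q^{-1}XF$, i.e., $FXF^{-1}=q^{-1}X$ (equivalently $FX^{-1}F^{-1}=qX^{-1}$). Since $R(z,X)$ is a ratio of infinite products in $X$ and $X^{-1}$, conjugation by $F$ simply shifts the argument in the sense of \eqref{AdFG}, namely $F\,R(z,X)\,F^{-1}=R\bigl(z,q^{-1}X\bigr)$, and therefore $F\,R(z,X)^{-1}=R\bigl(z,q^{-1}X\bigr)^{-1}F$.

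First I would move $F$ to one side of the conjugate:
\[
R(z,X)\,F\,R(z,X)^{-1}=R(z,X)\,R\bigl(z,q^{-1}X\bigr)^{-1}\,F,
\]
so that the entire content collapses to evaluating the scalar-in-$X$ ratio $R(z,X)R\bigl(z,q^{-1}X\bigr)^{-1}$. Using
\[
R(z,X)=\frac{(-X;q)_\infty\bigl(-qX^{-1};q\bigr)_\infty}{\bigl(-z^{-1}X;q\bigr)_\infty\bigl(-z^{-1}qX^{-1};q\bigr)_\infty},
\]
each of the four infinite products in $R(z,X)$ differs from its counterpart in $R\bigl(z,q^{-1}X\bigr)$ by a single $q$-shift, so the ratio telescopes through $(a;q)_\infty=(1-a)(aq;q)_\infty$ into a product of four elementary factors of the form $1-a$ or $(1-a)^{-1}$. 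Carrying this out I expect $R(z,X)R\bigl(z,q^{-1}X\bigr)^{-1}=(X+qz)/(zX+q)$. Finally, since $F\,h(X)=h\bigl(q^{-1}X\bigr)F$ for any rational $h(X)$, the prefactor can be pushed back across $F$, giving $(X+qz)/(zX+q)\,F=F\,(X+z)/(zX+1)$; substituting $z=a_2^2$ and $X=(a_0/a_1)G$ reproduces exactly $r_2\cdot F$. The second identity runs along the same lines with $X=(a_5/a_4)F$ and $z=a_3^2$, except that $GX=qXG$ now produces the \emph{opposite} shift $G\,R(z,X)\,G^{-1}=R(z,qX)$; the analogous telescoping then yields $R(z,X)R(z,qX)^{-1}=(zX+1)/(z+X)$, which is $r_3\cdot G$.

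The telescoping itself is routine, so the main obstacle will be purely organizational: keeping the direction of every $q$-shift straight, since the $F$-conjugation and the $G$-conjugation shift $X$ by opposite powers of $q$, and consistently ordering the noncommuting factors whenever the rational prefactor is transported past $F$ or $G$. To guard against a sign or exponent slip I would cross-check the two claimed ratios against the symmetry $R(z,X)=R\bigl(z,qX^{-1}\bigr)$ established in the preceding Lemma, and verify the final rational functions on a single monomial before asserting the general identity.
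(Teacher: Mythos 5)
Your proposal is correct and follows essentially the same route as the paper: conjugation by $F$ (resp.\ $G$) shifts the argument of $R(z,X)$ by a power of $q$, and the resulting ratio of two $R$'s telescopes through $(a;q)_\infty=(1-a)(aq;q)_\infty$ to the rational factor $\frac{X+z}{zX+1}$ (resp.\ $\frac{zX+1}{X+z}$); the only difference is that the paper moves $F$ to the left first, obtaining $F\,R(z,qG)R(z,G)^{-1}$ directly, whereas you move it to the right and then transport the prefactor back, which is the same computation in a different order. Your claimed intermediate ratios $R(z,X)R\bigl(z,q^{-1}X\bigr)^{-1}=(X+qz)/(zX+q)$ and $R(z,X)R(z,qX)^{-1}=(zX+1)/(X+z)$ both check out against the explicit product form.
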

\begin{proof}
For any parameter $z$, the computation goes as follows:
\begin{align*}
& R(z, G ) F R(z, G )^{-1}
=F R(z, qG) R(z, G )^{-1}=
F{1+z^{-1} G \over 1+z^{-1}G^{-1}}G^{-1}=
F{G+z \over zG+1},\\
&R (z, F ) G R(z, F )^{-1}
= R (z, F ) R(z, qF )^{-1}G =
{1+z^{-1} F^{-1}\over 1+z^{-1}F}FG={zF+1\over F+z} G.\tag*{\qed}
\end{align*}\renewcommand{\qed}{}
\end{proof}
Combining these two lemmas, we obtain the following result.
\begin{Proposition}\label{AdR}
If we define
\begin{align*}
&R_i=\rho_i,\qquad i=0,1,4,5,\\
&R_2=R\left(a_2^2,{a_0\over a_1}G\right)\rho_2,\qquad
R_3=R\left(a_3^2,{a_4\over a_5}F\right)\rho_3,
\end{align*}
the action of the affine Weyl group on the skew field $\mathbb{F}$ is realized by the adjoint action
\begin{align*}
r_i \cdot u=\operatorname{Ad} (R_i) u=R_i u R_i^{-1}, \qquad 0\leq i\leq 5,\quad u\in \mathbb{F}.
\end{align*}
\end{Proposition}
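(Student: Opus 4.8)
The plan is to verify the claimed adjoint-action realization $r_i\cdot u=\operatorname{Ad}(R_i)u$ separately for the action on the root variables $a_0,\dots,a_5$ and for the action on the quantum coordinates $(F,G)$, treating the two types of generators $\{R_i\}_{i=0,1,4,5}$ and $\{R_2,R_3\}$ in turn. For $i\in\{0,1,4,5\}$ we have $R_i=\rho_i$, so by the Lemma preceding Definition~\ref{dualletter} the adjoint action $\operatorname{Ad}(\rho_i)$ already reproduces $r_i\cdot a_j=\rho_i a_j\rho_i^{-1}$ on $\mathbb{K}$; it remains only to observe that $\rho_i$ commutes with $F$ and $G$ (since $\rho_i$ involves only the dual letters $\partial_i$ and root variables), so that $\operatorname{Ad}(\rho_i)$ fixes $F$ and $G$, matching the trivial action of $r_0,r_1,r_4,r_5$ on $(F,G)$ in the Definition of the $\widetilde{W}$-action.

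For $R_2$ and $R_3$ the argument factorizes through the two Lemmas just established. First I would note that $R_2=R\bigl(a_2^2,\tfrac{a_0}{a_1}G\bigr)\rho_2$: the factor $\rho_2$ carries the correct action on the root variables by the same Lemma, while the prefactor $R\bigl(a_2^2,\tfrac{a_0}{a_1}G\bigr)$ depends on $F,G$ only through $G$ and through root variables, so it commutes with every $a_j$ and does not disturb the root-variable transformation. For the coordinate action I would apply Lemma~\ref{L314} directly: $\operatorname{Ad}\bigl(R\bigl(a_2^2,\tfrac{a_0}{a_1}G\bigr)\bigr)F$ equals $F\tfrac{(a_0/a_1)G+a_2^2}{(a_0/a_1)a_2^2G+1}=r_2\cdot F$, while $G$ is fixed because $\operatorname{Ad}(G)$-type conjugation by a function of $G$ leaves $G$ invariant and $\rho_2$ fixes $G$. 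The case of $R_3$ is entirely parallel, with the roles of $F$ and $G$ interchanged.

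The remaining and genuinely substantive point is \emph{consistency}: these adjoint actions must satisfy the Coxeter relations of $\widetilde{W}$ (Proposition~\ref{Cox-q}), so that $u\mapsto\operatorname{Ad}(R_i)u$ defines a bona fide group action rather than merely matching the generators one at a time. Since the birational action was already shown to satisfy the Coxeter relations, and the $\rho_i$ satisfy the corresponding relations on $\mathbb{K}$ by construction, the only relations requiring new input are those involving the two nontrivial braidings, namely $R_2R_3R_2=R_3R_2R_3$ and the commutations among the $R$-factors. The hard part will be exactly this braid relation: after stripping off the $\rho_i$, it reduces to an identity among products of the functions $R(z,F)$ and $R(z,G)$, which is precisely the Yang--Baxter equation \eqref{FZ} quoted from \cite{Fateev}. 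I would therefore reduce $R_2R_3R_2=R_3R_2R_3$ to \eqref{FZ} with the arguments $x,y$ specialized to the appropriate monomials in the $a_i$ (tracking how $\rho_2,\rho_3$ shift the $z$-arguments of the $R$-factors as they are commuted past one another), the remaining relations following from the simpler symmetry identities for $R(z,X)$ recorded in the Lemma before \eqref{FZ}.
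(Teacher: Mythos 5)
Your proposal is correct and follows essentially the same route as the paper, which obtains the Proposition by combining the lemma realizing $r_i\cdot a_j$ via $\operatorname{Ad}(\rho_i)$ with Lemma~\ref{L314}, and then remarks that the Coxeter relations for the $\operatorname{Ad}(R_i)$ can be checked using the Yang--Baxter equation~\eqref{FZ}. The one small observation is that once $\operatorname{Ad}(R_i)$ is shown to agree with $r_i$ on the generators $a_0,\dots,a_5,F,G$ of $\mathbb{F}$ (both being automorphisms), the Coxeter relations are inherited automatically from Proposition~\ref{Cox-q}, so the Yang--Baxter verification you flag as the ``hard part'' is really a consistency check rather than a logically necessary step.
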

As we have mentioned before, one can check $\operatorname{Ad} (R_i)$ satisfy the Coxeter relation
by using the Yang--Baxter relation \eqref{FZ}.

Next let us consider the adjoint representation of the diagram automorphism
$\sigma :=\sigma_{01}\sigma_{45}$.\footnote{Though $\sigma_{01}$ and $\sigma_{45}$ are anti-homomorphisms,
the composition $\sigma$ is a ring homomorphism.}
Since
\begin{equation*}
\sigma\colon\ (a_0,a_1,a_2,a_3,a_4,a_5,F,G)\longmapsto
\bigl(a_1,a_0,a_2,a_3,a_5,a_4,qF^{-1},qG^{-1}\bigr),
\end{equation*}
the action on the root variables $a_i$ is realized by
\begin{equation*}
\operatorname{Ad}(\rho_2\rho_1\rho_0\rho_2\rho_3\rho_4\rho_5\rho_3)\colon\ (a_0,a_1,a_2,a_3,a_4,a_5,F,G)\longmapsto
\bigl(a_1,a_0,\mathsf{p}a_2,\mathsf{p}^{-1}a_3,a_5,a_4,F,G\bigr)
\end{equation*}
up to the scaling of $(a_2, a_3)$, which is nothing but the action of the translation element $T$.
On the other hand, the adjoint representation of $\sigma$ on the dynamical variables $(F,G)$
is given by suitable combinations of the theta functions with non-commutative variables in $\mathbb{F}$.
\begin{Lemma}\label{ad-theta}
We have
\begin{equation*}
\operatorname{Ad}\bigl(\theta(F^{-1}G;q)^{-1}\bigr)F= -G, \qquad
\operatorname{Ad}\bigl( \theta(F^{-1}G;q)^{-1}\bigr)G= -q^{-1}F^{-1}G^2.
\end{equation*}
\end{Lemma}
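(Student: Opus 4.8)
The plan is to reduce the computation to the single variable $X := F^{-1}G$, which $q$-commutes with both $F$ and $G$ in exactly the same way, and then to invoke the quasi-periodicity of the $q$-theta function. First I would record the commutation relations of $X$. From \eqref{FG-qcom} one has $GF = qFG$ and $F^{-1}G = qGF^{-1}$, so a direct check gives $FX = G$, $XF = qG$, and $GX = q^{-1}XG$; in particular $XF = qFX$ and $XG = qGX$. Hence $F$ and $G$ each $q$-commute with $X$ identically, and by the rule \eqref{AdFG} conjugation by either of them acts on any power series $h$ of $X$ by the same shift, $\operatorname{Ad}(F)\,h(X) = \operatorname{Ad}(G)\,h(X) = h(q^{-1}X)$.

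Next I would establish the functional equation $\theta(q^{-1}X;q) = -q^{-1}X\,\theta(X;q)$. This follows from the standard quasi-periodicity $\theta(qY;q) = -Y^{-1}\theta(Y;q)$, which is immediate from the product definition $\theta(Y;q) = (Y;q)_\infty(q/Y;q)_\infty$ together with $(Y;q)_\infty = (1-Y)(qY;q)_\infty$; setting $Y = q^{-1}X$ and solving for $\theta(q^{-1}X;q)$ gives the displayed relation.

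Finally I would assemble the two pieces. Since $\operatorname{Ad}(F)\theta(X;q) = \theta(q^{-1}X;q)$, we get $F\,\theta(X;q) = \theta(q^{-1}X;q)\,F$, so that $\theta(X;q)^{-1} F\,\theta(X;q) = \theta(X;q)^{-1}\theta(q^{-1}X;q)\,F = -q^{-1}X\,F$, where in the last step I use that $\theta(X;q)^{-1}$ and the function $\theta(q^{-1}X;q)$ of $X$ commute, together with the functional equation. Substituting $XF = qG$ yields $\operatorname{Ad}\bigl(\theta(F^{-1}G;q)^{-1}\bigr)F = -G$. The computation for $G$ is identical because $\operatorname{Ad}(G)$ acts on $h(X)$ by the same shift; it gives $\theta(X;q)^{-1}G\,\theta(X;q) = -q^{-1}X\,G = -q^{-1}F^{-1}G^2$, which is the second claim.

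I expect no serious obstacle: the argument is short once one observes that $X = F^{-1}G$ is the natural variable and that $F$ and $G$ play symmetric roles with respect to it. The only points requiring care are the bookkeeping of the powers of $q$ and the sign in the theta quasi-periodicity, and the (trivial but essential) fact that the conjugating factor and the shifted theta function, both being functions of $X$ alone, commute, so that their ratio reduces to the multiplier $-q^{-1}X$.
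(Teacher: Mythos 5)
Your proof is correct and follows essentially the same route as the paper: both arguments rest on the fact that conjugation by $F$ (resp.\ $G$) shifts the argument $F^{-1}G$ of the theta function by $q^{\mp1}$, and then on the quasi-periodicity $\theta(qY;q)=-Y^{-1}\theta(Y;q)$ to reduce the ratio of theta functions to the monomial multiplier $-q^{-1}F^{-1}G$. Packaging the computation through the single variable $X=F^{-1}G$ is only a cosmetic reorganization of the paper's direct manipulation of the infinite products, and all the powers of $q$ and signs check out.
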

\begin{proof}
Similarly to the proof of Lemma \ref{L314}, we can compute
\begin{gather*}
\theta\bigl(F^{-1}G;q\bigr)^{-1} F  = F \frac{1}{\bigl(qF^{-1}G;q\bigr)_\infty \bigl(G^{-1}F;q\bigr)_\infty} \\
\hphantom{\theta\bigl(F^{-1}G;q\bigr)^{-1} F}{}
= F \frac{1-F^{-1}G}{1- G^{-1}F} \theta\bigl(F^{-1}G;q\bigr)^{-1} = -G \cdot \theta\bigl(F^{-1}G;q\bigr)^{-1}, \\
\theta\bigl(F^{-1}G;q\bigr)^{-1} G  = G \frac{1}{\bigl(qF^{-1}G;q\bigr)_\infty \bigl(G^{-1}F;q\bigr)_\infty} \\
\hphantom{\theta\bigl(F^{-1}G;q\bigr)^{-1} G}{}
= - G F^{-1}G \theta\bigl(F^{-1}G;q\bigr)^{-1}= -q^{-1} F^{-1} G^2 \cdot \theta\bigl(F^{-1}G;q\bigr)^{-1}.\tag*{\qed}
\end{gather*}\renewcommand{\qed}{}
\end{proof}

\begin{Lemma}
For any $z\in\mathbb{K}$, i.e., when $zF=Fz$ and $zG=Gz$,
we have
\begin{align*}
& \operatorname{Ad}\bigl(\bigl(\theta(z G;q) \theta\bigl( z^{-1}G;q\bigr) \bigr)^{-1}\bigr)F=FG^2,
\qquad \operatorname{Ad}\bigl(\bigl(\theta(z G;q) \theta\bigl( z^{-1}G;q\bigr) \bigr)^{-1}\bigr)F^{-1}=G^{-2}F^{-1},\\
& \operatorname{Ad}\bigl(\bigl(\theta(z G;q) \theta\bigl( z^{-1}G;q\bigr) \bigr)^{-1}\bigr)G=G .
\end{align*}
\end{Lemma}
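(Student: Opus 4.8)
The plan is to reduce the whole lemma to two ingredients: the commutation rule \eqref{AdFG} for pushing $F$ past functions of $G$, and the quasi-periodicity of the Jacobi theta function. Write $\Theta := \theta(zG;q)\theta(z^{-1}G;q)$ for the conjugating element. Since $z\in\mathbb{K}$ is central and $\Theta$ is built solely out of $G$, it commutes with $G$; hence $\operatorname{Ad}(\Theta^{-1})G = \Theta^{-1}G\,\Theta = G$, which settles the third identity immediately.

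For the action on $F$, I would first record the form of the commutation relation obtained by pushing $F$ to the right: the relation $FG=q^{-1}GF$ of \eqref{FG-qcom} (equivalently \eqref{AdFG}) gives $h(G)\,F = F\,h(qG)$ for any function $h$. Applying this with $h(G)=\Theta(G)^{-1}$ yields
\[
\operatorname{Ad}(\Theta^{-1})F = \Theta(G)^{-1}F\,\Theta(G) = F\,\Theta(qG)^{-1}\Theta(G),
\]
so the entire computation collapses to evaluating the $G$-dependent ratio $\Theta(G)/\Theta(qG)$, which is a monomial in $G$ and commutes with everything in sight.

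The key step is the theta quasi-periodicity $\theta(qX;q) = -X^{-1}\theta(X;q)$, which follows directly from $\theta(X;q) = (X;q)_\infty (q/X;q)_\infty$ by peeling off the $n=0$ factors in each infinite product. Applying it to $X=zG$ and to $X=z^{-1}G$ and multiplying, the two prefactors $-(zG)^{-1}$ and $-(z^{-1}G)^{-1}$ combine to $G^{-2}$ — the crucial point being that the $z$-dependence cancels exactly — so $\Theta(qG) = G^{-2}\,\Theta(G)$ and therefore $\Theta(qG)^{-1}\Theta(G) = G^2$. This delivers $\operatorname{Ad}(\Theta^{-1})F = FG^2$, and the $F^{-1}$ statement then follows by taking the inverse of this single operator identity, $(FG^2)^{-1} = G^{-2}F^{-1}$.

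I do not expect a genuine obstacle here; the computation is routine once the quasi-periodicity is in hand, and it parallels the pattern of Lemma \ref{ad-theta}. The only things to watch are the bookkeeping of the powers of $q$ and $G$ and the exactness of the $z$-cancellation, which is precisely what makes the three answers independent of $z$. As a consistency check one can re-order using \eqref{FG-qcom}, verifying $FG^2 = q^{-2}G^2F$, so the result is insensitive to whether $F$ is transported to the left or the right before invoking the quasi-periodicity.
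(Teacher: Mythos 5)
Your proof is correct and follows essentially the same route as the paper: both push $F$ through the theta factors via $h(G)F=Fh(qG)$ and then identify the resulting ratio $\Theta(G)/\Theta(qG)$ with $G^{2}$ (the paper peels off the leading factors of the infinite products directly, which is exactly your quasi-periodicity identity $\theta(qX;q)=-X^{-1}\theta(X;q)$, with the same $z$-cancellation). The treatment of the $F^{-1}$ and $G$ cases also matches the paper's one-line dispatch.
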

Note that the right hand sides of the above relations are independent of $z$.
\begin{proof}
The second equation follows from the first. The third one is trivial.
We can check the first equation as follows:
\begin{align*}
\bigl(\theta(z G;q) \theta\bigl( z^{-1}G;q\bigr) \bigr)^{-1} F &=
F \frac{(1-zG)\bigl(1-z^{-1}G\bigr)}{\bigl(1-z^{-1}G^{-1}\bigr)\bigl(1-zG^{-1}\bigr)}\bigl(\theta(zG;q) \theta\bigl( z^{-1}G;q\bigr) \bigr)^{-1} \\
&= FG^2 \cdot \bigl(\theta(zG;q) \theta\bigl( z^{-1}G;q\bigr) \bigr)^{-1}.\tag*{\qed}
\end{align*} \renewcommand{\qed}{}
\end{proof}

Combining these lemmas, we find
\begin{align*}
& \operatorname{Ad}\bigl( \bigl(\theta( z G;q) \theta\bigl( z^{-1}G;q\bigr) \bigr)^{-1}
\theta\bigl(F^{-1}G;q\bigr)^{-1}\bigr) F
=-G,\\
& \operatorname{Ad}\bigl( \bigl(\theta( z G;q) \theta\bigl( z^{-1}G;q\bigr) \bigr)^{-1}
\theta\bigl(F^{-1}G;q\bigr)^{-1}\bigr) G
=-q F^{-1}.
\end{align*}
Hence, the square of this adjoint action\footnote{We can specialize the parameter $z$ of the first and the second actions
at our disposal.} agrees with the action of $\sigma$ on $(F,G) \mapsto \bigl(q F^{-1}, q G^{-1}\bigr)$.

On the other hand, if we define
\begin{equation*}
S:= {\rm e}^{\frac{\pi}{2}\sqrt{-1}(\alpha_0 - \alpha_1)(\partial'_0 - \partial'_1)}
 {\rm e}^{\frac{\pi}{2}\sqrt{-1}(\alpha_4 - \alpha_5)(\partial'_4 - \partial'_4)},
\end{equation*}
the adjoint action $\mathrm{Ad}(S)$ gives the same action as
\[(\rho_2\rho_1\rho_0\rho_2\rho_3\rho_4\rho_5\rho_3)^{-1}T_{\mathsf{p},a_2}T_{\mathsf{p},a_3}^{-1}\colon\
(a_0, a_1, a_2, a_3, a_4, a_5) \mapsto (a_1, a_0, a_2, a_3, a_5, a_4)\]
 on the root variables.
Hence, the desired realization of $\sigma$ by the adjoint action is
\begin{Proposition}
If we define
\begin{align*}
\Sigma
:={}&
\left(\theta\left(- {a_0\over a_1}G;q\right) \theta\left( -{a_1\over a_0}G;q\right) \right)^{-1} \theta\bigl(F^{-1}G;q\bigr)^{-1}
\nonumber \\
&\times
\left(\theta\left( {a_4\over a_5}G;q\right) \theta\left( {a_5\over a_4}G;q\right) \right)^{-1}\theta\bigl(F^{-1}G;q\bigr)^{-1}
\cdot S,
\end{align*}
the action of the element $\sigma=\sigma_{01}\sigma_{45}$ on the skew field $\mathbb{F}$
is realized by the adjoint action
\begin{align*}
\sigma\cdot u=\operatorname{Ad}(\Sigma) u,\qquad u\in \mathbb{F}.
\end{align*}
\end{Proposition}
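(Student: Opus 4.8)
The plan is to exploit the factorized form of $\Sigma$ and reduce everything to the generators. Write $\Sigma = \Theta_1 \Theta_2 S$, where
$\Theta_1 = \bigl(\theta(-\frac{a_0}{a_1}G;q)\theta(-\frac{a_1}{a_0}G;q)\bigr)^{-1}\theta(F^{-1}G;q)^{-1}$ and
$\Theta_2 = \bigl(\theta(\frac{a_4}{a_5}G;q)\theta(\frac{a_5}{a_4}G;q)\bigr)^{-1}\theta(F^{-1}G;q)^{-1}$
are the two ``combined'' factors appearing just before the statement, each of the form treated in Lemma~\ref{ad-theta} and the lemma following it (with the free parameter $z$ specialized to $-a_0/a_1$ and $a_4/a_5$ respectively). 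Since $\operatorname{Ad}(\Sigma)$ is automatically a ring homomorphism of $\mathbb{F}$, and $\sigma=\sigma_{01}\sigma_{45}$ is also a ring homomorphism (a composition of two anti-homomorphisms), it suffices to check that the two maps agree on the generators $a_0,\dots,a_5,F,G$; they then coincide on all of $\mathbb{F}=\mathbb{K}(F,G)$.

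For the root variables I would first observe that each $\Theta_i$ is built out of $F$, $G$ and scalars from $\mathbb{K}$, all of which commute with every $a_j$; hence $\operatorname{Ad}(\Theta_1)$ and $\operatorname{Ad}(\Theta_2)$ fix each $a_j$. Thus on $\mathbb{K}$ only $\operatorname{Ad}(S)$ contributes, and by the computation recorded just before the statement $\operatorname{Ad}(S)$ realizes the permutation $(a_0,a_1,a_2,a_3,a_4,a_5)\mapsto(a_1,a_0,a_2,a_3,a_5,a_4)$, which is exactly the action of $\sigma$ on the root variables.

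For the dynamical variables I would use that $S$ is a shift built from the dual letters $\partial'_j$ and the roots $\alpha_i$, so $\operatorname{Ad}(S)$ fixes $F$ and $G$, and it remains to evaluate $\operatorname{Ad}(\Theta_1\Theta_2)$ on $(F,G)$. The combined relation obtained from Lemma~\ref{ad-theta} and the lemma following it gives, for either choice of $z$, the elementary map $F\mapsto -G$, $G\mapsto -qF^{-1}$; in particular the prescribed values $z=-a_0/a_1$ and $z=a_4/a_5$ are immaterial here, since the right-hand sides are independent of $z$. Writing $\operatorname{Ad}(\Theta_1\Theta_2)=\operatorname{Ad}(\Theta_1)\circ\operatorname{Ad}(\Theta_2)$ and respecting the inner/outer order (the factor $\Theta_2$ acts first), I would compute $F\mapsto -G\mapsto -(-qF^{-1})=qF^{-1}$ and $G\mapsto -qF^{-1}\mapsto -q(-G)^{-1}=qG^{-1}$, so that the composite squares the elementary map and reproduces $(F,G)\mapsto(qF^{-1},qG^{-1})$, matching $\sigma$ on $(F,G)$.

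Assembling the two computations, $\operatorname{Ad}(\Sigma)$ and $\sigma$ agree on all generators and hence on $\mathbb{F}$, which proves the Proposition. The only point demanding care is the bookkeeping in the third paragraph: one must keep the order of composition in $\operatorname{Ad}(\Theta_1\Theta_2)$ straight, use $\operatorname{Ad}(\Theta_1)F^{-1}=(-G)^{-1}=-G^{-1}$ correctly, and confirm the $z$-independence so that the specific values of $z$---chosen for compatibility with the group element and with the $S$-part on the root lattice---do not disturb the $(F,G)$-action. Everything else is a direct assembly of the preceding lemmas.
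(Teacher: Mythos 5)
Your proof is correct and follows essentially the same route as the paper: the paper likewise assembles Lemma~\ref{ad-theta} and the subsequent $z$-independent lemma into the elementary map $F\mapsto -G$, $G\mapsto -qF^{-1}$, squares it to get $(F,G)\mapsto\bigl(qF^{-1},qG^{-1}\bigr)$, and lets $\operatorname{Ad}(S)$ produce the permutation $(a_0,a_1,a_4,a_5)\mapsto(a_1,a_0,a_5,a_4)$ on the root variables. Your explicit attention to the order of composition and to the $z$-independence of the $(F,G)$-action matches the paper's (largely implicit) bookkeeping.
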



\subsection{Comparison of the Hamiltonians}

Following \cite{Hasegawa}, we define the Hasegawa operator for the $qq$-Painlev\'e VI equation by
\begin{align}\label{Hasegawa-operator}
\mathcal{H}_{\mathrm{Has}} := R_2R_1R_0R_2R_3R_4R_5R_3\Sigma.
\end{align}

\begin{Proposition}\label{comparison}
We have
\begin{align}\label{OH}
\mathcal{H}_{\mathrm{Has}} ={}&
\frac{1}{\varphi\bigl(-qb_5 G^{-1}\bigr)\varphi\bigl(-q b_6G^{-1}\bigr)
\varphi\bigl(-b_7^{-1}G\bigr) \varphi\bigl(-b_8^{-1} G\bigr) } \theta\bigl(F^{-1}G;q\bigr)^{-1} \nonumber \\
&\times
\frac{1}
{\varphi\bigl(q\mathsf{p}^{-2} b_1G^{-1}\bigr)\varphi\bigl(q\mathsf{p}^{-2}b_2 G^{-1}\bigr)
\varphi\bigl(\mathsf{p}^{-2} b_3^{-1} G\bigr) \varphi\bigl(\mathsf{p}^{-2} b_4^{-1} G\bigr) }
\nonumber \\
&\times\theta\bigl(F^{-1}G;q\bigr)^{-1}T_{\mathsf{p},a_2}T_{\mathsf{p},a_3}^{-1},
\end{align}
and this Hamiltonian $\mathcal{H}_{\mathrm{Has}}$ and the Hamiltonian in \eqref{SH}
have the same adjoint action on variables $x$, $p$.
\end{Proposition}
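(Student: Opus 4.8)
The plan is to prove the two assertions in turn: first the operator identity \eqref{OH}, obtained by multiplying out the generators in \eqref{Hasegawa-operator}, and then the equality of adjoint actions on $x,p$, obtained by comparing \eqref{OH} with the Painlev\'e-variable form \eqref{OHG} of \eqref{SH}. For the operator identity I would substitute the explicit expressions $R_i=\rho_i$ for $i\in\{0,1,4,5\}$, $R_2=R\bigl(a_2^2,\frac{a_0}{a_1}G\bigr)\rho_2$, $R_3=R\bigl(a_3^2,\frac{a_4}{a_5}F\bigr)\rho_3$, and the product form of $\Sigma$ (two $\theta$-pairs, two factors of $\theta(F^{-1}G;q)^{-1}$, and the shift $S$). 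Since every $\rho_i$ fixes $F$ and $G$ and acts on the root variables by $\rho_i a_j\rho_i^{-1}=r_i\cdot a_j$, the first step is to transport all the $\rho_i$ and the $S$-part of $\Sigma$ to the far right; as each $\rho_i$ crosses an $R$- or $\theta$-factor it shifts only the root-variable coefficients in the argument by the corresponding reflection. The collected operator $\rho_2\rho_1\rho_0\rho_2\rho_3\rho_4\rho_5\rho_3\cdot S$ then acts on the $a_i$ exactly as the parameter shift $T_{\mathsf{p},a_2}T_{\mathsf{p},a_3}^{-1}$, which is the trailing operator in \eqref{OH}.

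After this I am left with the four $R$-functions (two of $R_2$-type, functions of $G$; two of $R_3$-type, functions of $F$) carrying definite shifted arguments, multiplied by the two inverted $\theta$-pairs and the two factors $\theta(F^{-1}G;q)^{-1}$. The key move is to use Lemma \ref{ad-theta}: because $\operatorname{Ad}\bigl(\theta(F^{-1}G;q)^{-1}\bigr)$ sends $F\mapsto -G$, commuting the $R_3$-type factors through a $\theta(F^{-1}G;q)^{-1}$ converts their $F$-dependence into $G$-dependence, which is precisely why every $\varphi$-factor in \eqref{OH} is a function of $G$ alone. Writing $R(z,X)=\theta(-X;q)/\bigl[(-z^{-1}X;q)_\infty(-z^{-1}qX^{-1};q)_\infty\bigr]$, the four numerators $\theta(-X;q)$ then cancel against the two inverted $\theta$-pairs supplied by $\Sigma$ — the two $R_2$-numerators against the first pair $\bigl(\theta(-\frac{a_0}{a_1}G;q)\theta(-\frac{a_1}{a_0}G;q)\bigr)^{-1}$ and the two converted $R_3$-numerators (via $F\mapsto -G$) against the second pair $\bigl(\theta(\frac{a_4}{a_5}G;q)\theta(\frac{a_5}{a_4}G;q)\bigr)^{-1}$ — while the four denominators assemble into the eight $\varphi$-factors and the two $\theta(F^{-1}G;q)^{-1}$ survive. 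Matching the shifted arguments with Definition \ref{b-variables}, including the overall $\mathsf{p}^{-2}$ that the root-variable shifts attach to $b_1,\dots,b_4$ (but not to $b_5,\dots,b_8$), then reproduces \eqref{OH}.

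For the second assertion I would not attempt to identify $\widetilde{\Bor}$ with the Weyl-group element $\theta(F^{-1}G;q)^{-1}$ as operators (indeed they differ). Instead, comparing \eqref{OH} with \eqref{OHG} shows that the two $\varphi$-prefactor blocks are literally identical, so the claim reduces to the statement that $\mathcal{H}_{\mathrm{Has}}$ and the Hamiltonian of \eqref{SH} induce the same automorphism of the generators $F,G$, hence of $x,p$. This is already available: Proposition \ref{Hasegawa-form} shows that $\operatorname{Ad}(\mathcal{H}_{\mathrm{Has}})$, realizing the translation $T$, reproduces the Heisenberg equations \eqref{qP6F}, \eqref{qP6G}, while the Corollary and Theorem of Section \ref{sec3} show that the adjoint action of \eqref{SH} reproduces the same equations \eqref{eq:qqP6-Hasegawa} for $\overline{F},\underline{G}$. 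Since these equations determine the induced automorphism on $F,G$, and $x,p$ are recovered from $F,G$ through the invertible dictionary $F=\alpha f$, $G=-\beta g$ with $f=x^{-1}p$, $g=x^{-1}$ (the scalars $\alpha,\beta$ and the trailing parameter-shifts commuting with $x,p$ as needed), the two operators have the same adjoint action on $x,p$.

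The principal obstacle is the first part. The delicate bookkeeping is to carry the root-variable shifts generated by the eight $\rho_i$ and by $S$ correctly through every $R$- and $\theta$-factor, and to execute the $F\mapsto -G$ replacement of Lemma \ref{ad-theta} in the right order so that the four $R$-numerators cancel cleanly against the $\theta$-pairs of $\Sigma$; pinning down the exact shifted arguments, and hence the precise $b_i$ together with the $\mathsf{p}^{-2}$ factors on $b_1,\dots,b_4$, is where essentially all the work lies. The reconciliation in the second part of the $x$-shift carried by $\widetilde{\Bor}$ against the shift-free $\theta(F^{-1}G;q)^{-1}$ is the other subtle point, but it is resolved at the level of the already-established Heisenberg forms rather than by an operator identity, consistent with the fact that a direct identification of $\Bor$ with an element of $\widetilde{W}\bigl(D_5^{(1)}\bigr)$ remains incomplete.
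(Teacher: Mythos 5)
Your derivation of the operator identity \eqref{OH} is essentially the paper's own computation (the chain displayed in \eqref{RRRR-deriv}): push the $\rho_i$ and the $S$-part of $\Sigma$ to the right to produce $T_{\mathsf{p},a_2}T_{\mathsf{p},a_3}^{-1}$, convert the two $F$-type $R$-factors into $G$-type ones by commuting them through $\theta\bigl(F^{-1}G;q\bigr)^{-1}$ via Lemma \ref{ad-theta}, and cancel the four theta numerators of the $R$'s against the two inverted theta-pairs of $\Sigma$, leaving the eight $\varphi$-denominators and the two surviving $\theta\bigl(F^{-1}G;q\bigr)^{-1}$. Where you diverge is the second assertion: the paper observes that the $\varphi$-blocks of \eqref{OH} and \eqref{OHG} coincide verbatim and then reduces everything to the single local statement that $\operatorname{Ad}\bigl(\theta\bigl(F^{-1}G;q\bigr)^{-1}\bigr)$ and $\operatorname{Ad}\bigl(\widetilde{\Bor}\bigr)$ agree on $(F,G)$, which it checks directly from Lemma \ref{ad-theta} and the commutation relations \eqref{adGamma} (the shift $T_{(qt^{1/2}Q)^{-1/2},x}$ being exactly what compensates the rescaling $F=\alpha f$, $G=-\beta g$). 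You instead argue globally that both Hamiltonians induce the translation $T$ because both reproduce the Heisenberg equations \eqref{qP6F}--\eqref{qP6G} and \eqref{eq:qqP6-Hasegawa}. That route is valid but needs one extra sentence you only assert: the relations give $\overline{F}$ and $\underline{G}$, not $\overline{G}$ and $\underline{F}$, so to conclude that the induced automorphism is fully determined you must invert the $\underline{G}$-relation (possible since $\underline{G}=G^{-1}\cdot(\text{function of }F)$ is injective in $G$ for fixed $F$) and note that both operators implement the same shift on the parameters via the dictionary. The paper's factor-by-factor comparison avoids this and, more importantly, isolates the precise point of contact between $\Bor$ and the Weyl-group realization, which is the conceptual payoff the proposition is after.
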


\begin{proof}
We proceed as follows:
\begin{align}
&R_2R_1R_0R_2R_3R_4R_5R_3\Sigma \nonumber \\
&\quad=R\Big(a_2^2,{a_0\over a_1}G\Big)\rho_2\rho_1\rho_0
R\Big(a_2^2,{a_0\over a_1}G\Big)\rho_2
R\Big(a_3^2,{a_4\over a_5}F\Big)\rho_3\rho_4\rho_5 R\Bigl(a_3^2,{a_4\over a_5}F\Bigr)\rho_3 \Sigma \nonumber \\
&\quad=
R\Bigl(a_2^2,{a_0\over a_1}G\Bigr)
R\Bigl((a_0a_1a_2)^2,{a_1\over a_0}G\Bigr)
R\Bigl(\mathsf{p}^2(a_3a_4a_5)^{-2},{a_4\over a_5}F\Bigr)
R\Bigl(\mathsf{p}^2a_3^{-2},{a_5\over a_4}F\Bigr) \nonumber \\
 &\qquad\times
 \Bigl(\theta\Bigl(- {a_0\over a_1}G;q\Bigr) \theta\Bigl( -{a_1\over a_0}G;q\Bigr) \Bigr)^{-1} \theta\bigl(F^{-1}G;q\bigr)^{-1}
\Bigl(\theta\Bigl( {a_4\over a_5}G;q\Bigr) \theta\Bigl( {a_5\over a_4}G;q\Bigr) \Bigr)^{-1} \nonumber \\
&\qquad\times\theta\bigl(F^{-1}G;q\bigr)^{-1} \cdot T_{\mathsf{p},a_2}T_{\mathsf{p},a_3}^{-1} \nonumber \\
&\quad=
R\Bigl(a_2^2,{a_0\over a_1}G\Bigr)R\Bigl((a_0a_1a_2)^2,{a_1\over a_0}G\Bigr)
 \Bigl(\theta\Bigl(- {a_0\over a_1}G;q\Bigr) \theta\Bigl( -{a_1\over a_0}G;q\Bigr) \Bigr)^{-1} \theta(F^{-1}G;q)^{-1} \nonumber \\
 &\qquad\times
R\Bigl(\mathsf{p}^2(a_3a_4a_5)^{-2},-{a_4\over a_5}G\Bigr)
 R\Bigl(\mathsf{p}^2a_3^{-2},-{a_5\over a_4}G\Bigr)
 \Bigl(\theta\Bigl( {a_4\over a_5}G;q\Bigr) \theta\Bigl( {a_5\over a_4}G;q\Bigr) \Bigr)^{-1} \nonumber \\
 &\qquad\times\theta\bigl(F^{-1}G;q\bigr)^{-1} \cdot T_{\mathsf{p},a_2}T_{\mathsf{p},a_3}^{-1} \nonumber \\
&\quad=
{1\over \bigl(-{a_0\over a_1a_2^2} G;q\bigr)_\infty \bigl(-{a_1\over a_0a_2^2} qG^{-1};q\bigr)_\infty }
{1\over \bigl(-{1\over a_0^3a_1a_2^2} G;q\bigr)_\infty \bigl(-{1\over a_0a_1^3a_2^2} qG^{-1};q\bigr)_\infty }\nonumber\\
&\qquad\times\theta(F^{-1}G;q)^{-1}
{1\over \bigl({a_3^2a_4a_5^3\over \mathsf{p}^{2} } G;q\bigr)_\infty \bigl({a_3^2a_4^3a_5\over \mathsf{p}^{2} } qG^{-1};q\bigr)_\infty }
{1\over \bigl({a_3^2a_4\over \mathsf{p}^{2}a_5 } G;q\bigr)_\infty \bigl({a_3^2a_5\over \mathsf{p}^{2} a_4} qG^{-1};q\bigr)_\infty }\nonumber\\
&\qquad\times\theta\bigl(F^{-1}G;q\bigr)^{-1}T_{\mathsf{p},a_2}T_{\mathsf{p},a_3}^{-1}.\label{RRRR-deriv}
\end{align}
Note that the combination of the shift operators $T_{\mathsf{p},a_2}T_{\mathsf{p},a_3}^{-1}$ keeps
the constraints $a_0a_1a_2^2a_3^2a_4a_5 = t^{1/4}$ intact.
Hence, the desired equivalence is proved, if we can show the adjoint actions of~$\theta\bigl(F^{-1}G;q\bigr)^{-1}$ and $\widetilde{\Bor}=T_{(qt^{1/2}Q)^{-1/2},x} \cdot \Bor$
are the same. The adjoint action of $\theta\bigl(F^{-1}G;q\bigr)^{-1}$ is already given
in Lemma~\ref{ad-theta}. On the other hand, the fundamental commutation relation \eqref{adGamma} for
$\Bor$ implies
\begin{equation*}
\operatorname{Ad}\Bor \bigl(\tilde{F}\bigr) = - \tilde{G}, \qquad \operatorname{Ad}\Bor \bigl(\tilde{G}\bigr) = - q^{-1} \tilde{F}^{-1} \tilde{G}^2
\end{equation*}
for $\tilde{F} = x^{-1} p$, $\tilde{G} = - x^{-1}$.
Recall that we have rescaled $\tilde{F}$ and $\tilde{G}$. This is the reason why we have to combine $\Bor$ with
the shift operator $T_{(qt^{1/2}Q)^{-1/2},x}$ whose adjoint action produces the necessary multiplication factor for $\tilde{F}$ and $\tilde{G}$.
\end{proof}

The last equality in \eqref{RRRR-deriv}
is derived from the cancellation of the theta functions coming from $R_2$, $R_3$ (the Weyl reflections
with respect to the inner nodes labeled by $2$ and $3$)
and $\Sigma$ (the diagram automorphism which exchanges the external nodes
$0 \leftrightarrow 1$, $4 \leftrightarrow 5$).
Hence, it is not straightforward to see the correspondence
of each factor in \eqref{OH} to the generators of the extended affine Weyl group appearing in \eqref{Hasegawa-operator}.
\begin{Remark}
Though $\theta\bigl(F^{-1}G;q\bigr)^{-1}$ and $\widetilde{\Bor}$ have the same adjoint action on the variables $(F,G)$,
this does not necessarily mean that they are the same as operators acting on some space of functions.
In fact, contrary to $\widetilde{\Bor}$ or $\Bor$, we do not know how to define the action
of the operator~$\theta\bigl(F^{-1}G;q\bigr)^{-1}$ on the space of formal Laurent series in $x$.
\end{Remark}

As is well known, the Painlev\'e equations are derived from the isomonodromic deformation of linear system.
It is an interesting problem to find the Lax operators which give rise to the $qq$-Painlev\'e VI equation.
Note that the monodromy problem is naturally associated with the Yang--Baxter relation of the universal $R$ matrix.
In fact, in \cite{HasegawaLax} the universal $R$ matrix of \smash{$U_q\bigl(\widehat{\mathfrak{sl}}_2\bigr)$} was used to
define local Lax matrices for the $qq$-Painlev\'e VI equation, which is a good starting point to work out the problem
completely.

\section{Quantum Seiberg--Witten curve}\label{4dlimit}

\subsection[Gauge transformation and U(1) factor]{Gauge transformation and $\boldsymbol{{\rm U}(1)}$ factor}

By the following gauge transformation
\begin{align}\label{StoS}
&{1\over \Phi\bigl(t^2 T_1 T_3 \Lambda\bigr)\Phi(qt T_2 T_4 \Lambda)}
\varphi\bigl(q^{1/2}t^{1/2} T_2 x\bigr)\varphi\bigl(q^{1/2}t^{1/2} T_4 \Lambda/x\bigr) \mathsf{u}(\Lambda,x)
=\mathsf{v}^{(1)}(\Lambda,x),
\end{align}
the $qq$-Painlev\'e VI equation \eqref{qq-PVI-2}
is recast to $\widetilde{H} \mathsf{v}^{(1)}(\Lambda,x)=\mathsf{v}^{(1)}(\Lambda,x)$ with
\begin{align}\label{eq:Shiraishi}
\widetilde{H}={}&
{1\over
\varphi\bigl(T_1 q^{1/2}t^{1/2} x\bigr)
\varphi\bigl(T_3 q^{1/2}t^{1/2} \Lambda x^{-1}\bigr) }
 \Bor \nonumber \\
& \times
{\varphi(t T_1T_3 \Lambda) \varphi(q T_2T_4 \Lambda) \over
\varphi(-T_1T_2 x)\varphi\bigl(-Q^{-1} x\bigr)
\varphi\bigl(-T_3T_4 Q q t \Lambda x^{-1}\bigr)
\varphi\bigl(-q \Lambda x^{-1}\bigr)} \Bor \nonumber \\
&\times T_{q t Q,x}^{-1}T_{t,\Lambda}^{-1}
{1\over
\varphi\bigl(T_2 q^{1/2}t^{1/2} x\bigr)
 \varphi\bigl(T_4 q^{1/2}t^{1/2} \Lambda x^{-1} \bigr) }.
\end{align}
Recall that we have made the gauge transformation twice; $\mathsf{U}(\Lambda,x) \to \mathsf{u}(\Lambda,x) \to \mathsf{v}^{(1)}(\Lambda, x)$.
We can see the total gauge factor from the original system of Higgsed quiver gauge theory is simply
\begin{gather}\label{HtoL}
\frac{\Phi(qtT_2T_3\Lambda)\Phi\bigl(t^2T_1T_4\Lambda\bigr)}{\Phi\bigl(t^2T_1T_3\Lambda\bigr)\Phi(qtT_2T_4\Lambda)}
\frac{\Phi\bigl(T_3 q^{3/2} t^{-1/2} \Lambda/x\bigr)\Phi\bigl(T_4 q^{-1/2} t^{5/2}\Lambda/x\bigr)}
{\Phi\bigl(T_3 q^{1/2} t^{1/2} \Lambda/x\bigr)\Phi\bigl(T_4 q^{1/2}t^{3/2}\Lambda/x\bigr)}.
\end{gather}
Note that in the original definition of $\mathcal{A}_3$ in \cite{Shakirov:2021krl}
the four-dimensional limit is not well defined. But after the above gauge transformation we can take the four-dimensional limit.
In fact, let us look at the gauge transformation factor in the plethystic form
\begin{gather*}
\frac{\Phi(qtT_2T_3\Lambda)\Phi\bigl(t^2T_1T_4\Lambda\bigr)}{\Phi\bigl(t^2T_1T_3\Lambda\bigr)\Phi(qtT_2T_4\Lambda)}
= \exp \left( - \sum_{n=1}^\infty \frac{\bigl(q^nT_2^n-t^nT_1^n\bigr)\bigl(T_3^n-T_4^n\bigr)}{n\bigl(1-q^n\bigr)\bigl(1-t^n\bigr)} (t\Lambda)^n \right) \nonumber \\
\phantom{\frac{\Phi(qtT_2T_3\Lambda)\Phi\bigl(t^2T_1T_4\Lambda\bigr)}{\Phi\bigl(t^2T_1T_3\Lambda\bigr)\Phi(qtT_2T_4\Lambda)}
=
}{}\to (1- \Lambda)^{\frac{(\mu_3-\mu_4)(\mu_2 - \mu_1 - \epsilon_1 - \epsilon_2)}{\epsilon_1 \epsilon_2}}, \qquad \hbar \to 0,\\
 \frac{\Phi\bigl(T_3 q^{3/2} t^{-1/2} \Lambda/x\bigr)\Phi\bigl(T_4 q^{-1/2} t^{5/2}\Lambda/x\bigr)}
{\Phi\bigl(T_3 q^{1/2} t^{1/2} \Lambda/x\bigr)\Phi\bigl(T_4 q^{1/2}t^{3/2}\Lambda/x\bigr)} \nonumber \\
\qquad= \exp \left( - \sum_{n=1}^\infty \frac{\bigl(1-q^n t^{-n}\bigr)\bigl(T_4^n q^{-n/2}t^{5n/2} - T_3^n q^{n/2}t^{n/2}\bigr)}
{n\bigl(1-q^n\bigr)\bigl(1-t^n\bigr)} (\Lambda/x)^n \right) \nonumber \\
\qquad\quad\to (1- \Lambda/x)^{\frac{(\epsilon_1+ \epsilon_2)(\mu_4 - \mu_3 + \epsilon_1 + 2 \epsilon_2)}{\epsilon_1 \epsilon_2}}, \qquad \hbar \to 0,
\end{gather*}
where we set $q={\rm e}^{\hbar \epsilon_1}$, $t^{-\hbar\epsilon_2}$ and $T_i={\rm e}^{-\hbar \mu_i}$.
This four-dimensional limit completely matches with what we have found for the four-dimensional instanton partition functions.
This gauge transformation is also regarded as what is called the ${\rm U}(1)$ factor in the AGT correspondence~\cite{Alday:2009aq}.
Thus the gauge transformation \eqref{HtoL} is a five-dimensional uplift of the ${\rm U}(1)$ factor.
The five-dimensional ${\rm U}(1)$ factor for the Nekrasov partition function has been proposed in \cite{Bao:2013pwa,Hayashi:2013qwa},
where it is associated with pairs of parallel external lines in the fivebrane web diagram.
However, this~${\rm U}(1)$ factor does not recover that of \cite{Alday:2009aq} in the four-dimensional limit.
In this sense the relation of the ${\rm U}(1)$ factor in \cite{Bao:2013pwa,Hayashi:2013qwa} and our ${\rm U}(1)$ factor derived above
is not clear at the moment.

To simplify the computation with the quantum Seiberg--Witten curve,
let us make the change of parameters and variables:
\begin{alignat}{3}\label{d-variables1}
&\Lambda'= t T_1T_3 \Lambda, \qquad && x'= T_1 q^{-1/2} t^{1/2} x, & \\
&d_1 = T_1^{-1} q^{1/2} t^{-1/2} Q^{-1}, \qquad&&
d_2= T_2 q^{1/2} t^{-1/2},& \nonumber\\
&d_3 = T_3^{-1} q^{1/2} t^{-1/2}, \qquad&&
d_4 = T_4 q^{1/2} t^{1/2} Q,&\label{d-variables2}
\end{alignat}
so that $\tilde{H}$ can be written as $\tilde{H}=\SS T_{qtQ,x}^{-1}T_{t,\Lambda}^{-1}$ with
\begin{align}
\label{Shakirov}
\SS ={}& \frac{1}{\varphi(qx)\varphi(\Lambda/x)}  \Bor
\frac{\varphi(\Lambda)\varphi\bigl(q^{-1} d_1d_2d_3d_4\Lambda\bigr)}{\varphi(-d_1x)\varphi(-d_2x)\varphi(-d_3 \Lambda/x)\varphi(-d_4 \Lambda/x)}
\nonumber \\
&\times \Bor \frac{1}{\varphi\bigl(q^{-1}d_1d_2x\bigr)\varphi(d_3d_4 \Lambda/x)}.
\end{align}
where we delete $'$ of $x'$ and $\Lambda'$. Note that by the change of variables \eqref{d-variables1} and \eqref{d-variables2},
we have eliminated the parameter $t$ from \eqref{eq:Shiraishi}.
In fact, we can transform $\SS$ to $\Has$ defined by \eqref{Hasegawa} by the multiplication of $\varphi(\Lambda)\varphi\bigl(q^{-1} d_1d_2d_3d_4\Lambda\bigr)$,
which commutes with $\Bor$ and the adjoint action of $\varphi\bigl(q^{-1}d_1d_2x\bigr)\varphi(d_3d_4 \Lambda/x)$.
It is this form of $\tilde{H}$ that naturally leads to the solutions by the instanton counting
of the affine Laumon space with the following property. The instanton partition function is a formal double power series in $x$ and $\Lambda/x$:
\begin{equation}\label{double-series}
\mathcal{Z}(\Lambda, x) = \sum_{m,n \geq 0} c_{m,n} x^m (\Lambda/x)^n,
\end{equation}
where the coefficients $c_{m,n}$ are functions of $Q$, $T_i$, $q$ and $t$.
One of the characteristic features is that the ``boundary'' coefficients $c_{0,n}$ and $c_{m,0}$ factorize,
which is a consequence of the fact that there is a unique fixed point of the torus action on the instanton moduli space,
which corresponds to the topological number $n=0$ or $m=0$
This means $\SS$ should be regarded as the Hamiltonian on the gauge theory side.

\subsection{Five-dimensional quantum Seiberg--Witten curve}

When $t=1$, the $qq$-Painlev\'e VI equation becomes an autonomous system which admits a~conserved quantum curve.
The quantum curve, which is a Laurent polynomial in $(x,p)$, is identified
with a quantization of the Seiberg--Witten curve for the corresponding gauge theory.
Here we work out such a curve based on the commutativity with the operator $\SS$.

\begin{Proposition}\label{prp:5dSW}
Let $\calD=\calD(x,p)$ be an operator of the form
\begin{equation}\label{eq:D_with_c}
\calD=\sum_{i,j} c_{i,j}x^i p^j,
\end{equation}
with nonzero coefficients
\begin{equation*}
\left[
\begin{matrix}
c_{-1,1} & c_{0,1} & c_{1,1} \\
c_{-1,0} & c_{0,0} & c_{1,0} \\
c_{-1,-1} & c_{0,-1} & c_{1,-1}
\end{matrix}
\right]=\left[
\begin{matrix}
\Lambda \mu & -\frac{\mu q^2+\Lambda d_1 d_2}{q} & d_1 d_2 \\
-\Lambda \mu \left(d_3+d_4\right) & u & -d_1-d_2 \\
\Lambda \mu d_3 d_4 & -\Lambda \mu d_3 d_4-1 & 1
\end{matrix}
\right].
\end{equation*}
Then it satisfies the relation
\begin{equation}\label{eq:D_cons_cond}
\SS^{-1} \calD(x,p) \SS=\calD(\mu x,p),
\end{equation}
where $u, \mu \in \bbC$ are free parameters.
\end{Proposition}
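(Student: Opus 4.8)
My first step is to turn \eqref{eq:D_cons_cond} into an ordinary commutation relation. The shift operator $T_{\mu,x}$ (acting by $x\mapsto\mu x$, $p\mapsto p$) satisfies $T_{\mu,x}\,x^ip^j\,T_{\mu,x}^{-1}=\mu^i x^ip^j$, so $\calD(\mu x,p)=T_{\mu,x}\,\calD(x,p)\,T_{\mu,x}^{-1}$ and \eqref{eq:D_cons_cond} is equivalent to $[\calD,\,\SS\,T_{\mu,x}]=0$, i.e.\ to the operator identity $\calD\,\SS=\SS\,\calD(\mu x,p)$. I would prove this last form directly.

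The computation rests on two elementary conjugation rules for the factorization $\SS=A(x)\,\Bor\,B(x)\,\Bor\,C(x)$ read off from \eqref{Shakirov}, in which $A,B,C$ denote the three multiplication operators (the central prefactor $\varphi(\Lambda)\varphi(q^{-1}d_1d_2d_3d_4\Lambda)$ depends only on $\Lambda$, commutes with everything, and drops out of the conjugation). First, \eqref{adGamma} gives the lattice shear $\Bor^{-1}x^ip^j\Bor=q^{-i(i+1)/2}x^ip^{\,j-i}$. Second, conjugation by a multiplication operator $\phi(x)$ fixes $x^i$ and sends $p^j\mapsto\frac{\phi(q^jx)}{\phi(x)}p^j$; since $\varphi(qx)=\varphi(x)/(1-x)$ is the $q$-difference content of \eqref{t-difference}, every ratio $\phi(q^{\pm1}x)/\phi(x)$ built from the $\varphi$-factors of $A$, $B$, $C$ is a rational function of $x$. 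Thus conjugating by $\SS$ never leaves the algebra of difference operators in $(x,p)$ with coefficients in $\bbC(x)$.

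The execution is then bookkeeping: I would push each of the nine monomials $x^ip^j$ ($i,j\in\{-1,0,1\}$) of $\calD$ successively through $A,\Bor,B,\Bor,C$, collect the result as $\sum_{i,j}(\text{rational in }x)\,x^ip^j$, and compare with $\sum_{i,j}c_{i,j}\mu^i x^ip^j=\calD(\mu x,p)$. The two $\Bor$-shears are what mix $p$-degrees and produce the $q$- and $\mu$-weights, while the $\varphi$-ratios of $A,B,C$ supply the $\Lambda$- and $d_i$-dependent rational dressings; demanding equality yields a finite linear system whose solution is precisely the stated $3\times3$ array, with $c_{0,0}=u$ and $\mu$ as the two free parameters. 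Accordingly I would present the entries as obtained from that system and then record the verification.

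The hard part is the cancellation that restores a Laurent polynomial. A priori $\SS^{-1}\calD\SS$ is a difference operator whose coefficients are rational functions carrying poles at the zeros of the $\varphi$-ratios created in the two multiplication steps; the proposition asserts that, for the tuned coefficients, these potential poles cancel pairwise, leaving only the clean rescaling $x\mapsto\mu x$. Establishing this requires tracking the precise interplay of the two shears with the four $\varphi$-factors in each of $A$, $B$, $C$, and is exactly where the explicit entries of the coefficient matrix are used. A clean way to organize and cross-check the bookkeeping is to use the remark below \eqref{Shakirov} that $\SS$ coincides with $\Has$ of \eqref{Hasegawa} up to the central factor and conjugation by $\varphi(q^{-1}d_1d_2x)\varphi(d_3d_4\Lambda/x)$: this lets me insert the explicit Hasegawa birational map \eqref{eq:qP6AB} for $\Has f\Has^{-1}$ and $\Has^{-1}g\Has$ with $f=x^{-1}p$, $g=x^{-1}$, reducing the whole identity to a single rational relation in $f$ and $g$.
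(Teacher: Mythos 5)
Your plan is correct and follows essentially the same route as the paper: the paper's proof conjugates $\calD$ successively by the six factors of $\SS^{-1}$ (using exactly your two rules, the shear $\operatorname{Ad}\bigl(\Bor^{-1}\bigr)\colon x^i\mapsto q^{-i(i+1)/2}x^ip^{-i}$ and the rational $\varphi$-ratios for the multiplication factors), displaying the coefficient matrix at each of the six stages, and the pole cancellation you single out as the crux is indeed carried by the factorization of the boundary rows of the coefficient matrix, as the paper notes after \eqref{5Dcurve}. The only difference is presentational: the paper records the intermediate operators $\calD_1,\dots,\calD_6$ explicitly rather than tracking the nine monomials individually.
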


\begin{proof}
We compute the successive transformations of the operators
\begin{equation*}
\calD \mapsto \calD_1 \mapsto \dots \mapsto \calD_6,
\end{equation*}
under the adjoint actions $\operatorname{Ad}(X)\colon \calD \mapsto X \calD X^{-1}$ for factors $X$ in $\SS^{-1}$.
In the following we will display the operators $\calD_i$ by their coefficient matrices $(c_{i,j})$.\footnote{Initially the range (support) of the indices $(i,j)$ is $-1\leq i$, $j \leq 1$. However, by the adjoint action
the range of the power of $x$ is extended to $-2 \leq i \leq 2$. See also the Newton polygons displayed below.}

First, under the gauge transformation
$\operatorname{Ad}\bigl({\varphi(qx)\varphi\bigl(\frac{\Lambda}{x}\bigr)}\bigr)\colon p \mapsto \frac{1-q x}{1-\frac{\Lambda}{qx}}p$,
we have
\begin{equation*}
\calD \mapsto \calD_1=\left[
\begin{matrix}
 0 & 0 & -q \mu & \mu q^2+d_1 d_2 & -q d_1 d_2 \\
 0 & -\Lambda \mu \left(d_3+d_4\right) & u & -d_1-d_2 & 0 \\
 -\Lambda ^2 \mu d_3 d_4 & \Lambda \left(\mu d_3 d_4+1\right) & -1 & 0 & 0
\end{matrix}
\right].
\end{equation*}
Next, by the action of $\operatorname{Ad}\bigl(\Bor^{-1}\bigr)\colon x^i \mapsto q^{-i(i+1)/2}x^i p^{-i}$, we obtain
\begin{equation*}
\calD_1 \mapsto \calD_2=\left[
\begin{matrix}
 -\frac{\Lambda ^2 \mu d_3 d_4}{q} & -\Lambda \mu \left(d_3+d_4\right) & -q \mu & 0 & 0 \\
 0 & \Lambda \left(\mu d_3 d_4+1\right) & u & \frac{\mu q^2+d_1 d_2}{q} & 0 \\
 0 & 0 & -1 & -\frac{d_1+d_2}{q} & -\frac{d_1 d_2}{q^2}
\end{matrix}
\right].
\end{equation*}
Then by the gauge transformation
$\operatorname{Ad}\bigl(\varphi(-d_1x)\varphi\bigl(-d_3\frac{\Lambda}{x}\bigr)\bigr)\colon
p \mapsto \frac{(1+d_1 x)}{\bigl(1+d_3 \frac{\Lambda}{q x}\bigr)} p$,
we have
\begin{equation*}
\calD_2 \mapsto \calD_3=\left[
\begin{matrix}
-\Lambda \mu d_4 & -\mu (q+\Lambda d_1 d_4) & -q \mu d_1 \\
\Lambda (\mu d_3 d_4+1) & u & \frac{\mu q^2+d_1 d_2}{q} \\
-\Lambda d_3 & -\frac{q+\Lambda d_2 d_3}{q} & -\frac{d_2}{q}
\end{matrix}
\right].
\end{equation*}
And by further gauge transformation
$
\operatorname{Ad}\bigl(\varphi(-d_2x)\varphi\bigl(-d_4\frac{\Lambda}{x}\bigr)\bigr)\colon
p \mapsto \frac{(1+d_2 x)}{\bigl(1+d_4 \frac{\Lambda}{q x}\bigr)} p,
$
we have
\begin{equation*}
\calD_3 \mapsto \calD_4=\left[
\begin{matrix}
 0 & 0 & -q \mu & -q \mu (d_1+d_2) & -q \mu d_1 d_2 \\
 0 & \Lambda (\mu d_3 d_4+1) & u & \frac{\mu q^2+d_1 d_2}{q} & 0 \\
 -\Lambda ^2 d_3 d_4 & -\Lambda (d_3+d_4) & -1 & 0 & 0 \\
\end{matrix}
\right].
\end{equation*}
Then the action of $\operatorname{Ad}\bigl(\Bor^{-1}\bigr)$ gives
\begin{equation*}
\calD_4 \mapsto \calD_5=\left[
\begin{matrix}
 -\frac{\Lambda ^2 d_3 d_4}{q} & \Lambda (\mu d_3 d_4+1) & -q \mu & 0 & 0 \\
 0 & -\Lambda (d_3+d_4) & u & -\mu (d_1+d_2) & 0 \\
 0 & 0 & -1 & \frac{\mu q^2+d_1 d_2}{q^2} & -\frac{\mu d_1 d_2}{q^2}
\end{matrix}
\right].
\end{equation*}
Finally, by the gauge transformation
$\operatorname{Ad}\bigl(\varphi\bigl(\frac{d_1d_2x}{q}\bigr)\varphi\bigl(d_3d_4 \frac{\Lambda}{x}\bigr)\bigr)\colon p \mapsto \frac{1-\frac{d_1 d_2}{q} x}{1-\frac{\Lambda d_3 d_4}{qx}}p$,
we have
\begin{equation*}
\calD_5 \mapsto \calD_6=\left[
\begin{matrix}
\Lambda & -\frac{\mu q^2+\Lambda d_1 d_2}{q} & \mu d_1 d_2 \\
-\Lambda (d_3+d_4) & u & -\mu (d_1+d_2) \\
\Lambda d_3 d_4 & -\Lambda \mu d_3 d_4-1 & \mu
\end{matrix}
\right].
\end{equation*}
In total, we obtained the relation
\begin{equation*}
\SS^{-1} \calD \SS=\calD_6=\calD |_{x \to \mu x},
\end{equation*}
as desired.
\end{proof}

\begin{Corollary} If we set $\mu=(t q Q)^{-1}$, then we have
\begin{equation*}
{\tilde H} ^{-1} \calD {\tilde H}=T_{t, \Lambda}(\calD),
\end{equation*}
where ${\tilde H} =\SS T_{tq Q,x}^{-1}T_{t,\Lambda}^{-1}$.
Namely, the operator $\calD$ with $\mu=(t q Q)^{-1}$ is conserved under the evolution by $\tilde{H}$
in autonomous case: $t=1$.
\end{Corollary}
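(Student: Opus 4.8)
The plan is to conjugate $\calD$ directly by $\tilde{H}=\SS T_{tqQ,x}^{-1}T_{t,\Lambda}^{-1}$ and feed in the identity $\SS^{-1}\calD\SS=\calD(\mu x,p)$ supplied by Proposition \ref{prp:5dSW}. First I would write out
\[
\tilde{H}^{-1}\calD\,\tilde{H}
= T_{t,\Lambda}T_{tqQ,x}\bigl(\SS^{-1}\calD\SS\bigr)T_{tqQ,x}^{-1}T_{t,\Lambda}^{-1}
= T_{t,\Lambda}T_{tqQ,x}\,\calD(\mu x,p)\,T_{tqQ,x}^{-1}T_{t,\Lambda}^{-1},
\]
so that everything reduces to understanding how the two commuting shift operators act by conjugation on $\calD(\mu x,p)=\sum_{i,j}c_{i,j}(\mu x)^i p^j$.

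The key bookkeeping step is to record the conjugation rules for the shift operators, which follow directly from the conventions $T_{b,x}f(\Lambda,x)=f(\Lambda,bx)$ and $T_{a,\Lambda}f(\Lambda,x)=f(a\Lambda,x)$. Since each $T$ acts as an algebra automorphism of the $q$-commuting pair $(x,p)$, it suffices to track its action on $x$, $p$ and on the coefficients: $T_{tqQ,x}$ fixes $p$, sends the multiplication operator $x$ to $tqQ\,x$, and leaves the coefficients $c_{i,j}$ (which depend on $\Lambda$, the $d_i$ and the $x$-independent constants $u,\mu$, but not on $x$) untouched; the operator $T_{t,\Lambda}$ fixes both $x$ and $p$ but shifts each coefficient by $c_{i,j}(\Lambda)\mapsto c_{i,j}(t\Lambda)$.

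Applying these in turn, conjugation by $T_{tqQ,x}$ turns $\calD(\mu x,p)$ into $\calD(\mu tqQ\,x,p)$; with the choice $\mu=(tqQ)^{-1}$ one has $\mu tqQ=1$, so this collapses back to $\calD(x,p)=\calD$. The surviving conjugation by $T_{t,\Lambda}$ then acts only on the $\Lambda$-dependent matrix entries, producing precisely $T_{t,\Lambda}(\calD)$, which is the asserted identity $\tilde{H}^{-1}\calD\,\tilde{H}=T_{t,\Lambda}(\calD)$. Specializing to the autonomous case $t=1$ makes $T_{t,\Lambda}$ the identity, whence $\calD$ commutes with $\tilde{H}$, i.e.\ it is conserved under the time evolution.

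I expect no serious obstacle here: the substantive content is already in Proposition \ref{prp:5dSW}, and the Corollary is a matter of carefully separating which conjugation touches the $x$-variable (cancelled by the choice of $\mu$) from the explicit $\Lambda$-dependence (which survives as the shift $T_{t,\Lambda}$). The only point demanding care is to treat $u$ and $\mu$ as genuine $x$- and $\Lambda$-independent constants, so that the two commuting shift operators act on $\calD(\mu x,p)$ exactly as stated.
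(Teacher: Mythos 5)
Your proposal is correct and is exactly the argument the paper intends: the Corollary is stated as an immediate consequence of Proposition \ref{prp:5dSW}, and your derivation---conjugating by the two shift operators, noting that $T_{tqQ,x}$ rescales $x$ by $tqQ$ so that the choice $\mu=(tqQ)^{-1}$ cancels the rescaling, while $T_{t,\Lambda}$ only shifts the $\Lambda$-dependence of the coefficients $c_{i,j}$---is the intended one. No gaps.
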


\begin{Remark}
The converse of the Proposition \ref{prp:5dSW} is also true. Namely,
for a general operator~$\calD$ of the form \eqref{eq:D_with_c} with the initial support $i,j \in \{-1,0,1\}$,
the condition \eqref{eq:D_cons_cond} with fixed $\mu$ determines the coefficients $c_{i,j}$ up to two free parameters.
In fact, in order to keep the form of the operator $\calD$ under the successive transformations as above,
one has six linear constraints on the nine coefficients $c_{i,j}$. Then the condition \eqref{eq:D_cons_cond} gives one more constraint.
Hence, two coefficients (the constant term $u=c_{0,0}$ and overall normalization) remain free.
\end{Remark}
\begin{Remark}
The position of the nonzero coefficients $c_{i,j}$ shows the Newton polygon of the operators $\calD_k$.
Their transitions are as follows:
\begin{center}\setlength{\unitlength}{0.4mm}
\begin{picture}(140,45)(10,-15)
\put(-10,0){\line(0,1){20}}
\put(0,0){\line(0,1){20}}
\put(10,0){\line(0,1){20}}
\put(-10,0){\line(1,0){20}}
\put(-10,10){\line(1,0){20}}
\put(-10,20){\line(1,0){20}}
\multiput(-10,0)(10,0){3}{\circle*{4}}
\multiput(-10,10)(10,0){3}{\circle*{4}}
\multiput(-10,20)(10,0){3}{\circle*{4}}
\put(-5,-15){$\calD$}
\put(25,5){$\mapsto$}
\put(50,0){\line(0,1){20}}
\put(60,0){\line(0,1){20}}
\put(70,0){\line(0,1){20}}
\put(50,0){\line(1,0){20}}
\put(50,10){\line(1,0){20}}
\put(50,20){\line(1,0){20}}
\multiput(40,0)(10,0){3}{\circle*{4}}
\multiput(50,10)(10,0){3}{\circle*{4}}
\multiput(60,20)(10,0){3}{\circle*{4}}
\put(55,-15){$\calD_1$}
\put(85,5){$\mapsto$}
\put(82,-5){{\tiny $\operatorname{Ad}\Bor^{-1}$}}
\put(110,0){\line(0,1){20}}
\put(120,0){\line(0,1){20}}
\put(130,0){\line(0,1){20}}
\put(110,0){\line(1,0){20}}
\put(110,10){\line(1,0){20}}
\put(110,20){\line(1,0){20}}
\multiput(120,0)(10,0){3}{\circle*{4}}
\multiput(110,10)(10,0){3}{\circle*{4}}
\multiput(100,20)(10,0){3}{\circle*{4}}
\put(115,-15){$\calD_2$}
%
\end{picture}
\end{center}
\begin{center}
\setlength{\unitlength}{0.4mm}
\begin{picture}(140,45)(10,-15)
\put(-30,5){$\mapsto$}
\put(-10,0){\line(0,1){20}}
\put(0,0){\line(0,1){20}}
\put(10,0){\line(0,1){20}}
\put(-10,0){\line(1,0){20}}
\put(-10,10){\line(1,0){20}}
\put(-10,20){\line(1,0){20}}
\multiput(-10,0)(10,0){3}{\circle*{4}}
\multiput(-10,10)(10,0){3}{\circle*{4}}
\multiput(-10,20)(10,0){3}{\circle*{4}}
\put(-5,-15){$\calD_3$}
\put(25,5){$\mapsto$}
\put(50,0){\line(0,1){20}}
\put(60,0){\line(0,1){20}}
\put(70,0){\line(0,1){20}}
\put(50,0){\line(1,0){20}}
\put(50,10){\line(1,0){20}}
\put(50,20){\line(1,0){20}}
\multiput(40,0)(10,0){3}{\circle*{4}}
\multiput(50,10)(10,0){3}{\circle*{4}}
\multiput(60,20)(10,0){3}{\circle*{4}}
\put(55,-15){$\calD_4$}
\put(85,5){$\mapsto$}
\put(82,-5){{\tiny $\operatorname{Ad}\Bor^{-1}$}}
\put(110,0){\line(0,1){20}}
\put(120,0){\line(0,1){20}}
\put(130,0){\line(0,1){20}}
\put(110,0){\line(1,0){20}}
\put(110,10){\line(1,0){20}}
\put(110,20){\line(1,0){20}}
\multiput(120,0)(10,0){3}{\circle*{4}}
\multiput(110,10)(10,0){3}{\circle*{4}}
\multiput(100,20)(10,0){3}{\circle*{4}}
\put(115,-15){$\calD_5$}
\put(145,5){$\mapsto$}
\put(170,0){\line(0,1){20}}
\put(180,0){\line(0,1){20}}
\put(190,0){\line(0,1){20}}
\put(170,0){\line(1,0){20}}
\put(170,10){\line(1,0){20}}
\put(170,20){\line(1,0){20}}
\multiput(170,0)(10,0){3}{\circle*{4}}
\multiput(170,10)(10,0){3}{\circle*{4}}
\multiput(170,20)(10,0){3}{\circle*{4}}
\put(175,-15){$\calD_6$}
\put(200,0){.}
\end{picture}
\end{center}
We see the Newton polygon of $\calD_3$ has also a rectangular shape.
More precisely, we have
\begin{equation*}
\calD_3=\calD \Big{|}_{\bigl\{d_2\to \frac{\mu q^2}{d_2},\ d_4\to \frac{1}{d_4 \mu },\ \Lambda \to \frac{d_2 d_4 \Lambda }{q},
\ x\to -\frac{d_2 x}{q}\bigr\}}.
\end{equation*}
This corresponds to the factorization property Proposition \ref{prop:coupled}
which will be discussed in the next section.
\end{Remark}

Since the parameter $u$ is stable under the adjoint action, we can regard it as a trivial free parameter.
Setting $u=0$ the quantum Seiberg--Witten curve is explicitly
\begin{align}\label{5Dcurve}
\calD_{\rm SW}(\Lambda,x) ={}&
(d_1 d_2 x- \mu q ) \left(1-\frac{\Lambda }{q x}\right)p
- (d_1+d_2) x -\frac{(d_3+d_4) \Lambda \mu}{x}
\nonumber \\
&+(x-1) \left(1-\frac{d_3 d_4 \Lambda \mu}{x}\right)\frac{1}{p}\nonumber \\
={}& x\frac{(1-d_1 p)(1-d_2 p)}{p}
- (1+ d_3 d_4 \Lambda \mu)\frac{1}{p} - \left(\mu q +\frac{d_1 d_2\Lambda}{q}\right) p\nonumber \\
&+ \frac{\Lambda \mu}{x} \frac{(p-d_3)(p-d_4)}{p}.
\end{align}
This is a non-commutative Laurent polynomial in $(x,p)\colon px=q x p$.
Note that the highest and lowest terms in $x$ and $p$ are all factorized.
When $(x,p)$ are commutative, the curve \eqref{5Dcurve} reduces to
the $M$-theoretic curve, which is obtained from the toric diagram,
or the five-brane web \cite{Bonelli:2017gdk,Brandhuber:1997ua,Eguchi:2000fv,Kim:2014nqa}.

\subsection{Four-dimensional limit}

We study the $q \to 1$ limit of the operator \eqref{Shakirov}.
Let us use the short hand notation such as
\[
F^{a,b,b}_c=T_{q,a}T_{q,b}^2T_{q,c}^{-1} F=F\bigl(qa,q^2b,q^{-1}c, \ldots\bigr),
\]
to represent the $q$-shifts of a function $F=F(a,b,c,\ldots)$.

\begin{Proposition}
We have
\begin{gather}
(1-\Lambda) (\SS)^{\Lambda}_{d_1}=(1-x)(\SS)_{x}+d_2(x-\Lambda)\SS p,\qquad d_1 \leftrightarrow d_2,\nonumber\\
(1-\Lambda) (\SS)^{\Lambda}_{d_3}=\left(1-\dfrac{\Lambda}{x}\right)\SS+d_4\left(\dfrac{\Lambda}{x}-\Lambda\right)p^{-1}\SS,
\qquad d_3 \leftrightarrow d_4,\label{eq:S-cont}
\end{gather}
and
\begin{equation}\label{eq:S-cont-x}
\SS x=q^2 x (\SS)^{d_1,d_2}_{d_3, d_4} p^2.
\end{equation}
\end{Proposition}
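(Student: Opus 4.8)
The plan is to treat the two statements by different routes: the pure operator relation \eqref{eq:S-cont-x} by a single conjugation, and the three-term contiguity relations \eqref{eq:S-cont} by reducing each to one elementary rational identity. Throughout I write $\SS=A\,\Bor\,B\,\Bor\,C$ with the three scalar factors of \eqref{Shakirov},
\[
A=\frac{1}{\varphi(qx)\varphi(\Lambda/x)},\quad
B=\frac{\varphi(\Lambda)\varphi\bigl(q^{-1}d_1d_2d_3d_4\Lambda\bigr)}{\varphi(-d_1x)\varphi(-d_2x)\varphi(-d_3\Lambda/x)\varphi(-d_4\Lambda/x)},\quad
C=\frac{1}{\varphi\bigl(q^{-1}d_1d_2x\bigr)\varphi(d_3d_4\Lambda/x)},
\]
and I use only two tools: the commutation relations \eqref{adGamma} (together with $pf(x)=f(qx)p$) and the functional equation $\varphi(z)=(1-z)\varphi(qz)$, immediate from the product formula for $\varphi$. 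I also record the conjugation identity $(\SS)_x=p^{-1}\SS p$, hence $\SS p=p(\SS)_x$, which holds because $\Bor$ commutes with $p$ while $p^{-1}f(x)p=f(q^{-1}x)$.

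For \eqref{eq:S-cont-x} I would compute $x^{-1}\SS x$ by transporting the multiplication operator $x$ leftward through the two copies of $\Bor$. Using $x^{-1}\Bor x=qp\Bor$ (checked on monomials, equivalently $\Bor^{-1}x\Bor=p^{-1}x$), each passage contributes a factor $qp$, so the two $\Bor$'s together yield $q^2p^2$; pushing this $p^2$ back to the right shifts the arguments of the middle and right factors to $B(qx)$ and $C(q^2x)$ while $A$ is untouched. The final step is to verify via $\varphi(z)=(1-z)\varphi(qz)$ that $B(qx)$ equals $B$ with $(d_1,d_2,d_3,d_4)\to(qd_1,qd_2,q^{-1}d_3,q^{-1}d_4)$ and that $C(q^2x)$ is the same shift of $C$; this is exactly $(\SS)^{d_1,d_2}_{d_3,d_4}$, giving $\SS x=q^2x(\SS)^{d_1,d_2}_{d_3,d_4}p^2$.

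For the first relation of \eqref{eq:S-cont} I would begin by collapsing the right-hand side: using $\SS p=p(\SS)_x$ and $(x-\Lambda)p=p(q^{-1}x-\Lambda)$ the two terms combine into $\bigl[(1-x)+d_2(x-\Lambda)p\bigr](\SS)_x$. On the left I expand $(\SS)^\Lambda_{d_1}=A'\,\Bor\,B'\,\Bor\,C'$ and simplify the shifted factors with the functional equation to get $A'=(1-x)A(q^{-1}x)$, $C'=C(q^{-1}x)$, and $(1-\Lambda)B'=(1+q^{-1}d_2x)B(q^{-1}x)$ (the prefactor $1-\Lambda$ is precisely what restores the numerator $\varphi(\Lambda)$ of $B$ after $\varphi(\Lambda)\to\varphi(q\Lambda)$). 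Splitting $1+q^{-1}d_2x$, the constant $1$ reproduces $(1-x)(\SS)_x$, matching the first term on the right. The remaining piece $(1-x)A(q^{-1}x)\,\Bor\,(q^{-1}d_2x)B(q^{-1}x)\,\Bor\,C(q^{-1}x)$ is handled by pushing the central $x$ left through the first $\Bor$ (via $\Bor x=px\Bor$) and then transporting the resulting $p$ to the far right through both $\Bor$'s, which converts $B(q^{-1}x),C(q^{-1}x)$ into $B(x),C(x)$ and leaves a trailing $p$; what survives is the single rational identity $(1-x)x\,A(q^{-1}x)=(x-\Lambda)A(x)$, again an instance of $\varphi(z)=(1-z)\varphi(qz)$, after which the piece is exactly $d_2(x-\Lambda)\SS p$.

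The companion relation with $d_1\leftrightarrow d_2$ follows at once because $A,B,C$ are symmetric in $d_1\leftrightarrow d_2$, and the second line of \eqref{eq:S-cont} (with its swap $d_3\leftrightarrow d_4$) follows by the identical argument under $x\leftrightarrow\Lambda/x$ and $p\leftrightarrow p^{-1}$, so that the extra shift now multiplies $\SS$ on the left. I expect the main obstacle to be bookkeeping rather than conceptual: one must commute the several rational factors produced by the parameter shifts through the two non-commuting operators $\Bor$, tracking the numerous $\varphi$ functional-equation factors, and check that after all cancellations the whole contiguity relation collapses onto the one elementary identity $(1-x)x\,A(q^{-1}x)=(x-\Lambda)A(x)$ together with its $\Lambda/x$-mirror. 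Confirming that every spurious factor cancels cleanly is the step requiring real care.
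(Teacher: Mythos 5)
Your proposal is correct and follows essentially the same route as the paper: both rest on the factorization $\SS=A\,\Bor\,B\,\Bor\,C$, the functional equation $\varphi(z)=(1-z)\varphi(qz)$ applied factor by factor to the shifted operator, and the commutation rules $\Bor x=px\Bor$, $px=qxp$. The only difference is organizational --- the paper forms the ratio $(\SS)^{\Lambda,x}_{d_1}\SS^{-1}$ with an auxiliary $x$-shift and substitutes $x\to x/q$ at the end, whereas you expand $(1-\Lambda)(\SS)^{\Lambda}_{d_1}$ directly --- and your key rational identity $(1-x)\,x\,A\bigl(q^{-1}x\bigr)=(x-\Lambda)A(x)$ checks out.
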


\begin{proof}
We write $\SS$ as $\SS=A_1 \Bor A_2 \Bor A_3$, where
\begin{align*}
A_1 &=\frac{1}{\varphi(qx)\varphi(\Lambda/x)}, \qquad
A_2=\frac{\varphi(\Lambda)\varphi\bigl(q^{-1} d_1d_2d_3d_4\Lambda\bigr)}
{\varphi(-d_1x)\varphi(-d_2x)\varphi(-d_3 \Lambda/x)\varphi(-d_4 \Lambda/x)}, \\
A_3 &= \frac{1}{\varphi\bigl(q^{-1}d_1d_2x\bigr)\varphi(d_3d_4 \Lambda/x)}.
\end{align*}
Then we have
\begin{equation*}
{A_1}^{\Lambda,x}_{d_1}=(1-q x)A_1, \qquad
{A_2}^{\Lambda,x}_{d_1}=\dfrac{1+d_2 x}{1-\Lambda}A_2, \qquad
{A_3}^{\Lambda,x}_{d_1}=A_3.
\end{equation*}
Using these relations, we have
\begin{align*}
(\SS)^{\Lambda,x}_{d_1} \SS^{-1}&={A_1}^{\Lambda,x}_{d_1}\Bor{A_2}^{\Lambda,x}_{d_1}{A_2}^{-1}\Bor^{-1}A_1^{-1}
=\dfrac{1-q x}{1-\Lambda}A_1(1+d_2 p x)A_1^{-1} \\
&=\dfrac{1-q x}{1-\Lambda}\Biggl( 1+q d_2 x\dfrac{1-\dfrac{\Lambda}{qx}}{1-q x}p\Biggr),
\end{align*}
hence
\[
(1-\Lambda)(\SS)^{\Lambda,x}_{d_1}=(1-q x)\SS+q d_2 x\left(1-\dfrac{\Lambda}{qx}\right) (\SS)^x p.
\]
Putting $x \to x/q$, we obtain the first relation of \eqref{eq:S-cont}.
Similarly, the second relation of \eqref{eq:S-cont} follows from
\[
{A_1}^{\Lambda}_{d_3}=\left(1-\dfrac{\Lambda}{x}\right)A_1, \qquad
{A_2}^{\Lambda}_{d_3}=\dfrac{1+d_4\dfrac{\Lambda}{x}}{1-\Lambda}A_2, \qquad
{A_3}^{\Lambda}_{d_3}=A_3.
\]
The relation \eqref{eq:S-cont-x} follows easily as
\begin{align*}
\SS x&=A_1\Bor A_2 \Bor A_3 x
=A_1\Bor A_2 p x \Bor A_3
=q A_1 \Bor x A_2 \Bor {A_3}^x p =q A_1 p x \Bor A_2 \Bor {A_3}^x p
\\
&=q^2 x {A_1} \Bor {A_2}^{x} \Bor {A_3}^{x,x} p^2 =q^2 x (\SS)^{d_1,d_2}_{d_3, d_4} p^2.\tag*{\qed}
\end{align*} \renewcommand{\qed}{}
\end{proof}

\begin{Theorem} \label{th:4d} We put $q=e^h$, $d_i={\rm e}^{h m_i}$, then we have
\begin{align*}
&\SS\to 1+h H_{\rm 4d}+O\bigl(h^2\bigr), \qquad h \to 0, \\
&H_{\rm 4d}=\vartheta_x(\vartheta_x+1)+\dfrac{\Lambda-x}{1-\Lambda}
(\vartheta_x+m_1)(\vartheta_x+m_2)+
\dfrac{\Lambda}{x}\dfrac{x-1}{1-\Lambda}(\vartheta_x-m_3)(\vartheta_x-m_4).
\end{align*}
\end{Theorem}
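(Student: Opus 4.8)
The plan is to take the $q\to1$ limit directly in the recursion relations \eqref{eq:S-cont} and \eqref{eq:S-cont-x} established in the preceding Proposition, rather than in the divergent product \eqref{Shakirov} itself. Writing $q=e^h$, $d_i=e^{hm_i}$, I would insert the expansions $\Bor = 1 + \frac{h}{2}\vartheta_x(\vartheta_x+1)+O\bigl(h^2\bigr)$, $p=1+h\vartheta_x+O\bigl(h^2\bigr)$, $d_i=1+hm_i+O\bigl(h^2\bigr)$, $q^2=1+2h+O\bigl(h^2\bigr)$, together with the ansatz $\SS = 1 + hH + O\bigl(h^2\bigr)$, and read off the order-$h$ part of each relation to obtain a closed system for the limiting operator $H$.

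The first simplification is that the formal shifts on the right of \eqref{eq:S-cont} are harmless at this order. The $x$-shift is conjugation by $p$: since $p$ commutes with $\Bor$ (see \eqref{adGamma}) and sends each multiplication factor $A_i(x)$ to $A_i\bigl(q^{\pm1}x\bigr)$, one has $(\SS)_x = p^{-1}\SS p = \SS + O\bigl(h^2\bigr)$. Likewise the genuine parameter shift $\Lambda\to q\Lambda$ alters the order-$h$ coefficient only by $O(h)$, hence is negligible. What survives are the $d_i$-shifts, acting on $H$ by the finite substitutions $m_i\mapsto m_i\mp1$. Expanding the two relations of \eqref{eq:S-cont} and their $d_1\leftrightarrow d_2$, $d_3\leftrightarrow d_4$ partners to first order then yields
\begin{gather*}
(1-\Lambda)\bigl(H|_{m_1\to m_1-1}-H\bigr)=(x-\Lambda)(\vartheta_x+m_2),\qquad m_1\leftrightarrow m_2,\\
(1-\Lambda)\bigl(H|_{m_3\to m_3-1}-H\bigr)=(\Lambda-\Lambda/x)(\vartheta_x-m_4),\qquad m_3\leftrightarrow m_4.
\end{gather*}
As the right-hand sides are independent of the shifted variable, these say $H$ is affine in each $m_i$, and integrating them fixes the full $m$-dependence to be $\frac{\Lambda-x}{1-\Lambda}(\vartheta_x+m_1)(\vartheta_x+m_2)+\frac{\Lambda}{x}\frac{x-1}{1-\Lambda}(\vartheta_x-m_3)(\vartheta_x-m_4)$ up to an $m$-independent remainder.

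Next I would feed this into the order-$h$ form of \eqref{eq:S-cont-x}, namely $Hx-x\widetilde H = 2x(\vartheta_x+1)$ with $\widetilde H = H|_{m_1+1,m_2+1,m_3-1,m_4-1}$. Using $P(\vartheta_x)x = xP(\vartheta_x+1)$, the two bracketed quadratics above commute through the multiplication by $x$ and cancel identically against their shifted counterparts, so the relation collapses to a condition on the $m$-independent part $R$ of $H$ alone: its $\vartheta_x$-quadratic piece $K$ must satisfy $K(\vartheta_x+1)-K(\vartheta_x)=2(\vartheta_x+1)$, forcing $K=\vartheta_x(\vartheta_x+1)$, while any $x$-dependent $\vartheta_x$-term is excluded (it would fail to commute with $x$). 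This reproduces the leading term of $H_{\rm 4d}$ and reduces the entire ambiguity to a single scalar potential $R=R(x)$ that is invisible to all of the recursion relations.

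The one genuinely hard point, and the step I expect to be the main obstacle, is pinning down this residual $R(x)$ (and, logically prior to everything, confirming that $\SS\to1$ at zeroth order so that the ansatz is legitimate). The recursions are blind to $R$ precisely because each factor $A_i=1/\varphi(\cdots)$ diverges like $\exp(\mathrm{Li}_2(\cdot)/h)$ as $q\to1$, and these divergences cancel only through their non-commutation with the two $\Bor$'s, a genuine WKB/saddle-point effect rather than a term-by-term expansion. I would therefore fix $R$ by a single direct evaluation in which the $\Bor$'s act trivially, e.g.\ by applying $\SS$ to the constant function, so that $H\cdot1 = \frac{\Lambda-x}{1-\Lambda}m_1m_2+\frac{\Lambda}{x}\frac{x-1}{1-\Lambda}m_3m_4+R(x)$, and extracting the order-$h$ term from the $q\to1$ asymptotics of the resulting $\varphi$-products; equivalently one may specialize $\Lambda\to0$, where $\SS$ collapses to the simpler Toda-type operator. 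Showing that this evaluation returns no extra potential, i.e.\ $R\equiv0$, completes the identification $H=H_{\rm 4d}$.
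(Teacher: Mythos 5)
Your proposal is correct in substance and rests on exactly the same two inputs as the paper's proof, namely the functional relations \eqref{eq:S-cont} and \eqref{eq:S-cont-x}; the difference is organizational. The paper evaluates $\SS$ on the monomial basis: it first computes $v=\SS.1=1+hv_1+O\bigl(h^2\bigr)$ from the order-$h$ part of \eqref{eq:S-cont} (your recursions for $H$, restricted to the constant function), and then bootstraps to $\SS x^n=q^{n(n+1)}x^n\bigl\{\bigl(T_{q,d_1}T_{q,d_2}T^{-1}_{q,d_3}T^{-1}_{q,d_4}\bigr)^n\SS\bigr\}p^n.1$ by iterating \eqref{eq:S-cont-x}, which yields $1+h\bigl(n(n+1)+\frac{\Lambda-x}{1-\Lambda}(n+m_1)(n+m_2)+\frac{\Lambda}{x}\frac{x-1}{1-\Lambda}(n-m_3)(n-m_4)\bigr)$ and hence the operator with $\vartheta_x\mapsto n$. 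You instead stay at the operator level and correctly isolate what the relations do and do not determine: the $m$-dependent quadratics from \eqref{eq:S-cont}, the piece $\vartheta_x(\vartheta_x+1)$ from $Hx-x\widetilde H=2x(\vartheta_x+1)$, and a residual zeroth-order function. Your route cleanly separates the algebra from the one genuine evaluation; the paper's route makes the operator identification immediate once $v_1$ is known.

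Two cautions on the step you defer. First, the residual ambiguity is a function $R(x,\Lambda)$, not merely $R(x)$: the $\Lambda$-shifts in \eqref{eq:S-cont} are invisible at order $h$, so nothing in your constraints controls the $\Lambda$-dependence of $R$, and the $\Lambda\to0$ specialization you offer as a shortcut fixes $R$ only at $\Lambda=0$. You really do need the full evaluation of $\SS.1$, which is finite and explicit thanks to \eqref{lem-1} (each $\Bor$ acting on $\frac{1}{\varphi(\alpha x)\varphi(\beta\Lambda/x)}x^n$ produces a ratio of $\varphi$'s, so the $h$-expansion is a term-by-term computation on rational coefficients, with no WKB subtlety). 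Second --- and this applies equally to the paper's own determination of $v_1$ --- the difference relations in the $m_i$ fix $v_1$, equivalently $R$, only up to an $m$-independent summand; one must still check a single value, e.g.\ that $\SS.1=1+O\bigl(h^2\bigr)$ when all $m_i=0$, or verify the lowest series coefficients directly. Neither point is an obstruction, but they are precisely the content of the paper's opening lines ``$v_0=1$'' and ``hence $v_1=\dots$'', so they should be carried out rather than flagged as the expected main obstacle.
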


\begin{proof}
First, we consider the expansion
\[
v\seteq \SS.1=v_0+h v_1+O\bigl(h^2\bigr).
\]
Obviously $v_0=1$. For the first order term $v_1$, we have from \eqref{eq:S-cont}
\begin{gather*}
(1-\Lambda) \{({v_1}|_{m_1\to m_1-1})-v_1\}=m_2(x-\Lambda)v_0,\qquad m_1 \leftrightarrow m_2,\\
(1-\Lambda) \{({v_1}|_{m_3\to m_3-1})-v_1\}=\frac{\Lambda}{x}m_4(1-x)v_0, \qquad m_3 \leftrightarrow m_4,
\end{gather*}
hence we have
\[
v_1=\dfrac{\Lambda-x}{1-\Lambda}m_1m_2+
\dfrac{\Lambda}{x}\dfrac{x-1}{1-\Lambda}m_3m_4.
\]
Then from this and iterative use of \eqref{eq:S-cont-x}, we obtain
\begin{align*}
\SS x^n={}&q^{n(n+1)}x^n \big\{\bigl(T_{q,d_1}T_{q,d_2}T^{-1}_{q,d_3}T^{-1}_{q,d_4}\bigr)^n \SS\big\}p^n.1\\
={}&q^{n(n+1)} x^n \bigl(T_{q,d_1}T_{q,d_2}T^{-1}_{q,d_3}T^{-1}_{q,d_4}\bigr)^n v \\
={}&1+ hn(n+1)+h \left( \dfrac{\Lambda-x}{1-\Lambda}(n+m_1)(n+m_2)+
\dfrac{\Lambda}{x}\dfrac{x-1}{1-\Lambda}(n-m_3)(n-m_4)\right)\\
&+O\bigl(h^2\bigr).
\end{align*}
This is the desired result.
\end{proof}

In a similar way, we can compute higher order corrections for $v=\SS.1$ as
\[
v=\exp\left( \frac{h}{1-\Lambda} C_1+\frac{h^2}{(1-\Lambda)^2} C_2+\frac{h^3}{(1-\Lambda )^3}C_3+O\bigl(h^4\bigr)\right),\\
\]
where
\begin{align*}
C_1={}&m_1 m_2 (\Lambda -x)+\frac{\Lambda}{x}m_3 m_4 (x-1),\\
C_2={}&\frac{(1-x)}{2x} \bigg[\frac{\Lambda}{x} m_3 m_4 \left( x+\Lambda -(m_3+m_4) (x-\Lambda )\right)
\\
&-m_1 m_2 x \left( x+\Lambda +(m_1+m_2) (x-\Lambda )\right) \bigg],\\
C_3={}&(x-1) (x-\Lambda)\bigg[
m_1 m_2
\left(-\frac{\Lambda ^2+\Lambda +8 x^2+3 \Lambda x-x}{12 (x-\Lambda )}\right.
\\
&\left. +\frac{(m_1+m_2) \bigl(\Lambda ^2+\Lambda +\Lambda x+x-4 x^2\bigr)}{4 (x-\Lambda )}
+\frac{(m_1+m_2){}^2 (1+\Lambda -2x)}{6} \right.\\
&\left.-\frac{m_1 m_2 (1+\Lambda +4 x)}{12}
\right)
+m_3 m_4
\left(
\frac{\Lambda \bigl(\Lambda x^2+x^2+8 \Lambda ^2+3 \Lambda x-\Lambda ^2 x\bigr)}{12 x^3 (x-\Lambda )}\right.\\
&\left.
-\frac{\Lambda (m_3+m_4) \bigl(\Lambda x^2+x^2+\Lambda ^2 x+\Lambda x-4 \Lambda ^2\bigr)}{4 x^3 (x-\Lambda )}
+\frac{\Lambda (m_3+m_4){}^2 \bigl(1+\Lambda-2 \frac{\Lambda}{x}\bigr)}{6x^4}\right.\\
&\left.
-\frac{\Lambda m_3 m_4 \bigl(1+\Lambda+4 \frac{\Lambda}{x}\bigr)}{12 x^4}\right)
+\frac{\Lambda m_1 m_2 m_3 m_4}{x}\bigg].
\end{align*}

\begin{Remark}
The quantum Seiberg--Witten curve for four-dimensional $\mathcal{N}=2$ ${\rm SU}(2)$ gauge theory with $N_f=4$ is
equivalent to the quantum $P_{\rm VI}$ equation $\mathcal{D}^{4d}_{\rm SW} \Psi(\Lambda, x) =0$,
where $\mathcal{D}^{4d}_{\rm SW}$ can be written as \cite{AFKMY}
\[
\mathcal{D}^{4d}_{\rm SW}=(1-\Lambda)\Lambda \partial_{\Lambda}+D_{\rm Heun} ,
\]
and the Heun operator is
\[
D_{\rm Heun}=(1-\Lambda)(v-a_1)(v-a_2)-\frac{\Lambda}{x}(1-x)(v-\mu_1)(v-\mu_2)-(x-\Lambda)(v-\mu_3)(v-\mu_4), \\
\]
where $v=b x \partial_x$ and $b^2 = \epsilon_1/\epsilon_2$.
The result of Theorem \ref{th:4d} is consistent with the correspondence of the quantum $P_{\rm VI}$ equation
and ${\rm SU}(2)$ Seiberg--Witten theory with $N_f=4$ in four dimensions.
\end{Remark}

In five dimensions there are two mutually commuting Hamiltonians, one of which requires the infinite product to
generate the discrete time evolution, the other is related to the conserved quantities and takes a simple expression (like the
relativistic affine Toda Hamiltonian).
We have explicitly seen this is the case in the decoupling limit of the hypermultiplets (the pure Yang--Mills case).
In four-dimensional limit these two Hamiltonians degenerate to a single Hamiltonian, which is obtained from the Seiberg--Witten curve.

\section{Relation to affine Laumon space}\label{Laumon}

\subsection{Factorization as a coupled system}

We may use the same gauge transformation as \eqref{StoS} but exchanging $T_1$ and $T_2$ in the gauge factor.
By the dictionary in Section \ref{sec:dictinary}, we can see this is nothing
but the action of the Weyl reflection $r_1$ (see Appendix~\ref{App.A}).
By this $r_1$-reflected gauge transformation, we obtain the following Hamiltonian
from \eqref{eq:Shiraishi} by exchanging $T_1$ and $T_2$:
\begin{align*}
\widetilde{\mathcal{H}}:={} &
{1\over
\varphi\bigl(T_2 q^{1/2}t^{1/2} x\bigr)
\varphi\bigl(T_3 q^{1/2}t^{1/2} \Lambda x^{-1}\bigr) }
 \Bor \\
& \times
{\varphi(t T_2T_3 \Lambda) \varphi(q T_1T_4 \Lambda) \over
\varphi(-T_1T_2 x)\varphi\bigl(-Q^{-1} x\bigr)
\varphi\bigl(-T_3T_4 Q q t \Lambda x^{-1}\bigr)
\varphi\bigl(-q \Lambda x^{-1}\bigr)} \Bor \\
& \times T_{q t Q,x}^{-1}T_{t,\Lambda}^{-1}
{1\over
\varphi\bigl(T_1 q^{1/2}t^{1/2} x\bigr)
 \varphi\bigl(T_4 q^{1/2}t^{1/2} \Lambda x^{-1} \bigr)}.
\end{align*}
It turns out that this exchange of $T_1$ and $T_2$ is better
for the purpose of factorizing the original non-stationary difference equation.\footnote{The virtue of
the gauge transformation \eqref{StoS} is that it cancels $\varphi\bigl(q^{1/2}t^{1/2}T_2x\bigr)$
appearing in \eqref{2Painleve} so that the total gauge factor is written in terms the double infinite product $\Phi(x)$ only,
as we have seen in the beginning of Section \ref{4dlimit}.}

Our main point is that one can transform the non-stationary difference equation
$\widetilde{\mathcal{H}} \mathsf{V}^{(1)}=\mathsf{V}^{(1)}$
to the following coupled system:
\begin{align}\label{coupled1}
\mathsf{V}^{(1)}={}&
{\Phi\bigl(qt^{-1} b_2/b_4\bigr)\Phi(b_1/b_3)
\over \Phi(t b_6/b_8)\Phi(q b_5/b_7) }
{1\over \varphi(- q b_6/G)\varphi(- G/b_8)}
 \widetilde{\Bor} \nonumber \\
& \times
{1\over \varphi\bigl(\mathsf{p}^{-2}q b_2/G\bigr)\varphi\bigl(\mathsf{p}^{-2} G/b_4\bigr)}
\bigl(T_{t^{1/2},x}^{-1}T_{t,\Lambda}^{-1}\bigr)\mathsf{V}^{(2)},\\
\label{coupled2}
\mathsf{V}^{(2)}={}&
{\Phi(t b_6/b_8)\Phi(qb_5/b_7)\over \Phi(q b_2/b_4)\Phi(t b_1/b_3) }
{1\over \varphi(q b_1/G)\varphi(G/b_3)}
 \widetilde{\Bor} \nonumber \\
& \times
{1\over \varphi(-q b_5/G)\varphi(- G/b_7)}\mathsf{V}^{(1)},
\end{align}
where $\widetilde{\Bor} := T_{(q t^{1/2}Q)^{1/2},x}^{-1}\Bor$.
Note that by making use of the equality
\begin{align*}
{\Phi\bigl(qt^{-1} b_2/b_4\bigr)\Phi(b_1/b_3)
\over \Phi(t b_6/b_8)\Phi(q b_5/b_7) } T_{t, \Lambda}^{-1}
{\Phi(t b_6/b_8)\Phi(qb_5/b_7)\over \Phi(q b_2/b_4)\Phi(t b_1/b_3) }&= \varphi(b_6/b_8) \varphi\bigl(qt^{-1}b_5/b_7\bigr) T_{t, \Lambda}^{-1} \\
&= \varphi(tT_2T_3\Lambda) \varphi(q T_1T_4 \Lambda) T_{t, \Lambda}^{-1},
\end{align*}
we have called back the double infinite product $\Phi(z)$, which we once eliminated
to reveal the relation to the $qq$-Painlev\'e VI equation, to factorize the original equation as a coupled system.
The possibility of such a factorization was already suggested in the proof of Conjecture \ref{conjecture}
in the special case of the ``Macdonald'' limit \cite{Shakirov:2021krl}.
We believe this factorization is a significant step towards a general proof of Conjecture \ref{conjecture}.
By the dictionary \eqref{dictionary} in Section \ref{sec:dictinary} and Definition \ref{b-variables} of the variables $b_i$,
it is straightforward to check the matching of parameters $\bigl(qt^{-1}b_2/b_4, b_1/b_3, t b_6/b_8,q b_5/b_7\bigr)$
to the parameters $\bigl(qt T_1 T_2 \Lambda, t^2 T_3 T_4 \Lambda, t^2 T_2 T_3 \Lambda, qt T_1 T_4 \Lambda\bigr)$
on the gauge theory side.
Note that the parameters $b_i$ involve neither $x$ nor $\Lambda$.
On the other hand, $G$ is a~monomial in $x$ and $\Lambda$. But it is also easy to see the matching of
$(-qb_6/\!G, -G\!/b_8, -q b_5/\!G, -G\!/b_7)$ to $\bigl(q^{1/2} t^{1/2} T_2 x, q^{1/2} t^{1/2} T_3 \Lambda/x,
q^{1/2} t^{1/2} T_1 x, q^{1/2} t^{1/2} T_4 \Lambda/x\bigr)$.
For remaining parameters we have to take the commutation with the shift operators into account.
Namely the parameters $\mathsf{p}^{-2} q b_2/G$ and $\mathsf{p}^{-2} G/b_4$ are affected by $T_{(qt^{1/2}Q)^{1/2},x}^{-1}$
and for $q b_1/G$ and $G/b_3$ the action of $T_{t^{1/2},x}^{-1}T_{t,\Lambda}^{-1}$ is also involved.

To rewrite the coupled system \eqref{coupled1} and \eqref{coupled2} to more symmetric one
in $\mathsf{V}^{(1)}$ and $\mathsf{V}^{(2)}$, let us introduce
the following shift operator $\widetilde{T}_{\mathsf{p},b}$ which commutes with $\widetilde{\Bor}$:
\begin{align}\label{widetildeT}
\widetilde{T}_{\mathsf{p},b} :={}& T_{-t^{-1/4},x}T_{t^{-1/2},\Lambda} \colon\ \nonumber \\
&b\mapsto \!\bigl( -\mathsf{p}^{-1}b_1, - \mathsf{p}^{-1}b_2, - \mathsf{p}b_3, -\mathsf{p}b_4,
- \mathsf{p}^{-1}b_5, -\mathsf{p}^{-1}b_6, - \mathsf{p}b_7, -\mathsf{p}b_8\!\bigr),
\end{align}
where $\mathsf{p} = {\rm e}^{\delta} = a_0 a_1 a_2^2 a_3^2 a_4 a_5 = t^{1/4}$.
Note that $G$ is invariant under $\widetilde{T}_{\mathsf{p},b}$. For the transformation~\eqref{widetildeT} of the parameters $b_i$,
the shift operator $T_{t^{-1/2},\Lambda}$ is enough, but to make $G$ invariant up to sign we need the combination with $T_{-t^{-1/4},x}$.
We also note that \smash{$\bigl(\widetilde{T}_{\mathsf{p},b}\bigr)^2 = T$} (the discrete time evolution) as far as the $b$ variables are concerned
(see Lemma \ref{lemA8}). Then we have
\begin{Proposition}\label{prop:coupled}
We can rewrite \eqref{coupled1} and \eqref{coupled2} as follows:
\begin{align}
\label{firsteq}
&\mathsf{V}^{(1)}=
{\Phi\bigl(q t^{-1} b_2/b_4\bigr)\Phi(b_1/b_3)
\over \Phi( t b_6/b_8)\Phi(q b_5/b_7) }
{1\over \varphi(- q b_6/G)\varphi(- G/b_8)}
 \nonumber \\
&\hphantom{\mathsf{V}^{(1)}=}{} \times (\widetilde{\Bor} \widetilde{T}_{\mathsf{p},b} )
{1\over \varphi\bigl(-\mathsf{p}^{-1} q b_2/G\bigr)\varphi\bigl(-\mathsf{p}^{-1} G/b_4\bigr)}
\widetilde{T}_{\mathsf{p},b} \mathsf{V}^{(2)},\\
\label{secondeq}
&\widetilde{T}_{\mathsf{p},b} \mathsf{V}^{(2)}=
{
\Phi\bigl(\mathsf{p}^{-2} t b_6/b_8\bigr)\Phi\bigl(\mathsf{p}^{-2}q b_5/b_7\bigr)\over
 \Phi\bigl(\mathsf{p}^{-2} q b_2/b_4\bigr)\Phi\bigl( \mathsf{p}^{-2} t b_1/b_3\bigr)
}
{1\over \varphi\bigl(-\mathsf{p}^{-1}q b_1/G\bigr)\varphi\bigl(-\mathsf{p}^{-1}G/b_3\bigr)} \nonumber \\
&\hphantom{\widetilde{T}_{\mathsf{p},b} \mathsf{V}^{(2)}=}{}\times \bigl(\widetilde{\Bor} \widetilde{T}_{\mathsf{p},b} \bigr)
{1\over \varphi(-q b_5/G)\varphi(-G/b_7)}\mathsf{V}^{(1)}.
\end{align}
The coupled system is gauge equivalent to the non-stationary difference equation \eqref{qq-PVI}.
\end{Proposition}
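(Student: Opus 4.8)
The plan is to read the coupled system as an exact factorization of the single non-stationary equation $\widetilde{\mathcal{H}}\,\mathsf{V}^{(1)}=\mathsf{V}^{(1)}$ and to deduce gauge equivalence with \eqref{qq-PVI} from the chain of gauge transformations already established. Writing the right-hand sides of \eqref{coupled1} and \eqref{coupled2} as operators $\mathcal{O}_1$ (sending $\mathsf{V}^{(2)}$ to $\mathsf{V}^{(1)}$) and $\mathcal{O}_2$ (sending $\mathsf{V}^{(1)}$ to $\mathsf{V}^{(2)}$), the coupled system is equivalent to $\mathcal{O}_1\mathcal{O}_2\,\mathsf{V}^{(1)}=\mathsf{V}^{(1)}$, so the first claim reduces to the operator identity $\mathcal{O}_1\mathcal{O}_2=\widetilde{\mathcal{H}}$. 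Since $\widetilde{\mathcal{H}}$ is the $r_1$-reflected ($T_1\leftrightarrow T_2$) gauge transform of \eqref{eq:Shiraishi}, and \eqref{eq:Shiraishi} is gauge equivalent to \eqref{SH} via \eqref{StoS}, hence to \eqref{qq-PVI} via \eqref{2Painleve} and the dictionary of Section~\ref{sec:dictinary}, proving $\mathcal{O}_1\mathcal{O}_2=\widetilde{\mathcal{H}}$ settles the last sentence of the proposition.

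To verify $\mathcal{O}_1\mathcal{O}_2=\widetilde{\mathcal{H}}$ I would argue as follows. The double infinite products in $\mathcal{O}_1$ and $\mathcal{O}_2$ are functions of the root variables only, so they commute with $\widetilde{\Bor}$ and with the $G$-dependent $\varphi$-factors, all of which are multiplication operators; moving the leading prefactor of $\mathcal{O}_1$ rightward until it sits just left of $T_{t,\Lambda}^{-1}$, it meets the prefactor of $\mathcal{O}_2$, and the two collapse through the identity
\[
\frac{\Phi(qt^{-1}b_2/b_4)\Phi(b_1/b_3)}{\Phi(tb_6/b_8)\Phi(qb_5/b_7)}\,T_{t,\Lambda}^{-1}\,\frac{\Phi(tb_6/b_8)\Phi(qb_5/b_7)}{\Phi(qb_2/b_4)\Phi(tb_1/b_3)}=\varphi(tT_2T_3\Lambda)\varphi(qT_1T_4\Lambda)\,T_{t,\Lambda}^{-1},
\]
a consequence of \eqref{t-difference}, which regenerates exactly the $\varphi$-numerator of the middle factor of $\widetilde{\mathcal{H}}$. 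It then remains to recombine the two copies of $\widetilde{\Bor}=T_{(qt^{1/2}Q)^{-1/2},x}\,\Bor$: commuting the $x$-shifts outward rebuilds the sandwich $\Bor\,(\,\cdot\,)\,\Bor$ of $\widetilde{\mathcal{H}}$ and the residual $T_{qtQ,x}^{-1}T_{t,\Lambda}^{-1}$, while the $G$-dependent factors $\varphi(-qb_6/G)$, $\varphi(-G/b_8)$, $\varphi(\mathsf{p}^{-2}qb_2/G)$, \dots\ match the $\varphi$-denominators of $\widetilde{\mathcal{H}}$ under the matchings $(-qb_6/G,-G/b_8,-qb_5/G,-G/b_7)\leftrightarrow(q^{1/2}t^{1/2}T_2x,q^{1/2}t^{1/2}T_3\Lambda/x,q^{1/2}t^{1/2}T_1x,q^{1/2}t^{1/2}T_4\Lambda/x)$ recorded in the text. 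The $\mathsf{p}^{-2}$ shifts on the $b_1,\dots,b_4$-arguments are precisely the imprint of one $T_{(qt^{1/2}Q)^{-1/2},x}$ having been commuted past them.

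For the rewriting into \eqref{firsteq}--\eqref{secondeq} I would use the three properties of $\widetilde{T}_{\mathsf{p},b}$ recorded around \eqref{widetildeT}: $\bigl(\widetilde{T}_{\mathsf{p},b}\bigr)^2=T_{t^{1/2},x}^{-1}T_{t,\Lambda}^{-1}$ on functions of $(x,\Lambda)$ and $\bigl(\widetilde{T}_{\mathsf{p},b}\bigr)^2=T$ on the $b$-variables (Lemma~\ref{lemA8}); the commutativity $\widetilde{T}_{\mathsf{p},b}\widetilde{\Bor}=\widetilde{\Bor}\widetilde{T}_{\mathsf{p},b}$; and the rescaling of the $b_i$ in \eqref{widetildeT} together with the invariance of $G$ up to sign. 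In \eqref{coupled1} I would replace the trailing $T_{t^{1/2},x}^{-1}T_{t,\Lambda}^{-1}$ by $\bigl(\widetilde{T}_{\mathsf{p},b}\bigr)^2$, pull one factor $\widetilde{T}_{\mathsf{p},b}$ leftward past $\widetilde{\Bor}$, and conjugate the inner $\varphi$-factor by it; this carries $\varphi(\mathsf{p}^{-2}qb_2/G)^{-1}\varphi(\mathsf{p}^{-2}G/b_4)^{-1}$ to $\varphi(-\mathsf{p}^{-1}qb_2/G)^{-1}\varphi(-\mathsf{p}^{-1}G/b_4)^{-1}$ and produces \eqref{firsteq}. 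For \eqref{secondeq} I would apply $\widetilde{T}_{\mathsf{p},b}$ to \eqref{coupled2} from the left: it rescales each ratio in the $\Phi$-prefactor by $\mathsf{p}^{-2}$, carries $\varphi(qb_1/G)^{-1}\varphi(G/b_3)^{-1}$ to $\varphi(-\mathsf{p}^{-1}qb_1/G)^{-1}\varphi(-\mathsf{p}^{-1}G/b_3)^{-1}$, and the residual $\widetilde{T}_{\mathsf{p},b}$ is absorbed into $\widetilde{\Bor}$.

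The hard part will be the bookkeeping of the two shift operators $T_{(qt^{1/2}Q)^{-1/2},x}$ (hidden in $\widetilde{\Bor}$) and $\widetilde{T}_{\mathsf{p},b}$ as they are commuted through the $\varphi$- and $\Phi$-factors: one must keep straight the asymmetric $\mathsf{p}^{\pm1}$-shifts that distinguish the $b_1,\dots,b_4$-arguments from the $b_5,\dots,b_8$-arguments, and, above all, track how the sign in $G\mapsto -G$ (invariance only up to sign, arranged by the factor $T_{-t^{-1/4},x}$ in \eqref{widetildeT}) combines with the signs $-\mathsf{p}^{\mp1}$ in the $b$-rescaling, so that the argument of every $\varphi$ acquires exactly the sign and power displayed in \eqref{firsteq}--\eqref{secondeq}. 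Once these commutations are organized, the remaining ingredients, namely the telescoping of $\Phi$ via \eqref{t-difference} and the parameter matchings of the dictionary, are routine.
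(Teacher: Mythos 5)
Your proposal is correct and follows essentially the same route as the paper: the paper's own justification consists precisely of the telescoping identity for the $\Phi$-prefactors under $T_{t,\Lambda}^{-1}$ (displayed just before the proposition), the parameter matchings from the dictionary identifying the $\varphi$-factors with those of $\widetilde{\mathcal{H}}$, and the properties of $\widetilde{T}_{\mathsf{p},b}$ (commutation with $\widetilde{\Bor}$, the square relation, and the invariance of $G$) to pass from \eqref{coupled1}--\eqref{coupled2} to \eqref{firsteq}--\eqref{secondeq}. The bookkeeping you flag as the hard part is likewise left implicit in the paper, and your sign and $\mathsf{p}^{\pm 1}$ tracking is consistent with the stated formulas.
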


Recall that the time evolution $T$ of the discrete Painlev\'e VI equation is
a translation element in the extended affine Weyl group of $D_5^{(1)}$.
It is remarkable that $T$ allows a square root
\begin{equation}\label{Eq.(6.8)}
T :=r_2 r_1 r_0r_2 \sigma_{01}r_3r_4r_5r_3\sigma_{45}
= (r_2 r_1 r_0r_2 \sigma_{01} \tau)( r_2 r_1 r_0r_2 \sigma_{01} \tau).
\end{equation}
In fact, the factorization into \eqref{firsteq} and \eqref{secondeq} is not unrelated to
the existence of the square root $T^{1/2} := (r_2 r_1 r_0r_2 \sigma_{01} \tau)$, which
acts on the $b$ variables as follows:
\begin{equation}\label{squareroot}
T^{1/2}=(r_2 r_1 r_0r_2 \sigma_{01} \tau)\colon\ b\mapsto
\bigl(b_6,b_5,b_8,b_7,\mathsf{p}^{-2}b_2, \mathsf{p}^{-2}b_1, \mathsf{p}^{2}b_4, \mathsf{p}^{2}b_3\bigr).
\end{equation}
We define an operator
\[
X\colon\ b\mapsto
\bigl(\mathsf{p}b_6,\mathsf{p} b_5,\mathsf{p}^{-1}b_8, \mathsf{p}^{-1}b_7,
\mathsf{p}^{-1}b_2, \mathsf{p}^{-1}b_1, \mathsf{p}b_4, \mathsf{p}b_3\bigr),
\]
and assume that $X$ does not act on $(F,G)$.
Then we have

\begin{Lemma}
The action \eqref{squareroot} on the parameters $b_i$ is represented by
\[
T^{1/2} = (-1) X \widetilde{T}_{\mathsf{p},b},
\]
where $(-1)$ is the overall sign flip of $b_i$.
Two functions $\mathsf{V}^{(1)}$ and $\mathsf{V}^{(2)}$ are related by
\begin{equation}\label{F1F2}
X \mathsf{V}^{(1)} = \widetilde{T}_{\mathsf{p},b} \mathsf{V}^{(2)}.
\end{equation}
\end{Lemma}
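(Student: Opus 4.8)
My plan is to prove the two assertions separately: the parameter identity $T^{1/2}=(-1)X\widetilde{T}_{\mathsf{p},b}$ on the $b$-variables is a short combinatorial check, while the functional relation $X\mathsf{V}^{(1)}=\widetilde{T}_{\mathsf{p},b}\mathsf{V}^{(2)}$ will follow from a symmetry of the coupled system \eqref{firsteq}--\eqref{secondeq} combined with the uniqueness of its solution.

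For the first assertion I would simply compose the three prescribed actions on $b=(b_1,\dots,b_8)$ in the order $\widetilde{T}_{\mathsf{p},b}$, then $X$, then the overall sign flip, and read off the resulting eight-tuple. For example the first slot gives $(-1)\cdot\mathsf{p}\cdot(-\mathsf{p}^{-1}b_6)=b_6$ and the fifth slot $(-1)\cdot\mathsf{p}^{-1}\cdot(-\mathsf{p}^{-1}b_2)=\mathsf{p}^{-2}b_2$, and all eight entries reproduce the right-hand side of \eqref{squareroot}. This step is routine.

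The substance is the second assertion. I first record two structural facts. (i) $X$ is an involution: it interchanges the index pairs $(1,6),(2,5),(3,8),(4,7)$ with reciprocal scalings, so $X^2=\mathrm{id}$; since $X$ acts only on the $b$-variables it commutes with $\widetilde{\Bor}$, and since the $\mathsf{p}$-grading carried by $\widetilde{T}_{\mathsf{p},b}$ from \eqref{widetildeT} is constant along each of these four pairs, $X$ also commutes with $\widetilde{T}_{\mathsf{p},b}$, hence with the whole block $\widetilde{\Bor}\widetilde{T}_{\mathsf{p},b}$. (ii) Write \eqref{firsteq}--\eqref{secondeq} as $\mathsf{V}^{(1)}=\mathcal{O}_1 W$ and $W=\mathcal{O}_2\mathsf{V}^{(1)}$, where $W:=\widetilde{T}_{\mathsf{p},b}\mathsf{V}^{(2)}$ and $\mathcal{O}_1,\mathcal{O}_2$ are the operators on the right-hand sides. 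Conjugating $\mathcal{O}_1$ by $X$, the block $\widetilde{\Bor}\widetilde{T}_{\mathsf{p},b}$ is untouched by fact (i), so only the scalar $\Phi$- and $\varphi$-prefactors are affected, by $b$-substitution. Checking each factor using $\mathsf{p}=t^{1/4}$ --- for instance $X$ sends $-\mathsf{p}^{-1}qb_2/G\mapsto -qb_5/G$ and $qt^{-1}b_2/b_4\mapsto \mathsf{p}^{-2}qb_5/b_7$ --- I expect to find exactly $X\mathcal{O}_1X^{-1}=\mathcal{O}_2$, and consequently $X\mathcal{O}_2X^{-1}=\mathcal{O}_1$.

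With these facts the relation follows by uniqueness. Applying $X$ to the two equations and using $X\mathcal{O}_1X^{-1}=\mathcal{O}_2$, $X\mathcal{O}_2X^{-1}=\mathcal{O}_1$ together with $X^2=\mathrm{id}$, the pair $(XW,\,X\mathsf{V}^{(1)})$ satisfies $XW=\mathcal{O}_1(X\mathsf{V}^{(1)})$ and $X\mathsf{V}^{(1)}=\mathcal{O}_2(XW)$; that is, it is again a solution of the coupled system in the roles of $(\mathsf{V}^{(1)},W)$. Since Proposition \ref{prop:coupled} identifies this system with \eqref{qq-PVI}, whose formal double series solution is unique up to a normalization, the two solution pairs coincide, giving $X\mathsf{V}^{(1)}=W=\widetilde{T}_{\mathsf{p},b}\mathsf{V}^{(2)}$. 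I expect the main obstacle to be not any single computation but the bookkeeping that makes the uniqueness step rigorous: one must fix the normalizations of $\mathsf{V}^{(1)}$ and $\mathsf{V}^{(2)}$ so that the leading coefficients in the $x,\Lambda/x$ expansion agree after the $b$-substitution by $X$, and one must keep the substitution-versus-composition conventions for $X$ and $\widetilde{T}_{\mathsf{p},b}$ consistent between the two assertions, so that the overall sign flip present in $T^{1/2}$ but absent from the functional relation is correctly absorbed into the explicit minus signs inside the $\varphi$-arguments.
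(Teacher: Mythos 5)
Your proposal is correct, and it follows the route the paper itself intends: the paper states this Lemma without any displayed proof, its only justification being the sentence immediately after it that ``the second equation \eqref{secondeq} is obtained by applying $X$ to \eqref{firsteq}'' --- which is precisely your conjugation identity $X\mathcal{O}_1X^{-1}=\mathcal{O}_2$. Your two computations check out (e.g., $qt^{-1}b_2/b_4\mapsto q\mathsf{p}^2t^{-1}b_5/b_7=\mathsf{p}^{-2}qb_5/b_7$ using $t=\mathsf{p}^4$, and $X$ permutes indices only within the pairs $(1,6)$, $(2,5)$, $(3,8)$, $(4,7)$ on which $\widetilde{T}_{\mathsf{p},b}$ acts by a common scaling, so $X$ commutes with $\widetilde{\Bor}\widetilde{T}_{\mathsf{p},b}$), and your explicit appeal to the uniqueness of the formal double-series solution, with the normalization caveat you flag, supplies the one step the paper leaves entirely implicit.
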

Hence, the coupled system in the form of Proposition~\ref{prop:coupled} is quite natural from the viewpoint of
the $qq$-Painlev\'e VI equations, because the second equation \eqref{secondeq} is obtained by applying $X$ to \eqref{firsteq}.


\subsection{Instanton counting with a surface defect}

Now we want to point out that solutions to the coupled system \eqref{firsteq} and \eqref{secondeq} are given by the instanton partition function
of the affine Laumon space. In fact, we have already mentioned that the gauge transformation introduced in the beginning of Section \ref{4dlimit}
is a five-dimensional uplift of the gauge transformation from the Higgsed quiver gauge theory to the gauge theory with a surface defect.
A torus action on the affine Laumon space is induced by the standard torus action on $\mathbb{P}^1 \times \mathbb{P}^1$.
The fixed points of the torus action on the affine Laumon space of type~\smash{$A_{r}^{(1)}$} are labelled by $(r+1)$-tuples of partitions \cite{FFNR}.

\begin{Definition}
Set
\begin{align*}
[u;q]_n&=u^{-n/2}q^{-n(n-1)/4} (u;q)_n\\
&= \bigl(u^{-1/2}-u^{1/2}\bigr)\bigl(q^{-1/2}u^{-1/2}-q^{1/2}u^{1/2}\bigr)\cdots
\bigl(q^{-(n-1)/2}u^{-1/2}-q^{(n-1)/2}u^{1/2}\bigr).
\end{align*}
\end{Definition}

For a pair $(\lambda, \mu)$ of partitions, the $\mathbb{Z}_N$ orbifolded Nekrasov factor
with color $k$ is\footnote{We associate a $\sinh$ factor
with each monomial term in the equivariant character.}
\begin{align*}
\Nk^{(k|N)}_{\lambda,\mu}(u|q,\kappa)=
\Nk^{(k)}_{\lambda,\mu}(u|q,\kappa)={}&
 \prod_{j\geq i\geq 1 \atop j-i \equiv k ({\rm mod}\,N)}
\big[u q^{-\mu_i+\lambda_{j+1}} \kappa^{-i+j};q\big]_{\lambda_j-\lambda_{j+1}}\\
&\times
\prod_{\beta\geq \alpha \geq 1 \atop \beta-\alpha \equiv -k-1 ({\rm mod}\,N)}
\big[u q^{\lambda_{\alpha}-\mu_\beta} \kappa^{\alpha-\beta-1};q\big]_{\mu_{\beta}-\mu_{\beta+1}}. \nonumber
\end{align*}
Note that the equivariant parameters of the torus action on $\mathbb{P}^1 \times \mathbb{P}^1$ are
not $(q,t)$, but $(q, \kappa)$. We will substitute $\kappa= t^{-\frac{1}{2}}$ later.\footnote{The
square root comes from the $\mathbb{Z}_2$ orbifolding.}

From the equivariant character evaluated at each fixed point of the affine Laumon space of type $A_{1}^{(1)}$ \cite{FFNR}, we obtain
\begin{align}\label{Nf4}
&f(u_1,u_2;v_1,v_2;w_1,w_2|x_1,x_2|q,\kappa)
=f\left(  \begin{matrix}u_1,u_2 \\v_1,v_2\\w_1,w_2\end{matrix} \,\bigg|\,
x_1,x_2 \,\bigg|\,q,\kappa \right) \nonumber \\
&\quad=
\sum_{\lambda^{(1)},\lambda^{(2)}\in \mathsf{P}}
\prod_{i,j=1}^2
{\mathsf{N}^{(j-i|2)}_{\varnothing,\lambda^{(j)}} (u_i/v_j |q,\kappa)
\mathsf{N}^{(j-i|2)}_{\lambda^{(i)},\varnothing} (v_i/w_j|q,\kappa) \over
\mathsf{N}^{(j-i|2)}_{\lambda^{(i)},\lambda^{(j)}} (v_i/v_j |q,\kappa)}
 x_1^{|\lambda^{(1)}|_o+|\lambda^{(2)}|_e} x_2^{|\lambda^{(1)}|_e+|\lambda^{(2)}|_o},
\end{align}
where $\varnothing$ denotes the empty partition and for a partition $\lambda = (\lambda_1 \geq \lambda_2 \geq \cdots)$,
we set
\[
|\lambda|_o \seteq \sum_{k\geq 1} \lambda_{2k-1},
\qquad
|\lambda|_e \seteq \sum_{k\geq 1} \lambda_{2k}.
\]
Note that the function $f(u_1,u_2;v_1,v_2;w_1,w_2|x_1,x_2|q,\kappa)$ is invariant under
the overall scaling of the equivariant parameters $(u_1,u_2;v_1,v_2;w_1,w_2)$.
The partition function \eqref{Nf4} is a five-dimensional uplift of the instanton partition function
which is given, for example, in \cite{AFKMY}.
The parameters $(v_1, v_2)$ are the Coulomb moduli of ${\rm U}(2)$ gauge theory, or the equivariant
parameters of the Cartan subgroup ${\rm U}(1) \times {\rm U}(1) \subset {\rm U}(2)$.
The parameters $(u_1, u_2)$ and $(w_1, w_2)$ are exponentiated mass parameters
of the hypermultiplets in the fundamental and the anti-fundamental representations.
They are also regarded as equivariant parameters for the flavor symmetry.
The expansion parameters are parametrized as $x_1=x$, $x_2=\Lambda/x$,
where $x$ counts the monopole number (the first Chern number of the ${\rm U}(1)$ connection
on the defect), while $\Lambda$ counts the instanton number (the second Chern number).
When $\Lambda=0$, the terms with \smash{$\big|\lambda^{(1)}\big|_e+\big|\lambda^{(2)}\big|_o \neq 0$} do not contribute
to the partition function. This means the sum in \eqref{Nf4} is restricted to \smash{$\lambda^{(1)}=(m)$}
(a partition with a single row) and \smash{$\lambda^{(2)}=\varnothing$}. Physically this corresponds to
the topological sector with instanton number zero. As we see in Appendix~\ref{App.B},
the partition function is given by the Heine's $q$-hypergeometric series.
The six parameters $(u_1, u_2)$, $(w_1, w_2)$ and $(x_1, x_2)$ are ``external''
spectral parameters which correspond to the (independent) dynamical variables on Painlev\'e side.
On the other hand, the parameters $(v_1,v_2)$ are ``internal'' parameters
or the loop parameters.

The function \eqref{Nf4} should be compared with
the non-stationary Ruijsenaars function \cite{Shi}:
\begin{align}\label{nsR}
f^{\widehat{\mathfrak{ gl}}_N}(x,p|s,\kappa|q,t)={}&
\sum_{\lambda^{(1)},\ldots,\lambda^{(N)}\in {\mathsf P}}
\prod_{i,j=1}^N
{\Nk^{(j-i|N)}_{\lambda^{(i)},\lambda^{(j)}} (ts_j/s_i|q,\kappa) \over \Nk^{(j-i|N)}_{\lambda^{(i)},\lambda^{(j)}} (s_j/s_i|q,\kappa)}\nonumber\\
&\times \prod_{\beta=1}^N\prod_{\alpha\geq 1} ( p x_{\alpha+\beta}/x_{\alpha+\beta-1})^{\lambda^{(\beta)}_\alpha},
\end{align}
for $N=2$. Both functions come from the affine Laumon space of type $A_{1}^{(1)}$.
The non-stationary Ruijsenaars function \eqref{nsR} corresponds to the theory with an adjoint matter,
or the tangent bundle overt the affine Laumon space, while the partition function \eqref{Nf4}
is for the theory with four matter hypermultiplets in the (anti-)fundamental representation,
or the tautological bundle.
In the AGT correspondence, the former is identified with
the conformal block on a punctured torus and the latter on $\mathbb{P}^1$ with four punctures.
In the mass decoupling limit, both functions are conjectured to give solutions to the non-stationary affine Toda equation \cite{Shi}.

\begin{con}\label{secondconjecture}
The partition function \eqref{Nf4} gives a solution to the coupled system \eqref{firsteq} and~\eqref{secondeq}
by the following specialization of parameters:
\begin{align}
&\mathcal{F}^{(1)}=
f\left(  \begin{matrix}
q^{1/2}b_4/b_8,&q^{1/2} b_6/b_2\\
\bigl(\mathsf{p}^2Q\bigr)^{-1/2},&\bigl(\mathsf{p}^2Q\bigr)^{1/2}\\
q^{-1/2} b_2/b_5,&q^{-1/2}b_7/b_4\end{matrix} \,\Bigg|\,
q^{1/2} \mathsf{p}^{-1}\mathsf{t} G^{-1}, q^{-1/2} \mathsf{p}^{-1}\mathsf{t} G\,\Bigg|\,q,t^{-1/2}\right), \label{special1}\\
&\mathcal{F}^{(2)}=
f\left(  \begin{matrix}
q^{1/2}b_4/b_8,&q^{1/2} b_6/b_2\\
\bigl(\mathsf{p}^2Q\bigr)^{-1/2},&\bigl(\mathsf{p}^2Q\bigr)^{1/2}\\
q^{-1/2} b_1/b_6,&q^{-1/2}b_8/b_3\end{matrix} \,\Bigg|\,
- q^{1/2} \mathsf{t} G^{-1},- q^{-1/2} \mathsf{t} G\,\Bigg|\,q,t^{-1/2}\right). \label{special2}
\end{align}
\end{con}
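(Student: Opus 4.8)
The plan is to exploit the $\tau$-symmetry of the coupled system to reduce it to a single first-order difference equation, and then to verify that equation by expanding both sides as formal double series in $(x_1,x_2)=(x,\Lambda/x)$ and reducing to a bilinear identity for the orbifolded Nekrasov factors of the affine Laumon space \eqref{Nf4}. The underlying reason this should work is the factorization \eqref{Eq.(6.8)} of the time-evolution element $T$ into two pieces exchanged by the diagram automorphism $\tau$, which is exactly what makes $\mathcal{F}^{(1)}$ and $\mathcal{F}^{(2)}$ two specializations of one and the same function $f$.

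First I would establish that the two specializations \eqref{special1} and \eqref{special2} are related precisely by the operator $X$ together with the shift $\widetilde{T}_{\mathsf{p},b}$, i.e.\ that $X\mathcal{F}^{(1)}=\widetilde{T}_{\mathsf{p},b}\mathcal{F}^{(2)}$ holds as an identity of the functions defined by \eqref{Nf4}. This is a matching of the $(u,v,w\mid x_1,x_2)$ arguments: the top rows coincide, and one tracks how $X\colon b\mapsto(\mathsf{p}b_6,\mathsf{p}b_5,\dots)$ together with the argument shift produced by $\widetilde{T}_{\mathsf{p},b}$ transforms the remaining $w$-arguments and the $x$-arguments of $f$, checking that both sides carry identical data. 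Granting this together with \eqref{F1F2} and the relation $T^{1/2}=(-1)X\widetilde{T}_{\mathsf{p},b}$, the second equation \eqref{secondeq} is exactly the image of the first equation \eqref{firsteq} under $X$; hence it suffices to prove \eqref{firsteq} for the pair $\bigl(\mathcal{F}^{(1)},\mathcal{F}^{(2)}\bigr)$.

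Next I would expand \eqref{firsteq} order by order in the double series $\sum_{m,n}c_{m,n}x_1^mx_2^n$. The left side is $\mathcal{F}^{(1)}$ directly, while on the right each factor acts explicitly: the prefactors built from $\varphi$ and $\Phi$ expand through their plethystic forms as power series in $x$ and $\Lambda/x$; the shift $\widetilde{T}_{\mathsf{p},b}$ acts on $\widetilde{T}_{\mathsf{p},b}\mathcal{F}^{(2)}$ as a shift of the spectral parameters inside $f$ (with $G$ left invariant, by construction of \eqref{widetildeT}); and the $q$-Borel transform $\widetilde{\Bor}$ acts diagonally, multiplying $x_1^mx_2^n$ by a Gaussian weight $q^{(m-n)(m-n+1)/2}$, up to the $x$-rescaling in its definition, determined by the net $x$-degree $m-n$. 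Collecting coefficients of $x_1^mx_2^n$ converts \eqref{firsteq} into a bilinear recursion relating the coefficients of $\mathcal{F}^{(1)}$ to those of $\mathcal{F}^{(2)}$, that is, into a non-stationary $q$-difference identity satisfied by the affine Laumon partition function.

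The main obstacle is the proof of this last identity, which is the genuine content of the conjecture: it is a Gaussian-twisted $q$-difference equation of $q$-KZ type for the $A_1^{(1)}$ affine Laumon partition function. I expect two viable routes. The combinatorial route verifies the identity directly from the definition of the orbifolded Nekrasov factors $\mathsf{N}^{(k\mid2)}_{\lambda,\mu}$, reducing their pairwise ratios under the parameter shifts to a $q$-hypergeometric summation; the factorization of the boundary coefficients and the reduction of the $\Lambda=0$ sector to the Heine ${}_2\phi_1$ series (noted after \eqref{Nf4}) suggest that a Heine-type transformation is the key analytic input. The structural route realizes $\widetilde{\Bor}$ and the shift operators as intertwiners for the quantum toroidal action on the $K$-group of the affine Laumon space \cite{Negut:2011aa}, so that the coupled system becomes the factorization \eqref{Eq.(6.8)} of $T$ into two first-order factors exchanged by $\tau$; there the nontrivial step is matching the normalization of the intertwiners with the explicit $\varphi,\Phi$ prefactors. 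Either way, controlling the orbifolded Nekrasov factors under the simultaneous action of $\widetilde{\Bor}$ and $\widetilde{T}_{\mathsf{p},b}$ is the crux, and once the correct normalization is fixed, the uniqueness of the formal series solution (as in the existence and uniqueness argument for \eqref{qq-PVI}, to which the coupled system is gauge equivalent) can be invoked to reduce the verification to finitely many leading orders.
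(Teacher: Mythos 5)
The statement you are addressing is a \emph{conjecture} in the paper: the authors do not prove it, they only verify it in two degenerate regimes (Appendix~\ref{App.B}: the zero-instanton/Heine sector where \eqref{Nf4} collapses to a ${}_2\phi_1$ series and the equation follows from invariance of the Heine operator under the adjoint action, and the Macdonald limit \eqref{Mac(1)} where the check reduces to the $q$-Chu--Vandermonde sums \eqref{Chu-Vand-1}--\eqref{Chu-Vand-2}), and they defer the general proof to the sequel \cite{Awata:2023nuc}. Measured against that, your proposal is not a proof either, and the gap is concrete: after the (correct, and indeed paper-verified) reduction via $X\mathcal{F}^{(1)}=\widetilde{T}_{\mathsf{p},b}\mathcal{F}^{(2)}$ and the expansion of \eqref{firsteq} in $x_1^m x_2^n$, you arrive at a ``bilinear recursion for the coefficients of the affine Laumon partition function'' and then say you ``expect two viable routes'' to it. That recursion \emph{is} the conjecture; naming it and listing candidate strategies does not discharge it. Your structural route (a $q$-KZ-type equation for the quantum affine/toroidal action) is in fact the route taken in \cite{Awata:2023nuc}, so the instinct is right, but as written the crux is assumed rather than proved.

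A second, sharper error is the closing appeal to uniqueness. The triangularity argument in Section~\ref{sec:conjecture} shows the coupled system has a unique formal series solution once $c_{0,0}$ is fixed; this lets you conclude $\mathcal{F}^{(1)}$ equals that solution \emph{only after} you have shown $\mathcal{F}^{(1)}$ satisfies the system, which requires the identity at every order $(m,n)$. It cannot ``reduce the verification to finitely many leading orders'' --- that direction of the argument is simply not available, and the low-order checks it would license are exactly the kind of evidence the paper already presents in Appendix~\ref{App.B} without claiming a proof. One smaller caution: $X$ is defined to act on the parameters $b_i$ but not on $(F,G)$, so the claim that \eqref{secondeq} is literally the image of \eqref{firsteq} under $X$ needs the accompanying bookkeeping of how $\widetilde{T}_{\mathsf{p},b}$, $\widetilde{\Bor}$ and the $\varphi$, $\Phi$ prefactors transform; the paper states this compatibility but your proposal should make the operator-level verification explicit rather than treating it as automatic.
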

It is remarkable that the only difference between $\mathcal{F}^{(1)}$ and $\mathcal{F}^{(2)}$
is the exchanges of $(b_1,b_2)$, $(b_3,b_4)$, $(b_5,b_6)$ and $(b_7,b_8)$ (see Figure~\ref{bvar})
in the specialization of $w_i$ and the scaling $-\mathsf{p}$ of $\mathsf{t}$.
Note that these exchanges of four pairs of $b_i$ are nothing but the action of
the Weyl reflections $r_4$, $r_5$, $r_1$ and $r_0$, respectively.
We can see that the specializations \eqref{special1} and \eqref{special2} are consistent with
the relation \eqref{F1F2}.
In fact, under the action of $X$
\begin{align*}
&\frac{b_4}{b_8} \mapsto \mathsf{p}^{-2}\frac{b_7}{b_3} = \frac{b_4}{b_8}, \qquad
\frac{b_2}{b_6} \mapsto \mathsf{p}^{-2}\frac{b_1}{b_5} = \frac{b_2}{b_6},\qquad
\frac{b_2}{b_5} \mapsto \mathsf{p}^{2}\frac{b_5}{b_2} = \frac{b_1}{b_6},\qquad
\frac{b_7}{b_4} \mapsto \mathsf{p}^{2}\frac{b_4}{b_7} = \frac{b_8}{b_3},
\end{align*}
where we have used the constraints \eqref{b-consts}.
On the other hand, these ratios are invariant by the action of $\widetilde{T}_{\mathsf{p},b}$.
Finally $\widetilde{T}_{\mathsf{p},b}$ generates the square root of the discrete time shift
$\mathsf{t} \mapsto -\mathsf{p}^{-1}\mathsf{t}$.

In terms of the variables on the gauge theory side, the specialization of parameters is
given as follows:
\begin{gather*}
\mathcal{F}^{(1)}\!
=\!f\!\left( \begin{matrix}
q^{1/2} t^{1/2} Q^{1/2} T_3,&q^{1/2} Q^{-1/2} T_1^{-1}\\
Q^{-1/2}&t^{1/2}Q^{1/2}\\
q^{-1/2} t^{1/2} Q^{1/2} T_2,&q^{-1/2} Q^{-1/2} T_4^{-1}\end{matrix} \,\Bigg| \,
{-}t^{1/2} T_1^{1/2}T_2^{1/2} x,-t^{1/2} T_3^{1/2}T_4^{1/2} \Lambda/x \,\Bigg|\, q,t^{-1/2}\!\right)\!, \\
 \mathcal{F}^{(2)}\!
=\!f\!\left(  \begin{matrix}
q^{1/2} t^{1/2} Q^{1/2} T_3,&q^{1/2} Q^{-1/2} T_1^{-1}\\
Q^{-1/2},&t^{1/2}Q^{1/2}\\
q^{-1/2} t^{1/2}Q^{-1/2} T_2^{-1},&q^{-1/2} t Q^{1/2} T_4\end{matrix} \,\Bigg|\,
t^{3/4} T_1^{1/2}T_2^{1/2} x,t^{3/4} T_3^{1/2}T_4^{1/2} \Lambda/x \,\Bigg|\,q,t^{-1/2}\!\right)\!,
\end{gather*}
where by making use of the scaling symmetry of the partition function \eqref{Nf4},
we have made the overall scaling of parameters by $\mathsf{p}$.
One can check that $\mathcal{F}^{(2)}$ is invariant under the exchange of~$T_1$ and $T_2$.
We have examined our conjecture in several cases. The results are summarized in Appendix \ref{App.B}.

Thus we can formulate Conjecture \ref{conjecture}
in terms of the instanton partition functions from the affine Laumon space.
According to the (four-dimensional) AGT correspondence,
the partition functions coming from the affine Laumon space should be
identified with the conformal blocks of the current algebra.
In the present case it is the affine algebra \smash{$A_1^{(1)} =\widehat{\mathfrak{sl}}_2$}.
In the original formulation in \cite{Shakirov:2021krl}, the five-dimensional
Nekrasov partition function are regarded as a~conformal block of
the deformed Virasoro algebra. In our formulation it is natural to expect
that the coupled system of non-stationary difference equations defines
a conformal block of the quantum deformation of \smash{$A_1^{(1)} =\widehat{\mathfrak{sl}}_2$},
namely the quantum affine algebra \smash{$U_q(\widehat{\mathfrak{sl}}_2)$}.
The advantage of~\smash{$U_q\big(\widehat{\mathfrak{sl}}_2\big)$} to the deformed Virasoro
algebra is that it is a~quantum group (in particular we have a~coproduct),
while the latter is not.

\appendix

\section{Difference analogue of Painlev\'e VI equation}
\label{App.A}

\subsection{Affine Weyl group and B\"acklund transformation}
Let $A=A\bigl(D^{(1)}_5\bigr)=(a_{ij})_{i,j=0}^5$ be the generalized Cartan matrix of type $D^{(1)}_5$
associated with the Dynkin diagram in Figure~\ref{D5}.

Fix a realization $\bigl(\mathfrak{h},\Pi,\Pi^\vee\bigr)$ of $A$,
where
$\Pi=\{\alpha_0,\ldots,\alpha_5\}\subset \mathfrak{h}^*$ denotes the set of simple roots,
$\Pi^\vee=\big\{\alpha_0^\vee,\ldots,\alpha_5^\vee\big\}\subset \mathfrak{h}$ the set of simple coroots,
with $\Pi$ and $\Pi^\vee$ being linearly independent.
We have
$\big\langle \alpha_i^\vee,\alpha_j\big\rangle =a_{ij}$ $(i,j=0,\ldots,5)$.
Let
$Q=\sum_{i=0}^5 \mathbb{Z} \alpha_i$ be the root lattice.
Denote by $\Delta$, $\Delta_+$ and $\Delta_-$ the sets of all roots, positive and negative roots respectively.
We have $\Delta=\Delta_+\cup \Delta_-$ (a disjoint union).
The center of \smash{$\mathfrak{g}=\mathfrak{g}\bigl(D^{(1)}_5\bigr)$} is 1-dimensional and is spanned by
the canonical central element
$
K= \alpha_0^\vee+\alpha_1^\vee+2 \alpha_2^\vee+2\alpha_3^\vee+\alpha_4^\vee+\alpha_5^\vee.
$
Denote by $\delta\in Q$ the null root
$\delta=\alpha_0+\alpha_1+2 \alpha_2+2\alpha_3+\alpha_4+\alpha_5$.
Let $d\in \mathfrak{h}$ be the scaling element (defined up to a~summand
proportional to $K$) satisfying
$\langle \alpha_i,d\rangle=0$ for $i=1,\ldots,5$ and $ \langle \alpha_0,d\rangle=1$.
The elements~$\alpha_0^\vee,\ldots,\alpha_5^\vee,d$ form a basis of $\mathfrak{h}$.
We have
$\mathfrak{g}=[\mathfrak{g},\mathfrak{g}]+\mathbb{C} d$, $\mathfrak{h}=\sum_{i=0}^5 \mathbb{C} \alpha_i^\vee+\mathbb{C} d$.
Define the element~$\Lambda_0\in \mathfrak{h}^*$ by
$\big\langle \Lambda_0, \alpha_i^\vee\big\rangle=\delta_{0i}$ for $i=0,\ldots,5$ and
$\langle \Lambda_0,d\rangle=0$.
The elements~$\alpha_0,\ldots,\alpha_5,\Lambda_0$ form a~basis of $\mathfrak{h}^*$
and we have $\mathfrak{h}^*=\sum_{i=0}^5 \mathbb{C} \alpha_i+\mathbb{C} \Lambda_0$.
The affine Weyl group~\smash{$W=W\bigl(D^{(1)}_5\bigr)$} is defined to be the group
generated by the fundamental reflections
$r_0,r_1,\ldots,r_5$ which act on~$\mathfrak{h}^*$~by
\begin{align*}
r_i(\lambda)=\lambda-\langle \lambda,\alpha_i^\vee\rangle \alpha_i, \qquad \lambda \in \mathfrak{h}^*.
\end{align*}

\begin{figure}[t]\centering
\vspace*{30mm}
\begin{picture}(40,30)\setlength{\unitlength}{1.5mm}

\put(0,10){\line(-1,1){7}}
\put(0,10){\line(-1,-1){7}}
\put(0,10){\line(1,0){10}}
\put(10,10){\line(1,1){7}}
\put(10,10){\line(1,-1){7}}

\put(-7.8,16.2){$\bullet$}
\put(-7.8,2.2){$\bullet$}

\put(-0.8,9.2){$\bullet$}
\put(9.4,9.2){$\bullet$}

\put(16.4,16.2){$\bullet$}
\put(16.4,2.2){$\bullet$}

\put(-10.8,16.2){$\scriptstyle 0$}
\put(-10.8,2.2){$\scriptstyle 1$}

\put(-0.8,6.2){$\scriptstyle 2$}
\put(9.4,6.2){$\scriptstyle 3$}

\put(19.4,16.2){$\scriptstyle 5$}
\put(19.4,2.2){$\scriptstyle 4$}

\put(-22,9.2){$ \sigma_{01}\,\,\,\updownarrow$}
\put(22,9.2){$ \updownarrow\,\,\, \sigma_{45}$}
\put(1.5,22.2){$ \longleftrightarrow$}
\put(4.0,25.2){$\tau$}

\end{picture}
\caption{Dynkin diagram of $D^{(1)}_5$ and its automorphism.}\label{D5}
\end{figure}
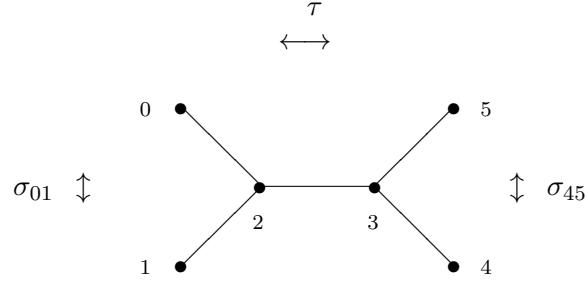

Let $a_i$ $(i=0,\ldots, 5)$ be the formal exponentials $a_i={\rm e}^{\alpha_i}$.
Write $a=(a_0,\ldots,a_5)$ for short.
Denote by $\mathbb{K}=\mathbb{C}(a)$ the field of rational functions in $a$.
Define the actions of the generators~${r_i\in W}$ on $\mathbb{K}$ by setting the rules
\begin{align*}
r_i \cdot a_j= a_j a_i^{-a_{ij}} ={\rm e}^{\alpha_j-a_{ij}\alpha_i}, \qquad 0\leq i, j\leq 5,
\end{align*}
and extending them as ring homomorphisms.
It is clear that these actions are compatible with the group structure,
namely they satisfy the Coxeter relations:
$r_i^2={\rm id}$ $(a_{ij}=2)$,
$r_ir_jr_i=r_jr_ir_j$ $(a_{ij}=-1)$, and
$r_ir_j=r_jr_i$ $(a_{ij}=0)$.
We regard $W$ as a group of birational isomorphisms of $\mathbb{K}$.

Let $\sigma_{01}$, $\sigma_{45}$ and $\tau$ be the automorphisms shown in Figure~\ref{D5}, namely
\begin{align*}
\begin{split}
&\sigma_{01}\colon\ (0,1,2,3,4,5)\mapsto (1,0,2,3,4,5), \qquad \sigma_{45}\colon\  (0,1,2,3,4,5)\mapsto (0,1,2,3,5,4), \\
&\tau\colon\ (0,1,2,3,4,5)\mapsto (5,4,3,2,1,0).
\end{split}
\end{align*}
Let $\widetilde{W}$ denotes the extended affine Weyl group generated by $W$
together with $\sigma_{01}$, $\sigma_{45}$ and $\tau$.
Let $f$ and $g$ be independent indeterminates, and consider the
rational function field \[\mathbb{K}(f,g) =\mathbb{C}(a)(f,g).\]

\begin{Definition}\label{birational-cl}
Define the actions of the generators $r_0,\ldots,r_5,\sigma_{01},\sigma_{45},\tau\in \widetilde{W}$ on
$\mathbb{K}(f,g)$
by setting the rules
\begin{align*}
&r_i \cdot a_j= a_j a_i^{-a_{ij}}, \qquad 0\leq i, j\leq 5,\\
& \sigma_{01} \cdot a_i=(a_{\sigma_{01}(i)})^{-1},
\qquad \sigma_{45} \cdot a_i=(a_{\sigma_{45}(i)})^{-1},
\qquad \tau \cdot a_i=(a_{\tau(i)})^{-1},\qquad 0\leq i\leq 5,\\
&r_i\cdot f=f,\qquad r_i\cdot g=g,\qquad i\neq 2,3,\\
&r_2\cdot f=f{a_0a_1^{-1}g+a_2^2\over a_0a_1^{-1}a_2^2g+1},\qquad r_2\cdot g=g,\qquad
r_3\cdot f=f,\qquad r_3\cdot g=
{a_3^2 a_4a_5^{-1} f+1\over a_4a_5^{-1}f+a_3^2}g,\\
&\sigma_{01}\cdot f=f^{-1},\qquad \sigma_{01}\cdot g=g,\qquad
\sigma_{45}\cdot f=f,\qquad \sigma_{45}\cdot g=g^{-1},\\
&\tau \cdot f=g,\qquad \tau \cdot g=f,
\end{align*}
and extending them as ring homomorphisms.
\end{Definition}

\begin{Proposition}\label{Cox}
The actions are compatible with the group structure of the
extended affine Weyl group $\widetilde{W}$. Namely they satisfy
the Coxeter relations:
\begin{align*}
&r_i^2={\rm id} \qquad \mbox{if }\quad a_{ij}=2,\\
&r_ir_jr_i=r_jr_ir_j \qquad \mbox{if }\quad a_{ij}=-1,\qquad
r_ir_j=r_jr_i \qquad \mbox{if }\quad a_{ij}=0,\\
& \sigma_{01}^2=\sigma_{45}^2=\tau^2={\rm id} ,\qquad
\sigma_{01}\sigma_{45}=\sigma_{45}\sigma_{01},
\qquad \sigma_{01}\tau=\tau\sigma_{45},\\
&\sigma_{01} r_0=r_1 \sigma_{01} ,\qquad \sigma_{01} r_i=r_i \sigma_{01}, \qquad i\neq 0,1,\\
&\sigma_{45} r_4=r_5 \sigma_{45} ,\qquad \sigma_{45} r_i=r_i \sigma_{45}, \qquad i\neq 4,5,\\
&\tau r_i=r_{5-i} \tau.
\end{align*}
\end{Proposition}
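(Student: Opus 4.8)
The plan is to reduce the whole proposition to finitely many identities on the field generators and then to treat the root-variable part and the $(f,g)$-part separately. Since every generator in Definition~\ref{birational-cl} acts as a ring homomorphism, a relation $w_1=w_2$ in $\widetilde{W}$ holds as an identity of birational maps of $\mathbb{K}(f,g)=\mathbb{C}(a)(f,g)$ as soon as the two sides agree on the field generators $a_0,\dots,a_5,f,g$. Thus each Coxeter relation splits into its effect on the $a_i$ and its effect on $(f,g)$.

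First I would dispose of the root variables. On the subfield $\mathbb{K}=\mathbb{C}(a)$ the reflections act by the standard Tits action $r_i\cdot a_j=a_j a_i^{-a_{ij}}$, i.e.\ by $r_i(\alpha_j)=\alpha_j-a_{ij}\alpha_i$ under $a_i={\rm e}^{\alpha_i}$; hence the pure reflection relations ($r_i^2={\rm id}$, the braid relations when $a_{ij}=-1$, and commutation when $a_{ij}=0$) are exactly the defining relations of $W\bigl(D^{(1)}_5\bigr)$ read off from the Cartan matrix. For the diagram automorphisms, writing $\omega$ for the underlying permutation, the prescription $a_i\mapsto a_{\omega(i)}^{-1}$ is the composite of the diagram permutation with the central Chevalley involution $-1$ (which sends $\alpha\mapsto-\alpha$ and commutes with every linear map). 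The relations $\sigma_{01}r_0=r_1\sigma_{01}$, $\sigma_{45}r_4=r_5\sigma_{45}$ and $\tau r_i=r_{5-i}\tau$ then follow on the $a$-part from the $\omega$-invariance $a_{\omega(i)\omega(j)}=a_{ij}$ of the Cartan matrix, while $\sigma_{01}^2=\sigma_{45}^2=\tau^2={\rm id}$ and $\sigma_{01}\sigma_{45}=\sigma_{45}\sigma_{01}$, $\sigma_{01}\tau=\tau\sigma_{45}$ reduce to the corresponding statements for the diagram symmetries of $D^{(1)}_5$.

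Next I would check the relations on $f$ and $g$. Only $r_2,\sigma_{01}$ move $f$ and only $r_3,\sigma_{45}$ move $g$, while $\tau$ swaps $f\leftrightarrow g$; every other generator fixes both. Consequently most relations are tautologies on $(f,g)$, and one is left with: (i) $r_2^2={\rm id}$ on $f$ and $r_3^2={\rm id}$ on $g$, each a single M\"obius computation; (ii) $\sigma_{01}r_2=r_2\sigma_{01}$ on $f$ (and, by the $\tau$-symmetry, $\sigma_{45}r_3=r_3\sigma_{45}$ on $g$), which after applying $\sigma_{01}$ to the coefficients $a_0\mapsto a_1^{-1}$, $a_1\mapsto a_0^{-1}$, $a_2\mapsto a_2^{-1}$ becomes the single fractional-linear identity obtained by clearing $a_2^{\pm2}$; and (iii) the conjugation relations $\tau r_2=r_3\tau$ and $\tau r_3=r_2\tau$ on $(f,g)$, where applying $\tau$ to the $r_2$-formula for $f$ (sending $a_0,a_1,a_2\mapsto a_5^{-1},a_4^{-1},a_3^{-1}$ and $f\mapsto g$) reproduces the $r_3$-formula for $g$ after the same clearing of $a_3^{\pm2}$.

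The main obstacle is the braid relation $r_2r_3r_2=r_3r_2r_3$ tested on $f$ and on $g$, since there the coefficients of one fractional-linear substitution are themselves transformed by the next. The key simplification I would use is that $r_3\cdot f=f$ and $r_2\cdot g=g$: this gives $r_3r_2r_3\cdot f=r_3r_2\cdot f$ and $r_2r_3r_2\cdot g=r_2r_3\cdot g$ for free, so it suffices to prove $r_2r_3r_2\cdot f=r_3r_2\cdot f$ and $r_3r_2r_3\cdot g=r_2r_3\cdot g$. These are precisely the commutative specializations of the computation carried out for the $q$-commuting variables in the excerpt (the Lemma following Proposition~\ref{Cox-q}); with $f$ and $g$ commuting the intermediate factorizations of numerator and denominator go through verbatim and are strictly easier, requiring no attention to ordering. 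Expanding $r_3r_2\cdot f$ and $r_2r_3r_2\cdot f$ over a common denominator and matching coefficients completes the braid check for $f$, and the analogous computation for $g$ finishes the proof.
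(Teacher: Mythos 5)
Your overall strategy---reduce each relation to an identity on the field generators, dispose of the root variables by the standard Tits action together with the diagram-symmetry argument, and then verify the remaining M\"obius identities on $(f,g)$, with the braid relation $r_2r_3r_2=r_3r_2r_3$ reduced via $r_3\cdot f=f$ and $r_2\cdot g=g$ to $r_2r_3r_2\cdot f=r_3r_2\cdot f$ and $r_3r_2r_3\cdot g=r_2r_3\cdot g$---is exactly the route the paper takes; it carries out the detailed computation only for the quantum analogue (Proposition~\ref{Cox-q}), where the same two reductions appear, and the commutative case is indeed the easier specialization.

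There is, however, a gap in your enumeration of the non-tautological cases. First, the braid relations $r_0r_2r_0=r_2r_0r_2$, $r_1r_2r_1=r_2r_1r_2$, $r_3r_4r_3=r_4r_3r_4$ and $r_3r_5r_3=r_5r_3r_5$ are \emph{not} tautologies on $(f,g)$: although $r_0$, $r_1$ fix $f$ and $g$, they act non-trivially on the coefficients $a_0,a_1,a_2$ of the fractional-linear factor in $r_2\cdot f$, so $r_2r_0r_2\cdot f$ picks up a second M\"obius factor whose collapse against the first must be checked. It does collapse, via
\begin{equation*}
\frac{a_0a_1^{-1}g+a_2^2}{a_0a_1^{-1}a_2^2g+1}\cdot
\frac{a_0^{-1}a_1^{-1}a_2^{-2}g+a_0^{2}}{a_0a_1^{-1}a_2^{-2}g+1}
=\frac{a_0^{-1}a_1^{-1}g+a_0^{2}a_2^{2}}{a_0a_1^{-1}a_2^2g+1},
\end{equation*}
but this is an extra family of identities of the same type as your item (i), not a free consequence of ``$r_0$ fixes $f$.'' Second, you verify $\sigma_{01}r_2=r_2\sigma_{01}$ (and $\sigma_{45}r_3=r_3\sigma_{45}$ by $\tau$-symmetry) but omit $\sigma_{01}r_3=r_3\sigma_{01}$ and $\sigma_{45}r_2=r_2\sigma_{45}$, which are likewise non-trivial because the automorphism inverts the variable sitting in the M\"obius coefficients ($f\mapsto f^{-1}$ inside $r_3\cdot g$, resp.\ $g\mapsto g^{-1}$ inside $r_2\cdot f$); the paper's verification of the quantum version explicitly singles out $\sigma_{01}r_3=r_3\sigma_{01}$ as one of its two highlighted cases. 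All of the omitted identities do hold by one-line manipulations of the same kind, so your argument is completable, but the assertion that everything outside (i)--(iii) and the single braid relation $r_2r_3r_2=r_3r_2r_3$ is a tautology on $(f,g)$ is false as stated.
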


\subsection{Difference analogue of Painlev\'e VI equation}

Set for simplicity of display that
\begin{align*}
\mathsf{p}=a_0a_1a_2^2a_3^2a_4a_5=e^\delta,\qquad \mathsf{t}= a_3^2a_4a_5.
\end{align*}
Recall that the element
\begin{align*}
T=r_2r_1r_0r_2 \sigma_{01} r_3r_4r_5r_3 \sigma_{45}\in \widetilde{W}, \label{Trans}
\end{align*}
acts on $\mathfrak{h}^*$ as a translation. Namely, we have
\begin{align*}
T\colon\ (a_0,a_1,a_2,a_3,a_4,a_5) \mapsto \bigl(a_0,a_1,\mathsf{p} a_2,\mathsf{p}^{-1}a_3,a_4,a_5\bigr).
\end{align*}
Write $\overline{f}=T\cdot f$ and $\underline{g}=T^{-1}\cdot g$ for short.
\begin{Proposition}
We have the difference analogue of the Painlev\'e VI equation
\begin{align*}
&f \overline{f}=\mathsf{p}^2\mathsf{t}^{-2}
{g+ \mathsf{t}\mathsf{p}^{-1}a_1^2 \over
g+ \mathsf{t}^{-1}\mathsf{p}a_0^2}
{g+ \mathsf{t}\mathsf{p}^{-1}a_1^{-2} \over
g+ \mathsf{t}^{-1}\mathsf{p}a_0^{-2}},\\
&g \underline{g}=\mathsf{t}^{-2}
{f+ \mathsf{t}a_4^2 \over
f+ \mathsf{t}^{-1}a_5^2}
{f+ \mathsf{t}a_4^{-2} \over
f+ \mathsf{t}^{-1}a_5^{-2}}.
\end{align*}
\end{Proposition}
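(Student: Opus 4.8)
The plan is to compute the two birational images $\overline{f}=T\cdot f$ and $\underline{g}=T^{-1}\cdot g$ directly from the explicit rules of Definition~\ref{birational-cl}, by applying the generators appearing in the word $T=r_2r_1r_0r_2\sigma_{01}r_3r_4r_5r_3\sigma_{45}$ one letter at a time. Since each of $r_0,\dots,r_5,\sigma_{01},\sigma_{45},\tau$ is an involution, $T^{-1}$ is just the reversed word $\sigma_{45}r_3r_5r_4r_3\sigma_{01}r_2r_0r_1r_2$, so the two computations are of the same nature. Throughout I would use the abbreviations $\mathsf{p}=a_0a_1a_2^2a_3^2a_4a_5$ and $\mathsf{t}=a_3^2a_4a_5$ together with the identities $\mathsf{p}\mathsf{t}^{-1}=a_0a_1a_2^2$ and $\mathsf{t}^{-1}\mathsf{p}=a_0a_1a_2^2$, which are exactly what lets the final rational expressions be recognised in the stated form.

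First I would track the orbit of $f$. Applying the letters of $T$ from the right, the first five maps $\sigma_{45},r_3,r_5,r_4,r_3$ all fix $f$ (among the reflections only $r_2$ moves $f$, and $\sigma_{45}$ fixes $f$); then $\sigma_{01}$ sends $f\mapsto f^{-1}$, and the remaining maps $r_2,r_0,r_1,r_2$ successively build up a rational coefficient in $g$. The only subtlety is that each generator acts simultaneously on $f$, on $g$, and on the root variables via $r_i\cdot a_j=a_ja_i^{-a_{ij}}$ and $\sigma\cdot a_i=a_{\sigma(i)}^{-1}$, so the $a_i$-dependence must be carried faithfully through every step. The upshot is
\[
\overline{f}=\frac{a_0a_1^{3}a_2^2 g+1}{a_0^{-1}a_1 g+a_0^2a_1^2a_2^2}\,\frac{a_0a_1^{-1}a_2^2 g+1}{a_0a_1^{-1}g+a_2^2}\,f^{-1}.
\]
To identify this with the claimed right-hand side I would compare zeros and poles: the numerator roots $g=-(a_0a_1^3a_2^2)^{-1}$ and $g=-a_1(a_0a_2^2)^{-1}$ are precisely $-\mathsf{t}\mathsf{p}^{-1}a_1^{\pm2}$, the denominator roots $g=-a_0^3a_1a_2^2$ and $g=-a_0^{-1}a_1a_2^2$ are precisely $-\mathsf{t}^{-1}\mathsf{p}a_0^{\pm2}$, and the product of the leading coefficients is $a_0^2a_1^2a_2^4=\mathsf{p}^2\mathsf{t}^{-2}$. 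As a rational function of $g$ is determined by its zeros, poles, and leading coefficient, this gives $f\overline{f}=\mathsf{p}^2\mathsf{t}^{-2}\frac{g+\mathsf{t}\mathsf{p}^{-1}a_1^2}{g+\mathsf{t}^{-1}\mathsf{p}a_0^2}\frac{g+\mathsf{t}\mathsf{p}^{-1}a_1^{-2}}{g+\mathsf{t}^{-1}\mathsf{p}a_0^{-2}}$.

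The computation of $\underline{g}=T^{-1}\cdot g$ is entirely parallel: applying the reversed word from the right, the block $r_2r_1r_0r_2\sigma_{01}$ fixes $g$, the block $r_3r_4r_5r_3$ produces a rational coefficient in $f$, and the final $\sigma_{45}$ sends $g\mapsto g^{-1}$; simplifying with $\mathsf{t}=a_3^2a_4a_5$ yields $g\underline{g}=\mathsf{t}^{-2}\frac{f+\mathsf{t}a_4^2}{f+\mathsf{t}^{-1}a_5^2}\frac{f+\mathsf{t}a_4^{-2}}{f+\mathsf{t}^{-1}a_5^{-2}}$. The cleanest way to organise the whole argument, however, is to note that this calculation is nothing but the commutative $q\to1$ specialisation of the proof of Proposition~\ref{Hasegawa-form}: the chains of substitutions are structurally identical, the only role of $q$ there being through $\sigma_{01}\cdot F=qF^{-1}$ and $\sigma_{45}\cdot G=qG^{-1}$, which here become $\sigma_{01}\cdot f=f^{-1}$ and $\sigma_{45}\cdot g=g^{-1}$. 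Consequently the extra factor $q$ on the right-hand sides disappears and $f\overline{f}=\overline{f}f$, recovering the classical equations. The only genuine labour, and hence the main obstacle, is the bookkeeping of the root-variable action through the ten-letter word; once that is carried out correctly, recognising the resulting rational functions via the $\mathsf{p},\mathsf{t}$ substitutions is routine.
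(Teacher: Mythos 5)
Your proposal is correct and follows essentially the same route as the paper: the paper proves the quantum analogue (Proposition \ref{Hasegawa-form}) by exactly this letter-by-letter application of the generators of $T$, arriving at the intermediate expression $q\,\frac{a_0a_1^{3}a_2^2 G+1}{a_0^{-1}a_1G+a_0^2a_1^2a_2^2}\,\frac{a_0a_1^{-1}a_2^2 G+1}{a_0a_1^{-1}G+a_2^2}F^{-1}$ that you reproduce (without the $q$), and the classical statement in Appendix~\ref{App.A} is precisely its commutative specialization, as you observe. Your zeros/poles/leading-coefficient identification via $\mathsf{p}\mathsf{t}^{-1}=a_0a_1a_2^2$ checks out against Definition~\ref{b-variables}, so nothing is missing.
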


\begin{Proposition}
The translation $T$ admits the symmetry of the type $D^{(1)}_4$
in the following sense.
\begin{itemize}\itemsep=0pt
 \item[$(1)$] The translation $T$ commutes with the
action of the subgroup
\[\langle r_0,r_1,r_2r_3r_2=r_3r_2r_3,r_4,r_5 \rangle \simeq W\bigl(D^{(1)}_4\bigr) \qquad \text{of}\quad W\bigl(D^{(1)}_5\bigr),\]
\item[$(2)$] We have $ \sigma_{01}T=T\sigma_{01}$, $ \sigma_{45}T=T\sigma_{45}$ and
$ \tau T=T^{-1} \tau$.
\end{itemize}
\end{Proposition}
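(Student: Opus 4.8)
The plan is to prove both parts purely as identities in the abstract extended affine Weyl group $\widetilde{W}$, using only the Coxeter relations collected in Proposition~\ref{Cox}; since those relations hold verbatim, the conclusions transfer automatically to the birational action on $\mathbb{K}(f,g)$ (and to any other representation, e.g.\ the quantum one of Section~\ref{sec41}). Throughout I write $A\seteq r_2r_1r_0r_2$ and recall from the $D_5^{(1)}$ diagram that $a_{01}=a_{45}=0$ while $a_{02}=a_{12}=a_{23}=a_{34}=a_{35}=-1$, so $r_0r_1=r_1r_0$ and $r_4r_5=r_5r_4$, whereas $r_0,r_1$ braid with $r_2$, and $r_4,r_5$ braid with $r_3$.

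For part~(2) I would conjugate $T=r_2r_1r_0r_2\,\sigma_{01}\,r_3r_4r_5r_3\,\sigma_{45}$ factor by factor. Under $\operatorname{Ad}(\sigma_{01})$ every generator is fixed except the swap $r_0\leftrightarrow r_1$, so $\sigma_{01}T\sigma_{01}=r_2r_0r_1r_2\,\sigma_{01}\,r_3r_4r_5r_3\,\sigma_{45}$, which equals $T$ because $r_0r_1=r_1r_0$; hence $\sigma_{01}T=T\sigma_{01}$, and $\sigma_{45}T=T\sigma_{45}$ is the mirror computation using $r_4r_5=r_5r_4$. For $\tau$, the rules $\tau r_i\tau=r_{5-i}$ and $\tau\sigma_{01}\tau=\sigma_{45}$ give $\tau T\tau=r_3r_4r_5r_3\,\sigma_{45}\,r_2r_1r_0r_2\,\sigma_{01}$; rewriting the front as $r_3r_4r_5r_3\,\sigma_{45}=\sigma_{45}\,r_3r_5r_4r_3$ (the $\sigma_{45}$-swap of $4,5$) and the tail as $r_2r_1r_0r_2\,\sigma_{01}=\sigma_{01}\,r_2r_0r_1r_2$ (the $\sigma_{01}$-shuffle) turns this into $\sigma_{45}\,r_3r_5r_4r_3\,\sigma_{01}\,r_2r_0r_1r_2=T^{-1}$, i.e.\ $\tau T=T^{-1}\tau$.

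For part~(1) I would exploit the square-root factorization $T=P^2$ with $P\seteq r_2r_1r_0r_2\,\sigma_{01}\,\tau$ recorded in~\eqref{Eq.(6.8)}, and compute the conjugation action of $P$ on the five generators $r_0,r_1,r_4,r_5$ and $w\seteq r_2r_3r_2=r_3r_2r_3$ of the $W\bigl(D_4^{(1)}\bigr)$ subgroup. Peeling off $\tau$ (via $\tau r_i\tau=r_{5-i}$) and then $\sigma_{01}$ (the swap $r_0\leftrightarrow r_1$), each conjugation reduces to conjugation by $A$, which I would evaluate by tracking simple roots: $A(\alpha_0)=\alpha_1$, $A(\alpha_1)=\alpha_0$, $A(\alpha_2+\alpha_3)=\alpha_2+\alpha_3$, while $A$ fixes $\alpha_4,\alpha_5$ (it involves only the nodes $0,1,2$). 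Thus $\operatorname{Ad}(P)$ realizes the mirror $r_0\leftrightarrow r_5$, $r_1\leftrightarrow r_4$ and fixes $w$; squaring, $T=P^2$ centralizes each of $r_0,r_1,r_4,r_5,w$, which is precisely commutation with the $W\bigl(D_4^{(1)}\bigr)$ subgroup.

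The only genuinely non-routine steps are the $A$-conjugations, in particular $A(\alpha_0)=\alpha_1$ and the fixing of the central root $\alpha_2+\alpha_3$ (equivalently the iterated braid move $r_2r_1r_0r_2\,r_0\,r_2r_0r_1r_2=r_1$); everything else is bookkeeping of commutations together with the two diagram-automorphism rules. I expect the main care to go into these root computations, after which parts~(1) and~(2) assemble immediately.
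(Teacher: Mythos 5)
Your argument is correct, and in fact the paper states this Proposition in Appendix~\ref{App.A} without any proof (it is quoted as a standard fact from the theory of the discrete Painlev\'e VI equation), so there is no in-paper derivation to compare against; your write-up supplies the missing verification. I checked the computations that carry the weight of the argument and they all hold: in part~(2), the identities $r_2r_1r_0r_2\,\sigma_{01}=\sigma_{01}\,r_2r_0r_1r_2$ and $r_3r_4r_5r_3\,\sigma_{45}=\sigma_{45}\,r_3r_5r_4r_3$ together with $r_0r_1=r_1r_0$, $r_4r_5=r_5r_4$, $\tau r_i\tau=r_{5-i}$ and $\tau\sigma_{01}\tau=\sigma_{45}$ do give $\sigma_{01}T=T\sigma_{01}$, $\sigma_{45}T=T\sigma_{45}$ and $\tau T\tau=\sigma_{45}r_3r_5r_4r_3\sigma_{01}r_2r_0r_1r_2=T^{-1}$. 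In part~(1), with $A=r_2r_1r_0r_2$ one indeed has $A(\alpha_0)=\alpha_1$, $A(\alpha_1)=\alpha_0$, $A(\alpha_2+\alpha_3)=\alpha_2+\alpha_3$ and $A(\alpha_4)=\alpha_4$, $A(\alpha_5)=\alpha_5$ (equivalently $Ar_0A^{-1}=r_1$, $Ar_1A^{-1}=r_0$, $A(r_2r_3r_2)A^{-1}=r_2r_3r_2$), whence $\operatorname{Ad}(P)$ with $P=r_2r_1r_0r_2\sigma_{01}\tau$ acts on the generating set $\{r_0,r_1,r_2r_3r_2,r_4,r_5\}$ as the involution $r_0\leftrightarrow r_5$, $r_1\leftrightarrow r_4$, $r_2r_3r_2\mapsto r_2r_3r_2$, and $T=P^2$ centralizes the subgroup. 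Two features of your route are worth noting: using the square root $T=P^2$ from~\eqref{Eq.(6.8)} is exactly in the spirit of Section~\ref{Laumon} and makes part~(1) essentially a one-line consequence of an involution; and since everything is derived from the Coxeter relations of Proposition~\ref{Cox} alone, the conclusion transfers verbatim to the quantum representation of Section~\ref{sec41} via Proposition~\ref{Cox-q}, which is the form actually needed elsewhere in the paper. The alternative, more structural argument would invoke the general fact that a translation by a (co)weight lattice element commutes with the stabilizer of that element in the finite Weyl group, but your direct verification is complete and self-contained.
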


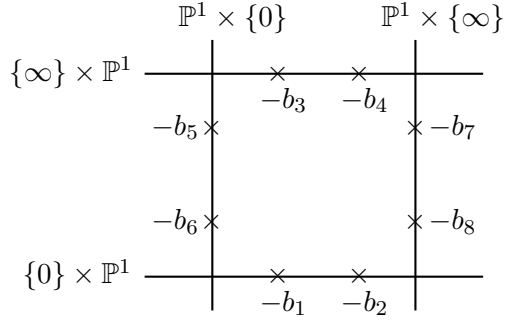
\begin{figure}[t]\centering
\vspace*{30mm}
\begin{picture}(50,40)
\setlength{\unitlength}{0.9mm}
\thicklines
\put(-10,5){\line(1,0){50}}
\put(-10,35){\line(1,0){50}}
\put(0,0){\line(0,1){40}}
\put(30,0){\line(0,1){40}}
\put(-5,42){$\mathbb{P}^1 \times \{0\}$}
\put(25,42){$\mathbb{P}^1 \times \{\infty\}$}
\put(-28,4){$\{0\} \times \mathbb{P}^1$}
\put(-30,34){$\{\infty\} \times \mathbb{P}^1 $}
\put(8,4){$\times$}
\put(20,4){$\times$}
\put(8,34){$\times$}
\put(20,34){$\times$}
\put(-1.8,12){$\times$}
\put(-1.8,26){$\times$}
\put(28.3,12){$\times$}
\put(28.3,26){$\times$}
\put(7,0){$-b_1$}
\put(19,0){$-b_2$}
\put(7,30){$-b_3$}
\put(19,30){$-b_4$}
\put(-9,26){$-b_5$}
\put(-9,12){$-b_6$}
\put(32,26){$-b_7$}
\put(32,12){$-b_8$}
\end{picture}
\caption{Space of initial data for Painlev\'e VI (8 points blow up of $\mathbb{P}^1 \times \mathbb{P}^1$).}
\label{bvar}
\end{figure}

\begin{Definition}\label{b-variables}
Introduce the following variables:
\begin{align*}
&b_1=\mathsf{t} a_4^{-2}={a_3^2 a_5\over a_4},\qquad
b_2=\mathsf{t} a_4^{2}=a_3^2a_4^3 a_5,\qquad
b_3=\mathsf{t} ^{-1}a_5^{2}={a_5\over a_3^2 a_4},\\
&
b_4=\mathsf{t} ^{-1}a_5^{-2}={1\over a_3^2 a_4 a_5^3},\qquad b_5=\mathsf{t} \mathsf{p}^{-1} a_1^{2}={a_1\over a_0 a_2^2},\qquad
b_6=\mathsf{t} \mathsf{p}^{-1} a_1^{-2}=
{1\over a_0a_1^3 a_2^2},\\
&
b_7=\mathsf{t}^{-1} \mathsf{p} a_0^{-2}=
 {a_1 a_2^2\over a_0},\qquad
b_8=\mathsf{t}^{-1} \mathsf{p} a_0^{2}=
a_0^3 a_1a_2^2.
\end{align*}
\end{Definition}

The parameters $b_i$ are not independent and they satisfy the constraints
\begin{equation}\label{b-consts}
b_1b_2b_3b_4= 1,\qquad
b_5b_6b_7b_8= 1,\qquad
b_1b_2b_7b_8= \mathsf{p}^2,\qquad
b_3b_4b_5b_6= \mathsf{p}^{-2}.
\end{equation}
Regarding $\mathsf{p}$ as a new parameter, we have six independent parameters.
The root variables are expressed as the ratios of $b_i$ (see Figure \ref{D5}):
\begin{align*}
a_0^4 = \frac{b_8}{b_7}, \qquad
a_1^4 = \frac{b_5}{b_6}, \qquad
a_2^4 = \frac{b_7}{b_5}, \qquad
a_3^4 = \frac{b_1}{b_3}, \qquad
a_4^4 = \frac{b_2}{b_1}, \qquad
a_5^4 = \frac{b_3}{b_4}.
\end{align*}

\begin{Corollary}
We have
\begin{align*}
&f \overline{f}=b_7b_8
{g+ b_5 \over
g+ b_7}
{g+ b_6 \over
g+ b_8},\qquad
g \underline{g}=b_3b_4
{f+b_1 \over
f+ b_3}
{f+ b_2\over
f+ b_4}.
\end{align*}
\end{Corollary}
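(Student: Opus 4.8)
The plan is to derive the Corollary directly from the preceding Proposition by substituting the explicit expressions for $b_1,\dots,b_8$ recorded in Definition~\ref{b-variables}. No new structural input is required: the variables $f,\overline{f},g,\underline{g}$ are unchanged between the two statements, so the entire content is a bookkeeping comparison of the two parametrizations of the right-hand sides.

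First I would isolate the four identifications relevant to the first equation. From Definition~\ref{b-variables} one has $b_5=\mathsf{t}\mathsf{p}^{-1}a_1^2$, $b_6=\mathsf{t}\mathsf{p}^{-1}a_1^{-2}$, $b_7=\mathsf{t}^{-1}\mathsf{p}a_0^{-2}$ and $b_8=\mathsf{t}^{-1}\mathsf{p}a_0^2$. Substituting these into the first equation of the Proposition turns the numerator factors $g+\mathsf{t}\mathsf{p}^{-1}a_1^{\pm2}$ into $g+b_5$ and $g+b_6$, and the denominator factors $g+\mathsf{t}^{-1}\mathsf{p}a_0^{\pm2}$ into $g+b_8$ and $g+b_7$. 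I would flag the one point requiring attention: the factor carrying $a_0^{2}$ is $b_8$ while the one carrying $a_0^{-2}$ is $b_7$, so after this reordering the denominator pairs as $(g+b_7)(g+b_8)$, exactly as in the Corollary.

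Next I would deal with the constant prefactor. The Proposition carries $\mathsf{p}^2\mathsf{t}^{-2}$, and multiplying $b_7=\mathsf{t}^{-1}\mathsf{p}a_0^{-2}$ by $b_8=\mathsf{t}^{-1}\mathsf{p}a_0^2$ gives $b_7b_8=\mathsf{t}^{-2}\mathsf{p}^2$, so the prefactor is precisely $b_7b_8$; this produces the first equation of the Corollary. The second equation is treated identically using $b_1=\mathsf{t}a_4^{-2}$, $b_2=\mathsf{t}a_4^2$, $b_3=\mathsf{t}^{-1}a_5^2$, $b_4=\mathsf{t}^{-1}a_5^{-2}$: the numerators become $f+b_2$ and $f+b_1$, the denominators become $f+b_3$ and $f+b_4$, and the prefactor $\mathsf{t}^{-2}$ equals $b_3b_4=\bigl(\mathsf{t}^{-1}a_5^2\bigr)\bigl(\mathsf{t}^{-1}a_5^{-2}\bigr)$.

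There is no substantive obstacle here: the whole argument reduces to the two prefactor checks $b_7b_8=\mathsf{p}^2\mathsf{t}^{-2}$ and $b_3b_4=\mathsf{t}^{-2}$ together with matching the four linear factors in each equation. The only thing to watch is the index bookkeeping relating the $a_0^{\pm2}$ and $a_5^{\pm2}$ factors to their $b_i$ labels, and I would present the substitution explicitly so that the pairing can be verified at a glance.
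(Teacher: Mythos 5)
Your proposal is correct and coincides with what the paper intends: the Corollary is an immediate restatement of the preceding Proposition obtained by substituting Definition~\ref{b-variables}, and your checks $b_7b_8=\mathsf{p}^2\mathsf{t}^{-2}$, $b_3b_4=\mathsf{t}^{-2}$ together with the factor matching (including the swap $a_0^{2}\leftrightarrow b_8$, $a_0^{-2}\leftrightarrow b_7$) are exactly the required bookkeeping. The paper gives no separate proof precisely because this is the whole argument.
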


\subsection[Tables of the action of W on K(f,g)]{Tables of the action of $\boldsymbol{\widetilde{W}}$ on $\boldsymbol{\mathbb{K}(f,g)}$}
\begin{Lemma}\label{lemA7}
Write $a=(a_0,a_1,a_2,a_3,a_4,a_5)$ for short. We have
\begin{gather*}
r_0\colon\ (a,f,g)
\mapsto
\bigl(a_0^{-1},a_1,a_0a_2,a_3,a_4,a_5,f,g\bigr),\\
r_1\colon\ (a,f,g)
\mapsto
\bigl(a_0,a_1^{-1},a_1a_2,a_3,a_4,a_5,f,g\bigr),\\
r_2\colon\ (a,f,g)
\mapsto
\left(a_0a_2,a_1a_2,a_2^{-1},a_2a_3,a_4,a_5,f{a_0a_1^{-1}g+a_2^2\over a_0a_1^{-1}a_2^2g+1},g\right),\\
r_3\colon\ (a,f,g)
\mapsto
\left(a_0,a_1,a_2a_3,a_3^{-1},a_3a_4,a_3a_5,f,{a_3^2 a_4a_5^{-1} f+1\over a_4a_5^{-1}f+a_3^2}g\right),\\
r_4\colon\ (a,f,g)
\mapsto
\bigl(a_0,a_1,a_2,a_3a_4,a_4^{-1},a_5,f,g\bigr),\\
r_5\colon\ (a,f,g)
\mapsto
\bigl(a_0,a_1,a_2,a_3a_5,a_4,a_5^{-1},f,g\bigr),\\
\sigma_{01}\colon\ (a,f,g)
\mapsto
\bigl(a_1^{-1},a_0^{-1},a_2^{-1},a_3^{-1},a_4^{-1},a_5^{-1},f^{-1},g\bigr),\\
\sigma_{45}\colon\ (a,f,g)
\mapsto
\bigl(a_0^{-1},a_1^{-1},a_2^{-1},a_3^{-1},a_5^{-1},a_4^{-1},f,g^{-1}\bigr),\\
\tau\colon\ (a,f,g)
\mapsto
\bigl(a_5^{-1},a_4^{-1},a_3^{-1},a_2^{-1},a_1^{-1},a_0^{-1},g,f\bigr),\\
T\colon\ a \mapsto \bigl(a_0,a_1,\mathsf{p} a_2,\mathsf
{p}^{-1} a_3,a_4,a_5\bigr).
\end{gather*}
\end{Lemma}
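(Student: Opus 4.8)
The plan is to treat Lemma~\ref{lemA7} as a direct unpacking of Definition~\ref{birational-cl}, so that essentially every line is obtained by substituting concrete data into rules already fixed there. First I would record the Cartan matrix of $D_5^{(1)}$ by reading the adjacencies off the Dynkin diagram in Figure~\ref{D5}: the trivalent node $2$ is joined to $0$, $1$, $3$ and the trivalent node $3$ is joined to $2$, $4$, $5$, so the only nonzero off-diagonal entries are $a_{02}=a_{20}=a_{12}=a_{21}=a_{23}=a_{32}=a_{34}=a_{43}=a_{35}=a_{53}=-1$. Feeding these into the reflection rule $r_i\cdot a_j=a_j a_i^{-a_{ij}}$ of Definition~\ref{birational-cl} then produces each of the first six lines mechanically; for instance $r_0\cdot a_0=a_0 a_0^{-2}=a_0^{-1}$, $r_0\cdot a_2=a_2 a_0=a_0a_2$, with all remaining $a_j$ fixed, and analogously for $r_1,\dots,r_5$.

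The action on the dynamical pair $(f,g)$ requires no computation at all: Definition~\ref{birational-cl} already stipulates $r_i\cdot f=f$ and $r_i\cdot g=g$ for $i\neq 2,3$, gives the two M\"obius transformations $r_2\cdot f$ and $r_3\cdot g$ explicitly, and fixes $\sigma_{01}\cdot f=f^{-1}$, $\sigma_{45}\cdot g=g^{-1}$, $\tau\cdot(f,g)=(g,f)$. For the action of the diagram automorphisms on the root variables I would apply the rule $w\cdot a_i=(a_{w(i)})^{-1}$ with the index permutations recorded in Figure~\ref{D5}, namely $\sigma_{01}=(0\,1)$, $\sigma_{45}=(4\,5)$ and $\tau=(0\,5)(1\,4)(2\,3)$; this reproduces the lines for $\sigma_{01}$, $\sigma_{45}$ and $\tau$ verbatim.

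The only genuinely composite entry is the last line, the action of the translation $T=r_2r_1r_0r_2\sigma_{01}r_3r_4r_5r_3\sigma_{45}$ on the $a_i$. The clean route is to use the standard fact that $T$ acts on $\mathfrak{h}^*$ as a translation, so that $T(\alpha_i)-\alpha_i$ is an integer multiple of the null root $\delta$; passing to exponentials $a_i=e^{\alpha_i}$ turns these shifts into multiplication by powers of $\mathsf{p}=e^{\delta}=a_0a_1a_2^2a_3^2a_4a_5$. Evaluating $T(\alpha_i)-\alpha_i$ gives $+\delta$ for $i=2$, $-\delta$ for $i=3$ and $0$ otherwise, whence $T\colon a\mapsto(a_0,a_1,\mathsf{p}a_2,\mathsf{p}^{-1}a_3,a_4,a_5)$. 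Alternatively one composes the ten generator actions in order, which is mechanical but lengthy. The main obstacle is purely one of bookkeeping---tracking the accumulated $\mathsf{p}$-shifts through the product and respecting that $\sigma_{01},\sigma_{45},\tau$ are composed with the (anti-)homomorphism conventions of Definition~\ref{birational-cl}---rather than any conceptual difficulty, and the consistency of the whole table with the group law is guaranteed by the Coxeter relations verified in Proposition~\ref{Cox}.
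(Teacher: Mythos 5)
Your proposal is correct and matches the paper's (implicit) treatment: Lemma~\ref{lemA7} is stated without proof precisely because it is a direct tabulation of Definition~\ref{birational-cl} together with the translation property of $T$ recorded just before it, which is exactly how you proceed. The only trivial quibble is that in the commutative setting of Appendix~\ref{App.A} the automorphisms $\sigma_{01},\sigma_{45},\tau$ are extended as ring homomorphisms (the anti-homomorphism convention arises only in the quantized version of Section~\ref{sec41}), but this does not affect any step of your argument.
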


\begin{Lemma}\label{lemA8}
Write $b=(b_1,b_2,b_3,b_4,b_5,b_6,b_7,b_8)$ for short.
We have
\begin{gather*}
r_0\colon\ (b,f,g)
\mapsto
(b_1,b_2,b_3,b_4,b_5,b_6,b_8,b_7,f,g),\\
r_1\colon\ (b,f,g)
\mapsto
(b_1,b_2,b_3,b_4,b_6,b_5,b_7,b_8,f,g),\\
r_2\colon\ (b,f,g)
\mapsto
\bigg(b_1\sqrt{b_7\over b_5},b_2\sqrt{b_7\over b_5},b_3\sqrt{b_5\over b_7},b_4\sqrt{b_5\over b_7},
b_7,b_6,b_5,b_8,f\sqrt{b_5\over b_7}\frac{g+b_7}{g+b_5} ,g\bigg),\\
r_3\colon\ (b,f,g)
\mapsto
\bigg(b_3,b_2,b_1,b_4,
b_5\sqrt{b_3\over b_1},b_6\sqrt{b_3\over b_1},b_7\sqrt{b_1\over b_3},b_8\sqrt{b_1\over b_3},
f, \sqrt{b_1\over b_3}\frac{f+b_3}{f+b_1} g\bigg),\\
r_4\colon\ (b,f,g)
\mapsto
(b_2,b_1,b_3,b_4,b_5,b_6,b_7,b_8,f,g),\\
r_5\colon\ (b,f,g)
\mapsto
(b_1,b_2,b_4,b_3,b_5,b_6,b_7,b_8,f,g),\\
\sigma_{01}\colon\ (b,f,g)
\mapsto
\bigl(b_1^{-1},b_2^{-1},b_3^{-1},b_4^{-1},b_7,b_8,b_5,b_6,f^{-1},g\bigr),\\
\sigma_{45}\colon\ (b,f,g)
\mapsto
\bigl(b_3,b_4,b_1,b_2,b_5^{-1},b_6^{-1},b_7^{-1},b_8^{-1},f,g^{-1}\bigr),\\
\tau\colon\ (b,f,g)
\mapsto
(b_5,b_6,b_7,b_8,b_1,b_2,b_3,b_4,g,f),\\
T\colon\ b
\mapsto
\bigl(\mathsf{p}^{-2} b_1, \mathsf{p}^{-2}b_2, \mathsf{p}^{2}b_3, \mathsf{p}^{2}b_4,
\mathsf{p}^{-2}b_5,\mathsf{p}^{-2}b_6,\mathsf{p}^{2}b_7,\mathsf{p}^{2}b_8\bigr).
\end{gather*}
\end{Lemma}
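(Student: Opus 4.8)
The plan is to deduce the lemma entirely from Lemma~\ref{lemA7}, which records the action of each generator on the root variables $a=(a_0,\dots,a_5)$ and on $(f,g)$, together with the change of variables in Definition~\ref{b-variables}. Since every $b_i$ is a Laurent monomial in the $a_j$ (with $\mathsf{p}=a_0a_1a_2^2a_3^2a_4a_5=\mathrm{e}^\delta$ and $\mathsf{t}=a_3^2a_4a_5$), the action $w\cdot b_i$ for each $w\in\{r_0,\dots,r_5,\sigma_{01},\sigma_{45},\tau,T\}$ is obtained by substituting the transformed $a_j$ from Lemma~\ref{lemA7} into the monomial expression for $b_i$ and simplifying. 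The $(f,g)$-part of the action is literally the same as in Lemma~\ref{lemA7}; the only work there is to rewrite the coefficients, which lie in $\mathbb{K}$, in terms of the $b_j$.

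First I would treat the outer reflections $r_0,r_1,r_4,r_5$. Each of these rescales only a pair of ``external'' root variables and fixes $a_2,a_3$; since $\mathsf{p}$ is $\widetilde{W}$-invariant (being $\mathrm{e}^\delta$ with $\delta$ the null root), one checks at once that, for example, $r_0$ fixes $a_3,a_4,a_5$ and sends $a_0\mapsto a_0^{-1}$, $a_2\mapsto a_0a_2$, so that $b_7=a_1a_2^2/a_0$ and $b_8=a_0^3a_1a_2^2$ are interchanged while all other $b_i$ are untouched. The analogous one-line computations give the pair-swaps recorded for $r_1,r_4,r_5$. For $\sigma_{01},\sigma_{45},\tau$ every $a_i$ is inverted (and the labels permuted), which translates into the stated inversions and permutations of the $b_i$; here the distinction between homomorphism and anti-homomorphism is irrelevant, because the $b_i$ are central scalars in $\mathbb{K}$ and only the $(f,g)$-entries (already copied from Lemma~\ref{lemA7}) see it.

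The real work is the inner reflections $r_2,r_3$, because they rescale $a_2$ (resp.\ $a_3$) and hence move $\mathsf{t}$: one finds $r_3\cdot\mathsf{t}=a_4a_5=a_3^{-2}\mathsf{t}$, and similarly for $r_2$. Substituting $a_3\mapsto a_2a_3$ into $b_1=a_3^2a_5/a_4$ gives $r_2\cdot b_1=a_2^2\,b_1$, and using the identity $b_7/b_5=a_2^4$ (equivalently $a_2^2=\sqrt{b_7/b_5}$) this is exactly the $b_1\sqrt{b_7/b_5}$ of the table; the same substitution yields the swap $b_5\leftrightarrow b_7$ and the factors on $b_2,b_3,b_4$. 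The $(f,g)$-coefficient then transcribes cleanly: factoring $r_2\cdot f=f\,(a_0a_1^{-1}g+a_2^2)/(a_0a_1^{-1}a_2^2g+1)$ as $f\,a_2^{-2}\,(g+a_0^{-1}a_1a_2^2)/(g+a_0^{-1}a_1a_2^{-2})$ and identifying $a_0^{-1}a_1a_2^2=b_7$, $a_0^{-1}a_1a_2^{-2}=b_5$, $a_2^{-2}=\sqrt{b_5/b_7}$ produces precisely $f\sqrt{b_5/b_7}\,(g+b_7)/(g+b_5)$. The reflection $r_3$ is handled identically, with $a_3^4=b_1/b_3$ supplying the factor $\sqrt{b_1/b_3}$ in $r_3\cdot g=a_3^2\,(f+b_3)/(f+b_1)\,g$. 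I expect this square-root bookkeeping to be the only place needing care, the main point being a consistent branch convention, namely $a_2^{-2}=\sqrt{b_5/b_7}$ and $a_3^2=\sqrt{b_1/b_3}$, which is the one implicit in Definition~\ref{b-variables}.

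Finally, for the translation $T$ I would use the direct rule $T\colon a_2\mapsto\mathsf{p}a_2$, $a_3\mapsto\mathsf{p}^{-1}a_3$ from Lemma~\ref{lemA7} rather than expanding the word $r_2r_1r_0r_2\sigma_{01}r_3r_4r_5r_3\sigma_{45}$: since $T$ fixes $\mathsf{p}$ and sends $\mathsf{t}\mapsto\mathsf{p}^{-2}\mathsf{t}$, each $b_i$ is simply scaled by $\mathsf{p}^{\pm2}$ exactly as recorded. As consistency checks I would verify that all these transformations preserve the constraints \eqref{b-consts} and reproduce $T(\mathsf{t})=\mathsf{p}^{-2}\mathsf{t}$, and that the $(f,g)$-action of $T$ is the one obtained from the same word in Proposition~\ref{Hasegawa-form}. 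Everything apart from the $r_2,r_3$ square roots is a direct monomial substitution, so no further obstacle is anticipated.
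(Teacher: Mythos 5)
Your proposal is correct and coincides with the (implicit) verification behind the paper's table: Lemma~\ref{lemA8} is stated without proof, and the intended derivation is exactly the monomial substitution of the Lemma~\ref{lemA7} action into Definition~\ref{b-variables}, with $b_7/b_5=a_2^4$ and $b_1/b_3=a_3^4$ supplying the square-root factors for $r_2$, $r_3$ just as you describe. One cosmetic caveat: $\mathsf{p}={\rm e}^{\delta}$ is invariant under the simple reflections (which is all you use that fact for), but it is inverted by $\sigma_{01}$, $\sigma_{45}$, $\tau$, so calling it ``$\widetilde{W}$-invariant'' is a slight overstatement that does not affect any of your computations.
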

The action of $T$ on $(f,g)$ is read from the Painlev\'e VI equation.

\section{Evidences for Conjecture \ref{secondconjecture}}
\label{App.B}

\subsection{Heine limit (the sector with vanishing instanton number)}

The power of the parameter $\Lambda$ counts the instanton number.
In the $\Lambda \to 0$ limit of the partition function \eqref{Nf4},
only the fixed points with instanton number zero survive\footnote{Recall that $x_1 = x$ and $x_2 = \Lambda/x$.}
and the partition function degenerates to the $q$-hypergeometric series.
Note that in this limit the non-stationary shift $T_{t, \Lambda}$ becomes trivial.

\begin{Lemma}\label{Heine}
The equation
\begin{align*}
S y(x)=y(x), \qquad
S:=\frac{1}{\varphi(x)} \Bor
\frac{1}{\varphi\bigl(-\frac{a}{q}x\bigr)\varphi\bigl(-\frac{b}{q}x\bigr)} \Bor \frac{1}{\varphi\bigl(\frac{ab}{q^2}x\bigr)} T_{\frac{c}{q^2},x}
\end{align*}
has a solution given by the Heine's $q$-hypergeometric series
\[
y(x)={}_2 \phi_{1}(x)= {}_2\phi_1\left[\left. {a,b \atop c}\right| q; x \right]
:=\sum_{n=0}^{\infty}\frac{(a;q)_n(b;q)_n}{(q,q)_n(c;q)_n}x^n.
\]
\end{Lemma}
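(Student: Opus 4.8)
The plan is to exploit that $S$ preserves formal power series in $x$ and acts triangularly, so that $Sy=y$ pins down $y$ up to normalisation; it then suffices to verify that the Heine series is a solution. First I would record the action of $S$ on a monomial $x^N$: reading the operators from the right, $T_{c/q^2,x}$ scales by $(c/q^2)^N$, the three $\varphi^{-1}$ factors each contribute $1+O(x)$ and so leave the coefficient of the lowest power $x^N$ unchanged, while the two copies of $\Bor$ each multiply $x^N$ by $q^{N(N+1)/2}$. Hence $S\,x^N=\mu_N x^N+(\text{higher order})$ with $\mu_N=c^N q^{N(N-1)}$, and since $\mu_0=1$ while $\mu_N\neq1$ for $N\ge1$ at generic $(c,q)$, the same triangularity argument used for \eqref{qq-PVI} shows that $Sy=y$ has a unique formal solution normalised by $y(0)=1$. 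Because ${}_2\phi_1$ is normalised this way, the lemma reduces to checking that it solves the equation.

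For the verification I would propagate ${}_2\phi_1$ through $S$ from the right, cancelling one infinite product at each stage. The first stage is exact: $T_{c/q^2,x}\,y={}_2\phi_1(a,b;c;q;cx/q^2)$, and multiplying by $\varphi(abx/q^2)^{-1}$ the $q$-Euler transformation ${}_2\phi_1(a,b;c;q;z)=\frac{(abz/c;q)_\infty}{(z;q)_\infty}\,{}_2\phi_1(c/a,c/b;c;q;abz/c)$ with $z=cx/q^2$ produces the clean cancellation
\[
\frac{1}{\varphi(abx/q^2)}\,{}_2\phi_1(a,b;c;q;cx/q^2)=\frac{1}{\varphi(cx/q^2)}\,{}_2\phi_1(c/a,c/b;c;q;abx/q^2).
\]
I would then push this through the remaining factors $\Bor$, $\varphi(-ax/q)^{-1}\varphi(-bx/q)^{-1}$, $\Bor$, $\varphi(x)^{-1}$, using the $q$-binomial theorem $\varphi(\alpha x)^{-1}=\sum_k \alpha^k x^k/(q;q)_k$, Euler's identity $\varphi(-\alpha x)=\sum_k q^{k(k-1)/2}\alpha^k x^k/(q;q)_k$, the resulting rule $\Bor\cdot\varphi(\alpha x)^{-1}\cdot 1=\varphi(-q\alpha x)$, and further Heine transformations, the goal being to arrive at $\varphi(x)\,y$, that is, at $Sy=y$.

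A structurally cleaner alternative is to collapse the two Borel operators first. Using \eqref{adGamma} one has $\Bor f(x)\Bor^{-1}=f(px)$ and, since $\Bor$ and any scaling $T_{\lambda,x}$ act diagonally on monomials, $[\Bor,T_{c/q^2,x}]=0$; this rewrites $S=\varphi(x)^{-1}\,\beta(px)\,\gamma(p^2x)\,T_{c/q^2,x}\,\Bor^2$ with $\beta=\varphi(-ax/q)^{-1}\varphi(-bx/q)^{-1}$ and $\gamma=\varphi(abx/q^2)^{-1}$, recasting $Sy=y$ as a relation to be matched against the classical second-order $q$-hypergeometric difference equation equivalent to the three-term recursion $(1-q^{n+1})(1-cq^n)c_{n+1}=(1-aq^n)(1-bq^n)c_n$ for the coefficients of ${}_2\phi_1$. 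The main obstacle in either route is the bookkeeping forced by the two Borel stages: the weights $q^{n(n+1)/2}$ do not combine multiplicatively with the $(q;q)_n^{-1}$ appearing in the $q$-exponential identities, so neither the $\varphi$-products nor the ${}_2\phi_1$ convolutions telescope on their own. The crux is to interleave each application of $\Bor$ with exactly one Euler/Heine transformation (equivalently, one $q$-Gauss or $q$-Vandermonde summation) that removes a single $\varphi$ factor while tracking the induced shifts of $a,b,c$; making these parameter shifts close up into the original equation is where the real work lies.
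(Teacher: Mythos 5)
There is a genuine gap: the central verification is never carried out. Your first paragraph (triangularity of $S$, so $Sy=y$ has a unique normalized formal solution) is correct but does not advance the proof — uniqueness of the solution of $Sy=y$ tells you nothing about whether ${}_2\phi_1$ is that solution; you still have to establish $S\,{}_2\phi_1={}_2\phi_1$ outright. Your second paragraph gets exactly one step into that computation (the rightmost factor, handled correctly by the $q$-Euler transformation) and then defers the rest: by your own admission, interleaving the two $\Bor$-stages with Heine transformations and making the parameter shifts close up ``is where the real work lies.'' That work is precisely the content of the lemma, and it is nontrivial — one needs identities of the type $\Bor\,\varphi(\alpha x)^{-1}x^n=q^{n(n+1)/2}\varphi\bigl(-q^{n+1}\alpha x\bigr)x^n$ followed by $q$-Gauss/Chu--Vandermonde resummations to recombine the resulting double sums (this is essentially the computation the paper does carry out, in a different special case, in Appendix~B.2.2). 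As written, the proposal is a plan rather than a proof.

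For comparison, the paper's proof sidesteps all infinite-series manipulation. It uses the fact that ${}_2\phi_1$ is the unique normalized power-series solution of the Heine operator $D_{\rm Heine}=x(1-ap)(1-bp)p^{-1}-(1-p)(q-cp)p^{-1}$, and then shows $S^{-1}D_{\rm Heine}S=D_{\rm Heine}$ by conjugating $D_{\rm Heine}$ through each factor of $S$ in turn — a finite computation with Laurent polynomials in $(x,p)$, tracked by Newton polygons exactly as in Proposition~\ref{prp:5dSW}. Then $D_{\rm Heine}S\,{}_2\phi_1=S D_{\rm Heine}\,{}_2\phi_1=0$ together with $S\,{}_2\phi_1=1+O(x)$ (your triangularity observation, used for the \emph{Heine} equation rather than for $Sy=y$) forces $S\,{}_2\phi_1={}_2\phi_1$. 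Your ``structurally cleaner alternative'' of matching against the three-term recursion gestures toward this, but the key missing idea is to prove a commutation relation $[S,D_{\rm Heine}]=0$ rather than to act with $S$ on the series itself. If you want to salvage your direct route, you must actually execute the $\Bor$-stage computations; otherwise, switch to the conjugation argument.
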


\begin{proof}
Recall that the Heine's series ${}_2 \phi_{1}(x)$
satisfies the $q$-difference equation
\begin{align*}
D_{\rm Heine}& \ {}_2 \phi_{1}(x)=0, \\
D_{\rm Heine}&:=x(1-a p)(1-b p)p^{-1}-(1-p)(q-c p)p^{-1},
\qquad p=T_{q, x}.
\end{align*}
In a similar way to the proof of Proposition~\ref{prp:5dSW},
we see the adjoint actions on $D_{\rm Heine}$ as
\begin{center}\setlength{\unitlength}{0.4mm}
\begin{picture}(140,45)(45,-15)
\put(0,0){\line(0,1){20}}
\put(10,0){\line(0,1){20}}
\put(0,0){\line(1,0){10}}
\put(0,10){\line(1,0){10}}
\put(0,20){\line(1,0){10}}
\multiput(0,0)(10,0){2}{\circle*{4}}
\multiput(0,10)(10,0){2}{\circle*{4}}
\multiput(0,20)(10,0){2}{\circle*{4}}
\put(25,5){$\mapsto$}
\put(25,-5){{\tiny $\operatorname{Ad}\varphi(x)$}}
\put(60,0){\line(0,1){20}}
\put(70,0){\line(0,1){20}}
\put(60,0){\line(1,0){10}}
\put(60,10){\line(1,0){10}}
\put(60,20){\line(1,0){10}}
\multiput(60,0)(10,0){1}{\circle*{4}}
\multiput(60,10)(10,0){2}{\circle*{4}}
\multiput(60,20)(10,0){3}{\circle*{4}}
\put(85,5){$\mapsto$}
\put(82,-5){{\tiny $\operatorname{Ad}\Bor^{-1}$}}
\put(120,0){\line(0,1){20}}
\put(130,0){\line(0,1){20}}
\put(120,0){\line(1,0){10}}
\put(120,10){\line(1,0){10}}
\put(120,20){\line(1,0){10}}
\multiput(120,0)(10,0){3}{\circle*{4}}
\multiput(120,10)(10,0){2}{\circle*{4}}
\multiput(120,20)(10,0){1}{\circle*{4}}
\end{picture}
\setlength{\unitlength}{0.4mm}
\begin{picture}(120,45)(25,-15)
\put(5,5){$\mapsto$}
\put(-15,-5){{\tiny $\operatorname{Ad}\varphi(-\frac{a}{q}x)\varphi(-\frac{b}{q}x)$}}
\put(60,0){\line(0,1){20}}
\put(70,0){\line(0,1){20}}
\put(60,0){\line(1,0){10}}
\put(60,10){\line(1,0){10}}
\put(60,20){\line(1,0){10}}
\multiput(60,0)(10,0){1}{\circle*{4}}
\multiput(60,10)(10,0){2}{\circle*{4}}
\multiput(60,20)(10,0){3}{\circle*{4}}
\put(85,5){$\mapsto$}
\put(82,-5){{\tiny $\operatorname{Ad}\Bor^{-1}$}}
\put(115,0){\line(0,1){20}}
\put(125,0){\line(0,1){20}}
\put(115,0){\line(1,0){10}}
\put(115,10){\line(1,0){10}}
\put(115,20){\line(1,0){10}}
\multiput(115,0)(10,0){3}{\circle*{4}}
\multiput(115,10)(10,0){2}{\circle*{4}}
\multiput(115,20)(10,0){1}{\circle*{4}}
\put(150,5){$\mapsto$}
\put(145,-5){{\tiny $\operatorname{Ad}\varphi(\frac{ab}{q^2}x)$}}
\put(185,0){\line(0,1){20}}
\put(195,0){\line(0,1){20}}
\put(185,0){\line(1,0){10}}
\put(185,10){\line(1,0){10}}
\put(185,20){\line(1,0){10}}
\multiput(185,0)(10,0){2}{\circle*{4}}
\multiput(185,10)(10,0){2}{\circle*{4}}
\multiput(185,20)(10,0){2}{\circle*{4}}
\put(200,0){,}
\end{picture}
\end{center}
and we have
\[
S^{-1} D_{\rm Heine} \ S=D_{\rm Heine}.
\]
Hence, we obtain
\[
D_{\rm Heine}  S \, {}_2\phi_{1}(x)
=S D_{\rm Heine} \, {}_2 \phi_{1}(x)=0.
\]
Since the power series solution of the equation $ D_{\rm Heine} y(x)=0$ of the form $y(x)=1+O(x)$ is unique,
we have the conclusion $S \, {}_2\phi_{1}(x)= {}_2\phi_{1}(x)$.
\end{proof}

\begin{Remark}
Lemma \ref{Heine} shows the factorization of the coefficients $c_{m,0}$ of the ``boundary'' terms of
the solution \eqref{double-series}.
The other cases $c_{0,n}$ are similar.
\end{Remark}

\subsection{Macdonald limit}

\subsubsection[Macdonald function of types A\_1 and A\_2 \protect{[43]}]{Macdonald function of types $\boldsymbol{A_1}$ and $\boldsymbol{A_2}$ \cite{NShi}}

The asymptotically free Macdonald function $f^{\mathfrak{gl}_2}(x|s|q,t)$ of type $A_1$
is written as
\begin{align*}
f^{\mathfrak{gl}_2}(x|s|q,t)=\sum_{\theta\geq 0}
{(t;q)_\theta (ts_2/s_1;q)_\theta \over
(q;q)_\theta (qs_2/s_1;q)_\theta} (q x_2/tx_1)^\theta.
\end{align*}
We have the eigenvalue equation
\begin{align*}
&\mathcal{D}^{\mathfrak{gl}_2}(s|q,t)= s_1 {1-x_2/tx_1\over 1-x_2/x_1}T_{q,x_1}+s_2 {1-t x_2/x_1\over 1-x_2/x_1}T_{q,x_2},\\
&\mathcal{D}^{\mathfrak{gl}_2}(s|q,t) f^{\mathfrak{gl}_2}(x|s|q,t)=
(s_1+s_2)f^{\mathfrak{gl}_2}(x|s|q,t).
\end{align*}

The asymptotically free Macdonald function $f^{\mathfrak{gl}_3}(x|s|q,t)$ of type $A_2$
is written as
\begin{align}
f^{\mathfrak{gl}_3}(x|s|q,t)=&\sum_{\theta_{12}, \theta_{13}, \theta_{23} \geq 0}
{(t;q)_{\theta_{12}} \bigl(q^{-\theta_{23}+\theta_{13}} ts_2/s_1;q\bigr)_{\theta_{12}} \over
(q;q)_{\theta_{12}} \bigl(q^{-\theta_{23}+\theta_{13}} qs_2/s_1;q\bigr)_{\theta_{12}}} (q x_2/tx_1)^{\theta_{12}}\nonumber \\
& \times
{(t;q)_{\theta_{13}} ( ts_3/s_1;q)_{\theta_{13}} \over
(q;q)_{\theta_{13}} ( qs_3/s_1;q)_{\theta_{13}}}
{(ts_2/s_1;q)_{\theta_{13}} \bigl(q^{\theta_{23}-\theta_{13}} ts_1/s_2;q\bigr)_{\theta_{13}} \over
(qs_2/s_1;q)_{\theta_{13}} \bigl(q^{\theta_{23}-\theta_{13}} qs_1/s_2;q\bigr)_{\theta_{13}}}\nonumber \\
&\times \bigl(q^2 x_3/t^2x_1\bigr)^{\theta_{13}}
{(t;q)_{\theta_{23}} ( ts_3/s_2;q)_{\theta_{23}} \over
(q;q)_{\theta_{23}} ( qs_3/s_2;q)_{\theta_{23}}} (q x_3/tx_2)^{\theta_{23}}.\label{f3}
\end{align}
The eigenvalue equation is
\begin{align*}
&\mathcal{D}^{\mathfrak{gl}_3}(s|q,t)=
s_1 {1-x_2/tx_1\over 1-x_2/x_1} {1-x_3/tx_1\over 1-x_3/x_1}T_{q,x_1}\\
&\phantom{\mathcal{D}^{\mathfrak{gl}_3}(s|q,t)=}{} +s_2 {1-t x_2/x_1\over 1-x_2/x_1} {1- x_3/tx_2\over 1-x_3/x_2}T_{q,x_2}+
s_3 {1-t x_3/x_1\over 1-x_3/x_1} {1- t x_3/x_2\over 1-x_3/x_2}T_{q,x_3},\\
&\mathcal{D}^{\mathfrak{gl}_3}(s|q,t) f^{\mathfrak{gl}_3}(x|s|q,t)=
(s_1+s_2+s_3)f^{\mathfrak{gl}_3}(x|s|q,t).
\end{align*}

\subsubsection{Macdonald limit by tuning mass parameters}

Consider the particular choice of the mass parameters
\begin{align}
T_1=v^{-1},\qquad T_2=v,\qquad T_3=v^{-1},\qquad T_4=v t^{-1}. \label{Mac(1)}
\end{align}

\begin{Proposition}\label{gl3-dege}
Let \eqref{Mac(1)} be satisfied.
Then the Nekrasov partition function $\Psi( t\Lambda,x)$ degenerates to
the asymptotically free Macdonald function of type $\mathfrak{gl}_3$
with $(x_1,x_2,x_3)=(1/\Lambda,1/x,1)$ and $(s_1,s_2,s_3)=(1/Q,1,0)$,
\begin{align}
\Psi( t\Lambda,x)= f^{\mathfrak{gl}_3}(1/\Lambda,1/x,1|1/Q,1,0|q,t). \label{Z-f3}
\end{align}

Therefore, we have the eigenvalue equation
\begin{align*}
&\mathcal{D}^{\mathfrak{gl}_3}(1/Q,1,0 |q,q/t) \Psi( t\Lambda,x)
=(1/Q+1)\Psi( t\Lambda,x),
\end{align*}
where
\begin{align*}
&\mathcal{D}^{\mathfrak{gl}_3}(1/Q,1,0 |q,q/t)=
Q^{-1} {1-t\Lambda/q x\over 1-\Lambda/x} {1-t\Lambda/q\over 1-\Lambda}T_{q,\Lambda}^{-1}
+ {1-q\Lambda/t x\over 1-\Lambda/x} {1- tx/q\over 1-x}T_{q,x}^{-1}.
\end{align*}
\end{Proposition}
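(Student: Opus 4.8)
The plan is to prove the identity \eqref{Z-f3} by direct specialization of the four-partition sum in Definition \ref{A_2quiverZ}, and then to read off the eigenvalue equation as an immediate corollary of the Macdonald difference equation for $f^{\mathfrak{gl}_3}$. First I would substitute the Higgsing values $\phi_1=q^{-1/2}t^{3/2}$, $\phi_2=q^{1/2}t^{-1/2}$ together with the tuned masses \eqref{Mac(1)} into the parametrization \eqref{Sh-parameters}. A short computation gives $w=v\phi_1=t$, $\mathfrak{p}_1=x$ and $\mathfrak{p}_2=t^{-1}\Lambda/x$ (after the shift $\Lambda\mapsto t\Lambda$), while all the fundamental and bifundamental arguments become simple monomials in $q,t,Q$ such as $1$, $1/Q$, $t$, $tQ$. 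The crucial point is that $w=t=\kappa^{-1}$ with $\kappa=t^{-1}$, so that several Nekrasov factors sit exactly at the special value where $\Nk$ develops a corner zero.

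The heart of the argument is the resulting collapse of the summation. Using the explicit product formula for $\Nk$ with $\kappa=t^{-1}$, I would show that the numerator factors force the partitions to degenerate: $\Nk_{\mu_1,\varnothing}(1|q,t^{-1})$ vanishes unless $\mu_1=\varnothing$; the bifundamental factor $\Nk_{\nu_1,\mu_1}(t|q,t^{-1})=\Nk_{\nu_1,\varnothing}(t|q,t^{-1})$ and the antifundamental factor $\Nk_{\mu_2,\varnothing}(t|q,t^{-1})$ each acquire a zero at the corner of any second row, so that $\nu_1$ and $\mu_2$ are restricted to single rows; finally $\Nk_{\nu_2,\mu_2}(1|q,t^{-1})$ with $\mu_2$ a single row forces $\nu_2$ to be a single row contained in $\mu_2$. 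Thus the quadruple sum over $\mathsf P^4$ truncates to a triple sum over the integers $|\nu_1|$, $|\nu_2|$, $|\mu_2|$ subject to $|\nu_2|\le|\mu_2|$. Evaluating every surviving Nekrasov factor on single rows turns it into a ratio of $q$-shifted factorials, and the instanton weight becomes $x^{|\nu_1|+|\nu_2|}(t^{-1}\Lambda/x)^{|\mu_2|}$. It then remains to match this series term by term against the expansion \eqref{f3} under $(x_1,x_2,x_3)=(1/\Lambda,1/x,1)$ and $(s_1,s_2,s_3)=(1/Q,1,0)$; the forced vanishing $\mu_1=\varnothing$ is precisely the gauge-theory counterpart of $s_3=0$, and the indices $\theta_{12},\theta_{13},\theta_{23}$ of \eqref{f3} are recovered by a linear change of variables from $(|\nu_1|,|\nu_2|,|\mu_2|)$.

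Once \eqref{Z-f3} is established, the eigenvalue equation follows at once, since $f^{\mathfrak{gl}_3}$ is an eigenfunction of the $\mathfrak{gl}_3$ Macdonald operator. Translating the variables $x_i$ into $(\Lambda,x)$, so that $T_{q,x_1}=T_{q,\Lambda}^{-1}$ and $T_{q,x_2}=T_{q,x}^{-1}$, and imposing $s_3=0$ collapses the operator to the two-term form displayed, the Macdonald parameter $q/t$ being produced by the substitution $\kappa=t^{-1}$. I expect the main obstacle to be the combinatorial matching of the second paragraph: correctly organizing the single-row evaluations, tracking every power of $q$ and $t$ through the $\kappa=t^{-1}$ bookkeeping, and pinning down the precise linear relation between the gauge indices and the $\theta_{ij}$, including the $q^{\pm(\theta_{13}-\theta_{23})}$ shifts appearing in \eqref{f3}. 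A convenient check at each stage is the $\Lambda\to0$ limit, where only a single row survives and the identity degenerates to the $q$-hypergeometric statement of Lemma \ref{Heine}.
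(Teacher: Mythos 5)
Your proposal is correct and follows essentially the same route as the paper: the paper likewise evaluates the six Nekrasov-factor arguments under \eqref{Mac(1)} to find the values $1$ and $t=\kappa^{-1}$, invokes the vanishing criteria $\mathsf{N}_{\lambda,\mu}(1)\neq 0\Leftrightarrow\lambda_i\le\mu_i$ and $\mathsf{N}_{\lambda,\mu}(\kappa^{-1})\neq 0\Leftrightarrow\lambda_{i+1}\le\mu_i$ to truncate the sum to $\mu_1=\varnothing$, $\nu_1=(\theta_{23})$, $\nu_2=(\theta_{13})$, $\mu_2=(\theta_{12}+\theta_{13})$, and then matches the surviving triple sum against \eqref{f3} by direct computation, with the eigenvalue equation following from the Macdonald difference equation. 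Your identification of which factors force which constraints, and of the residual explicit term-by-term bookkeeping as the main remaining labor, agrees with the paper's proof.
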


We need the following lemma which is easily confirmed by the explicit formula of
the Nekrasov factor.
\begin{Lemma}\quad
\begin{itemize}\itemsep=0pt
 \item[$(1)$]$\mathsf{N}_{\lambda,\mu}(1|q,\kappa)\neq 0$
if and only if $\lambda_i\leq \mu_i$ $(i\geq 1)$.
 \item[$(2)$]
$\mathsf{N}_{\lambda,\mu}\bigl(\kappa^{-1}|q,\kappa\bigr)\neq 0$
if and only if $\lambda_{i+1}\leq \mu_i$ $(i\geq 1)$.
\end{itemize}
\end{Lemma}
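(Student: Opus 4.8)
The plan is to read off the (non)vanishing of $\Nk_{\lambda,\mu}(u|q,\kappa)$ factor by factor, exploiting that $q$ and $\kappa$ are independent indeterminates. Every factor in the definition has the shape $1-u\,q^{a}\kappa^{b}$, and here $u$ is itself a monomial in $q,\kappa$ (namely $1$ or $\kappa^{-1}$); so, after absorbing $u$, such a factor vanishes precisely when the resulting exponents of $q$ and of $\kappa$ both vanish. Thus the first step is to determine, for each of the two products in $\Nk_{\lambda,\mu}$, which cells can force a factor to vanish.

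First I would dispose of the second product $\prod_{(k,l)\in\mu}\bigl(1-u\,q^{-\mu_k+l-1}\kappa^{\lambda'_l-k}\bigr)$. Its $q$-exponent vanishes only when $l=\mu_k+1$, both for $u=1$ and for $u=\kappa^{-1}$. But $(k,l)\in\mu$ forces $l\le\mu_k$, so $l=\mu_k+1$ is impossible and the second product never vanishes in either case. Hence all vanishing comes from the first product $\prod_{(i,j)\in\lambda}\bigl(1-u\,q^{\lambda_i-j}\kappa^{-\mu'_j+i-1}\bigr)$, whose $q$-exponent $\lambda_i-j$ vanishes only at the end-of-row cells $j=\lambda_i$ (with $\lambda_i\ge1$). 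The remaining condition is on the $\kappa$-exponent: for $u=1$ it is $\mu'_{\lambda_i}=i-1$, and for $u=\kappa^{-1}$ it is $\mu'_{\lambda_i}=i-2$.

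The key combinatorial step is to rewrite these conditions on the conjugate partition $\mu'$ as inequalities on parts. Using $\mu'_v=\#\{k:\mu_k\ge v\}$ one has $\mu'_{v}=i-1\iff\mu_i<v\le\mu_{i-1}$, with the convention $\mu_0=\infty$. Therefore $\Nk_{\lambda,\mu}(1)=0$ iff there is an $i$ with $\mu_i<\lambda_i\le\mu_{i-1}$, and $\Nk_{\lambda,\mu}(\kappa^{-1})=0$ iff there is an $i\ge2$ with $\mu_{i-1}<\lambda_i\le\mu_{i-2}$, i.e.\ after reindexing $i\mapsto i+1$, an $i\ge1$ with $\mu_i<\lambda_{i+1}\le\mu_{i-1}$. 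It then remains to see that these two-sided conditions are equivalent to the mere violations $\lambda_i>\mu_i$ (resp.\ $\lambda_{i+1}>\mu_i$) for some $i$. One implication is immediate from the lower bound. For the converse I would take the smallest index $i$ at which the violation occurs: minimality gives $\lambda_{i-1}\le\mu_{i-1}$ (resp.\ $\lambda_i\le\mu_{i-1}$), and monotonicity of $\lambda$ yields $\lambda_i\le\mu_{i-1}$ (resp.\ $\lambda_{i+1}\le\lambda_i\le\mu_{i-1}$), supplying exactly the missing upper bound; the boundary case $i=1$ is covered by $\mu_0=\infty$. Negating gives the two stated ``if and only if'' statements.

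The argument is essentially bookkeeping, so I do not expect a serious obstacle; the one point requiring care is precisely this ``smallest index'' reduction, which shows that a single violation $\lambda_i>\mu_i$ can always be relocated to an index where the upper bound holds too, so that the exact $\kappa$-exponent value ($i-1$, resp.\ $i-2$) is genuinely \emph{attained} at some cell rather than only bounded. I would also flag at the outset that genericity of $(q,\kappa)$ is what licenses passing from ``no individual factor is forced to vanish'' to ``the product is nonzero'', which is built into treating $q$ and $\kappa$ as independent indeterminates.
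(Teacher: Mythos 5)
Your proof is correct: the reduction to individual factors $1-q^{a}\kappa^{b}$ (using that $q,\kappa$ are independent indeterminates), the observation that the product over cells of $\mu$ never vanishes, the translation $\mu'_{\lambda_i}=i-1 \iff \mu_i<\lambda_i\le\mu_{i-1}$, and the minimal-index argument showing a single violation can be relocated to one where the upper bound also holds are all sound. The paper offers no proof at all—it merely asserts the lemma "is easily confirmed by the explicit formula of the Nekrasov factor"—so your argument is exactly the omitted verification, carried out from the first (box-product) form of $\Nk_{\lambda,\mu}$.
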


Then the proof of Proposition \ref{gl3-dege} is given as follows:
\begin{proof}
Under the condition \eqref{Mac(1)}, we have (see \eqref{Sh-parameters})
\begin{alignat*}{4}
&v \mathfrak{f}^+_2/\mathfrak{n}_1=1, \qquad &&v \mathfrak{f}^+_1/\mathfrak{n}_2=1,\qquad &&w \mathfrak{n}_1/\mathfrak{m}_1=t,&\\
& w \mathfrak{n}_2/\mathfrak{m}_2=1, \qquad&& v \mathfrak{m}_1/\mathfrak{f}^-_1=1,\qquad&& v \mathfrak{m}_2/\mathfrak{f}^-_2=t.&
\end{alignat*}
Hence, in view of the lemma above, we have the
following parametrization of the quadruples of partitions
which give rise to the nonvanishing contributions in the partition function~\eqref{Z}:
\begin{align*}
&\nu_1=(\theta_{23}),\qquad \nu_2=(\theta_{13}),\qquad \mu_1=\varnothing ,\qquad \mu_2=(\theta_{12}+\theta_{13}),
\end{align*}
where $\theta_{12},\theta_{13},\theta_{23}\geq 0$.
Then by making straightforward calculation and comparison with \eqref{f3}
under the identification $(x_1,x_2,x_3)=(1/\Lambda,1/x,1)$,
and $(s_1,s_2,s_3)=(1/Q,1,0)$, we obtain~\eqref{Z-f3}.
\end{proof}

\begin{Proposition}
Let \eqref{Mac(1)} be satisfied.
Then we have
\begin{align}
&\mathcal{F}^{(1)}=
\sum_{m\geq 0}{(q/t;q)_m (qQ/t;q)_m\over (q;q)_m (q Q;q)_m} t^m (\Lambda/x)^m=
 f^{\mathfrak{gl}_2}(1/\Lambda,1/x|1/Q,1|q,t),\label{F1-Mac}\\
&\mathcal{F}^{(2)}=
\sum_{m,n\geq 0}{(t;q)_m(q^n q/t;q)_m\over (q;q)_m(q^n qQ;q)_m} \bigl(-\bigl(q t^{1/2}Q\bigr)^{1/2}\bigr)^m (\Lambda/x)^m
{(q/t;q)_n(qQ/t;q)_n\over (q;q)_n (q Q;q)_n} t^{2n}\Lambda^n.\label{F2-Mac}
\end{align}
Hence, with the identification $(x_1,x_2)=(1/\Lambda,1/x)$, we have the eigenvalue equation
\begin{align*}
&\mathcal{D}^{\mathfrak{gl}_2}(1/Q,1|q,q/t)\mathcal{F}^{(1)}
=(1/Q+1)\mathcal{F}^{(1)},
\end{align*}
where
\begin{align*}
&\mathcal{D}^{\mathfrak{gl}_2}(1/Q,1 |q,q/t)=
Q^{-1} {1-t\Lambda/q x\over 1-\Lambda/x} T_{q,\Lambda}^{-1}
+ {1-q\Lambda/t x\over 1-\Lambda/x} T_{q,x}^{-1}.
\end{align*}
\end{Proposition}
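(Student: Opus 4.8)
The plan is to follow the strategy already used for Proposition~\ref{gl3-dege}: substitute the mass tuning \eqref{Mac(1)} into the parameter specializations \eqref{special1}--\eqref{special2} of the affine Laumon partition function \eqref{Nf4}, exploit the vanishing of the $\mathbb{Z}_2$-orbifolded Nekrasov factors to collapse the double sum over $\bigl(\lambda^{(1)},\lambda^{(2)}\bigr)$ onto a small explicit family of partitions, and then match the surviving products against the claimed $q$-hypergeometric series. Since $v=q^{1/2}t^{-1/2}$, the tuning reads $T_1=T_3=q^{-1/2}t^{1/2}$, $T_2=q^{1/2}t^{-1/2}$, $T_4=q^{1/2}t^{-3/2}$, so $T_1T_2=1$ and $T_3T_4=t^{-1}$; hence the expansion variables become $x_1=-t^{1/2}x$, $x_2=-\Lambda/x$ for $\mathcal{F}^{(1)}$ and $x_1=t^{3/4}x$, $x_2=t^{1/4}\Lambda/x$ for $\mathcal{F}^{(2)}$. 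First I would compute all of the ratios $u_i/v_j$ and $v_i/w_j$ entering the numerator factors $\mathsf{N}^{(j-i|2)}_{\varnothing,\lambda^{(j)}}$ and $\mathsf{N}^{(j-i|2)}_{\lambda^{(i)},\varnothing}$; for $\mathcal{F}^{(1)}$ several of these ($u_1/v_2=t^{1/2}$ and $v_2/w_1=t^{1/2}$) hit the special value $\kappa^{-1}=t^{1/2}$, while for $\mathcal{F}^{(2)}$ one instead finds $u_1/v_2=\kappa^{-1}$ together with $v_2/w_2=\kappa^{-2}=t$.

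The second step is to establish the orbifolded analogue of the vanishing criterion used in the proof of Proposition~\ref{gl3-dege}, that is, to read off directly from the product formula for $\mathsf{N}^{(k|2)}_{\lambda,\mu}$ exactly which colored factors degenerate when the argument equals $1$, $\kappa^{-1}$ or $\kappa^{-2}$. Combined with the fact that one slot of each numerator factor is the empty partition, this forces the contributing pairs into a restricted shape. For $\mathcal{F}^{(1)}$ I expect the constraints to annihilate every configuration except $\lambda^{(1)}=\varnothing$ with $\lambda^{(2)}$ a single row of length $m$ (consistent with $x_1^{|\lambda^{(1)}|_o+|\lambda^{(2)}|_e}=1$), reducing \eqref{Nf4} to the single sum \eqref{F1-Mac}; for $\mathcal{F}^{(2)}$ the weaker constraints should leave a two-parameter family indexed by $(m,n)$ with $|\lambda^{(1)}|_o+|\lambda^{(2)}|_e=n$ and $|\lambda^{(1)}|_e+|\lambda^{(2)}|_o=m+n$, producing the genuine double sum \eqref{F2-Mac}, the $q^{n}$-shifted Pochhammer symbols arising from the cross factors linking the two colors.

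The third step is the explicit evaluation: on the restricted family each surviving orbifolded factor is a product of $[\,\cdot\,;q]$-symbols, which I would convert back to ordinary $q$-shifted factorials via $[u;q]_n=u^{-n/2}q^{-n(n-1)/4}(u;q)_n$, cancel the common factors against the vacuum denominator $\mathsf{N}^{(j-i|2)}_{\lambda^{(i)},\lambda^{(j)}}(v_i/v_j)$, and collect the powers of $x_1,x_2$; matching coefficient by coefficient then yields \eqref{F1-Mac} and \eqref{F2-Mac}. For the eigenvalue statement I would identify \eqref{F1-Mac} with the Macdonald function $f^{\mathfrak{gl}_2}(1/\Lambda,1/x\,|\,1/Q,1\,|\,q,q/t)$ at Macdonald parameter $q/t$ (so that $qx_2/(t'x_1)=t\Lambda/x$ and $t's_2/s_1=qQ/t$ for $t'=q/t$, $x_1=1/\Lambda$, $x_2=1/x$, $s=(1/Q,1)$), and then invoke the eigenvalue equation $\mathcal{D}^{\mathfrak{gl}_2}(s|q,t')f^{\mathfrak{gl}_2}=(s_1+s_2)f^{\mathfrak{gl}_2}$ of the previous subsubsection; specializing the operator reproduces exactly the displayed $\mathcal{D}^{\mathfrak{gl}_2}(1/Q,1\,|\,q,q/t)$ and eigenvalue $1/Q+1$.

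The main obstacle will be the second step, namely the precise orbifolded vanishing analysis together with the bookkeeping of the color labels $j-i \bmod 2$, since the restriction of the partition sum must be carried out consistently for both colors simultaneously, and it is exactly this step that produces the asymmetric $q^{n}$-shifts distinguishing \eqref{F2-Mac} from a product of two single sums. Once the contributing family is pinned down, the remaining summation is a routine cancellation of $q$-shifted factorials.
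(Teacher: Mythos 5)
Your proposal follows essentially the same route as the paper's proof: substitute the tuning \eqref{Mac(1)} into the specializations \eqref{special1}--\eqref{special2}, use the vanishing of the orbifolded Nekrasov factors at the special arguments ($qt^{-1/2}$ forcing $\lambda^{(1)}=\varnothing$; $t^{1/2}$ resp.\ $t=\kappa^{-2}$ forcing $\ell\bigl(\lambda^{(2)}\bigr)\le 1$ resp.\ $\le 2$) to restrict the sum to $\bigl(\varnothing,(m)\bigr)$ resp.\ $\bigl(\varnothing,(n+m,n)\bigr)$, and then evaluate explicitly, with the eigenvalue statement following from the $\mathfrak{gl}_2$ Macdonald eigenvalue equation exactly as you describe. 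Your reading of the Macdonald parameter as $q/t$ in the identification of the series \eqref{F1-Mac} with $f^{\mathfrak{gl}_2}$ is the one consistent with the displayed operator $\mathcal{D}^{\mathfrak{gl}_2}(1/Q,1|q,q/t)$, so no correction is needed.
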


\begin{proof}
Under the condition \eqref{Mac(1)}, we have
\begin{align*}
&\mathcal{F}^{(1)}=f\left( \left.\left.\begin{array}{ll}
t Q^{1/2} ,&qt^{-1/2} Q^{-1/2} \\
Q^{-1/2},&t^{1/2}Q^{1/2}\\
Q^{1/2} ,&q^{-1}t^{3/2} Q^{-1/2}\end{array} \right|
-t^{1/2} x,- \Lambda/x \right|q,t^{-1/2}\right),
\end{align*}
and
\begin{align*}
&\mathcal{F}^{(2)}=f\left( \left.\left.\begin{array}{ll}
t Q^{1/2} ,&qt^{-1/2} Q^{-1/2} \\
Q^{-1/2},&t^{1/2}Q^{1/2}\\
q^{-1} tQ^{-1/2},&t^{-1/2} Q^{1/2} \end{array} \right|
t^{3/4} x,t^{1/4} \Lambda/x \right|q,t^{-1/2}\right).
\end{align*}
One finds that $\mathcal{F}^{(1)}$ has the factor
\begin{align*}
\mathsf{N}^{(1|2)}_{\lambda^{(2)},\varnothing}\bigl(t^{1/2}|q,t^{-1/2}\bigr),
\end{align*}
which is nonvanishing, if and only if $\ell \bigl(\lambda^{(2)}\bigr)\leq 1$.
The $\mathcal{F}^{(1)}$ also has the factor
\begin{align*}
\mathsf{N}^{(1|2)}_{\varnothing,\lambda^{(1)}}\bigl(qt^{-1/2}|q,t^{-1/2}\bigr),
\end{align*}
which is nonvanishing, if and only if $\lambda^{(1)}=\varnothing$.
Hence, the nonvanishing contributions for $\mathcal{F}^{(1)}$ arise if and only if we have
$\bigl(\lambda^{(1)},\lambda^{(2)})=(\varnothing,(m)\bigr)$
$(m\geq 0)$. Making explicit calculation, we have~\eqref{F1-Mac}.

The $\mathcal{F}^{(2)}$ has the factor
\begin{align*}
\mathsf{N}^{(0|2)}_{\lambda^{(2)},\varnothing}\bigl(t|q,t^{-1/2}\bigr),
\end{align*}
which is nonvanishing, if and only if $\ell \bigl(\lambda^{(2)}\bigr)\leq 2$.
The $\mathcal{F}^{(2)}$ also has the factor
\begin{align*}
\mathsf{N}^{(1|2)}_{\varnothing,\lambda^{(1)}}\bigl(qt^{-1/2}|q,t^{-1/2}\bigr),
\end{align*}
which is nonvanishing, if and only if $\lambda^{(1)}=\varnothing$.
Hence, the nonvanishing contributions for $\mathcal{F}^{(2)}$ arise if and only if we have
$\bigl(\lambda^{(1)},\lambda^{(2)}\bigr)=(\varnothing,(n+m,n))$
$(m,n\geq 0)$. Making explicit calculation, we have \eqref{F2-Mac}.
\end{proof}

We have seen how both partition functions $\Psi(t\Lambda,x)$ and $\mathcal{F}^{(1)}$ are simplified to
the asymptotically free Macdonald functions in the limit \eqref{Mac(1)}.
If Conjectures \ref{conjecture} and \ref{secondconjecture} are true, they are related by the following gauge transformation.
\begin{con}\label{Gauge}
\begin{align*}
\mathcal{F}^{(1)}
&={1\over \Phi\bigl(t^2 T_2 T_3 \Lambda\bigr)\Phi(qt T_1 T_4 \Lambda)}
\varphi\bigl(q^{1/2}t^{1/2} T_1 x\bigr)\varphi\bigl(q^{1/2}t^{1/2} T_4 \Lambda/x\bigr)\psi(\Lambda,x)\\
&=
{\Phi(qt T_2 T_3 \Lambda)\Phi\bigl(t^2 T_1 T_4 \Lambda\bigr)\over \Phi\bigl(t^2 T_2 T_3 \Lambda\bigr)\Phi(qt T_1 T_4 \Lambda)}
\varphi\bigl(q^{1/2}t^{1/2} T_1 x\bigr)\varphi\bigl(q^{1/2}t^{1/2} T_4 \Lambda/x\bigr)
\mathcal{A}_3(t \Lambda, t q Q x)\Psi(t\Lambda,x).
\end{align*}
\end{con}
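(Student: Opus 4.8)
The plan is to reduce Conjecture~\ref{Gauge} to an identification of two solutions of one and the same gauge-transformed equation, and then to invoke uniqueness. First observe that the two displayed expressions for $\mathcal{F}^{(1)}$ are equal \emph{unconditionally}: substituting the definition \eqref{2Painleve} of $\mathsf{u}$ (with $\mathsf{U}=\Psi$ supplied by Conjecture~\ref{conjecture}, and $\psi=\mathsf{u}$) into the first line, the factor $\Phi(qtT_2T_3\Lambda)\Phi(t^2T_1T_4\Lambda)$ produced by $\mathsf{u}$ combines with the prefactor $\Phi(t^2T_2T_3\Lambda)^{-1}\Phi(qtT_1T_4\Lambda)^{-1}$ to give exactly the double-product ratio of the second line, while the $\varphi$-factors and $\mathcal{A}_3(t\Lambda,tqQx)\Psi(t\Lambda,x)$ are reproduced verbatim. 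Hence it suffices to prove the first equality, i.e.\ that the Laumon specialization $\mathcal{F}^{(1)}$ equals $\frac{1}{\Phi(t^2T_2T_3\Lambda)\Phi(qtT_1T_4\Lambda)}\varphi(q^{1/2}t^{1/2}T_1x)\varphi(q^{1/2}t^{1/2}T_4\Lambda/x)\,\mathsf{u}(\Lambda,x)$.

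The key point is to recognize the right-hand side just written as the $r_1$-reflected gauge transform of \eqref{StoS}, that is, as the function $\mathsf{V}^{(1)}$ introduced at the start of Section~\ref{Laumon} satisfying $\widetilde{\mathcal{H}}\,\mathsf{V}^{(1)}=\mathsf{V}^{(1)}$. I would establish this in two steps. On the Nekrasov side, Conjecture~\ref{conjecture} gives that $\mathsf{U}=\Psi$ solves \eqref{qq-PVI}; chasing the gauge transformation \eqref{2Painleve} followed by the $r_1$-reflected version of \eqref{StoS} shows that this gauge transform of $\Psi$ is precisely $\mathsf{V}^{(1)}$ and solves $\widetilde{\mathcal{H}}\,\mathsf{V}^{(1)}=\mathsf{V}^{(1)}$, equivalently the coupled system \eqref{firsteq}--\eqref{secondeq} by Proposition~\ref{prop:coupled}. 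On the Laumon side, Conjecture~\ref{secondconjecture} asserts that the specialization \eqref{special1} of \eqref{Nf4} furnishes the first component $\mathcal{F}^{(1)}$ of a solution of the \emph{same} coupled system. Thus both $\mathcal{F}^{(1)}$ and the gauge-transformed $\Psi$ are first components of solutions of \eqref{firsteq}--\eqref{secondeq}.

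It then remains to match these two solutions, and I would do this by transporting the existence-and-uniqueness proposition for \eqref{qq-PVI} (following Conjecture~\ref{conjecture}) through the gauge transformations: the operator $\widetilde{\mathcal{H}}$ should again be triangular on the monomials $x^m(\Lambda/x)^n$, so that its formal series solution \eqref{double-series} is unique up to an overall scalar. Since the gauge factors $\Phi$, $\varphi$ and $\mathcal{A}_3$ all have constant term $1$, and both $\Psi$ and the Laumon function $f$ are normalized by their empty-partition boundary term, the two solutions carry the same leading coefficient, forcing $\mathcal{F}^{(1)}=\mathsf{V}^{(1)}$ and hence Conjecture~\ref{Gauge}. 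The main obstacle is exactly this uniqueness-plus-normalization step: one must verify that the coupled system \eqref{firsteq}--\eqref{secondeq}, being a pair of first-order equations in $\widetilde{\Bor}$, retains the triangular structure guaranteeing a unique formal solution, and, crucially, that the parameter specialization \eqref{special1} is consistent with the dictionary \eqref{dictionary} so that the normalizations of $\mathcal{F}^{(1)}$ and of the gauge-transformed $\Psi$ coincide rather than merely being proportional. As both Conjectures~\ref{conjecture} and~\ref{secondconjecture} are assumed throughout, the resulting argument is conditional; an unconditional proof would require independently establishing that the affine Laumon partition function \eqref{Nf4} solves the coupled system.
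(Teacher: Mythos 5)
Your argument is correct as far as it goes, but it is not the paper's proof; it is essentially a careful expansion of the sentence the paper places \emph{before} the conjecture (``If Conjectures \ref{conjecture} and \ref{secondconjecture} are true, they are related by the following gauge transformation''). You are right that the two displayed lines are unconditionally equal via \eqref{2Painleve}, that the right-hand side of the first line is exactly the $r_1$-reflected gauge transform $\mathsf{V}^{(1)}$ of \eqref{StoS}, and that once both $\mathcal{F}^{(1)}$ (via Conjecture \ref{secondconjecture} and Proposition \ref{prop:coupled}) and the gauge transform of $\Psi$ (via Conjecture \ref{conjecture}) solve $\widetilde{\mathcal{H}}\,\mathsf{V}^{(1)}=\mathsf{V}^{(1)}$, the triangularity-plus-normalization argument forces them to coincide, since all gauge factors are formal power series in $x$ and $\Lambda/x$ with constant term $1$. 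By construction, however, this route can never be more than conditional on the two main conjectures. What the paper actually \emph{proves} is the unconditional statement in the Macdonald limit \eqref{Mac(1)}, $(T_1,T_2,T_3,T_4)=\bigl(v^{-1},v,v^{-1},vt^{-1}\bigr)$: there the instanton sums truncate, $\Psi(t\Lambda,x)$ collapses to the asymptotically free Macdonald function $f^{\mathfrak{gl}_3}(1/\Lambda,1/x,1|1/Q,1,0|q,t)$, $\mathcal{F}^{(1)}$ collapses to $f^{\mathfrak{gl}_2}(1/\Lambda,1/x|1/Q,1|q,t)$, the entire gauge factor of Conjecture \ref{Gauge} reduces to $(t\Lambda;q)_\infty(tx;q)_\infty/\bigl((q\Lambda;q)_\infty(qx;q)_\infty\bigr)$, and the identity becomes the conjugacy of the Macdonald operators $\mathcal{D}^{\mathfrak{gl}_3}(1/Q,1,0|q,q/t)$ and $\mathcal{D}^{\mathfrak{gl}_2}(1/Q,1|q,q/t)$ by exactly this factor, settled by uniqueness of normalized asymptotic eigenfunctions. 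The two approaches are complementary: yours covers all parameters but only modulo Conjectures \ref{conjecture} and \ref{secondconjecture}, while the paper's is an honest proof restricted to the special mass parameters \eqref{Mac(1)}. If you present your argument, state explicitly that it establishes only the implication from Conjectures \ref{conjecture} and \ref{secondconjecture} to Conjecture \ref{Gauge}, and that the unconditional content in the paper lives entirely in the Macdonald limit (and in the companion verification of \eqref{firsteq}--\eqref{secondeq} there via the $q$-Chu--Vandermonde sums).
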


We can prove this is indeed the case.

\begin{Lemma}
We have
\begin{align}
 f^{\mathfrak{gl}_2}(1/\Lambda,1/x|1/Q,1|q,t)={ (t\Lambda;q)_\infty \over (q\Lambda;q)_\infty }
{ (t x;q)_\infty \over (q x;q)_\infty } f^{\mathfrak{gl}_3}(1/\Lambda,1/x,1|1/Q,1,0|q,t) .\label{Mac-Gauge}
\end{align}
\end{Lemma}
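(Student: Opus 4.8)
The plan is to prove \eqref{Mac-Gauge} not by a brute-force comparison of coefficients, but by recognizing both sides as normalized eigenfunctions of Macdonald $q$-difference operators and exhibiting the prefactor as the gauge transformation that intertwines the two operators. The first observation is that, under the specialization $(x_1,x_2,x_3)=(1/\Lambda,1/x,1)$ and $(s_1,s_2,s_3)=(1/Q,1,0)$, both $f^{\mathfrak{gl}_2}$ and $f^{\mathfrak{gl}_3}$ are eigenfunctions with the \emph{same} eigenvalue, because $s_1+s_2+s_3=s_1+s_2=1/Q+1$. Since $s_3=0$ annihilates the $T_{q,x_3}$ term of $\mathcal{D}^{\mathfrak{gl}_3}$, and since $T_{q,x_1}=T_{q,\Lambda}^{-1}$ and $T_{q,x_2}=T_{q,x}^{-1}$ on the slice $x_3=1$, the operator $\mathcal{D}^{\mathfrak{gl}_3}$ collapses to a two-variable operator in $(\Lambda,x)$ that differs from $\mathcal{D}^{\mathfrak{gl}_2}$ only through the extra scalar factors $\frac{1-x_3/(tx_1)}{1-x_3/x_1}=\frac{1-\Lambda/t}{1-\Lambda}$ and $\frac{1-x_3/(tx_2)}{1-x_3/x_2}=\frac{1-x/t}{1-x}$ multiplying the $T_{q,\Lambda}^{-1}$ and $T_{q,x}^{-1}$ terms, respectively.

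The central step is the intertwining computation. Writing the prefactor as a product $G=G_\Lambda(\Lambda)\,G_x(x)$ of single-variable multipliers, and using that each operator is of first order in the $q$-shift of $\Lambda$ (resp.\ $x$), the claimed identity $G^{-1}\mathcal{D}^{\mathfrak{gl}_2}G=\mathcal{D}^{\mathfrak{gl}_3}$ is equivalent to the two scalar functional equations $\frac{G_\Lambda(\Lambda/q)}{G_\Lambda(\Lambda)}=\frac{1-\Lambda/t}{1-\Lambda}$ and $\frac{G_x(x/q)}{G_x(x)}=\frac{1-x/t}{1-x}$. Each is solved by a ratio of infinite products and verified at once from $(a;q)_\infty=(1-a)(qa;q)_\infty$; this pins down the prefactor and matches it with the one in \eqref{Mac-Gauge}. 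Here one must keep track of the two meanings of $t$ in the paper: the functional equation directly produces a factor of the form $\frac{(q\Lambda/t;q)_\infty(qx/t;q)_\infty}{(q\Lambda;q)_\infty(qx;q)_\infty}$ for the Macdonald parameter $t$, which coincides with the stated factor once one uses the $t\leftrightarrow q/t$ convention already visible in the operator $\mathcal{D}^{\mathfrak{gl}_3}(1/Q,1,0\,|\,q,q/t)$ of Proposition~\ref{gl3-dege} and in the expansion \eqref{F1-Mac}.

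With the intertwining in hand, I would conclude by uniqueness. The function $G^{-1}f^{\mathfrak{gl}_2}$ is then an eigenfunction of $\mathcal{D}^{\mathfrak{gl}_3}$ for the eigenvalue $1/Q+1$; it is a formal series in $\Lambda,x$ (and $\Lambda/x$) with constant term $1$, lying in the same completion as $f^{\mathfrak{gl}_3}$ since both $G^{-1}$ and $f^{\mathfrak{gl}_2}$ are normalized. Because the Macdonald operator acts triangularly on the expansion monomials $\Lambda^{a}x^{b}$ (exactly the triangularity mechanism already used for the existence/uniqueness statement for \eqref{qq-PVI}), the normalized eigenfunction is unique, whence $G^{-1}f^{\mathfrak{gl}_2}=f^{\mathfrak{gl}_3}$, which is \eqref{Mac-Gauge}.

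The hard part will be twofold: pinning down the precise parameters in the gauge factor (the $t$ versus $q/t$ bookkeeping described above) so that the two scalar functional equations hold on the nose, and making the uniqueness argument rigorous on the degenerate $s_3=0$ slice, i.e.\ verifying that the collapsed operator is genuinely triangular with nonvanishing leading coefficients in the relevant partial order on monomials. As a fallback, a direct proof is possible by expanding the prefactor with the $q$-binomial theorem and matching the coefficient of each monomial $\Lambda^{a}x^{b}$; this reduces the claim to a one-parameter $q$-hypergeometric summation, but the coupling $q^{\pm(\theta_{13}-\theta_{23})}$ appearing in \eqref{f3} prevents a term-by-term match and forces the use of a genuine $q$-Saalsch\"utz/Heine-type identity, which is the main obstacle on that route.
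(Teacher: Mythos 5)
Your proposal is correct and is essentially the paper's own argument: the paper also proves \eqref{Mac-Gauge} by exhibiting the prefactor as a gauge transformation intertwining $\mathcal{D}^{\mathfrak{gl}_3}(1/Q,1,0\,|\,q,q/t)$ and $\mathcal{D}^{\mathfrak{gl}_2}(1/Q,1\,|\,q,q/t)$ and then invoking uniqueness of the normalized asymptotic eigenfunctions. Your version merely spells out the collapse of the $s_3=0$ term, the two scalar functional equations fixing the prefactor, and the $t\leftrightarrow q/t$ bookkeeping, all of which the paper leaves implicit.
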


\begin{proof}
We find that the Macdonald operators $\mathcal{D}^{\mathfrak{gl}_3}(1/Q,1,0 |q,q/t)$
and $\mathcal{D}^{\mathfrak{gl}_2}(1/Q,1 |q,q/t)$ are gauge equivalent
\begin{align*}
{ (t\Lambda;q)_\infty \over (q\Lambda;q)_\infty }
{ (t x;q)_\infty \over (q x;q)_\infty }\mathcal{D}^{\mathfrak{gl}_3}(1/Q,1,0 |q,q/t)
{ (q\Lambda;q)_\infty \over (t\Lambda;q)_\infty }
{ (q x;q)_\infty \over (t x;q)_\infty }=\mathcal{D}^{\mathfrak{gl}_2}(1/Q,1 |q,q/t).
\end{align*}
Then the equality \eqref{Mac-Gauge} follows from the uniqueness of the normalized asymptotic eigenfunctions of
$\mathcal{D}^{\mathfrak{gl}_3}(1/Q,1,0 |q,q/t)$
and $\mathcal{D}^{\mathfrak{gl}_2}(1/Q,1 |q,q/t)$.
\end{proof}

\begin{Proposition}
Conjecture {\rm\ref{Gauge}} holds in the Macdonald limit \eqref{Mac(1)}.
\end{Proposition}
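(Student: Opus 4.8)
The plan is to reduce the conjectured identity, in the Macdonald limit \eqref{Mac(1)}, to the already-proven degeneration formulas together with the gauge lemma \eqref{Mac-Gauge}. The first line of Conjecture~\ref{Gauge} is nothing but the $r_1$-reflected version of the gauge transformation \eqref{StoS} (it defines the intermediate $\psi=\mathsf{u}$ through \eqref{2Painleve}), so the substantive content is the second equality expressing $\mathcal{F}^{(1)}$ as an explicit prefactor times $\Psi(t\Lambda,x)$. By Proposition~\ref{gl3-dege} one has $\Psi(t\Lambda,x)=f^{\mathfrak{gl}_3}(1/\Lambda,1/x,1|1/Q,1,0|q,t)$ in this limit, and by \eqref{F1-Mac} one has $\mathcal{F}^{(1)}=f^{\mathfrak{gl}_2}(1/\Lambda,1/x|1/Q,1|q,t)$. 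The lemma \eqref{Mac-Gauge} relates the two Macdonald functions by the explicit scalar $\frac{(t\Lambda;q)_\infty(tx;q)_\infty}{(q\Lambda;q)_\infty(qx;q)_\infty}$. Hence it suffices to show that the prefactor appearing in the second line of Conjecture~\ref{Gauge} collapses to exactly this scalar once the mass parameters are tuned as in \eqref{Mac(1)}.

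I would carry out this verification in two independent blocks. First, for the $\Lambda$-only double products: since $T_2T_3=1$ and $T_1T_4=t^{-1}$ in the limit, the ratio $\frac{\Phi(qtT_2T_3\Lambda)\Phi(t^2T_1T_4\Lambda)}{\Phi(t^2T_2T_3\Lambda)\Phi(qtT_1T_4\Lambda)}$ becomes $\frac{\Phi(qt\Lambda)\Phi(t\Lambda)}{\Phi(t^2\Lambda)\Phi(q\Lambda)}$, which by two applications of $\Phi(z)/\Phi(tz)=\varphi(z)$ from \eqref{t-difference} equals $\varphi(t\Lambda)/\varphi(q\Lambda)=\frac{(t\Lambda;q)_\infty}{(q\Lambda;q)_\infty}$. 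Second, for the $x$- and $(\Lambda/x)$-dependent factors I would first substitute the shifted arguments into $\mathcal{A}_3(t\Lambda,tqQx)$: using the formula for $\mathcal{A}_3$ in Conjecture~\ref{conjecture}, the replacement $(\Lambda,x)\to(t\Lambda,tqQx)$ turns $\Lambda x^{-1}$ into $\Lambda/(qQx)$ and cancels all $Q,q$ prefactors, giving $\mathcal{A}_3(t\Lambda,tqQx)=\varphi(T_2vtx)^{-1}\frac{\Phi(T_3qv\Lambda/x)}{\Phi(T_3qv^{-1}\Lambda/x)}\frac{\Phi(T_4q^{-1}t^3v\Lambda/x)}{\Phi(T_4qv^{-1}\Lambda/x)}$. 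Inserting \eqref{Mac(1)} and $v^2=qt^{-1}$, the pure-$x$ factors combine to $\varphi(tx)/\varphi(qx)=\frac{(tx;q)_\infty}{(qx;q)_\infty}$, while the four $(\Lambda/x)$-arguments reduce to $q\Lambda/x$, $t\Lambda/x$, $t\Lambda/x$, $qt^{-1}\Lambda/x$, which together with the external factor $\varphi(qt^{-1}\Lambda/x)$ coming from $\varphi(q^{1/2}t^{1/2}T_4\Lambda/x)$ telescope to $1$ by one more use of \eqref{t-difference}.

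Multiplying the three blocks reproduces $\frac{(t\Lambda;q)_\infty(tx;q)_\infty}{(q\Lambda;q)_\infty(qx;q)_\infty}$, matching \eqref{Mac-Gauge} exactly, so that $\mathcal{F}^{(1)}=f^{\mathfrak{gl}_2}=\frac{(t\Lambda;q)_\infty(tx;q)_\infty}{(q\Lambda;q)_\infty(qx;q)_\infty}\,f^{\mathfrak{gl}_3}=[\text{prefactor}]\,\Psi(t\Lambda,x)$, which is precisely Conjecture~\ref{Gauge} specialized to \eqref{Mac(1)}. The main obstacle is purely the bookkeeping: one must track the many $\Phi$- and $\varphi$-factors through the argument substitution $(\Lambda,x)\to(t\Lambda,tqQx)$ inside $\mathcal{A}_3$ and confirm that every $(\Lambda/x)$-dependent double product cancels, leaving only the four single products $(t\Lambda;q)_\infty$, $(q\Lambda;q)_\infty$, $(tx;q)_\infty$, $(qx;q)_\infty$. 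No input deeper than \eqref{t-difference} and the explicit Macdonald specializations is required.
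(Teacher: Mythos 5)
Your proposal is correct and follows essentially the same route as the paper: the paper's proof consists precisely of the single identity stating that, under \eqref{Mac(1)}, the gauge prefactor $\frac{\Phi(qtT_2T_3\Lambda)\Phi(t^2T_1T_4\Lambda)}{\Phi(t^2T_2T_3\Lambda)\Phi(qtT_1T_4\Lambda)}\varphi\bigl(q^{1/2}t^{1/2}T_1x\bigr)\varphi\bigl(q^{1/2}t^{1/2}T_4\Lambda/x\bigr)\mathcal{A}_3(t\Lambda,tqQx)$ collapses to $\frac{(t\Lambda;q)_\infty(tx;q)_\infty}{(q\Lambda;q)_\infty(qx;q)_\infty}$, with the reduction to the Macdonald-function identities \eqref{Z-f3}, \eqref{F1-Mac} and \eqref{Mac-Gauge} left implicit from the surrounding results. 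Your block-by-block cancellation of the $\Phi$- and $\varphi$-factors is the correct bookkeeping behind that identity.
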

\begin{proof}
Under the condition \eqref{Mac(1)}, we have
\begin{align*}
&{\Phi(qt T_2 T_3 \Lambda)\Phi(t^2 T_1 T_4 \Lambda)\over \Phi(t^2 T_2 T_3 \Lambda)\Phi(qt T_1 T_4 \Lambda)}
\varphi\bigl(q^{1/2}t^{1/2} T_1 x\bigr)\varphi\bigl(q^{1/2}t^{1/2} T_4 \Lambda/x\bigr)
\mathcal{A}_3(t \Lambda, t q Q x)\\&\qquad =
{ (t\Lambda;q)_\infty \over (q\Lambda;q)_\infty }
{ (t x;q)_\infty \over (q x;q)_\infty }.\tag*{\qed}
\end{align*}\renewcommand{\qed}{}
\end{proof}

Now we are ready to prove Conjecture \ref{secondconjecture} in the Macdonald limit.
We need the following formulas for the proof.
\begin{Lemma}[\cite{Shakirov:2021krl}]
For $n\in \mathbb{Z}$,
we have
\begin{align}
&\mathcal{B} {1\over \varphi(\alpha x)\varphi(\beta \Lambda/x)}x^n
={ \varphi\bigl(-q^{1+n}\alpha x\bigr)\varphi\bigl(-q^{-n}\beta \Lambda/x\bigr)\over \varphi(\alpha \beta \Lambda)}
q^{n(n+1)/2}x^n,\label{lem-1}\\
&\mathcal{B}^{-1} {\varphi(\alpha x)\varphi(\beta \Lambda/x)}x^n
={ \varphi\bigl(q^{-1}\alpha \beta \Lambda\bigr)\over
\varphi\bigl(-q^{-1-n}\alpha x\bigr)\varphi\bigl(-q^{n}\beta \Lambda/x\bigr)}
q^{-n(n+1)/2}x^n.\label{lem-2}
\end{align}
\end{Lemma}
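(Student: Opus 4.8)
The plan is to establish the first identity \eqref{lem-1} by a direct $q$-series computation and then deduce \eqref{lem-2} from it for free by applying $\Bor^{-1}$. First I would expand both reciprocal products by Euler's identity $1/(z;q)_\infty=\sum_{k\ge0}z^k/(q;q)_k$, so that
\[
\frac{x^n}{\varphi(\alpha x)\varphi(\beta\Lambda/x)}=\sum_{k,l\ge0}\frac{\alpha^k(\beta\Lambda)^l}{(q;q)_k(q;q)_l}\,x^{n+k-l}.
\]
Applying $\Bor$ termwise via Definition~\ref{qBorel} multiplies $x^{n+k-l}$ by $q^{(n+k-l)(n+k-l+1)/2}$, and the elementary splitting $(n+k-l)(n+k-l+1)/2=n(n+1)/2+n(k-l)+k(k+1)/2+l(l-1)/2-kl$ lets me pull out the overall factor $q^{n(n+1)/2}x^n$ occurring on the right-hand side of \eqref{lem-1}.

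The one genuine difficulty is the coupling term $q^{-kl}$, which obstructs factoring the double sum. The decisive maneuver is to perform the $l$-summation first: absorbing $q^{-kl}$ into the argument turns the $l$-series into $\sum_{l\ge0}q^{l(l-1)/2}(q^{-n-k}\beta\Lambda/x)^l/(q;q)_l$, which by the companion Euler identity $(-z;q)_\infty=\sum_{l\ge0}q^{l(l-1)/2}z^l/(q;q)_l$ collapses to $\varphi(-q^{-n-k}\beta\Lambda/x)$. I then split this infinite product as $(-q^{-n-k}\beta\Lambda/x;q)_k\,\varphi(-q^{-n}\beta\Lambda/x)$, which peels off the desired factor $\varphi(-q^{-n}\beta\Lambda/x)$ and leaves a finite $q$-shifted factorial inside the remaining $k$-sum.

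Next I would apply the reflection $(-q^{-k}w;q)_k=q^{-k(k+1)/2}w^k(-q/w;q)_k$ with $w=q^{-n}\beta\Lambda/x$; this cancels the Gaussian $q^{k(k+1)/2}$ exactly, and after using $(\alpha x)^k(\beta\Lambda/x)^k=(\alpha\beta\Lambda)^k$ the $k$-sum becomes $\sum_{k\ge0}(-q^{n+1}x/(\beta\Lambda);q)_k(\alpha\beta\Lambda)^k/(q;q)_k$. This is precisely the $q$-binomial theorem $\sum_k (a;q)_k z^k/(q;q)_k=(az;q)_\infty/(z;q)_\infty$ with $a=-q^{n+1}x/(\beta\Lambda)$ and $z=\alpha\beta\Lambda$, whose value is $\varphi(-q^{1+n}\alpha x)/\varphi(\alpha\beta\Lambda)$, completing \eqref{lem-1}. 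Finally, to obtain \eqref{lem-2} I would apply $\Bor^{-1}$ to both sides of \eqref{lem-1}, which returns $x^n/(\varphi(\alpha x)\varphi(\beta\Lambda/x))$ on the left, and then relabel $\alpha\mapsto-q^{-1-n}\alpha$, $\beta\mapsto-q^{n}\beta$ (so that $\varphi(-q^{1+n}\alpha x)\mapsto\varphi(\alpha x)$, $\varphi(-q^{-n}\beta\Lambda/x)\mapsto\varphi(\beta\Lambda/x)$, and $\varphi(\alpha\beta\Lambda)\mapsto\varphi(q^{-1}\alpha\beta\Lambda)$); a direct check shows the resulting relation is exactly \eqref{lem-2}, so no separate computation is required. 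The main obstacle throughout is the $q^{-kl}$ coupling produced by the Gaussian $\Bor$, which I handle by summing over $l$ first and then using the reflection formula to eliminate the Gaussian before invoking the $q$-binomial theorem.
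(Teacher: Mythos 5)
Your proof is correct. The paper itself states this Lemma without proof, importing it from \cite{Shakirov:2021krl}, so there is no in-paper argument to compare against; your derivation is a valid self-contained verification. I checked the key steps: the exponent splitting $(n+k-l)(n+k-l+1)/2=n(n+1)/2+n(k-l)+k(k+1)/2+l(l-1)/2-kl$ is exact; summing over $l$ first correctly absorbs the coupling $q^{-kl}$ into the triangular-product Euler identity, giving $\varphi\bigl(-q^{-n-k}\beta\Lambda/x\bigr)$; the reflection $(-q^{-k}w;q)_k=q^{-k(k+1)/2}w^k(-q/w;q)_k$ does cancel the remaining Gaussian so that the $q$-binomial theorem applies with $a=-q^{n+1}x/(\beta\Lambda)$, $z=\alpha\beta\Lambda$, $az=-q^{1+n}\alpha x$; and the substitution $\alpha\mapsto -q^{-1-n}\alpha$, $\beta\mapsto -q^{n}\beta$ (under which $\alpha\beta\mapsto q^{-1}\alpha\beta$) turns the inverted identity into \eqref{lem-2} exactly. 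The only point worth making explicit is that all manipulations are identities of formal power series in $x$ and $\Lambda/x$ (the setting the paper works in), so the rearrangement of the double sum and the $q$-binomial theorem need no convergence hypotheses.
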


\begin{Proposition}[$q$-Chu--Vandermonde sums \cite{hypergeometric}] For $n\in \mathbb{Z}_{\geq 0}$, we have
\begin{align}
&{}_2\phi_1\left[\left. {a,q^{-n} \atop c}\right| q;q\right]={(c/a;q)_n\over (c;q)_n} a^n,\label{Chu-Vand-1}\\
&{}_2\phi_1\left[\left. {a,q^{-n} \atop c}\right| q;{cq^n\over a}\right]={(c/a;q)_n\over (c;q)_n}.\label{Chu-Vand-2}
\end{align}
\end{Proposition}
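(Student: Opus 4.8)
The plan is to reduce both identities to the $q$-binomial theorem, which is itself the $b=q$ specialization of the Ramanujan ${}_1\psi_1$ summation recalled in the Remark above. Indeed, setting $b=q$ in that formula makes the denominator factors $(b;q)_\infty$, $(q/a;q)_\infty$ and $(b/az;q)_\infty$ cancel against the numerator factors $(q;q)_\infty$, $(b/a;q)_\infty$ and $(q/az;q)_\infty$, while the summand $\frac{(a;q)_n}{(q;q)_n}z^n$ vanishes for $n<0$ because $(q;q)_n^{-1}=0$ there; this yields $\sum_{k\ge0}\frac{(a;q)_k}{(q;q)_k}z^k=\frac{(az;q)_\infty}{(z;q)_\infty}$ for $|z|<1$. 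I would take this $q$-binomial theorem as the single analytic input; everything afterwards is manipulation of terminating sums, for which convergence is automatic.

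First I would prove \eqref{Chu-Vand-1}. Writing the $q$-binomial theorem once with parameter $a$ and argument $z$, and once with parameter $c/a$ and argument $az$, the product of the two series telescopes to $\frac{(cz;q)_\infty}{(z;q)_\infty}$, which is again a $q$-binomial series with parameter $c$. Comparing the coefficient of $z^n$ on both sides gives the convolution
\[ \sum_{k=0}^{n}\frac{(c/a;q)_k}{(q;q)_k}a^k\,\frac{(a;q)_{n-k}}{(q;q)_{n-k}}=\frac{(c;q)_n}{(q;q)_n}. \]
The remaining step is bookkeeping: using the elementary reversal identity $\frac{(q;q)_n}{(q;q)_{n-k}}=(-1)^k q^{nk-k(k-1)/2}(q^{-n};q)_k$ together with the analogous rewriting of $(a;q)_{n-k}$ in terms of $(q^{1-n}/a;q)_k$, one recasts the convolution as a terminating ${}_2\phi_1$ with base argument $q$ and evaluates the prefactor, producing exactly $\frac{(c/a;q)_n}{(c;q)_n}a^n$.

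Then \eqref{Chu-Vand-2} follows from \eqref{Chu-Vand-1} by reversing the order of summation $k\mapsto n-k$ in the terminating series ${}_2\phi_1(a,q^{-n};c;q,z)=\sum_{k=0}^n\frac{(a;q)_k(q^{-n};q)_k}{(q;q)_k(c;q)_k}z^k$. The standard reversal of a terminating basic hypergeometric series turns the argument $z=q$ of \eqref{Chu-Vand-1} into $z=cq^n/a$ and absorbs the monomial $a^n$ into the right-hand side, leaving $\frac{(c/a;q)_n}{(c;q)_n}$; this is precisely the passage from \eqref{Chu-Vand-1} to \eqref{Chu-Vand-2}. Equivalently, one may prove each identity directly by induction on $n$, checking the base case $n=0$ and showing that both sides obey the same first-order recurrence $X_n=a\,\frac{1-(c/a)q^{n-1}}{1-cq^{n-1}}X_{n-1}$ via a contiguous relation in the terminating parameter $q^{-n}$.

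I expect the only real obstacle to be the sign-and-power bookkeeping in the reversal identities of the second paragraph: these introduce factors $(-1)^k q^{\pm(nk-k(k-1)/2)}$ that must cancel cleanly for the convolution to collapse to the stated closed form, and the same care is needed to verify that the series reversal sends $z=q$ exactly to $z=cq^n/a$. There are no analytic subtleties, since the termination at $k=n$ (forced by $(q^{-n};q)_k=0$ for $k>n$) renders every sum finite and the identities purely algebraic in $q$, $a$ and $c$.
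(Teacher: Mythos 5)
Your proof is correct, but note that there is no internal argument in the paper to compare it against: the Proposition is quoted as a classical result from Gasper--Rahman \cite{hypergeometric}, where \eqref{Chu-Vand-2} is obtained as the $b=q^{-n}$ specialization of the $q$-Gauss sum and \eqref{Chu-Vand-1} then follows by reversing the terminating series. Your route is an equally standard variant of the same circle of ideas: Cauchy-multiplying the $q$-binomial series with parameter $c/a$ and argument $az$ against the one with parameter $a$ and argument $z$ gives $\frac{(cz;q)_\infty}{(z;q)_\infty}$, whence the convolution $\sum_{k=0}^{n}\frac{(c/a;q)_k}{(q;q)_k}a^k\frac{(a;q)_{n-k}}{(q;q)_{n-k}}=\frac{(c;q)_n}{(q;q)_n}$, which you then convert into a terminating ${}_2\phi_1$ at argument $q$. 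It has the pleasant feature of being anchored to the Ramanujan ${}_1\psi_1$ formula that the paper itself records in a Remark in Section 4.2: the $b=q$ specialization does collapse to the $q$-binomial theorem exactly as you say, the bilateral sum truncating because $1/(q;q)_n=0$ for $n<0$.

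One detail is worth making explicit, since it is where your flagged ``bookkeeping'' actually lives. Using $\frac{(q;q)_n}{(q;q)_{n-k}}=(-1)^k q^{nk-k(k-1)/2}(q^{-n};q)_k$ and $(a;q)_{n-k}=\frac{(a;q)_n}{(q^{1-n}/a;q)_k}\bigl(-\frac{q}{a}\bigr)^k q^{k(k-1)/2-nk}$, all signs and powers cancel except a single $q^k$, and the convolution becomes ${}_2\phi_1\bigl(c/a,q^{-n};q^{1-n}/a;q,q\bigr)=\frac{(c;q)_n}{(a;q)_n}$. This is \eqref{Chu-Vand-1} only after the relabeling $(a',c')=\bigl(c/a,\,q^{1-n}/a\bigr)$: one checks via $(q^{1-n}/c;q)_n=(-1)^n q^{-n(n-1)/2}c^{-n}(c;q)_n$ that the right-hand side then takes the stated form $\frac{(c'/a';q)_n}{(c';q)_n}(a')^n$, and since $(a,c)$ are generic the relabeling is invertible, so \eqref{Chu-Vand-1} follows in full generality. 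The same remark applies to your reversal step: substituting $k\mapsto n-k$ in \eqref{Chu-Vand-1} with parameters $(a,c)$ produces \eqref{Chu-Vand-2} with parameters $\bigl(q^{1-n}/c,\,q^{1-n}/a\bigr)$, and the argument one obtains, $cq^{n}/a$, is precisely $c''q^{n}/a''$ in those new parameters, so the two identities are indeed equivalent under reversal for generic $a$, $c$. With these substitutions spelled out, your argument is complete and fully consistent with the proof in the cited reference.
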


\begin{Proposition}
Conjecture {\rm\ref{secondconjecture}} holds in the Macdonald limit \eqref{Mac(1)}. Then
\begin{align}
&T_{(q t^{1/2}Q)^{1/2},x}\mathcal{F}^{(2)}=
\varphi(q\Lambda)
{1\over \varphi(- q t x)\varphi(- \Lambda/ x)} \mathcal{B}
{1\over \varphi( t x)\varphi(q \Lambda/t x)} \mathcal{F}^{(1)},\label{Mac-eq-1}\\
&\mathcal{F}^{(1)}=
\varphi(t\Lambda){1\over \varphi( q x)\varphi(t \Lambda/ x)}  \mathcal{B}
{1\over \varphi(- x)\varphi(- q\Lambda/ x)}
\bigl(T_{t^{1/2},x}^{-1}T_{t,\Lambda}^{-1}\bigr)T_{(q t^{1/2}Q)^{-1/2},x}\mathcal{F}^{(2)}.\label{Mac-eq-2}
\end{align}
\end{Proposition}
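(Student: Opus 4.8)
The plan is to prove the two stated identities \eqref{Mac-eq-1} and \eqref{Mac-eq-2} by direct substitution of the explicit $q$-hypergeometric forms \eqref{F1-Mac} and \eqref{F2-Mac} for $\mathcal{F}^{(1)}$ and $\mathcal{F}^{(2)}$, acting with the operators on the right-hand sides term by term. Two ingredients do all the work: the action of the $q$-Borel transformation $\Bor$ on a monomial dressed by two $\varphi$-factors, recorded in \eqref{lem-1}--\eqref{lem-2}, and the $q$-Chu--Vandermonde summations \eqref{Chu-Vand-1}--\eqref{Chu-Vand-2}, which collapse the finite inner sums that appear. As a preliminary step I would specialize the parameters according to \eqref{Mac(1)}, translate the $b_i$ and the operators $\widetilde{\Bor}$, $\widetilde{T}_{\mathsf{p},b}$ occurring in \eqref{firsteq}--\eqref{secondeq} into the gauge-theory variables, and use $\Phi(x)/\Phi(tx)=\varphi(x)$ from \eqref{t-difference} to collapse the double infinite products into the scalar $\varphi$-prefactors; this reduces the specializations \eqref{special1}--\eqref{special2} of the coupled system exactly to \eqref{Mac-eq-1} and \eqref{Mac-eq-2}. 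Since each side is a formal power series (a double series in $\Lambda/x$ and $\Lambda$, but only a single series in $\Lambda/x$ for $\mathcal{F}^{(1)}$), matching coefficients order by order suffices.

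First I would establish \eqref{Mac-eq-2}, the direction that turns the double series $\mathcal{F}^{(2)}$ into the single series $\mathcal{F}^{(1)}$. Writing $\mathcal{F}^{(2)}=\sum_{m,n\ge0}$ from \eqref{F2-Mac}, I would apply the shifts $T_{(qt^{1/2}Q)^{-1/2},x}$ and $T_{t^{1/2},x}^{-1}T_{t,\Lambda}^{-1}$ (which merely rescale $x,\Lambda$ and contribute explicit monomial prefactors), multiply by $1/(\varphi(-x)\varphi(-q\Lambda/x))$, and then apply $\Bor$ through \eqref{lem-1}, reading off $\alpha,\beta$ from the arguments $-x$ and $-q\Lambda/x$. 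Multiplying by the outer factor $\varphi(t\Lambda)/(\varphi(qx)\varphi(t\Lambda/x))$ cancels the $\varphi(\alpha\beta\Lambda)$ denominator produced by \eqref{lem-1} and leaves ratios of $\varphi$'s that telescope into finite $q$-shifted factorials. The upshot is that the summation over the power of $\Lambda$ decouples into a ${}_2\phi_1$ at a special argument, which \eqref{Chu-Vand-1} or \eqref{Chu-Vand-2} resums in closed form; what remains is precisely the single series \eqref{F1-Mac}.

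The companion identity \eqref{Mac-eq-1} is handled by the same machinery run in the opposite direction. Starting from the single series \eqref{F1-Mac}, I would dress it by $1/(\varphi(tx)\varphi(q\Lambda/tx))$, apply $\Bor$ via \eqref{lem-1} (here $\alpha=t$, $\beta=q/t$, $n=-m$), multiply by $\varphi(q\Lambda)/(\varphi(-qtx)\varphi(-\Lambda/x))$, and compare the resulting double series with $T_{(qt^{1/2}Q)^{1/2},x}\mathcal{F}^{(2)}$ read off from \eqref{F2-Mac}; expanding the surviving $\varphi$-ratios as $q$-binomial series reproduces the two summation indices of \eqref{F2-Mac}. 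As a soft alternative one could, in the spirit of the proof of Lemma \ref{Heine} and Proposition \ref{prp:5dSW}, argue by \emph{uniqueness} of the normalized power-series solution of the associated Macdonald-type $q$-difference equation, but the direct coefficient comparison is cleanest here.

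The main obstacle throughout is bookkeeping rather than conceptual: one must track the nested $q$-Pochhammer symbols and the half-integer powers and signs $(-\mathsf{p})$, $(qt^{1/2}Q)^{\pm1/2}$ produced by the shift operators, and — most delicately — identify the precise specialization of the $q$-Chu--Vandermonde sum (which of \eqref{Chu-Vand-1}, \eqref{Chu-Vand-2}, and with which $a$, $c$, $q^{-n}$) that makes the inner summation telescope in the double-to-single-series direction. Once the parameter dictionary under \eqref{Mac(1)} is fixed consistently, both verifications reduce to these two classical summations together with the $q$-Borel action lemma.
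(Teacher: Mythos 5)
Your treatment of \eqref{Mac-eq-1} coincides with the paper's: apply the right-hand-side operator to the single series \eqref{F1-Mac}, use \eqref{lem-1} with $\alpha=t$, $\beta=q/t$ and exponent $-m$ (the $\varphi(q\Lambda)$ produced in the denominator cancels against the prefactor), expand the surviving ratios $\varphi\bigl(-q^{1-m}tx\bigr)/\varphi(-qtx)$ and $\varphi\bigl(-q^{1+m}\Lambda/tx\bigr)/\varphi(-\Lambda/x)$ by the $q$-binomial theorem, reindex, and resum the terminating inner sum by the $q$-Chu--Vandermonde formula \eqref{Chu-Vand-2} to land on $T_{(qt^{1/2}Q)^{1/2},x}\mathcal{F}^{(2)}$. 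That half of your plan is sound and is exactly what the paper does.

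For \eqref{Mac-eq-2}, however, the direction you choose --- applying $\mathcal{B}$ and the $\varphi$-factors directly to the double series \eqref{F2-Mac} and collapsing to the single series \eqref{F1-Mac} --- does not reduce to a single Chu--Vandermonde summation as you claim. The mechanism that makes the inner sum a \emph{terminating} ${}_2\phi_1$ is the factor $(q^{-n};q)_k$ generated when one starts from the single series and expands a ratio of the form $\varphi\bigl(-q^{1-n}\alpha x\bigr)/\varphi(-\alpha x)$; running the computation from $\mathcal{F}^{(2)}$ instead produces genuinely infinite extra sums in pure powers of $\Lambda$ (from ratios such as $\varphi(t\Lambda)/\varphi(q\Lambda)$ and $\varphi\bigl(q^{m+1}\Lambda/x\bigr)/\varphi(t\Lambda/x)$) whose cancellation against the $\Lambda^n$-dependence of \eqref{F2-Mac} is not controlled by \eqref{Chu-Vand-1} or \eqref{Chu-Vand-2}. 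The paper sidesteps this entirely by first rewriting \eqref{Mac-eq-2} in the equivalent inverted form
\begin{equation*}
\bigl(T_{t^{1/2},x}^{-1}T_{t,\Lambda}^{-1}\bigr)T_{(q t^{1/2}Q)^{-1/2},x}\mathcal{F}^{(2)}=
\frac{1}{\varphi(t\Lambda)}\,\varphi(- x)\varphi(- q\Lambda/ x)\,
\mathcal{B}^{-1}\,\varphi( q x)\varphi(t \Lambda/ x)\,\mathcal{F}^{(1)},
\end{equation*}
so that one again starts from the single series $\mathcal{F}^{(1)}$, now using \eqref{lem-2} for $\mathcal{B}^{-1}$ and \eqref{Chu-Vand-1} in the final resummation; both identities are then verified by literally the same single-to-double computation. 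You should adopt this inversion (or else supply the substantially heavier resummation argument your direction requires); your fallback via uniqueness of normalized solutions of a Macdonald-type difference equation could also be made to work, but it needs the relevant eigenvalue equation for both sides to be exhibited first, which is not free.
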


\begin{proof}
First, we show \eqref{Mac-eq-1}.
By using \eqref{lem-1} and the $q$-binomial formula, we have
\begin{gather*}
\varphi(q\Lambda)
{1\over \varphi(- q t x)\varphi(- \Lambda/ x)} \mathcal{B}
{1\over \varphi( t x)\varphi(q \Lambda/t x)} \mathcal{F}^{(1)}\\
\qquad=\varphi(q\Lambda)
{1\over \varphi(- q t x)\varphi(- \Lambda/ x)} \mathcal{B}
{1\over \varphi( t x)\varphi(q \Lambda/t x)}\sum_{n\geq 0}
{(q/t;q)_n (qQ/t;q)_n\over (q;q)_n (q Q;q)_n} t^n (\Lambda/x)^n\\
\qquad=\sum_{n\geq 0}
{\varphi\bigl(- q^{1-n} t x\bigr)\varphi\bigl(- q^{1+n}\Lambda/t x\bigr)\over \varphi(- q t x)\varphi(- \Lambda/ x)}
{(q/t;q)_n (qQ/t;q)_n\over (q;q)_n (q Q;q)_n} t^n q^{n(n-1)/2}(\Lambda/x)^n\\
\qquad=\sum_{k,l,n\geq 0}
{(q^{-n};q)_k\over (q;q)_k} (-q t x)^k {\bigl(q^{n+1}/t;q\bigr)_l\over (q;q)_l}(-\Lambda/x)^l
{(q/t;q)_n (qQ/t;q)_n\over (q;q)_n (q Q;q)_n} t^n q^{n(n-1)/2}(\Lambda/x)^n.
\end{gather*}
Note that we have the truncation of the summation as $\sum_{k,l,n\geq 0} =\sum_{k,l\geq 0}\sum_{n=k}^\infty $
because of the factor $(q^{-n};q)_k$. Shifting the running index $n$ by $k$ as $n\rightarrow n+k$
(hence changing the summation range as $\sum_{k,l\geq 0}\sum_{n=k}^\infty \rightarrow
\sum_{k,l\geq 0}\sum_{n=0}^\infty $ accordingly), we have
\begin{align*}
={}&\sum_{k,l,n\geq 0}
{\bigl(q^{-k-n};q\bigr)_k\over (q;q)_k} (-q t )^k {\bigl(q^{k+n+1}/t;q\bigr)_l\over (q;q)_l}(-1)^l\\
&\times
{(q/t;q)_{k+n} (qQ/t;q)_{k+n}\over (q;q)_{k+n} (q Q;q)_{k+n}} t^{k+n} q^{(k+n)(k+n-1)/2} (\Lambda/x)^{l+n} \Lambda^k.
\end{align*}
Then we change the running index $l$ to $m$ (where $m=l+n$)
as $\sum_{l,n\geq 0} =\sum_{m\geq 0}\sum_{n=0}^m$.
Simplifying the factors, we have
\begin{align*}
={}&\sum_{k\geq 0}\sum_{m\geq 0}
{\bigl(q^{k+1}/t;q\bigr)_m\over (q;q)_m} (-\Lambda/x)^m
{(q/t;q)_k\over (q;q)_k} {(qQ/t;q)_k \over (qQ;q)_k} \bigl(t^{2}\Lambda\bigr)^k\\
&
\times\sum_{n=0}^m
{(q^{-m};q)_n\over (q;q)_n}{\bigl(q^{k+1}Q/t;q\bigr)_n\over \bigl(q^{k+1}Q;q\bigr)_n}q^{mn}t^{n}\\
={}&\sum_{k\geq 0}\sum_{m\geq 0}
{\bigl(q^{k+1}/t;q\bigr)_m\over (q;q)_m}{(t;q)_m\over \bigl(q^{k+1}Q;q\bigr)_m} (-\Lambda/x)^m
{(q/t;q)_k\over (q;q)_k} {(qQ/t;q)_k \over (qQ;q)_k} \bigl(t^{2}\Lambda\bigr)^k\\
={}&T_{(q t^{1/2}Q)^{1/2},x}\mathcal{F}^{(2)}.
\end{align*}
Here, we have used the
$q$-Chu--Vandermonde summation formula \eqref{Chu-Vand-2} in the last step.

The second equation \eqref{Mac-eq-2} can be shown in exactly the same manner as above.
Note that~\eqref{Mac-eq-2} is equivalent to
\begin{align*}
\bigl(T_{t^{1/2},x}^{-1}T_{t,\Lambda}^{-1}\bigr)T_{(q t^{1/2}Q)^{-1/2},x}\mathcal{F}^{(2)}=
{1\over \varphi(t\Lambda)}
{\varphi(- x)\varphi(- q\Lambda/ x)}
\mathcal{B}^{-1}
{\varphi( q x)\varphi(t \Lambda/ x)}
\mathcal{F}^{(1)}.
\end{align*}
By using \eqref{lem-2} and the $q$-binomial formula, we have
\begin{gather*}
 {1\over \varphi(t\Lambda)}
{\varphi(- x)\varphi(- q\Lambda/ x)}
\mathcal{B}^{-1}
{\varphi( q x)\varphi(t \Lambda/ x)}
\mathcal{F}^{(1)}\\
\qquad= {1\over \varphi(t\Lambda)}
{\varphi(- x)\varphi(- q\Lambda/ x)}
\mathcal{B}^{-1}
{\varphi( q x)\varphi(t \Lambda/ x)}
\sum_{n\geq 0}
{(q/t;q)_n (qQ/t;q)_n\over (q;q)_n (q Q;q)_n} t^n (\Lambda/x)^n\\
\qquad=\sum_{n\geq 0}
{\varphi(- x)\varphi(- q\Lambda/ x)\over \varphi(- q^n x)\varphi(-q^{-n}t \Lambda/ x)}
{(q/t;q)_n (qQ/t;q)_n\over (q;q)_n (q Q;q)_n} t^n q^{-n(n-1)/2}(\Lambda/x)^n\\
\qquad=\sum_{k,l,n\geq 0}
{(q^{-n};q)_k\over (q;q)_k} (-q^n x)^k {(q^{n+1}/t;q)_l\over (q;q)_l}(-q^{-n}t \Lambda/x)^l\\
\phantom{\qquad=}{}\times{(q/t;q)_n (qQ/t;q)_n\over (q;q)_n (q Q;q)_n} t^n q^{-n(n-1)/2}(\Lambda/x)^n.
\end{gather*}
Note that we have $\sum_{k,l,n\geq 0} =\sum_{k,l\geq 0}\sum_{n=k}^\infty $
because of the factor $(q^{-n};q)_k$. Shifting the running index $n$ by $k$ as $n\rightarrow n+k$
(and changing the summation range as $\sum_{k,l\geq 0}\sum_{n=k}^\infty \rightarrow
\sum_{k,l\geq 0}\sum_{n=0}^\infty $ accordingly), we have
\begin{align*}
={}&\sum_{k,l,n\geq 0}
{\bigl(q^{-k-n};q\bigr)_k\over (q;q)_k} \bigl(-q^{n+k} \bigr)^k {\bigl(q^{k+n+1}/t;q\bigr)_l\over (q;q)_l}\bigl(-q^{-n-k}t\bigr)^l\\
&\times
{(q/t;q)_{k+n} (qQ/t;q)_{k+n}\over (q;q)_{k+n} (q Q;q)_{k+n}} t^{k+n} q^{-(k+n)(k+n-1)/2} (\Lambda/x)^{l+n} \Lambda^k.
\end{align*}
Changing the running index $l$ to $m$ (where $m=l+n$)
as $\sum_{l,n\geq 0} =\sum_{m\geq 0}\sum_{n=0}^m$, and
simplifying the factors, we have
\begin{align*}
={}&\sum_{k\geq 0}\sum_{m\geq 0}
{\bigl(q^{k+1}/t;q\bigr)_m\over (q;q)_m} (-\Lambda/x)^m
{(q/t;q)_k\over (q;q)_k} {(qQ/t;q)_k \over (qQ;q)_k} \Lambda^k q^{-km} t^{k+m}\\
&\times\sum_{n=0}^m
{(q^{-m};q)_n\over (q;q)_n}{\bigl(q^{k+1}Q/t;q\bigr)_n\over \bigl(q^{k+1}Q;q\bigr)_n}q^{n}\\
={}&\sum_{k\geq 0}\sum_{m\geq 0}
{\bigl(q^{k+1}/t;q\bigr)_m\over (q;q)_m}{(t;q)_m\over \bigl(q^{k+1}Q;q\bigr)_m} (-qQ\Lambda/x)^m
{(q/t;q)_k\over (q;q)_k} {(qQ/t;q)_k \over (qQ;q)_k} (t\Lambda)^k\\
={}&\bigl(T_{t^{1/2},x}^{-1}T_{t,\Lambda}^{-1}\bigr)T_{(q t^{1/2}Q)^{-1/2},x}\mathcal{F}^{(2)}.
\end{align*}
Here, we have used the
$q$-Chu--Vandermonde summation formula \eqref{Chu-Vand-1} in the last step.
\end{proof}

\subsubsection[Macdonald limit in \protect{[46]}]{Macdonald limit in \cite{Shakirov:2021krl}}

For the sake of readers' convenience,
we recollect the facts concerning the choice of the parameters investigated in \cite{Shakirov:2021krl}
\begin{align}
T_1=v t^{-1},\qquad T_2=v^{-1},\qquad T_3=v^{-1},\qquad T_4=v t^{-1}. \label{Mac(2)}
\end{align}
Note that in the limit \eqref{Mac(2)} we have (See Conjecture \ref{conjecture});
\begin{align*}
&\mathcal{A}_1(\Lambda,x)={1\over \varphi\bigl(qt^{-1} x\bigr)\varphi(t\Lambda/x)},\\
&\mathcal{A}_2(\Lambda,x)={\varphi(t \Lambda)\varphi\bigl(q t^{-2}\Lambda\bigr)
\over \varphi\bigl(- t^{-1}x\bigr)\varphi\bigl(-Q^{-1}x\bigr) \varphi(-qQ \Lambda/x) \varphi(-q \Lambda/x) },\\
&\mathcal{A}_3(\Lambda,x)={1\over \varphi\bigl(q^{-1}Q^{-1} x\bigr)\varphi\bigl( q^2 t^{-1}Q\Lambda/x\bigr)}.
\end{align*}

Set
\begin{align*}
&U(\Lambda,x)=
\sum_{k,l\geq 0}
{(q/t;q)_k\bigl(q^l q Q/t;q\bigr)_k \over (q;q)_k\bigl(q^l q Q;q\bigr)_k} (\Lambda/x)^k
{(t;q)_l( t Q;q)_l \over (q;q)_l( q Q;q)_l} \bigl(q\Lambda/t^2\bigr)^l,\\
&V(\Lambda,x)=
\sum_{k,l\geq 0}
{(q/t;q)_k(t;q)_k \over (q;q)_k(q Q;q)_k} (-qQ\Lambda/x)^k
{\bigl(q^k t^2;q\bigr)_l( t Q;q)_l \over (q;q)_l\bigl(q^k q Q;q\bigr)_l} \bigl(q\Lambda/t^2\bigr)^l.
\end{align*}

\begin{Proposition}[\cite{Shakirov:2021krl}]
When \eqref{Mac(2)} is satisfied, we have
\begin{align*}
\Psi(\Lambda,x)=U(\Lambda,x),
\end{align*}
and
\begin{align*}
&V(\Lambda,x)={\varphi(t\Lambda)\over \varphi\bigl(-Q^{-1}x\bigr)\varphi(-q Q\Lambda/x)}\mathcal{B}
{1\over \varphi\bigl(q^{-1}Q^{-1}x\bigr)\varphi\bigl(q^2t^{-1} Q\Lambda/x\bigr)}
U\left(\Lambda,{x\over t q Q}\right),\\
&U(t\Lambda,x)={\varphi\bigl(qt^{-2}\Lambda\bigr)\over \varphi\bigl(q t^{-1}x\bigr)\varphi(t \Lambda/x)}\mathcal{B}
{1\over \varphi\bigl(-t^{-1}x\bigr)\varphi(-q\Lambda/x)}
V\left(\Lambda,x\right).
\end{align*}
\end{Proposition}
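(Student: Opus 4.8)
The statement bundles three assertions: the explicit evaluation $\Psi(\Lambda,x)=U(\Lambda,x)$ under \eqref{Mac(2)}, and two first-order (in $\Bor$) relations tying $U$ to the auxiliary series $V$. The plan is to treat all three as explicit power-series identities. I first record the conceptual role of the two $\Bor$-relations: inserting the expression for $V(\Lambda,x)$ into the one for $U(t\Lambda,x)$, and using the splitting of the limiting $\mathcal{A}_2=\frac{\varphi(t\Lambda)\varphi(qt^{-2}\Lambda)}{\varphi(-t^{-1}x)\varphi(-Q^{-1}x)\varphi(-qQ\Lambda/x)\varphi(-q\Lambda/x)}$ into the two groups $\frac{\varphi(qt^{-2}\Lambda)}{\varphi(-t^{-1}x)\varphi(-q\Lambda/x)}$ (attached to $\mathcal{A}_1$) and $\frac{\varphi(t\Lambda)}{\varphi(-Q^{-1}x)\varphi(-qQ\Lambda/x)}$ (attached to $\mathcal{A}_3$), whose product is $\mathcal{A}_2$, reproduces exactly the non-stationary equation \eqref{non-stationary} in the limit \eqref{Mac(2)} with $\Psi=U$. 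Thus $V$ is the intermediate ``half-step'' function and the three assertions together verify Conjecture \ref{conjecture} for this mass choice.

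For $\Psi=U$ I would proceed as in the proof of Proposition \ref{gl3-dege}. Substituting \eqref{Mac(2)} and the Higgsing values into \eqref{Sh-parameters}, six of the Nekrasov-factor arguments in \eqref{Z} collapse to $1$ or to $\kappa^{-1}=t$; applying the vanishing criterion for the Nekrasov factor (the Lemma used in the proof of Proposition \ref{gl3-dege}) then forces $\mu_1=\varnothing$ and restricts $\nu_1=(r)$, $\nu_2=(s')$, $\mu_2=(s)$ to single rows with $s'\le s$. I would write each surviving factor for single-row partitions as an explicit ratio of $q$-shifted factorials and re-parametrise by $l=r+s'$, $s=k+l$, so that the instanton weight $\mathfrak{p}_1^{|\nu_1|+|\nu_2|}\mathfrak{p}_2^{|\mu_2|}$ becomes $x^l(\Lambda/x)^{k+l}$ up to $q,t$-powers. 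For fixed $(k,l)$ the remaining sum over the auxiliary index $s'=0,\dots,l$ is a terminating ${}_2\phi_1$, which I would evaluate by the $q$-Chu--Vandermonde formula \eqref{Chu-Vand-1} (or \eqref{Chu-Vand-2}); the outcome is exactly the coupled coefficient $\frac{(q/t;q)_k(q^lqQ/t;q)_k}{(q;q)_k(q^lqQ;q)_k}\frac{(t;q)_l(tQ;q)_l}{(q;q)_l(qQ;q)_l}$ defining $U$.

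The two $\Bor$-relations are then pure series verifications, carried out in exactly the manner of the proof of \eqref{Mac-eq-1} and \eqref{Mac-eq-2}. Inserting the explicit double series for $U\bigl(\Lambda,x/(tqQ)\bigr)$ (respectively $V(\Lambda,x)$) on the right-hand side, I would apply \eqref{lem-1} to commute $\Bor$ through the monomials against the $\varphi$-factors, expand each $\varphi$-quotient by the $q$-binomial theorem, and reorganise the resulting triple sum so that the innermost terminating sum is again a $q$-Chu--Vandermonde sum \eqref{Chu-Vand-1}; after shifting the running indices this collapses to the defining series of $V$ (respectively $U(t\Lambda,x)$). I expect the main obstacle to be bookkeeping rather than conceptual: in the $\Psi=U$ step one must track the $q,t$-prefactors produced by the $[u;q]_n$ normalisation of the orbifolded factors against the plain $(u;q)_n$ of $U$, and verify that the auxiliary $s'$-summation produces precisely the $q^l$-shifted Pochhammers $(q^lqQ/t;q)_k$, $(q^lqQ;q)_k$ that couple the two summation indices — the one place where the Chu--Vandermonde identity is genuinely needed.
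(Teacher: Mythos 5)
You should first note that the paper itself gives no proof of this Proposition: it is imported verbatim from \cite{Shakirov:2021krl}, and the only in-paper templates are Proposition \ref{gl3-dege} (for the evaluation of $\Psi$ in the other Macdonald limit) and the proofs of \eqref{Mac-eq-1}--\eqref{Mac-eq-2} (for one-$\Bor$ identities). Against those templates, two of your three steps are sound. Your splitting of the limiting $\mathcal{A}_2$ into the two groups attached to $\mathcal{A}_1$ and $\mathcal{A}_3$ is exactly right (the $\Lambda$-only factors $\varphi(t\Lambda)$, $\varphi\bigl(qt^{-2}\Lambda\bigr)$ commute with $\Bor$, so composing the two displayed relations does reproduce \eqref{non-stationary} under \eqref{Mac(2)}), and your plan for the two $\Bor$-relations --- \eqref{lem-1}, $q$-binomial expansion, index shifts, then $q$-Chu--Vandermonde --- is precisely the method the paper uses for \eqref{Mac-eq-1} and \eqref{Mac-eq-2}; I checked the third identity at low order in $\Lambda/x$ and it is consistent.

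The genuine gap is in the first step, $\Psi=U$: you have transplanted the Mac(1) support structure, and it is wrong for \eqref{Mac(2)}. The four arguments $w\mathfrak{n}_a/\mathfrak{m}_b$, $v\mathfrak{m}_a/\mathfrak{f}^-_b$ again take the special values $(t,1,1,t)$, giving $\mu_1=\varnothing$, $\mu_2=(s)$, $\nu_2\subseteq\mu_2$; but in addition $v\mathfrak{f}^+_2/\mathfrak{n}_1=qt^{-1}=q\kappa$ and $v\mathfrak{f}^+_1/\mathfrak{n}_2=qt^{-2}=q\kappa^2$ with $\kappa=t^{-1}$, and since $\Nk_{\varnothing,\mu}\bigl(q\kappa^m|q,\kappa\bigr)=0$ exactly when $\mu_m\geq 1$, these force $\nu_1=\varnothing$ and $\ell\bigl(\nu_2\bigr)\leq 1$ --- vanishing criteria \emph{not} covered by the lemma used in the proof of Proposition \ref{gl3-dege}, which treats only $u=1$ and $u=\kappa^{-1}$. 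The surviving fixed points are therefore $(\nu_1,\nu_2,\mu_1,\mu_2)=(\varnothing,(l),\varnothing,(k{+}l))$, in bijection with the two indices of $U$: there is no residual sum over your auxiliary index $s'$, and the shifted Pochhammers $\bigl(q^l qQ/t;q\bigr)_k$, $\bigl(q^l qQ;q\bigr)_k$ arise directly from the single fixed point's factor $\Nk_{\nu_2,\mu_2}$, not from a Chu--Vandermonde collapse. (This is also why $U$ is a double sum while the Mac(1) answer is the triple sum $f^{\mathfrak{gl}_3}$.) The error is recoverable --- carrying out your computation honestly, the terms with $\nu_1=(r)$, $r\geq 1$, contain the factor $\bigl(q^{1-r};q\bigr)_r=0$, so your ``terminating ${}_2\phi_1$'' degenerates to its top term --- but as written the support, the parametrization $l=r+s'$ as a genuine summation, and the role of Chu--Vandermonde in this step are all incorrect. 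A secondary slip: $\Psi$ is built from the plain Nekrasov factors of \eqref{Z}; the $[u;q]_n$-normalized orbifolded factors belong only to the affine Laumon function \eqref{Nf4}, i.e., to $\mathcal{F}^{(1)}$, $\mathcal{F}^{(2)}$, so no such normalisation bookkeeping occurs in the $\Psi=U$ step.
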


\begin{Proposition}
In the limit \eqref{Mac(2)},
we have
\begin{gather*}
{\varphi(t\Lambda/x)\over \varphi(q \Lambda/t x)}
\Psi(t \Lambda,x)=
\sum_{k,l\geq 0}
{(t;q)_k\bigl(q^l t Q;q\bigr)_k \over (q;q)_k\bigl(q^l q Q;q\bigr)_k} (q\Lambda/tx)^k
{(t;q)_l( t Q;q)_l \over (q;q)_l(q Q;q)_l} (q\Lambda/t)^l\\
\phantom{{\varphi(t\Lambda/x)\over \varphi(q \Lambda/t x)}
\Psi(t \Lambda,x)}{}=f^{\mathfrak{gl}_3}\bigl(1/\Lambda,1/x,1|tQ^{-1},t,1|q,t\bigr),
\mathcal{D}^{\mathfrak{gl}_3}\bigl(tQ^{-1},t,1 |q,t\bigr),\\
{\varphi(t\Lambda/x)\over \varphi(q \Lambda/t x)}
\Psi(t \Lambda,x)=
\bigl(tQ^{-1}+t+1\bigr){\varphi(t\Lambda/x)\over \varphi(q \Lambda/t x)}
\Psi(t \Lambda,x),
\end{gather*}
where
\begin{align*}
\mathcal{D}^{\mathfrak{gl}_3}\bigl(tQ^{-1},t,1 |q,t\bigr)={}&
tQ^{-1} {1-\Lambda/t x\over 1-\Lambda/x} {1-\Lambda/t\over 1-\Lambda}T_{q,\Lambda}^{-1}\\
&
+ t {1-t\Lambda/ x\over 1-\Lambda/x} {1- x/t\over 1-x}T_{q,x}^{-1}+
{1-t\Lambda/ x\over 1-\Lambda/x} {1- tx\over 1-x}T_{q,\Lambda}T_{q,x}.
\end{align*}

\end{Proposition}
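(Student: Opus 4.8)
The plan is to reduce everything to classical $q$-series manipulations, starting from the explicit evaluation of the partition function in the Macdonald limit. By the preceding proposition, under \eqref{Mac(2)} one has $\Psi(\Lambda,x)=U(\Lambda,x)$, so after the substitution $\Lambda\mapsto t\Lambda$,
\[
\Psi(t\Lambda,x)=\sum_{k,l\geq 0}\frac{(q/t;q)_k\bigl(q^{l+1}Q/t;q\bigr)_k}{(q;q)_k\bigl(q^{l+1}Q;q\bigr)_k}(t\Lambda/x)^k\,\frac{(t;q)_l(tQ;q)_l}{(q;q)_l(qQ;q)_l}(q\Lambda/t)^l .
\]
The first task is to show that multiplication by $\varphi(t\Lambda/x)/\varphi(q\Lambda/tx)$ turns this into the double sum of the statement. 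For each fixed $l$ the inner $k$-sum is the basic hypergeometric series ${}_2\phi_1\bigl(q/t,\,q^{l+1}Q/t;\,q^{l+1}Q;\,q,\,t\Lambda/x\bigr)$, while the $l$-block is left untouched throughout.

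The key step is the $q$-Euler (iterated Heine) transformation
\[
{}_2\phi_1(a,b;c;q,z)=\frac{(abz/c;q)_\infty}{(z;q)_\infty}\,{}_2\phi_1(c/a,c/b;c;q,abz/c),
\]
applied with $a=q/t$, $b=q^{l+1}Q/t$, $c=q^{l+1}Q$ and $z=t\Lambda/x$. A direct substitution gives $abz/c=q\Lambda/tx$, $c/a=q^{l}tQ$ and $c/b=t$, so that
\[
{}_2\phi_1\bigl(q/t,q^{l+1}Q/t;q^{l+1}Q;q,t\Lambda/x\bigr)=\frac{\varphi(q\Lambda/tx)}{\varphi(t\Lambda/x)}\,{}_2\phi_1\bigl(q^{l}tQ,t;q^{l+1}Q;q,q\Lambda/tx\bigr) .
\]
The decisive observation is that the emerging prefactor $\varphi(q\Lambda/tx)/\varphi(t\Lambda/x)$ is \emph{independent of $l$}, being governed only by $z=t\Lambda/x$ and $abz/c=q\Lambda/tx$. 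Hence it cancels exactly against the overall factor $\varphi(t\Lambda/x)/\varphi(q\Lambda/tx)$, and summing over $l$ reproduces the claimed double sum term by term.

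Next I would identify the resulting sum with $f^{\mathfrak{gl}_3}(1/\Lambda,1/x,1\,|\,tQ^{-1},t,1\,|\,q,t)$ by specializing \eqref{f3}. With $(x_1,x_2,x_3)=(1/\Lambda,1/x,1)$ and $(s_1,s_2,s_3)=(tQ^{-1},t,1)$ one has $ts_3/s_2=1$, so the symbol $(ts_3/s_2;q)_{\theta_{23}}=(1;q)_{\theta_{23}}$ vanishes unless $\theta_{23}=0$; this freezes the third summation index and collapses \eqref{f3} to a sum over $(\theta_{12},\theta_{13})$, which I match with $(k,l)$. The $\theta_{12}$-block then reproduces the $k$-block verbatim, whereas the $\theta_{13}$-block still carries the spurious factor $(Q;q)_{\theta_{13}}\bigl(q^{-\theta_{13}}t/Q;q\bigr)_{\theta_{13}}/\bigl[(qQ/t;q)_{\theta_{13}}\bigl(q^{1-\theta_{13}}/Q;q\bigr)_{\theta_{13}}\bigr]$ together with a power $(q^{2}\Lambda/t^{2})^{\theta_{13}}$ in place of $(q\Lambda/t)^{\theta_{13}}$. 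Applying the reversal identity $(aq^{-n};q)_n=(q/a;q)_n(-a)^n q^{-n(n+1)/2}$ to the two $q^{-\theta_{13}}$-shifted symbols collapses that factor to $(t/q)^{\theta_{13}}$, which absorbs the surplus $(q/t)^{\theta_{13}}$ hidden in the power and recovers the $l$-block precisely.

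Finally, the eigenvalue equation follows from the general identity $\mathcal{D}^{\mathfrak{gl}_3}(s|q,t)f^{\mathfrak{gl}_3}(x|s|q,t)=(s_1+s_2+s_3)f^{\mathfrak{gl}_3}(x|s|q,t)$ recalled in Appendix~\ref{App.B}, which for $(s_1,s_2,s_3)=(tQ^{-1},t,1)$ yields the eigenvalue $tQ^{-1}+t+1$. To pass to a two-variable operator I would invoke the scaling invariance $f^{\mathfrak{gl}_3}(\lambda x_1,\lambda x_2,\lambda x_3|s)=f^{\mathfrak{gl}_3}(x_1,x_2,x_3|s)$, which lets one rewrite the shift $T_{q,x_3}$ of the frozen variable $x_3=1$ as $T_{q,\Lambda}T_{q,x}$, together with $T_{q,x_1}=T_{q,\Lambda}^{-1}$ and $T_{q,x_2}=T_{q,x}^{-1}$; substituting $x_3=1$ in the coefficients then produces the stated operator $\mathcal{D}^{\mathfrak{gl}_3}(tQ^{-1},t,1|q,t)$. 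I expect the principal obstacle to be the second paragraph: one must select the transformation variant that produces an $l$-independent prefactor, since a different Heine move leaves an $l$-dependent ratio and the cancellation fails, with the reversal-identity bookkeeping of the third paragraph the next most delicate point.
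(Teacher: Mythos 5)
Your argument is correct, and every computation checks out: the $q$-Euler (thrice-iterated Heine) transformation with $a=q/t$, $b=q^{l+1}Q/t$, $c=q^{l+1}Q$, $z=t\Lambda/x$ does give the $l$-independent prefactor $\varphi(q\Lambda/tx)/\varphi(t\Lambda/x)$, so the cancellation goes through term by term; the freezing $\theta_{23}=0$ via $(1;q)_{\theta_{23}}$ and the reduction of the spurious $\theta_{13}$-factor to $(t/q)^{\theta_{13}}$ by the reversal identity are both right. The paper itself supplies no proof of this Proposition; its nearest analogue is the Macdonald limit \eqref{Mac(1)}, where Proposition \ref{gl3-dege} is proved by a direct fixed-point analysis — identifying which quadruples of partitions give nonvanishing Nekrasov factors in \eqref{Z} and matching the resulting triple sum against \eqref{f3}. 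Your route instead takes the closed form $U(\Lambda,x)$ from the preceding (quoted) Proposition as input and reduces everything to classical ${}_2\phi_1$ transformations; this is a legitimately different and, given that $U$ is already available, arguably more economical argument, though it relies on the quoted result of \cite{Shakirov:2021krl} rather than re-deriving the series from the partition function. One caveat on your final paragraph: the mechanical substitution $(x_1,x_2,x_3)=(1/\Lambda,1/x,1)$ into the third term of $\mathcal{D}^{\mathfrak{gl}_3}$ gives $s_3\frac{1-t\Lambda}{1-\Lambda}\frac{1-tx}{1-x}T_{q,\Lambda}T_{q,x}$ (since $tx_3/x_1=t\Lambda$), not the factor $\frac{1-t\Lambda/x}{1-\Lambda/x}$ printed in the statement; the printed coefficient appears to be a typo (it does not satisfy the eigenvalue identity, as one checks at first order in $\Lambda$), so you should state that your substitution produces the operator with $\frac{1-t\Lambda}{1-\Lambda}$ rather than claiming it reproduces the displayed one verbatim.
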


\begin{Remark}
In the limit \eqref{Mac(2)}, the set of partitions giving rise to the nonvanishing
contributions for $\mathcal{F}^{(1)}$ remains rather big. This seems to be consistent with the
appearance of double infinite products in the gauge transformation
\begin{align*}
&
{\Phi\bigl(q^{-1} t^{3} \Lambda\bigr)\Phi\bigl(q^2 t^{-2} \Lambda\bigr)\over
\Phi\bigl(q t^{-1} \Lambda\bigr)\Phi\bigl(t^2 \Lambda\bigr)
}{\varphi(t x) \over \varphi\bigl(q t^{-1} x\bigr)}
\mathcal{F}^{(1)}=
\Psi(t\Lambda,x).
\end{align*}
In four-dimensional theory, the set of partitions giving rise to the nonvanishing contributions is the same.
The four-dimensional limit of the ratio of double infinite products is
$(1- \Lambda)^{-\frac{2(\epsilon_1 + \epsilon_2)^2}{\epsilon_1\epsilon_2}}$.
On the other hand the corresponding set of partitions for $\mathcal{F}^{(2)}$ becomes small and gives us
\begin{equation*}
T_{(q t^{1/2}Q)^{1/2},x}\mathcal{F}^{(2)}=
\sum_{m,n\geq 0}{(t;1)_m(q^n q/t;1)_m\over (q;q)_m(q^n qQ;q)_m} (-\Lambda/x)^m
{(q/t;1)_n(qQ/t;1)_n\over (q;q)_n (q Q;q)_n} \bigl(t^2\Lambda\bigr)^n.
\end{equation*}
\end{Remark}

\section{Four-dimensional limit in a factorized form}
\label{sec:App.C}

In Section \ref{4dlimit}, we computed the four-dimensional limit of the operator $\SS$.
Here, we will consider the limit of each factor
\begin{gather*}
K_1=\frac{1}{\varphi(\Lambda)}\varphi(-d_1x)\varphi\left(-d_3\frac{\Lambda}{x}\right)\Bor^{-1}\varphi(qx)\varphi
\left(\frac{\Lambda}{x}\right),\\
K_2=\frac{1}{\varphi\bigl(q^{-1}d_2d_4\Lambda\bigr)}\varphi\bigl(q^{-1}d_1d_2x\bigr)\varphi
\left(d_3d_4\frac{\Lambda}{x}\right)\Bor^{-1}\varphi(-d_2x)\varphi\left(-d_4\frac{\Lambda}{x}\right),\\
N=\frac{\varphi\bigl(q^{-1}d_2d_4\Lambda\bigr)}{\varphi\bigl(q^{-1}d_1d_2d_3d_4\Lambda\bigr)},
\end{gather*}
in the decomposition
\[
\SS^{-1}=N K_2 K_1,
\]
separately. We show that each factor $K_i$ has also a well defined four-dimensional limit, though
the result is not usual differential operator any more. To describe the results in a simple form,
it is convenient to use the normal ordering $: \ :$
which is a $\bbC$-linear map
from a commutative ring~$\bbC(x, \vartheta_x)_{/ \vartheta_x x=x \vartheta_x}$ to the corresponding
non-commutative ring~$\bbC(x, \vartheta_x)_{/ \vartheta_x x=x (\vartheta_x+1)}$
defined~by
\[
{:}x^n F(x, \vartheta_x){:}  =x^n {:}F(x, \vartheta_x){:}, \qquad
{:}F(x, \vartheta_x) \vartheta_x^n{:} = {:}F(x, \vartheta_x){:} {\vartheta_x}^n, \qquad
{:}1{:}=1.
\]

\begin{Theorem}\label{ThmC}
For $q=e^h$, $d_i=q^{m_i}$, we have
\begin{gather*}
K_1 ={:}(1+x)^{-m_1-\vartheta_x}\left(1+\frac{\Lambda}{x}\right)^{-m_3+\vartheta_x}\left(1+\frac{h}{2} A_1\right)+{\mathcal O}\bigl(h^2\bigr){:}, \\
K_2 ={:}(1-x)^{-m_1-\vartheta_x}\left(1-\frac{\Lambda}{x}\right)^{-m_3+\vartheta_x}\left(1+\frac{h}{2} A_2\right)+{\mathcal O}\bigl(h^2\bigr){:},\\
N =(1-\Lambda)^{-(m_1+m_3)}\left(1+\frac{h}{2}\frac{(m_1+m_3)(m_1+2m_2+m_3+2m_4-3)}{(1-\Lambda)}\right)+{\mathcal O}\bigl(h^2\bigr),
\end{gather*}
where
\begin{gather*}
A_1= \frac{x(\vartheta_x-m_1+1)(\vartheta_x+m_1)}{1+x}
+\frac{\frac{\Lambda}{x}(\vartheta_x+m_3-1)(\vartheta_x-m_3)}{1+\frac{\Lambda}{x}}-\vartheta_x(\vartheta_x+1),\\
A_2= -\frac{x(\vartheta_x-m_1-2m_2+3)(\vartheta_x+m_1)}{1-x}
-\frac{\frac{\Lambda}{x}(\vartheta_x+m_3+2m_4-1)(\vartheta_x-m_3)}{1-\frac{\Lambda}{x}}\\
\hphantom{A_2=}{} -\vartheta_x(\vartheta_x+1).
\end{gather*}
\end{Theorem}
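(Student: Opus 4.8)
The plan is to treat the scalar factor $N$ and the two operator factors $K_1,K_2$ by the same mechanism: reduce each to an explicit product of ratios of the infinite products $\varphi$, and then expand these ratios in $h$ using the plethystic logarithm together with the small-$h$ expansion of $1/(1-q^n)$. The factor $N=\varphi\bigl(q^{-1}d_2d_4\Lambda\bigr)/\varphi\bigl(q^{-1}d_1d_2d_3d_4\Lambda\bigr)$ carries no operator part, so it is the easiest: writing $\log\varphi(y)=-\sum_{n\ge1}\frac1n\frac{y^n}{1-q^n}$ and substituting $\frac{1}{1-q^n}=-\frac1{nh}+\frac12-\frac{nh}{12}+O\bigl(h^2\bigr)$, the $1/h$ pieces organize into a dilogarithm and the $h^0$ pieces into $\frac12\log(1-y)$. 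Because numerator and denominator share the common base $\Lambda$, the divergent dilogarithm cancels and what survives at $O(h^0)$ is the stated power of $(1-\Lambda)$, while the next order yields the stated $O(h)$ correction; this is a finite, self-contained computation.

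For $K_1$ (and $K_2$ identically) the point is to let the single $\Bor^{-1}$ act first. Evaluating on a monomial $x^n$ and using the exact formula \eqref{lem-2} gives $\Bor^{-1}\varphi(qx)\varphi(\Lambda/x)x^n=\frac{\varphi(\Lambda)}{\varphi(-q^{-n}x)\varphi(-q^n\Lambda/x)}q^{-n(n+1)/2}x^n$; multiplying by the left-hand prefactor of $K_1$ the $\varphi(\Lambda)$ cancels, leaving the closed form
\[
K_1\,x^n=q^{-n(n+1)/2}\,\frac{\varphi(-d_1x)}{\varphi\bigl(-q^{-n}x\bigr)}\,\frac{\varphi(-d_3\Lambda/x)}{\varphi\bigl(-q^{n}\Lambda/x\bigr)}\,x^n .
\]
The same manipulation with \eqref{lem-2} cancels $\varphi\bigl(q^{-1}d_2d_4\Lambda\bigr)$ in $K_2$ and produces an analogous product of ratios with bases $x$ and $\Lambda/x$. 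In every case the surviving factors are ratios of two $\varphi$'s sharing a \emph{common base}, which is exactly the situation that the $h$-expansion handles cleanly.

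The engine is then the identity, for a fixed base $w$ and $q=e^h$,
\[
\log\frac{\varphi\bigl(e^{h\mu}w\bigr)}{\varphi\bigl(e^{h\nu}w\bigr)}
=-(\mu-\nu)\log(1-w)+\frac h2\,\frac{w}{1-w}\bigl[\mu(\mu-1)-\nu(\nu-1)\bigr]+O\bigl(h^2\bigr),
\]
in which the divergent dilogarithm and the base-only Bernoulli term $\frac h{12}\frac{w}{1-w}$ cancel between numerator and denominator. Applying this to the four ratios, with $(\mu,\nu)$ read off from the exponents (for $K_1$, $(\mu,\nu)=(m_1,-n)$ and $(m_3,n)$), and including $q^{-n(n+1)/2}=1-\frac h2 n(n+1)+O\bigl(h^2\bigr)$, the $O(h^0)$ parts reproduce the normal-ordered powers $(1\pm x)^{-m_1-\vartheta_x}(1\pm\Lambda/x)^{-m_3+\vartheta_x}$ after the substitution $n\to\vartheta_x$. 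The $O(h)$ parts are quadratic in $n$ and in the masses, and the final cosmetic step is to recognize $\vartheta_x(\vartheta_x\pm1)-m(m-1)$ as the factorizations $(\vartheta_x-m+1)(\vartheta_x+m)$ and $(\vartheta_x+m-1)(\vartheta_x-m)$ occurring in $A_1$ and $A_2$.

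The main obstacle is not any single step but the consistent bookkeeping at order $h$: one must retain the subleading Bernoulli term in $1/(1-q^n)$, track which pieces cancel by the common-base property and which survive, and fix all signs and the placement of $\vartheta_x$ under normal ordering (the difference between $\vartheta_x(\vartheta_x+1)$ and $\vartheta_x(\vartheta_x-1)$ in the two terms of $A_1$ arises precisely from $\nu=-n$ versus $\nu=n$ in $\nu(\nu-1)$). A reliable cross-check is the factorization $\SS^{-1}=NK_2K_1$ itself: the product of the three leading terms must collapse to the identity and the product at order $h$ must reproduce $-H_{\rm 4d}$ of Theorem \ref{th:4d}, which pins down the remaining signs and constants.
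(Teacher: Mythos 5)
Your proposal is correct and follows essentially the same route as the paper: both evaluate $K_1$, $K_2$ on monomials $x^n$ via the exact formula \eqref{lem-2} so that the $\Lambda$-dependent prefactors cancel, and then expand the resulting common-base ratios of $\varphi$'s to order $h$ (your logarithmic identity for $\varphi\bigl(e^{h\mu}w\bigr)/\varphi\bigl(e^{h\nu}w\bigr)$ is exactly the paper's limiting form of the $q$-binomial theorem), while $N$ is handled by the same plethystic expansion. The only cosmetic differences are that the paper deduces $K_2$ from $K_1$ by a substitution of variables rather than by a second direct computation, and your consistency check against $\SS^{-1}=NK_2K_1$ and Theorem~\ref{th:4d} is a sensible addition not present in the paper.
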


\begin{proof}
From \eqref{lem-2} in Appendix~\ref{App.B}, we have
\[
\Bor^{-1}\varphi(q x)\varphi\left(\frac{\Lambda}{x}\right) x^n=\frac{\varphi(\Lambda)}{\varphi(-q^{-n}x)\varphi\bigl(-q^n \frac{\Lambda}{x}\bigr)}q^{-n(n+1)}x^n,
\]
and hence
\[
K_1 x^n=\frac{\varphi(-d_1 x)\varphi\bigl(-d_3 \frac{\Lambda}{x}\bigr)}{\varphi(-q^{-n}x)\varphi\bigl(-q^n \frac{\Lambda}{x}\bigr)}q^{-n(n+1)}x^n.
\]
Then, using the limiting formula of the $q$-binomial theorem
\[
\frac{\varphi\bigl(q^jx\bigr)}{\varphi(x)}=
\sum_{k=1}^{\infty}\frac{\bigl(q^j\bigr)_k}{(q)_k}x^k=(1-x)^{-j}\left\{1+\frac{hj(j-1)}{2}\frac{x}{1-x}\right\}+{\mathcal O}\bigl(h^2\bigr),
\]
we obtain
\begin{gather*}
K_1 x^n=(1+x)^{-m_1-n}\left(1+\frac{\Lambda}{x}\right)^{-m_3+n}\left(1+\frac{h}{2} A_1\right)+{\mathcal O}\bigl(h^2\bigr),\\
A_1=\frac{x(n-m_1+1)(n+m_1)}{1+x}
+\frac{\frac{\Lambda}{x}(n+m_3-1)(n-m_3)}{1+\frac{\Lambda}{x}}-n(n+1),
\end{gather*}
as desired.

The expression for $K_2$ follows from that for $K_1$ by the substitution
\[
x \to - q^{-1}d_2 x, \qquad \Lambda \to q^{-1}d_2d_4 \Lambda.
\]
The limit of $N$ follows directly as
\begin{align*}
\displaystyle
N={}&\exp \sum_{n=1}^{\infty}\frac{\bigl(q^{-1}d_2d_4\Lambda\bigr)^n-\bigl(q^{-1}d_1d_2d_3d_4\Lambda\bigr)^n}{n(1-q^n)}
=\exp \sum_{n=1}^\infty \frac{\bigl(q^{m_2+m_4-1}\Lambda\bigr)^n\bigl(1-q^{m_1+m_3+n}\bigr)}{n(1-q^n)}\\
\displaystyle
={}&\exp \bigg\{\sum_{n=1}^\infty \frac{1}{n}(1+nh(m_2+m_4-1))\Lambda^n (m_1+m_3)\left(1+\frac{nh}{2}(m_1+m_3-1)\right)
\bigg\}\\&+{\mathcal O}\bigl(h^2\bigr),\\
\displaystyle
={}&\exp \{(m_1+m_3)\sum_{n=1}^\infty \frac{\Lambda^n}{n}\left(1+\frac{nh (m_1+2m_2+m_3+2m_4-3)}{2}\right)
\}+{\mathcal O}\bigl(h^2\bigr),\\
\displaystyle
={}&\left(1+\frac{h (m_1+2m_2+m_3+2m_4-3)}{2}\vartheta_\Lambda\right)\exp \bigg\{(m_1+m_3)\sum_{n=1}^\infty \frac{\Lambda^n}{n}
\bigg\}+{\mathcal O}\bigl(h^2\bigr),\\
\displaystyle
={}&(1-\Lambda)^{-(m_1+m_3)}\left(1+\frac{h}{2}\frac{(m_1+m_3)(m_1+2m_2+m_3+2m_4-3)}{(1-\Lambda)}\right)+{\mathcal O}\bigl(h^2\bigr).\tag*{\qed}
\end{align*}\renewcommand{\qed}{}
\end{proof}

\begin{Remark}
The result of Theorem \ref{ThmC} is consistent with the result in Section~\ref{4dlimit}.
For instance, the consistency in the leading order is given by the identity
\[
{:}(1-x)^{-m_1-\vartheta_x}\left(1-\frac{\Lambda}{x}\right)^{-m_3+\vartheta_x}{:}\,
{:}(1+x)^{-m_1-\vartheta_x}\left(1+\frac{\Lambda}{x}\right)^{-m_3+\vartheta_x}{:} =
(1-\Lambda)^{m_1+m_3},
\]
which follows from formal computations such as
\begin{equation}\label{formal}
{:}(1+x)^{-m_1-\vartheta_x}\left(1+\frac{\Lambda}{x}\right)^{-m_3+\vartheta_x}{:} f(x)
=(1+x)^{-m_1}\left(1+\frac{\Lambda}{x}\right)^{-m_3}f\left(x\frac{1+\frac{\Lambda}{x}}{1+x}\right).
\end{equation}
In relation to \eqref{formal}, the formula
\[
{:}(1-x)^{-b -\vartheta_x}{:} F(a,b,c;-x)=(1-x)^{-b}F\left(a,b,c,\frac{x}{x-1}\right)=F(c-a,b,c;x).
\]
for the Gauss Hypergeometric series $F(a,b,c;x)$ will be instructive.
\end{Remark}

\subsection*{Acknowledgements}

We would like to thank H.~Hayashi, A.N.~Kirillov, G.~Kuroki and H.~Nakajima for useful discussions.
We are also grateful to anonymous referees for useful comments and suggestions.
Our work is supported in part by Grants-in-Aid for Scientific Research (Kakenhi);
18K03274 (H.K.), 21K03180 (R.O.), 19K03512 (J.S.), 19K03530 (J.S.) and 22H01116 (Y.Y.).
The work of R.O. was partly supported by Osaka Central Advanced Mathematical
Institute: MEXT Joint Usage/Research Center on Mathematics and
Theoretical Physics JPMXP0619217849, and the Research Institute for Mathematical Sciences,
an International Joint Usage/Research Center located in Kyoto University.

\pdfbookmark[1]{References}{ref}
\LastPageEnding

\end{document}